\pdfminorversion=7
\documentclass[manuscript, screen]{acmart}

\usepackage{subfigure}
\usepackage{enumitem}

\usepackage[ruled, vlined, linesnumbered]{algorithm2e}

\usepackage{multirow}

\usepackage{textcomp}
\usepackage{pifont}

\setcopyright{none}

\begin{document}

\title{Efficient Sampling Algorithms for Approximate Motif Counting in Temporal Graph Streams}

\author{Jingjing Wang}
\authornote{This research was done when Dr.~Jingjing Wang worked as a Ph.D.~student at Hunan University.}
\affiliation{%
  \department{Department of Mathematics and Computer Science}
  \institution{Changsha University}
  \city{Changsha}
  \country{China}
}
\email{wangjingjing@ccsu.edu.cn}

\author{Yanhao Wang}
\affiliation{%
  \department{School of Data Science and Engineering}
  \institution{East China Normal University}
  \city{Shanghai}
  \country{China}
}
\email{yhwang@dase.ecnu.edu.cn}

\author{Wenjun Jiang}
\authornote{Corresponding author}
\affiliation{%
  \department{College of Computer Science and Electronic Engineering}
  \institution{Hunan University}
  \city{Changsha}
  \country{China}
}
\email{jiangwenjun@hnu.edu.cn}

\author{Yuchen Li}
\affiliation{%
  \department{School of Computing and Information Systems}
  \institution{Singapore Management University}
  \city{Singapore}
  \country{Singapore}
}
\email{yuchenli@smu.edu.sg}

\author{Kian-Lee Tan}
\affiliation{%
  \department{School of Computing}
  \institution{National University of Singapore}
  \city{Singapore}
  \country{Singapore}
}
\email{tankl@comp.nus.edu.sg}

\renewcommand{\shortauthors}{Wang et al.}

\begin{abstract}
A great variety of complex systems ranging from user interactions in communication networks to transactions in financial markets can be modeled as \emph{temporal graphs}, which consist of a set of vertices and a series of timestamped and directed edges. \emph{Temporal motifs} in temporal graphs are generalized from subgraph patterns in static graphs which take into account edge orderings and durations in addition to topologies. Counting the number of occurrences of temporal motifs is a fundamental problem for temporal network analysis. However, existing methods either cannot support temporal motifs or suffer from performance issues. Moreover, they cannot work in the streaming model where edges in a temporal graph are observed incrementally over time. In this paper, we focus on approximate temporal motif counting via random sampling. We first propose two sampling algorithms for approximate temporal motif counting in the offline setting, where the whole temporal graph is available in advance and can be kept in main memory. The first is a generic edge sampling (ES) algorithm for estimating the number of instances of any temporal motif and the second is an improved edge-wedge sampling (EWS) algorithm that hybridizes edge sampling with wedge sampling for counting temporal motifs with $3$ vertices and $3$ edges. Furthermore, we propose two algorithms to count temporal motifs incrementally in temporal graph streams by extending the ES and EWS algorithms, which are referred to as the SES and SEWS algorithms, respectively. We provide comprehensive analyses of the theoretical bounds and complexities of our proposed algorithms. Finally, we perform extensive experimental evaluation of our proposed algorithms on several real-world temporal graphs. The results show that ES and EWS have higher efficiency, better accuracy, and greater scalability than state-of-the-art sampling methods for temporal motif counting in the offline setting. What is more, SES and SEWS further achieve up to three orders of magnitude speedups over ES and EWS while having comparable estimation errors for temporal motif counting in the streaming setting.
\end{abstract}

\begin{CCSXML}
<ccs2012>
  <concept>
    <concept_id>10003752.10003809.10010055.10010057</concept_id>
    <concept_desc>Theory of computation~Sketching and sampling</concept_desc>
    <concept_significance>500</concept_significance>
  </concept>
  <concept>
    <concept_id>10003752.10003809.10003635</concept_id>
    <concept_desc>Theory of computation~Graph algorithms analysis</concept_desc>
    <concept_significance>500</concept_significance>
  </concept>
</ccs2012>
\end{CCSXML}

\ccsdesc[500]{Theory of computation~Sketching and sampling}
\ccsdesc[500]{Theory of computation~Graph algorithms analysis}

\keywords{temporal network; motif counting; random sampling}

\maketitle

\section{Introduction}\label{sec:intro}

Graphs are one of the most fundamental data structures that are widely used for modeling complex systems across diverse domains from bioinformatics~\cite{DBLP:journals/bioinformatics/Przulj07}, to neuroscience~\cite{DBLP:journals/ploscb/VarshneyCPHC11}, to social sciences~\cite{FAUST2010221}. Modern graph datasets increasingly incorporate temporal information to describe the dynamics of relations over time. Such graphs are referred to as \emph{temporal graphs}~\cite{HOLME201297} and typically represented by a set of vertices and a sequence of timestamped and directed edges between vertices called \emph{temporal edges}. For example, a communication network~\cite{DBLP:conf/cikm/ZhaoTHOJL10, DBLP:conf/sigmod/GurukarRR15, DBLP:journals/pvldb/WangFLT17, DBLP:journals/tois/WangLFT18, DBLP:conf/edbt/WangLT19} is often denoted by a temporal graph, where each person is a vertex and each message sent from one person to another is a temporal edge. Similarly, computer networks and financial transactions can also be modeled as temporal graphs. Due to the ubiquitousness of temporal graphs, they have attracted much attention~\cite{DBLP:conf/sigmod/GurukarRR15, DBLP:conf/wsdm/ParanjapeBL17, DBLP:conf/cikm/ZhaoTHOJL10, DBLP:conf/icde/LiSQYD18, DBLP:conf/cikm/NamakiWSLG17, DBLP:conf/cikm/GalimbertiBBCG18, DBLP:journals/pvldb/ShaLHT17, DBLP:journals/pvldb/GuoLST17} recently.

One fundamental problem in temporal graphs with wide real-world applications such as network characterization~\cite{DBLP:conf/wsdm/ParanjapeBL17}, structure prediction~\cite{DBLP:conf/wsdm/LiuBC19}, and fraud detection~\cite{DBLP:journals/pvldb/KumarC18}, is to count the number of occurrences of small (connected) subgraph patterns (i.e., \emph{motifs}~\cite{Milo824}). To capture the temporal dynamics in network analysis, the notion of \emph{motif}~\cite{DBLP:conf/wsdm/ParanjapeBL17, DBLP:conf/edbt/KosyfakiMPT19, Kovanen_2011, DBLP:conf/wsdm/LiuBC19} in temporal graphs is more general than its counterpart in static graphs. It takes into account not only the subgraph structure (i.e., \emph{subgraph isomorphism}~\cite{DBLP:journals/jacm/Ullmann76, DBLP:conf/sigmod/GuoL0HXT20}) but also the temporal information including edge ordering and motif duration. As an illustrative example, two temporal motifs $M$ and $M'$ in Fig.~\ref{fig:example0} are different temporal motifs. Though they are exactly the same in structure, they are distinguished from each other by the ordering of edges. Consequently, although there has been a considerable amount of work on subgraph counting~\cite{DBLP:journals/tkdd/WangLRTZG14, DBLP:conf/www/JhaSP15, DBLP:conf/icde/WangLTZ16, DBLP:conf/wsdm/0002C00P17, DBLP:conf/www/PinarSV17, DBLP:journals/tkde/WangZZLCLTTG18, DBLP:conf/sdm/KoldaPS13, DBLP:conf/icdm/TurkogluT17, DBLP:conf/www/TurkT19, DBLP:conf/cikm/EtemadiLT16} in static graphs, they cannot be directly used for counting temporal motifs.

Generally, it is a challenging task to count temporal motifs. Firstly, the problem is at least as hard as subgraph counting in static graphs, whose time complexity increases exponentially with the number of edges in the query subgraph. Secondly, it becomes even more computationally difficult because the temporal information is taken into consideration. For example, counting the number of instances of $k$-stars is simple in static graphs. However, counting temporal $k$-stars is proven to be NP-hard~\cite{DBLP:conf/wsdm/LiuBC19} due to the combinatorial nature of edge ordering. Thirdly, temporal graphs are a kind of \emph{multi-graph} that is permitted to have multiple edges between the same two vertices at different timestamps. As a result, there may exist many different instances of a temporal motif within the same set of vertices, which leads to more challenges for counting problems. There have been a few methods for exact temporal motif counting~\cite{DBLP:conf/wsdm/ParanjapeBL17} or enumeration~\cite{DBLP:conf/bigdataconf/MackeyPFCC18, DBLP:journals/pvldb/KumarC18}. However, they suffer from efficiency issues and often cannot scale well in massive temporal graphs with hundreds of millions of edges~\cite{DBLP:conf/wsdm/LiuBC19}.

In many scenarios, it is not necessary to count motifs exactly, and finding an approximate number is sufficient for practical use. A recent work~\cite{DBLP:conf/wsdm/LiuBC19} has proposed a sampling method for approximate temporal motif counting. It partitions a temporal graph into equal-time intervals, utilizes an exact algorithm~\cite{DBLP:conf/bigdataconf/MackeyPFCC18} to count the number of motif instances in a subset of intervals, and computes an estimate from the per-interval counts. However, this method still cannot achieve satisfactory performance in massive datasets. On the one hand, it fails to provide an accurate estimate when the sampling rate and length of intervals are small. On the other hand, its efficiency is not significantly improved upon that of exact methods when the sampling rate and length of intervals are too large.

\begin{figure}
  \centering
  \includegraphics[width=0.3\textwidth]{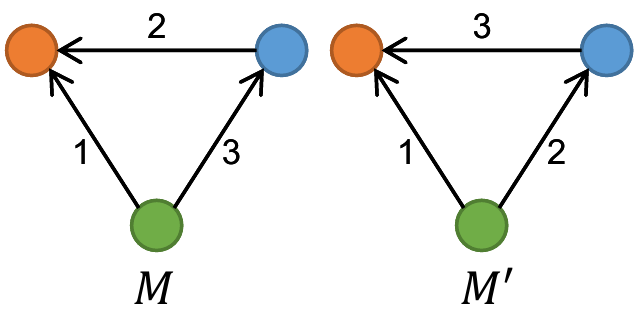}
  \caption{Examples for temporal motifs}
  \Description{example0}
  \label{fig:example0}
\end{figure}

Moreover, the vertices and edges in a temporal graph are typically observed incrementally over time in a streaming manner. For example, in communication networks, newly registered users are observed as new vertices and new messages between any two existing or new users from time to time are generated continuously as new temporal edges. In such scenarios, it is almost impossible to obtain the whole temporal graph all at once. Even if the whole dataset is already available, it may still be infeasible to keep it entirely in main memory for motif counting due to high space consumption. In addition, when new edges arrive over time, an offline counting method has to be rerun from scratch for maintaining the count in the updated dataset. However, to the best of knowledge, all the existing methods~\cite{DBLP:conf/wsdm/ParanjapeBL17, DBLP:conf/wsdm/LiuBC19, DBLP:conf/bigdataconf/MackeyPFCC18, DBLP:journals/pvldb/KumarC18} for temporal motif counting and enumeration are designed for the offline setting and become very inefficient for temporal graph streams.

To address the above problems, we propose more efficient and accurate sampling algorithms for approximate temporal motif counting in this paper. The basic idea of our algorithms is to first uniformly draw a set of random edges from a temporal graph (stream), then exactly count or estimate the number of \emph{local} motif instances that contain each sampled edge, and finally compute the global motif count from local counts. We first propose two offline algorithms for temporal motif counting. For any $k$-vertex $l$-edge temporal motif, we propose a generic Edge Sampling (ES) algorithm, which exactly counts the number of local motif instances by enumerating them. Next, temporal motifs with $3$ vertices and $3$ edges (i.e., triadic patterns) are one of the most important classes of motifs, whose distribution is an indicator to characterize temporal networks~\cite{DBLP:conf/sdm/KoldaPS13, DBLP:conf/wsdm/ParanjapeBL17, DBLP:journals/snam/UzupyteW20, FAUST2010221}. We propose an improved Edge-Wedge Sampling (EWS) algorithm for counting any $3$-vertex $3$-edge temporal motif, which estimates the local counts by \emph{wedge sampling}~\cite{DBLP:conf/sdm/KoldaPS13, DBLP:conf/icdm/TurkogluT17}. Furthermore, based on the above two offline algorithms, we propose a reservoir sampling-based framework to extend them for counting temporal motifs in temporal graph streams. We analyze the theoretical bounds and complexities of our proposed algorithms, and perform extensive experiments to show their accuracy and efficiency. Our main contributions in this paper are summarized as follows.
\begin{itemize}
  \item We propose a generic Edge Sampling (ES) algorithm to estimate the number of instances of any temporal motif in a temporal graph. It exploits the \textsc{BackTracking} (BT) algorithm~\cite{DBLP:journals/jacm/Ullmann76, DBLP:conf/bigdataconf/MackeyPFCC18} for subgraph isomorphism to enumerate local motif instances. We devise simple heuristics to determine the matching order of a temporal motif to reduce the search space.
  \item We propose an improved Edge-Wedge Sampling (EWS) algorithm that combines \emph{edge sampling} with \emph{wedge sampling}~\cite{DBLP:conf/sdm/KoldaPS13, DBLP:conf/icdm/TurkogluT17} specialized for counting any $3$-vertex $3$-edge temporal motif. Instead of enumerating all instances containing a sampled edge, EWS estimates the number of local instances via \emph{temporal wedge sampling}. In this way, EWS avoids the computationally intensive enumeration and greatly improves the efficiency upon ES.
  \item We further propose two algorithms on top of ES and EWS, namely SES and SEWS, to estimate the number of instances of a temporal motif over a temporal graph stream. SES and SEWS utilize the same methods as ES and EWS, respectively, to compute the local count for each sampled edge. Moreover, they adopt a reservoir sampling-based framework to maintain a fixed-size set of sampled edges over time and thus always keep the up-to-date global count dynamically w.r.t.~the set of sampled edges.
  \item Finally, we evaluate the performance of our proposed algorithms on several real-world temporal graphs. The experimental results confirm the accuracy, efficiency, and scalability of our proposed algorithms. In the offline setting, ES and EWS run up to $10.3$ and $48.5$ times faster than the state-of-the-art sampling method while having lower estimation errors. In the streaming setting, SES and SEWS further achieve up to three orders of magnitude speedups over ES and EWS while the estimation errors are still comparable.
\end{itemize}

\textbf{Differences from Prior Conference Paper~\cite{DBLP:conf/cikm/00040JLT20}:}
A preliminary version~\cite{DBLP:conf/cikm/00040JLT20} of this paper has been published on CIKM '20. The new contributions of this extended version are listed as follows. Firstly, all the proofs for the unbiasedness and variances of ES and EWS, which are omitted in the preliminary version due to space limitations, are included in the extended version. Secondly, we propose two novel algorithms (i.e., SES and SEWS) for temporal motif counting in temporal graph streams. We analyze the theoretical bounds and complexities of SES and SEWS. Thirdly, we conduct new experiments to evaluate the performance of SES and SEWS in temporal graph streams. And the experimental results confirm the efficiency and effectiveness of SES and SEWS over ES and EWS in the streaming setting.

\textbf{Paper Organization:}
The remainder of this paper is organized as follows. Section~\ref{sec:related:work} reviews the related work. Section~\ref{sec:def} introduces the background and formulation of \emph{temporal motif counting}. Section~\ref{sec:alg} presents the ES and EWS algorithms for temporal motif counting and analyzes them theoretically. Section~\ref{sec:s-alg} proposes the SES and SEWS algorithms for streaming temporal motif counting and provides the theoretical analyses accordingly. Section~\ref{sec:exp} describes the setup and results of the experiments. Finally, Section~\ref{sec:conclusion} provides some concluding remarks.

\section{Related Work}\label{sec:related:work}

\textbf{Subgraph (Motif) Counting in Static Graphs:}
The problem of counting the number of instances of a query subgraph in a large data graph has been extensively studied across several decades. Since counting the exact number of instances by enumeration is computationally intensive due to the NP-hardness of \emph{subgraph isomorphism}~\cite{DBLP:journals/jacm/Ullmann76}, more efforts have been made to estimate the counts within bounded errors using random sampling (see~\cite{DBLP:journals/csur/RibeiroPSAS21} for a survey). First of all, as \emph{triangles} are the simplest yet most fundamental subgraph with wide applications in many network analysis tasks, a large number of sampling methods were proposed for triangle counting in massive graphs~\cite{DBLP:conf/kdd/TsourakakisKMF09, DBLP:journals/ipl/PaghT12, DBLP:journals/pvldb/PavanTTW13, DBLP:conf/kdd/JhaSP13, DBLP:conf/sdm/KoldaPS13, DBLP:conf/cikm/ParkC13, DBLP:conf/kdd/AhmedDNK14, DBLP:conf/kdd/LimK15, DBLP:conf/kdd/StefaniERU16, DBLP:journals/algorithmica/BulteauFKP16, DBLP:conf/cikm/EtemadiLT16, DBLP:journals/pvldb/WangQSZTG17, DBLP:journals/tkdd/StefaniERU17, DBLP:conf/icdm/Shin17, DBLP:journals/pvldb/AhmedDWR17, DBLP:conf/icdm/TurkogluT17, DBLP:journals/siamcomp/EdenLRS17, 2018PES, DBLP:conf/pkdd/ShinKHF18, DBLP:conf/www/TurkT19, DBLP:journals/tkdd/ShinOKHF20, DBLP:conf/sigmod/Gou021, DBLP:journals/tkdd/ShinLOHF21}. The above methods considered the triangle counting problem in many different settings, including offline graphs~\cite{DBLP:conf/kdd/TsourakakisKMF09, DBLP:conf/sdm/KoldaPS13, DBLP:conf/cikm/EtemadiLT16, DBLP:journals/siamcomp/EdenLRS17, DBLP:conf/icdm/TurkogluT17, DBLP:conf/www/TurkT19}, insertion-only~\cite{DBLP:conf/kdd/TsourakakisKMF09, DBLP:conf/kdd/JhaSP13, DBLP:journals/pvldb/PavanTTW13, DBLP:conf/kdd/AhmedDNK14, DBLP:conf/kdd/LimK15, DBLP:conf/kdd/StefaniERU16, DBLP:journals/pvldb/WangQSZTG17, DBLP:journals/pvldb/AhmedDWR17, DBLP:journals/tkdd/StefaniERU17, DBLP:conf/icdm/Shin17, 2018PES} and fully-dynamic~\cite{DBLP:journals/algorithmica/BulteauFKP16, DBLP:conf/pkdd/ShinKHF18, DBLP:journals/tkdd/ShinOKHF20} graph streams, sliding windows~\cite{DBLP:conf/sigmod/Gou021}, and distributed graphs~\cite{DBLP:journals/ipl/PaghT12, DBLP:conf/cikm/ParkC13, DBLP:journals/tkdd/ShinLOHF21}. Moreover, sampling methods were also proposed for estimating more complex motifs than triangles, e.g., 4-vertex motifs~\cite{DBLP:conf/www/JhaSP15, DBLP:conf/kdd/Sanei-MehriST18}, 5-vertex motifs~\cite{DBLP:conf/www/PinarSV17, DBLP:journals/tkdd/WangLRTZG14, DBLP:conf/icde/WangLTZ16, DBLP:journals/tkde/WangZZLCLTTG18}, motifs with 6 or more vertices~\cite{DBLP:conf/wsdm/0002C00P17}, $k$-cliques~\cite{DBLP:conf/www/JainS17, DBLP:conf/stoc/EdenRS18}, sparse motifs with low counts~\cite{DBLP:journals/tkdd/StefaniTU21}, and butterflies in bipartite graphs~\cite{DBLP:conf/kdd/Sanei-MehriST18}. However, all the above methods were not designed for temporal graphs. They considered neither the temporal information nor the ordering of edges.
Therefore, they could not be used for temporal motif counting directly.

\textbf{Motifs in Temporal Graphs:}
Prior studies have considered different types of \emph{temporal network motifs}. Viard et al.~\cite{DBLP:conf/asunam/ViardLM15, DBLP:journals/tcs/ViardLM16} and Himmel et al.~\cite{DBLP:conf/asunam/HimmelMNS16} extended the notion of \emph{maximal clique} to temporal networks and proposed efficient algorithms for maximal clique enumeration. Li et al.~\cite{DBLP:conf/icde/LiSQYD18} proposed the notion of \emph{$(\theta,\tau)$-persistent $k$-core} to capture the persistence of a community in temporal networks. However, these notions of temporal motifs were different from ours since they did not take \emph{edge ordering} into account. Zhao et al.~\cite{DBLP:conf/cikm/ZhaoTHOJL10} and Gurukar et al.~\cite{DBLP:conf/sigmod/GurukarRR15} studied the \emph{communication motifs}, which are frequent subgraphs to characterize the patterns of information propagation in social networks. Kovanen et al.~\cite{Kovanen_2011} and Kosyfaki et al.~\cite{DBLP:conf/edbt/KosyfakiMPT19} defined the \emph{flow motifs} to model flow transfer among a set of vertices within a time window in temporal networks. Although both definitions accounted for edge ordering, they were more restrictive than ours because the former assumed any two adjacent edges must occur within a fixed time span while the latter assumed edges in a motif must be consecutive events for a vertex~\cite{DBLP:conf/wsdm/ParanjapeBL17}.

\textbf{Temporal Motif Counting \& Enumeration:}
There have been several existing studies on counting and enumerating temporal motifs. Paranjape et al.~\cite{DBLP:conf/wsdm/ParanjapeBL17} first formally defined the notion of \emph{temporal motifs} we use in this paper. They proposed exact algorithms for counting temporal motifs based on subgraph enumeration and timestamp-based pruning. Kumar and Calders~\cite{DBLP:journals/pvldb/KumarC18} proposed an efficient algorithm called 2SCENT to enumerate all simple temporal cycles in a directed interaction network. Although 2SCENT was shown to be effective for cycles, it could not be used for enumerating temporal motifs of any other type. Mackey et al.~\cite{DBLP:conf/bigdataconf/MackeyPFCC18} proposed an efficient \textsc{BackTracking} algorithm for temporal subgraph isomorphism. The algorithm could count temporal motifs exactly by enumerating all of them. Very recently, Micale et al.~\cite{DBLP:journals/ans/MicaleLPF21} proposed a subgraph isomorphism algorithm specialized for flow motifs in temporal graphs. Liu et al.~\cite{DBLP:conf/wsdm/LiuBC19} proposed an interval-based sampling framework for counting temporal motifs. To the best of our knowledge, this is the only existing work on approximate temporal motif counting via sampling. In this paper, we present several improved sampling algorithms for temporal motif counting in offline and streaming temporal graphs and compare them with the algorithms in~\cite{DBLP:conf/wsdm/ParanjapeBL17, DBLP:conf/bigdataconf/MackeyPFCC18, DBLP:journals/pvldb/KumarC18, DBLP:conf/wsdm/LiuBC19}.

\section{Preliminaries}\label{sec:def}

In this section, we formally define \emph{temporal graph (stream)}, \emph{temporal motif}, and the \emph{temporal motif counting} problem. Here, we follow the definition of \emph{temporal motifs} in~\cite{DBLP:conf/wsdm/ParanjapeBL17, DBLP:conf/wsdm/LiuBC19, DBLP:conf/bigdataconf/MackeyPFCC18} for its simplicity and generality. Other types of temporal motifs have been discussed in Section~\ref{sec:related:work}.

\textbf{Temporal Graph:}
A \emph{temporal graph} $\varGamma = (V_{\varGamma},E_{\varGamma})$ is defined by a set $V_{\varGamma}$ of $n$ vertices and a sequence $E_{\varGamma}$ of $m$ temporal edges among vertices in $V_{\varGamma}$. Each temporal edge $e=(u, v, t)$, where $u, v \in V_{\varGamma}$ and $t \in \mathbb{R}^{+}$, is a timestamped directed edge from $u$ to $v$ at time $t$. There may be more than one temporal edge from $u$ to $v$ at different timestamps (e.g., a user can call another user many times in a communication network). For ease of presentation, we assume the timestamp $t$ of each temporal edge $e$ is unique so that the temporal edges in $E_{\varGamma}$ are strictly ordered. Note that our algorithms can also handle non-unique timestamps by using any consistent rule to break ties. We also consider the case when a temporal graph is generated and observed incrementally in a streaming manner. In this case, a \emph{temporal graph stream} $\varGamma_t = (V_{\varGamma, t}, E_{\varGamma, t})$ at time $t$ is composed by the set $V_{\varGamma, t}$ of $n_{t}$ vertices observed until time $t$ and a sequence $E_{\varGamma, t}$ of $m_{t}$ temporal edges whose timestamps are less than or equal to $t$. New vertices and edges are inserted into the temporal graph incrementally over time, i.e., for some $t^{\prime} > t$, new vertices and edges from time $t$ to $t^{\prime}$ will be added to $\varGamma_t$ as $\varGamma_{t^{\prime}}$.

\textbf{Temporal Motif:}
We formalize the notion of \emph{temporal motifs}~\cite{DBLP:conf/wsdm/ParanjapeBL17, DBLP:conf/wsdm/LiuBC19} in the following definition.
\begin{definition}[Temporal Motif]
  A temporal motif $M=(V_M,$ $E_M,\sigma)$ consists of a connected graph with a set of $k$ vertices $V_M$ and a set of $l$ edges $E_M$, and an ordering $\sigma$ on the edges in $E_M$.
\end{definition}

Intuitively, a temporal motif $M$ can be represented as an ordered sequence of edges $\langle e^{\prime}_1 = (u^{\prime}_1, v^{\prime}_1),\ldots,e^{\prime}_l=(u^{\prime}_l, v^{\prime}_l) \rangle$. Given a temporal motif $M$ as a template pattern, we aim to count how many times this pattern appears in a temporal graph. Furthermore, we only consider the instances where the pattern is formed within a short time span. For example, an instance formed in an hour is more interesting than one formed accidentally in one year on a communication network~\cite{DBLP:conf/cikm/ZhaoTHOJL10, DBLP:conf/sigmod/GurukarRR15, DBLP:conf/wsdm/ParanjapeBL17}. Therefore, given a temporal graph $\varGamma$ and a temporal motif $M$, our goal is to find a sequence of edges $S \subseteq E_{\varGamma}$ such that (1) $S$ exactly matches (i.e., \emph{is isomorphic to}) $M$, (2) $S$ is in the same order as specified by $\sigma$, and (3) all edges in $S$ occur within a time span of at most $\delta$. We call such an edge sequence $S$ as a \emph{$\delta$-instance}~\cite{DBLP:conf/wsdm/ParanjapeBL17, DBLP:conf/wsdm/LiuBC19} of $M$ and the difference between $t_l$ and $t_1$ as the \emph{duration} $\Delta(S)$ of instance $S$. The formal definition is given in the following.
\begin{definition}[Motif $\delta$-instance]
  A sequence of $l$ edges $S=\langle (w_1, x_1, t_1), \ldots, (w_l, x_l, t_l) \rangle$ ($t_1 < \ldots < t_l$) from a temporal graph $\varGamma$ is a $\delta$-instance of a temporal motif $M=\langle (u^{\prime}_{1}, v^{\prime}_{1}), \ldots, (u^{\prime}_{l}, v^{\prime}_{l}) \rangle$ if (1) there exists a bijection $f$ between the vertex sets of $S$ and $M$ such that $f(w_i) = u^{\prime}_{i}$ and $f(x_i) = v^{\prime}_{i}$ for $i = 1, \ldots, l$ and (2) the duration $\Delta(S)$ is at most $\delta$, i.e., $t_{l} - t_{1} \leq \delta$.
\end{definition}

\begin{figure}
  \centering
  \includegraphics[width=0.6\textwidth]{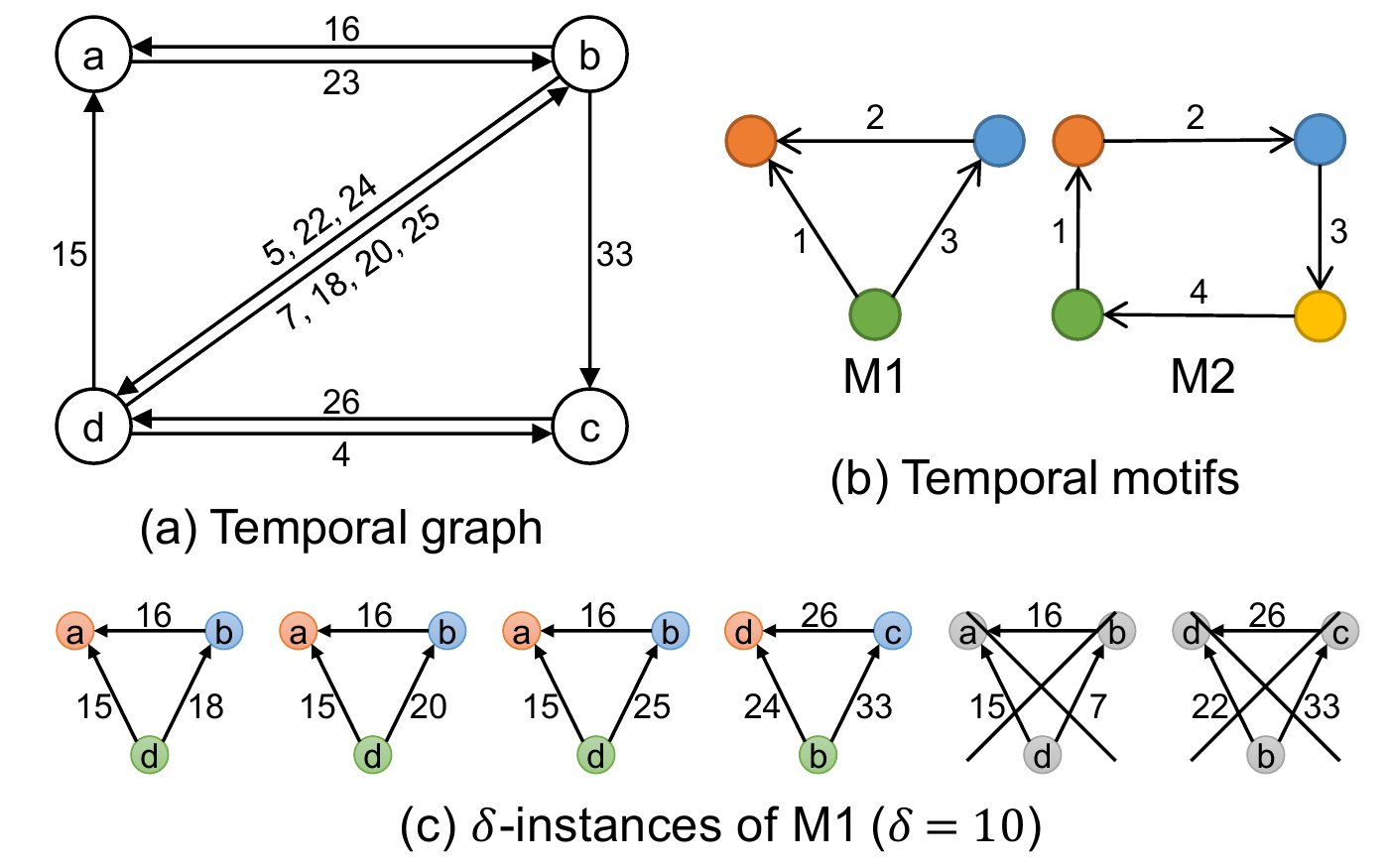}
  \caption{Examples for temporal graph and motif}
  \Description{example1}
  \label{fig:example}
\end{figure}

\begin{example}
  In Fig.~\ref{fig:example}(a), we illustrate a temporal graph with $4$ vertices and $13$ temporal edges. Let us consider the problem of finding all $\delta$-instances ($\delta=10$) of temporal motif $M1$ in Fig.~\ref{fig:example}(b). As shown in Fig.~\ref{fig:example}(c), there are $4$ valid $10$-instances of $M1$ found. These instances can match $M1$ in terms of both structure and edge ordering and their durations are within $10$. In addition, we also give $2$ invalid instances of $M1$, which are isomorphic to $M1$ but violate either the edge ordering or duration constraint.
\end{example}

\textbf{Temporal Motif Counting:}
According to the above notions, we present the offline and streaming \emph{temporal motif counting} problems studied in this paper.
\begin{definition}[Temporal Motif Counting]\label{def:count}
  For a temporal graph $\varGamma$, a temporal motif $M$, and a time span $\delta$, the temporal motif counting problem returns the number $C_M$ of $\delta$-instances of motif $M$ appeared in temporal graph $\varGamma$.
\end{definition}
\begin{definition}[Streaming Temporal Motif Counting]\label{def:count1}
  For a temporal graph stream $\varGamma$, a temporal motif $M$, a time span $\delta$, and a timestamp $t$, the streaming temporal motif counting problem returns the number $C_{M, t}$ of $\delta$-instances of motif $M$ appeared in the partially observed temporal graph stream $\varGamma_{t}$ at time $t$.
\end{definition}

Both temporal motif counting problems are proven to be NP-hard even for very simple motifs, e.g.~$k$-stars~\cite{DBLP:conf/wsdm/LiuBC19}, because the edge ordering is taken into account. According to previous results~\cite{DBLP:conf/wsdm/LiuBC19}, although there is a simple polynomial algorithm to count the number of $k$-stars on a graph, it is NP-hard to exactly count the number of temporal $k$-stars. Typically, counting temporal motifs exactly on massive graphs with millions or even billions of edges is a computationally intensive task~\cite{DBLP:conf/wsdm/ParanjapeBL17, DBLP:conf/wsdm/LiuBC19}. Therefore, we focus on designing efficient and scalable sampling algorithms for estimating the number of temporal motifs approximately in Sections~\ref{sec:alg} and \ref{sec:s-alg}. The frequently used notations are summarized in Table~\ref{tab:freq}.

\begin{table}[t]
  \small
  \centering
  \caption{Frequently used notations}\label{tab:freq}
  \begin{tabular}{c l}
    \hline
    \textbf{Symbol} & \textbf{Description} \\
    \hline
    $\varGamma, \varGamma_{t}$ & Temporal graph and (partially observed) temporal graph stream at time $t$ \\
    $V_{\varGamma}, E_{\varGamma}$ & Set of vertices and edges in $\varGamma$ \\
    $V_{\varGamma,t}, E_{\varGamma,t}$ & Set of vertices and edges in $\varGamma_{t}$ at time $t$ \\
    $n, m$ & Number of vertices and edges in $\varGamma$ \\
    $n_{t}, m_{t}$ & Number of vertices and edges in $\varGamma_{t}$ at time $t$ \\
    $M$ & Temporal motif \\
    $V_M, E_M$ & Set of vertices and edges in $M$ \\
    $k, l$ & Number of vertices and edges in $M$ \\
    $\delta$ & Maximum time span of a motif instance \\
    $S$ & Motif $\delta$-instance \\
    $C_M, C_{M, t}$ & Numbers of $\delta$-instances of $M$ in $\varGamma$ and $\varGamma_{t}$, respectively \\
    $\widehat{C}_M, \widehat{C}_{M, t}$ & Unbiased estimators of $C_M$ and $C_{M, t}$, respectively \\
    $p$ & Probability of edge sampling \\
    $r$ & Sample size in reservoir sampling\\
    $\widehat{E}_{\varGamma}, \widehat{E}_{\varGamma, t}$ & Sets of sampled edges from $E_{\varGamma}$ and $E_{\varGamma, t}$ at time $t$, respectively\\
    $\eta(e)$ & Number of $\delta$-instances of $M$ containing edge $e$ \\
    $\eta_j(e)$ & Number of $\delta$-instances of $M$ when $e$ is mapped to $e_j^\prime$ \\
    $q$ & Probability of wedge sampling \\
    $W$ & Temporal wedge \\
    $\eta(W)$ & Number of $\delta$-instances of $M$ containing $W$ \\
    $W_j^\prime$ & Temporal wedge pattern for $M$ when $e$ is mapped to $e_j^\prime$ \\
    $\widehat{\mathcal{W}}_j(e)$ & Set of sampled $\delta$-instances of $W_j^\prime$  \\
    $\widehat{\eta}_j(e)$ & Unbiased estimator of $\eta_j(e)$ \\
    \hline
  \end{tabular}
\end{table}

\section{Algorithms for Offline Temporal Graphs}\label{sec:alg}

In this section, we present our proposed algorithms for approximate temporal motif counting in the offline setting. We first describe our generic Edge Sampling (ES) algorithm in Section~\ref{subsec:alg:1}. Then, we introduce our improved EWS algorithm specific for counting $3$-vertex $3$-edge temporal motifs in Section~\ref{subsec:alg:2}. In addition, we theoretically analyze the expected values and variances of the estimates returned by both algorithms.

\subsection{Generic Edge Sampling Algorithm}\label{subsec:alg:1}

\textbf{Algorithmic Description:}
The Edge Sampling (ES) algorithm is motivated by an exact subgraph counting algorithm called \emph{edge iterator}~\cite{DBLP:journals/tkde/WuYL16}. Given a temporal graph $\varGamma$, a temporal motif $M$, and a time span $\delta$, we use $\eta(e)$ to denote the number of local $\delta$-instances of $M$ containing an edge $e$. To count all $\delta$-instances of $M$ in $\varGamma$ exactly, we can simply count $\eta(e)$ for each $e \in E_{\varGamma}$ and then sum them up. In this way, each instance is counted $l$ times and the total number of instances is equal to the sum divided by $l$, i.e.,~$C_M=\frac{1}{l}\sum_{e \in E_{\varGamma}}\eta(e)$.

Based on the above idea, we propose the ES algorithm for estimating $C_M$: For each edge $e \in E_{\varGamma}$, we randomly sample it and compute $\eta(e)$ with fixed probability $p$. Then, we obtain an unbiased estimator $\widehat{C}_M$ of $C_M$ by adding up $\eta(e)$ for each sampled edge $e$ and scaling the sum by a factor of $\frac{1}{p l}$, i.e., $\widehat{C}_M=\frac{1}{p l}\sum_{e \in \widehat{E}_{\varGamma}}\eta(e)$ where $\widehat{E}_{\varGamma}$ is the set of sampled edges.

Now the remaining problem becomes how to compute $\eta(e)$ for an edge $e$. The ES algorithm adopts the well-known \textsc{BackTracking} algorithm~\cite{DBLP:journals/jacm/Ullmann76, DBLP:conf/bigdataconf/MackeyPFCC18} to enumerate all $\delta$-instances that contain an edge $e$ for computing $\eta(e)$. Specifically, the \textsc{BackTracking} algorithm runs $l$ times for each edge $e$; in the $j$\textsuperscript{th} run, it first maps edge $e$ to the $j$\textsuperscript{th} edge $e_j^\prime$ of $M$ and then uses a tree search to find all different combinations of the remaining $l-1$ edges that can form $\delta$-instances of $M$ with edge $e$. Let $\eta_j(e)$ be the number of $\delta$-instances of $M$ where $e$ is mapped to $e_j^\prime$. It is obvious that $\eta(e)$ is equal to the sum of $\eta_j(e)$ for $j=1,\ldots,l$, i.e., $\eta(e)=\sum_{j=1}^{l}\eta_j(e)$.

\begin{algorithm}
  \caption{Edge Sampling}\label{alg:es}
  \KwIn{Temporal graph $\varGamma$, temporal motif $M$, time span $\delta$, edge sampling probability $p$.}
  \KwOut{Estimator $\widehat{C}_M$ of the number of $\delta$-instances of $M$ in $\varGamma$}
  Initialize $\widehat{E}_{\varGamma} \gets \varnothing$\;\label{ln:es:sample:s}
  \ForEach{$e\in E_{\varGamma}$}
  {
    Toss a biased coin with success probability $p$\;
    \If{success}
    {
      $\widehat{E}_{\varGamma} \gets \widehat{E}_{\varGamma} \cup \{e\}$\;\label{ln:es:sample:t}
    }
  }
  \ForEach{$e=(u, v, t) \in \widehat{E}_{\varGamma}$\label{ln:es:count:s}}
  {
    Set $\eta(e) \gets 0$\; 
    \For{$j \in 1,\ldots,l$}
    {
      Generate an initial instance $S_j^{(1)}$ by mapping $e$ to $e_j^\prime$\;\label{ln:es:init}
      Run \textsc{BackTracking} on $E_{\varGamma}[t-\delta,t+\delta]$ starting from $S_j^{(1)}$ to find the set $\mathcal{S}_j(e) = \{S_j(e) \,:\, S_j(e)$ is a $\delta$-instance of $M$ where $e$ is mapped to $e_j^\prime\}$\;\label{ln:es:bt}
      Set $\eta_j(e) \gets |\mathcal{S}_j(e)|$ and $\eta(e) \gets \eta(e) + \eta_j(e)$\;\label{ln:es:count:t}
    }
  }
  \Return{$\widehat{C}_M \gets \frac{1}{pl}\sum_{e \in \widehat{E}_{\varGamma}}\eta(e)$}\;\label{ln:es:estimate}
\end{algorithm}

We depict the procedure of our ES algorithm in Algorithm~\ref{alg:es}. The first step of ES is to generate a random sample $\widehat{E}_{\varGamma}$ of edges from the edge set $E_{\varGamma}$ where the probability of adding any edge is $p$ (Lines~\ref{ln:es:sample:s}--\ref{ln:es:sample:t}). Then, in the second step (Lines~\ref{ln:es:count:s}--\ref{ln:es:count:t}), it counts the number $\eta(e)$ of local $\delta$-instances of $M$ for each sampled edge $e$ by running the \textsc{BackTracking} algorithm to enumerate each instance $S_j(e)$ that is a $\delta$-instance of $M$ and maps $e$ to $e_j^\prime$ for $j=1,\ldots,l$. Note that \textsc{BackTracking} (BT) runs on a subset $E_{\varGamma}[t-\delta,t+\delta]$ of $E_{\varGamma}$ which consists of all edges with timestamps from $t-\delta$ to $t+\delta$ for edge $e=(u, v, t)$ since it is safe to ignore any other edge due to the duration constraint. Here, we omit the detailed procedure of the BT algorithm because it generally follows an existing algorithm for subgraph isomorphism in temporal graphs~\cite{DBLP:conf/bigdataconf/MackeyPFCC18}. The main difference between our algorithm and the one in~\cite{DBLP:conf/bigdataconf/MackeyPFCC18} lies in the matching order, which will be discussed later. After counting $\eta(e)$ for each sampled edge $e$, it finally returns an estimate $\widehat{C}_M$ of $C_M$(Line~\ref{ln:es:estimate}).

\textbf{Matching Order for \textsc{BackTracking}:}
Now we discuss how to determine the matching order of a temporal motif. The BT algorithm in~\cite{DBLP:conf/bigdataconf/MackeyPFCC18} adopts a time-first matching order: it always matches the edges of $M$ in order of $\langle e_1^\prime,\ldots,e_l^\prime \rangle$. The advantage of this matching order is that it best exploits the temporal information for search space pruning. For a partial instance $S^{(j)}=\langle (w_1,x_1,t_1),\ldots,(w_j,$ $x_j, t_j) \rangle$ after $e_j^\prime$ is mapped, the search space for mapping $e_{j+1}^\prime$ is restricted to $E_{\varGamma}[t_j, t_1+\delta]$. However, the time-first matching order may not work well in ES. First, it does not consider the \emph{connectivity} of the matching order: If $e_{j+1}^\prime$ is not connected with any prior edge, it has to be mapped to all edges in $E_{\varGamma}[t_j, t_1+\delta]$, which may lead to a large number of redundant partial matchings. Second, the time-first order is violated by Line~\ref{ln:es:init} of Algorithm~\ref{alg:es} when $j>1$ since it first maps $e$ to $e_{j}^\prime$.

In order to overcome the above two drawbacks, we propose two heuristics to determine the matching order of a given motif $M$ for reducing the search space, and generate $l$ matching orders for $M$, in each of which $e_{j}^\prime$ ($j=1,\ldots,l$) is placed first: (1) \emph{enforcing connectivity}: For each $i=2,\ldots,l $, the $i$\textsuperscript{th} edge in the matching order must be adjacent to at least one prior edge that has been matched; (2) \emph{boundary edge first}: If there are multiple unmatched edges that satisfy the \emph{connectivity} constraint, the boundary edge (i.e., the first or last unmatched edge in the ordering $\sigma$ of $M$) will be matched first. The first rule can avoid redundant partial matchings and the second rule can restrict the temporal range of tree search, both of which are effective for search space pruning.

\textbf{Theoretical Analysis:}
Next, we analyze the estimate $\widehat{C}_M$ returned by Algorithm~\ref{alg:es} theoretically. We first prove that $\widehat{C}_M$ is an unbiased estimator of $C_M$ in Theorem~\ref{thm:es:exp}. The variance of $\widehat{C}_M$ is given in Theorem~\ref{thm:es:var}.

\begin{theorem}\label{thm:es:exp}
  The expected value $\mathbb{E}[\widehat{C}_M]$ of $\widehat{C}_M$ returned by Algorithm~\ref{alg:es} is $C_M$.
\end{theorem}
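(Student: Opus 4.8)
The plan is to rewrite the estimator $\widehat{C}_M$ as a deterministic-weighted sum of independent Bernoulli indicators over all edges of $\varGamma$, apply linearity of expectation, and then invoke the double-counting identity $C_M = \frac{1}{l}\sum_{e\in E_{\varGamma}}\eta(e)$ already observed at the start of Section~\ref{subsec:alg:1}.

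First I would introduce, for every edge $e \in E_{\varGamma}$, an indicator random variable $X_e$ that equals $1$ if $e$ is added to $\widehat{E}_{\varGamma}$ in Lines~\ref{ln:es:sample:s}--\ref{ln:es:sample:t} of Algorithm~\ref{alg:es} and $0$ otherwise. Since the algorithm tosses an independent biased coin with success probability $p$ for each edge, the $X_e$ are independent Bernoulli variables with $\mathbb{E}[X_e]=p$. The key observation is that $\eta(e)$ is a deterministic function of the fixed input $(\varGamma, M, \delta)$ and does not depend on the sampling outcome, so the output of Algorithm~\ref{alg:es} can be written as
\[
\widehat{C}_M \;=\; \frac{1}{pl}\sum_{e \in \widehat{E}_{\varGamma}}\eta(e) \;=\; \frac{1}{pl}\sum_{e \in E_{\varGamma}} X_e\,\eta(e).
\]

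Next I would take expectations of both sides and push the expectation inside this finite sum by linearity:
\[
\mathbb{E}[\widehat{C}_M] \;=\; \frac{1}{pl}\sum_{e \in E_{\varGamma}}\mathbb{E}[X_e]\,\eta(e) \;=\; \frac{1}{pl}\sum_{e \in E_{\varGamma}} p\,\eta(e) \;=\; \frac{1}{l}\sum_{e \in E_{\varGamma}}\eta(e).
\]
It then remains to identify $\frac{1}{l}\sum_{e \in E_{\varGamma}}\eta(e)$ with $C_M$. Every $\delta$-instance $S$ of $M$ has exactly $l$ edges, and because timestamps are unique its edges appear in strictly increasing time order that is matched to $\sigma$; hence $S$ is counted once in $\eta(e)$ for each of its $l$ constituent edges $e$ and in no other term, so $\sum_{e \in E_{\varGamma}}\eta(e) = l\,C_M$. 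Substituting gives $\mathbb{E}[\widehat{C}_M]=C_M$.

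There is no serious obstacle here: once the estimator is in indicator form, the result is a one-line application of linearity of expectation together with $\mathbb{E}[X_e]=p$. The only point that warrants a careful (but routine) justification is the double-counting identity, namely that $\eta(e)=\sum_{j=1}^{l}\eta_j(e)$ partitions the $\delta$-instances through $e$ without overlap (an instance containing $e=(u,v,t)$ maps $e$ to the unique position $j$ with $t_j=t$), and that running \textsc{BackTracking} on the window $E_{\varGamma}[t-\delta,t+\delta]$ rather than on all of $E_{\varGamma}$ loses no instance, since all edges of any $\delta$-instance through $e$ have timestamps within $[t-\delta,t+\delta]$. Both facts follow directly from the definitions and the correctness of \textsc{BackTracking}~\cite{DBLP:conf/bigdataconf/MackeyPFCC18}.
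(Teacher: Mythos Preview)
Your proof is correct and follows essentially the same route as the paper: introduce Bernoulli indicators for each edge, rewrite $\widehat{C}_M$ as $\frac{1}{pl}\sum_{e} X_e\,\eta(e)$, apply linearity of expectation with $\mathbb{E}[X_e]=p$, and invoke the double-counting identity $C_M=\frac{1}{l}\sum_e \eta(e)$. Your version is slightly more careful in justifying that identity and the correctness of restricting to the window $E_{\varGamma}[t-\delta,t+\delta]$, but the argument is the same.
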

\begin{proof}
  Here, we consider the edges in $E_{\varGamma}$ are indexed by $[1,m]$ and use an indicator $\omega_i$ to denote whether the $i$\textsuperscript{th} edge $e_i$ is sampled, i.e.,
  \begin{equation*}
    \omega_i = \begin{cases}
      1, & e_i \in \widehat{E}_{\varGamma} \\
      0, & e_i \notin \widehat{E}_{\varGamma}
    \end{cases}
  \end{equation*}
  Then, we have
  \begin{equation}\label{Eq:n1}
    \widehat{C}_M = \frac{1}{pl}\sum_{e \in \widehat{E}_{\varGamma}}\eta(e) = \frac{1}{pl}\sum_{i=1}^{m}\omega_{i}\cdot\eta(e_i)
  \end{equation}
  Next, based on Equation~\ref{Eq:n1} and the fact that $\mathbb{E}[\omega_{i}]=p$, we have
  \begin{equation*}
    \mathbb{E}[\widehat{C}_M] = \frac{1}{pl}\sum_{i=1}^{m}\mathbb{E}[\omega_{i}]\cdot\eta(e_i) = \frac{1}{l}\sum_{i=1}^{m}\eta(e_i)=C_M
  \end{equation*}
  and conclude the proof.
\end{proof}

\begin{theorem}\label{thm:es:var}
  The variance $\textnormal{Val}[\widehat{C}_M]$ of $\widehat{C}_M$ returned by Algorithm~\ref{alg:es} is at most $\frac{1-p}{p} \cdot C_M^2$.
\end{theorem}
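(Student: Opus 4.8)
The plan is to start from the representation of $\widehat{C}_M$ already established in Equation~\ref{Eq:n1}, namely $\widehat{C}_M = \frac{1}{pl}\sum_{i=1}^{m}\omega_i\cdot\eta(e_i)$, and exploit the fact that the indicators $\omega_1,\ldots,\omega_m$ are mutually independent, since each edge of $E_{\varGamma}$ is sampled by an independent biased coin toss with success probability $p$ (Lines~\ref{ln:es:sample:s}--\ref{ln:es:sample:t} of Algorithm~\ref{alg:es}). Because the $\eta(e_i)$ are deterministic constants, the variance of $\widehat{C}_M$ is just a scaled sum of the variances of the $\omega_i$.

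Concretely, I would first write
\[
  \textnormal{Var}[\widehat{C}_M] = \frac{1}{p^2l^2}\sum_{i=1}^{m}\eta(e_i)^2\cdot\textnormal{Var}[\omega_i],
\]
using independence to drop all cross terms. Next I would substitute $\textnormal{Var}[\omega_i] = p(1-p)$ (the variance of a Bernoulli$(p)$ random variable), which simplifies the right-hand side to $\frac{1-p}{pl^2}\sum_{i=1}^{m}\eta(e_i)^2$.

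The final step is to bound $\sum_{i=1}^{m}\eta(e_i)^2$. Since every $\eta(e_i)\ge 0$, we have $\sum_{i=1}^{m}\eta(e_i)^2 \le \bigl(\sum_{i=1}^{m}\eta(e_i)\bigr)^2$; and by the counting identity $\sum_{i=1}^{m}\eta(e_i) = l\cdot C_M$ noted in the algorithmic description, the right-hand side equals $l^2 C_M^2$. Plugging this in yields $\textnormal{Var}[\widehat{C}_M] \le \frac{1-p}{pl^2}\cdot l^2 C_M^2 = \frac{1-p}{p}\cdot C_M^2$, as claimed.

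There is no real obstacle here: the argument is elementary once independence of the $\omega_i$ is invoked. The only point worth stating carefully is the inequality $\sum_i \eta(e_i)^2 \le (\sum_i \eta(e_i))^2$, which holds precisely because the summands are nonnegative (it would fail in general), and the reuse of the identity $\sum_i \eta(e_i) = l\,C_M$. I expect the bound to be quite loose in practice — it is tight only when a single edge participates in essentially all instances — but it suffices for the stated claim.
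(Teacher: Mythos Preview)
Your proposal is correct and matches the paper's own proof essentially step for step: expand the variance via the bilinear covariance form, use independence of the Bernoulli indicators to kill the cross terms, substitute $\textnormal{Var}[\omega_i]=p(1-p)$, and then bound $\sum_i \eta(e_i)^2 \le (\sum_i \eta(e_i))^2 = l^2 C_M^2$. The paper writes out the covariance expansion explicitly before dropping the off-diagonal terms, but the substance is identical.
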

\begin{proof}
  According to Equation~\ref{Eq:n1}, we have
  \begin{equation*}
    \textnormal{Val}[\widehat{C}_M] =\textnormal{Val}\Big[\sum_{i=1}^m \frac{\eta(e_i)}{pl} \cdot \omega_i \Big] =\sum_{i, j = 1}^{m} \frac{\eta(e_i)}{pl} \cdot \frac{\eta(e_j)}{pl} \cdot \textnormal{Cov}(\omega_i,\omega_j)
  \end{equation*}
  Because the indicators $\omega_{i}$ and $\omega_{j}$ are independent if $i \neq j$, we have $\textnormal{Cov}(\omega_i,\omega_j)=0$ for any $i \neq j$. In addition, $\textnormal{Cov}(\omega_i,\omega_i)=\textnormal{Val}[\omega_i]=p-p^2$. Based on the above results, we have
  \begin{align*}
    \textnormal{Val}[\widehat{C}_M]
    & =\sum_{i=1}^{m} \frac{\eta^2(e_i)}{p^2 l^2}(p-p^2)
    =\frac{1-p}{pl^2}\sum_{i=1}^{m}\eta^2(e_i) \\
    & \leq \frac{1-p}{pl^2}\Big(\sum_{i=1}^{m}\eta(e_i)\Big)^2
    =\frac{1-p}{p} \cdot C_M^2
  \end{align*}
  and conclude the proof.
\end{proof}

Finally, we can derive Theorem~\ref{thm:es:est} by applying Chebyshev's inequality to Theorem~\ref{thm:es:var}.

\begin{theorem}\label{thm:es:est}
  It holds that $\textnormal{Pr}[|\widehat{C}_M-C_M|\geq\varepsilon\cdot C_M]\leq\frac{1-p}{p\varepsilon^2}$.
\end{theorem}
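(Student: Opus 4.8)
The plan is to invoke Chebyshev's inequality directly, using the unbiasedness from Theorem~\ref{thm:es:exp} and the variance bound from Theorem~\ref{thm:es:var}. Recall that Chebyshev's inequality states that for any random variable $X$ with finite mean $\mathbb{E}[X]$ and variance $\textnormal{Val}[X]$, and any $a > 0$, one has $\textnormal{Pr}[\,|X - \mathbb{E}[X]| \geq a\,] \leq \textnormal{Val}[X] / a^2$. I would apply this with $X = \widehat{C}_M$ and $a = \varepsilon \cdot C_M$.

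First I would substitute $\mathbb{E}[\widehat{C}_M] = C_M$, which holds by Theorem~\ref{thm:es:exp}; this turns the left-hand side of Chebyshev's inequality into exactly $\textnormal{Pr}[\,|\widehat{C}_M - C_M| \geq \varepsilon \cdot C_M\,]$, the quantity we wish to bound. Next I would bound the numerator $\textnormal{Val}[\widehat{C}_M]$ from above by $\frac{1-p}{p} \cdot C_M^2$ using Theorem~\ref{thm:es:var}. Combining these, the bound becomes
\[
  \textnormal{Pr}[\,|\widehat{C}_M - C_M| \geq \varepsilon \cdot C_M\,] \;\leq\; \frac{\textnormal{Val}[\widehat{C}_M]}{\varepsilon^2 C_M^2} \;\leq\; \frac{\frac{1-p}{p} C_M^2}{\varepsilon^2 C_M^2} \;=\; \frac{1-p}{p\varepsilon^2},
\]
where the $C_M^2$ factors cancel (this requires $C_M > 0$, which is implicitly assumed since otherwise the relative-error statement is vacuous or trivial).

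There is essentially no technical obstacle here: the theorem is a routine corollary, and the only things to be careful about are (i) checking that $\widehat{C}_M$ has finite variance so that Chebyshev applies — which is immediate since $\widehat{C}_M$ is a finite sum of bounded terms — and (ii) noting that the cancellation of $C_M^2$ presumes $C_M \neq 0$. I would state the proof in two or three lines, citing Theorems~\ref{thm:es:exp} and \ref{thm:es:var} and Chebyshev's inequality, and conclude.
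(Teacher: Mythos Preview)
Your proposal is correct and matches the paper's proof essentially line for line: the paper also applies Chebyshev's inequality to obtain $\textnormal{Pr}[|\widehat{C}_M-C_M|\geq\varepsilon C_M]\leq \textnormal{Val}[\widehat{C}_M]/(\varepsilon^2 C_M^2)$ and then substitutes the variance bound from Theorem~\ref{thm:es:var}. Your additional remarks about finite variance and the implicit assumption $C_M>0$ are appropriate caveats that the paper leaves unstated.
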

\begin{proof}
  By applying Chebyshev's inequality, we have $\textnormal{Pr}[|\widehat{C}_M-C_M|\geq\varepsilon\cdot C_M] \leq\frac{\textnormal{Val}[\widehat{C}_M]}{\varepsilon^2 C_M^2}$ and thus prove the theorem by substituting $\textnormal{Val}[\widehat{C}_M]$ with $\frac{1-p}{p} \cdot C_M^2$ according to Theorem~\ref{thm:es:var}.
\end{proof}
According to Theorem~\ref{thm:es:est}, we can say $\widehat{C}_M$ is an $(\varepsilon,\gamma)$-estimator of $C_M$ for parameters $\varepsilon,\gamma \in (0,1)$, i.e., $\textnormal{Pr}[|\widehat{C}_M-C_M|<\varepsilon\cdot C_M]> 1-\gamma$, when $p=\frac{1}{1+\gamma\varepsilon^{2}}$.

\textbf{Time Complexity:}
We first analyze the time complexity of computing $\eta(e)$ for an edge $e$. For \textsc{BackTracking}, the search space of each matching step is at most the number of (in-/out-)edges within range $[t-\delta,t]$ or $[t, t+\delta]$ connected with a vertex $v$. Here, we use $d_{\delta}$ to denote the maximum number of (in-/out-)edges connected with one vertex within any $\delta$-length time interval. The time complexity of \textsc{BackTracking} is $O(d_{\delta}^{l-1})$ and thus the time complexity of computing $\eta(e)$ is $O(l d_{\delta}^{l-1})$. Therefore, ES provides an $(\varepsilon,\gamma)$-estimator of $C_M$ in $O(\frac{ml d_{\delta}^{l-1}}{1+\gamma\varepsilon^2})$ time.

\subsection{Improved Edge-Wedge Sampling Algorithm}\label{subsec:alg:2}

\begin{figure}
  \centering
  \includegraphics[width=0.6\textwidth]{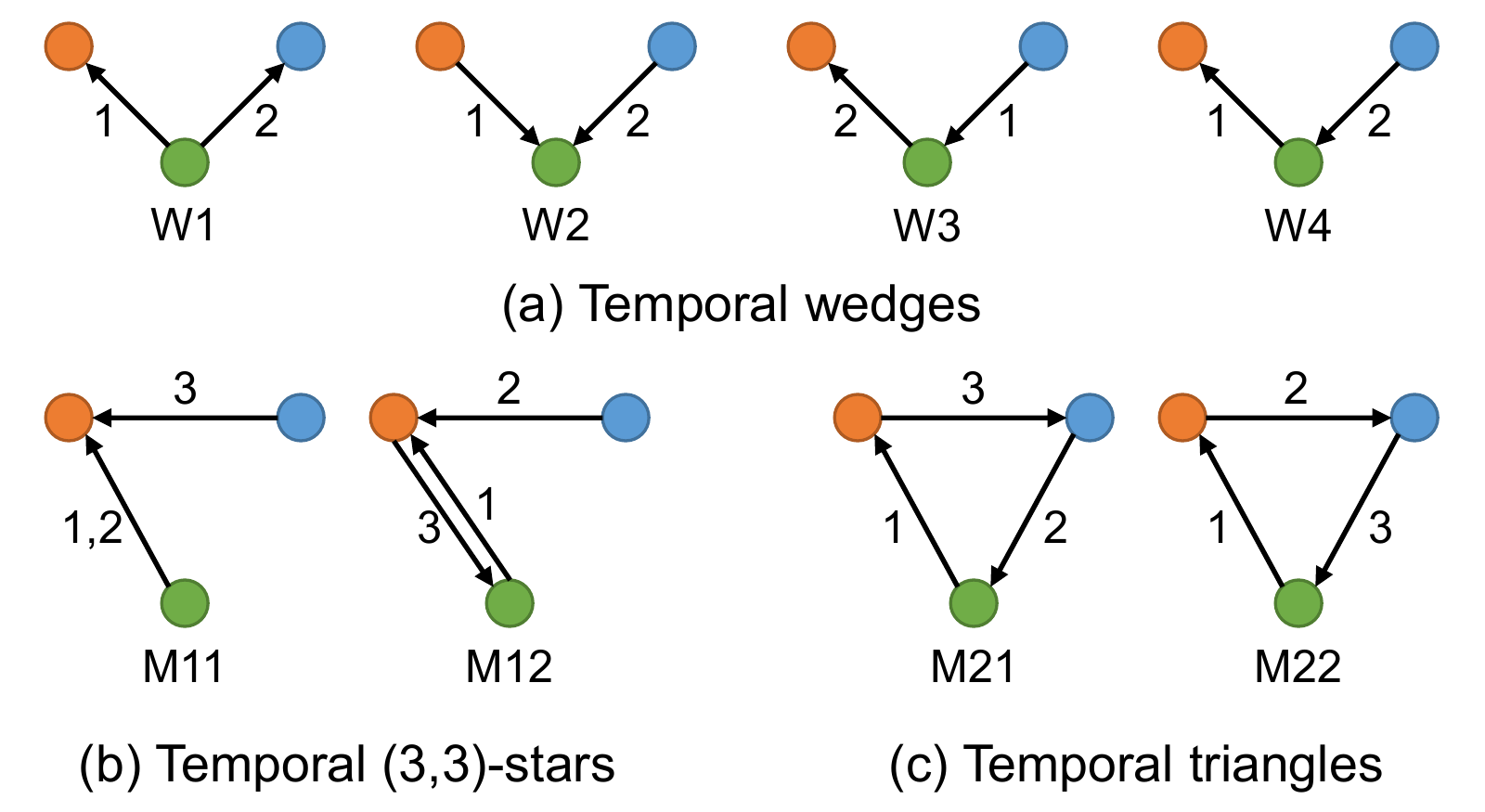}
  \caption{Examples of temporal wedges, $(3,3)$-stars, and triangles}
  \Description{wedge}
  \label{fig:wedge}
\end{figure}

\textbf{Algorithmic Description:}
The ES algorithm in Section~\ref{subsec:alg:1} is generic and able to count any connected temporal motif. Nevertheless, there are still opportunities to further reduce the computational overhead of ES when the query motif is limited to $3$-vertex $3$-edge temporal motifs (i.e., triadic patterns), which are one of the most important classes of motifs to characterize temporal networks~\cite{DBLP:conf/sdm/KoldaPS13, DBLP:conf/wsdm/ParanjapeBL17, DBLP:journals/snam/UzupyteW20, FAUST2010221}.

In this section, we propose an improved Edge-Wedge Sampling (EWS) algorithm that combines \emph{edge sampling} with \emph{wedge sampling} for counting $3$-vertex $3$-edge temporal motifs. Wedge sampling~\cite{DBLP:conf/sdm/KoldaPS13, DBLP:conf/icdm/TurkogluT17, DBLP:conf/www/TurkT19, DBLP:journals/tkde/WuYL16} is a widely used method for triangle counting. Its basic idea is to draw a sample of wedges (i.e., $3$-vertex $2$-edge subgraph patterns) uniformly from a graph and check the ratio of ``closed wedges'' (i.e., form a triangle in the graph) to estimate the number of triangles. However, traditional wedge-sampling methods are proposed for undirected static graphs and cannot be directly used on temporal graphs. First, they consider that all wedges are isomorphic and treat them equally. But there are four temporal wedge patterns with different edge directions and orderings as illustrated in Fig.~\ref{fig:wedge}(a). Second, they are designed for simple graphs where one wedge can form at most one triangle. However, since temporal graphs are multi-graphs and there may exist multiple edges between the same two vertices, one temporal wedge can participate in more than one instance of a temporal motif. Therefore, in the EWS algorithm, we extend \emph{wedge sampling} for temporal motif counting by addressing both issues.

\begin{algorithm} 
  \caption{Edge-Wedge Sampling}\label{alg:ews}
  \KwIn{Temporal graph $\varGamma$, temporal motif $M$, time span $\delta$, edge sampling probability $p$, wedge sampling probability $q$.}
  \KwOut{Estimator $\widehat{C}_M$ of the number of $\delta$-instances of $M$ in $\varGamma$}
  Generate $\widehat{E}_{\varGamma}$ using Line~\ref{ln:es:sample:s}--\ref{ln:es:sample:t} of Algorithm~\ref{alg:es}\;\label{ln:ews:sample}
  \ForEach{$e=(u, v, t) \in \widehat{E}_{\varGamma}$\label{ln:ews:wedge:s}}
  {
    \For{$j \gets 1,2,3$}
    {
      Map edge $e$ to $e_j^\prime$\;\label{ln:ews:mapping}
      Initialize $\widehat{\eta}_j(e) \gets 0$ and $\widehat{\mathcal{W}}_j(e) \gets \varnothing$\;\label{ln:ews:wedge:begin}
      \uIf{$M$ is a temporal $(3,3)$-star\label{ln:ews:wedge:gb}}
      {
        Select $W_j^\prime$ including $e_j^\prime$ centered at the center of $M$\;
      }
      \ElseIf{$M$ is a temporal triangle}
      {
        Select $W_j^\prime$ including $e_j^\prime$ centered at the vertex mapped to the one with a lower degree in $u$ and $v$\;\label{ln:ews:wedge:ge}
      }
      $E_j(e) \gets$ all edges that form $\delta$-instances of $W_j^\prime$ with $e$\;\label{ln:ews:edges}
      \ForEach{$g \in E_j(e)$ \label{ln:ews:wedge:p}}
      {
        Add a $\delta$-instance $W$ of $W_j^\prime$ comprising $e$ and $g$ to $\widehat{\mathcal{W}}_j(e)$ with probability $q$\;\label{ln:ews:wedge:end}
      }
      \ForEach{$W \in \widehat{\mathcal{W}}_j(e)$ \label{ln:ews:est:begin}}
      {
        Let $\eta(W)$ be the number of edges that form $\delta$-instances of $M$ together with $W$\;\label{ln:ews:est:W}
        $\widehat{\eta}_j(e) \gets \widehat{\eta}_j(e) + \frac{\eta(W)}{q}$\;\label{ln:ews:est:end}
      }
    }
    \label{ln:ews:wedge:t}
  }
  \Return{$\widehat{C}_M \gets \frac{1}{3p}\sum_{e \in \widehat{E}_{\varGamma}} \sum_{j=1}^{3}\widehat{\eta}_j(e)$}\;\label{ln:ews:estimate}
\end{algorithm}

The detailed procedure of EWS is presented in Algorithm~\ref{alg:ews}. First of all, it uses the same method as ES to sample a set $\widehat{E}_{\varGamma}$ of edges (Line~\ref{ln:ews:sample}). For each sampled edge $e\in \widehat{E}_{\varGamma}$ and $j = 1,2,3$, it also maps $e$ to $e_j^\prime$  for computing $\eta_j(e)$ (Line~\ref{ln:ews:mapping}), i.e., the number of $\delta$-instances of $M$ where $e$ is mapped to $e_j^\prime$. But, instead of running \textsc{BackTracking} to compute $\eta_j(e)$ exactly, it utilizes \emph{temporal wedge sampling} to estimate $\eta_j(e)$ approximately without full enumeration (Lines~\ref{ln:ews:wedge:begin}--\ref{ln:ews:est:end}), which is divided into two subroutines as discussed later. At last, it obtains an estimate $\widehat{C}_M$ of $C_M$ from each estimate $\widehat{\eta}_j(e)$ of $\eta_j(e)$ using a similar method to ES (Line~\ref{ln:ews:estimate}).

\textbf{Sample Temporal Wedges (Lines~\ref{ln:ews:wedge:begin}--\ref{ln:ews:wedge:end}):}
The first step of \emph{temporal wedge sampling} is to determine which temporal wedge pattern is to be matched according to the query motif $M$ and the mapping from $e$ to $e_j^\prime$. Specifically, we categorize 3-vertex 3-edge temporal motifs into two types, i.e., \emph{temporal $(3,3)$-stars} and \emph{temporal triangles} as shown in Fig.~\ref{fig:wedge}, based on whether they are closed. Interested readers may refer to~\cite{DBLP:conf/wsdm/ParanjapeBL17} for a full list of all $3$-vertex $3$-edge temporal motifs. For a star or wedge pattern, the vertex connected with all edges is its \emph{center}. Given that $e=(u, v, t)$ has been mapped to $e_j^\prime$, EWS should find a temporal wedge pattern $W_j^\prime$ containing $e_j^\prime$ from $M$ for sampling. Here, different strategies are adopted to determine $W_j^\prime$ for star and triangle motifs (Lines~\ref{ln:ews:wedge:gb}--\ref{ln:ews:wedge:ge}): If $M$ is a temporal $(3,3)$-star, it must select  $W_j^\prime$ that contains $e_j^\prime$ and has the same center as $M$; If $M$ is a temporal triangle, it may use the vertex mapped to either $u$ or $v$ as the center to generate a wedge pattern. In this case, the center of $W_j^\prime$ will be mapped to the vertex with a lower degree between $u$ and $v$ for search space reduction. After deciding $W_j^\prime$, it enumerates all edges that form a $\delta$-instance of $W_j^\prime$ together with $e$ as $E_j(e)$ from the adjacency list of the central vertex (Line~\ref{ln:ews:edges}). By selecting each edge $g \in E_j(e)$ with probability $q$, it generates a sample $\widehat{\mathcal{W}}_j(e)$ of $\delta$-instances of $W_j^\prime$ (Lines~\ref{ln:ews:wedge:p}--\ref{ln:ews:wedge:end}).

\textbf{Estimate $\eta_j(e)$ (Lines~\ref{ln:ews:est:begin}--\ref{ln:ews:est:end}):}
Now, it estimates $\eta_j(e)$ from the set $\widehat{\mathcal{W}}_j(e)$ of sampled temporal wedges. For each $W \in \widehat{\mathcal{W}}_j(e)$, it counts the number $\eta(W)$ of $\delta$-instances of $M$ that contain $W$ (Line~\ref{ln:ews:est:W}). Specifically, after matching $W$ with $W_j^\prime$, it can determine the starting and ending vertices as well as the temporal range for the mapping of the third edge of $M$. For the fast computation of $\eta(W)$, EWS maintains a hash table that uses an ordered combination $\langle u, v \rangle$ ($u, v \in V_{\varGamma}$) as the key and a sorted list of the timestamps of all edges from $u$ to $v$ as the value on the edge set $E_{\varGamma}$ of $\varGamma$. In this way, $\eta(W)$ can be computed by a hash search followed by at most two binary searches on the sorted list. Finally, $\eta_j(e)$ can be estimated by summing up $\eta(W)$ for each $W \in \widehat{\mathcal{W}}_j(e)$ (Line~\ref{ln:ews:est:end}), i.e., $\widehat{\eta}_j(e)=\frac{1}{q}\sum_{W\in\widehat{\mathcal{W}}_j(e)}\eta(W)$.

\textbf{Theoretical Analysis:}
Next, we analyze the estimate $\widehat{C}_M$ returned by Algorithm~\ref{alg:ews} theoretically. We prove the unbiasedness and variances of $\widehat{C}_M$ in Theorem~\ref{thm:ews:exp} and Theorem~\ref{thm:ews:var}, respectively.

\begin{theorem}\label{thm:ews:exp}
  The expected value $\mathbb{E}[\widehat{C}_M]$ of $\widehat{C}_M$ returned by Algorithm~\ref{alg:ews} is $C_M$.
\end{theorem}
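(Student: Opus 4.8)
The plan is to peel off the two independent layers of randomness — the edge sampling (probability $p$) and the wedge sampling (probability $q$) — and apply linearity of expectation and the law of total expectation at each layer, exactly as in the proof of Theorem~\ref{thm:es:exp}, but with one extra level. First I would reindex the edges of $E_{\varGamma}$ by $[1,m]$ and introduce the indicator $\omega_i$ for whether $e_i$ is sampled, so that $\mathbb{E}[\omega_i]=p$ and
\begin{equation*}
  \widehat{C}_M = \frac{1}{3p}\sum_{i=1}^{m}\omega_i \sum_{j=1}^{3}\widehat{\eta}_j(e_i).
\end{equation*}
Since $\omega_i$ is independent of the wedge-sampling coins used inside the computation of $\widehat{\eta}_j(e_i)$, conditioning on the wedge layer and taking expectations gives $\mathbb{E}[\widehat{C}_M] = \frac{1}{3}\sum_{i=1}^m \sum_{j=1}^3 \mathbb{E}[\widehat{\eta}_j(e_i)]$, so it suffices to show $\mathbb{E}[\widehat{\eta}_j(e)] = \eta_j(e)$ for every edge $e$ and every $j\in\{1,2,3\}$.

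The core of the argument is therefore the inner claim $\mathbb{E}[\widehat{\eta}_j(e)]=\eta_j(e)$. Here I would use the identity $\widehat{\eta}_j(e)=\frac{1}{q}\sum_{W\in\widehat{\mathcal{W}}_j(e)}\eta(W)$, rewrite the sum over the sampled wedges as a sum over \emph{all} $\delta$-instances $W$ of the pattern $W_j^\prime$ built on $e$ (there is one such $W$ for each $g\in E_j(e)$) weighted by an indicator $\chi_W$ that $W$ was retained in $\widehat{\mathcal{W}}_j(e)$, and use $\mathbb{E}[\chi_W]=q$ together with linearity of expectation:
\begin{equation*}
  \mathbb{E}[\widehat{\eta}_j(e)] = \frac{1}{q}\sum_{W}\mathbb{E}[\chi_W]\,\eta(W) = \sum_{W}\eta(W).
\end{equation*}
The remaining step is a \emph{counting} (not probabilistic) identity: $\sum_{W}\eta(W)=\eta_j(e)$, i.e., summing over all $W_j^\prime$-wedges on $e$ the number $\eta(W)$ of edges completing $W$ into a $\delta$-instance of $M$ (with $e\mapsto e_j^\prime$) enumerates each such $\delta$-instance exactly once. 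This holds because a $3$-edge instance $S$ with $e\mapsto e_j^\prime$ decomposes uniquely into (the wedge formed by $e$ and the second edge in the chosen matching order, which is a $\delta$-instance of $W_j^\prime$) plus (the unique remaining third edge); the choice of which wedge pattern $W_j^\prime$ and which center is fixed deterministically by $M$, $j$, and the degrees of $u,v$ (Lines~\ref{ln:ews:wedge:gb}--\ref{ln:ews:wedge:ge}), so the decomposition map is a bijection between $\delta$-instances counted by $\eta_j(e)$ and pairs $(W, \text{completing edge})$.

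I expect the main obstacle to be the last point — verifying carefully that the wedge-plus-edge decomposition is a genuine bijection in \emph{all} cases, covering both the $(3,3)$-star motifs and the triangle motifs, and checking that no $\delta$-instance is double-counted (which would happen if two distinct edges of $W_j^\prime$ could each serve as the "wedge's second edge") and none is missed (e.g., because of the duration constraint: one must confirm that whenever $S$ is a $\delta$-instance of $M$, its sub-wedge $W$ automatically satisfies $\Delta(W)\le\delta$, which is immediate since $W$'s timestamps are a subset of $S$'s). Once this combinatorial identity is nailed down, assembling the three displays above yields $\mathbb{E}[\widehat{C}_M]=\frac{1}{3}\sum_{i=1}^m\sum_{j=1}^3\eta_j(e_i)=\frac{1}{3}\sum_{i=1}^m\eta(e_i)=C_M$, where the last equality uses $\eta(e)=\sum_{j=1}^3\eta_j(e)$ and the edge-iterator identity $C_M=\frac{1}{3}\sum_{e}\eta(e)$ (the $l=3$ case of the relation in Section~\ref{subsec:alg:1}).
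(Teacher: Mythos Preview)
Your proposal is correct and follows essentially the same approach as the paper: reduce to the inner claim $\mathbb{E}[\widehat{\eta}_j(e)]=\eta_j(e)$ by invoking Theorem~\ref{thm:es:exp}, then verify that claim via indicator variables on the wedge-sampling step. The only difference is one of detail: the paper asserts the counting identity $\sum_{r}\eta(W_r)=\eta_j(e)$ without comment, whereas you spell out the underlying bijection between $\delta$-instances and (wedge, completing edge) pairs; your extra care there is justified but not a different route.
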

\begin{proof}
  By applying the result of Theorem~\ref{thm:es:exp}, we only need to show $\mathbb{E}[\widehat{\eta}_j(e)]=\eta_j(e)$ to prove Theorem~\ref{thm:ews:exp}. Here, we index the edges in $E_j(e)$ by $[1,\ldots,|E_j(e)|]$ and use an indicator $\omega_r$ to denote whether the wedge $W_r$ w.r.t.~the $r$\textsuperscript{th} edge in $E_j(e)$ is sampled. We have the following equality:
  \begin{displaymath}
    \mathbb{E}[\widehat{\eta}_j(e)] = \frac{1}{q}\sum_{r=1}^{|E_j(e)|}\mathbb{E}[\omega_r]\cdot\eta(W_r) = \sum_{r=1}^{|E_j(e)|}\eta(W_r) = \eta_j(e)
  \end{displaymath}
  and conclude the proof.
\end{proof}

\begin{theorem}\label{thm:ews:var}
  The variance $\textnormal{Val}[\widehat{C}_M]$ of $\widehat{C}_M$ returned by Algorithm~\ref{alg:ews} is at most $\frac{1 - p q}{p q} \cdot C_M^2$.
\end{theorem}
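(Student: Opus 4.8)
The plan is to follow the same variance-decomposition strategy used in the proof of Theorem~\ref{thm:es:var}, but now accounting for the two independent layers of randomness: the edge sampling (probability $p$) and, conditionally on each sampled edge, the wedge sampling (probability $q$). I would use the law of total variance, conditioning on the set $\widehat{E}_{\varGamma}$ of sampled edges. Writing $\widehat{C}_M = \frac{1}{3p}\sum_{i=1}^{m}\omega_i Y_i$, where $\omega_i$ is the edge-sampling indicator for $e_i$ and $Y_i = \sum_{j=1}^{3}\widehat{\eta}_j(e_i)$ is the (random, due to wedge sampling) local estimate, the decomposition reads
\begin{equation*}
  \textnormal{Val}[\widehat{C}_M] = \mathbb{E}\big[\textnormal{Val}[\widehat{C}_M \mid \widehat{E}_{\varGamma}]\big] + \textnormal{Val}\big[\mathbb{E}[\widehat{C}_M \mid \widehat{E}_{\varGamma}]\big].
\end{equation*}
The second term is exactly the variance analyzed in Theorem~\ref{thm:es:var} (since $\mathbb{E}[Y_i] = \eta(e_i)$ by Theorem~\ref{thm:ews:exp}'s argument), so it is bounded by $\frac{1-p}{p}C_M^2$. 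The first term captures the extra noise from wedge sampling.

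For the first term, I would first bound $\textnormal{Val}[\widehat{\eta}_j(e)]$ for a fixed sampled edge $e$. Since $\widehat{\eta}_j(e) = \frac{1}{q}\sum_{r=1}^{|E_j(e)|}\omega_r \eta(W_r)$ with independent Bernoulli$(q)$ indicators $\omega_r$, the same computation as in Theorem~\ref{thm:es:var} gives $\textnormal{Val}[\widehat{\eta}_j(e)] = \frac{1-q}{q}\sum_r \eta^2(W_r) \le \frac{1-q}{q}\big(\sum_r \eta(W_r)\big)^2 = \frac{1-q}{q}\eta_j^2(e)$. Because the three wedge-sampling procedures for $j=1,2,3$ use independent coin tosses (each $\widehat{\mathcal{W}}_j(e)$ is sampled separately), $\textnormal{Val}[Y_i \mid e_i \text{ sampled}] = \sum_{j=1}^{3}\textnormal{Val}[\widehat{\eta}_j(e_i)] \le \frac{1-q}{q}\sum_{j=1}^{3}\eta_j^2(e_i) \le \frac{1-q}{q}\eta^2(e_i)$, using $\sum_j \eta_j^2 \le (\sum_j \eta_j)^2 = \eta^2$. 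Then
\begin{equation*}
  \mathbb{E}\big[\textnormal{Val}[\widehat{C}_M \mid \widehat{E}_{\varGamma}]\big]
  = \frac{1}{9p^2}\sum_{i=1}^{m} p\cdot \textnormal{Val}[Y_i \mid e_i \text{ sampled}]
  \le \frac{1-q}{9pq}\sum_{i=1}^{m}\eta^2(e_i)
  \le \frac{1-q}{pq}\,C_M^2,
\end{equation*}
again bounding $\sum_i \eta^2(e_i) \le (\sum_i \eta(e_i))^2 = (3C_M)^2 = 9C_M^2$.

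Adding the two contributions yields $\textnormal{Val}[\widehat{C}_M] \le \frac{1-p}{p}C_M^2 + \frac{1-q}{pq}C_M^2 = \frac{q - pq + 1 - q}{pq}C_M^2 = \frac{1-pq}{pq}C_M^2$, which is the claimed bound. The main obstacle I anticipate is being careful about the independence structure: I need the wedge-sampling coins across the three values of $j$ (and across distinct sampled edges) to be mutually independent so that conditional variances add without cross terms, and I need $\mathbb{E}[\widehat{C}_M \mid \widehat{E}_{\varGamma}] = \frac{1}{3p}\sum_{i}\omega_i \eta(e_i)$ to hold exactly — both follow from the algorithm's construction and Theorem~\ref{thm:ews:exp}, but they should be stated explicitly. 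A secondary subtlety is the repeated use of the inequality $\sum x_k^2 \le (\sum x_k)^2$ for nonnegative $x_k$, which is where the bound is loose but matches the style of Theorem~\ref{thm:es:var}.
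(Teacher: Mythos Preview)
Your argument is correct and reaches the stated bound. It is, however, organized differently from the paper's proof. The paper does not condition: it writes
\[
\widehat{C}_M=\frac{1}{3pq}\sum_{i=1}^{m}\sum_{j=1}^{3}\sum_{r=1}^{m_{ij}}\omega_i\,\omega_{ijr}\,\eta(W_{ijr})
\]
as a single triple sum and then asserts that the variance of this sum equals the sum of the individual variances $\textnormal{Var}[\omega_i\omega_{ijr}]=pq-p^2q^2$, after which the same $\sum x_k^2\le(\sum x_k)^2$ inequality and $C_M=\frac{1}{3}\sum_{i,j,r}\eta(W_{ijr})$ finish the bound. Your approach instead uses the law of total variance, conditioning on $\widehat{E}_{\varGamma}$, which cleanly separates the edge-sampling contribution $\frac{1-p}{p}C_M^2$ (reusing Theorem~\ref{thm:es:var}) from the wedge-sampling contribution $\frac{1-q}{pq}C_M^2$, and then adds them to obtain $\frac{1-pq}{pq}C_M^2$.

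What your route buys is rigor at exactly the point where the paper is loose: in the paper's direct expansion, the products $\omega_i\omega_{ijr}$ and $\omega_i\omega_{ij'r'}$ for the same $i$ but different $(j,r)$ are \emph{not} uncorrelated (their covariance is $pq^2-p^2q^2>0$), so the ``second equality'' there is not literally an equality. The stated upper bound is nevertheless true, and your conditioning argument is precisely what justifies it, since it keeps the two independent layers of randomness separate and never needs the spurious zero-covariance claim. The price you pay is a slightly longer chain of inequalities (you apply $\sum x_k^2\le(\sum x_k)^2$ three times rather than once), but each step is transparent. Your anticipated subtleties --- independence of the wedge-sampling coins across $j$ and across edges, and $\mathbb{E}[\widehat{C}_M\mid\widehat{E}_{\varGamma}]=\frac{1}{3p}\sum_i\omega_i\eta(e_i)$ --- are exactly the right things to state explicitly, and both follow directly from Algorithm~\ref{alg:ews}.
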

\begin{proof}
  Let us index the edges in $E_{\varGamma}$ by $[1,m]$ and the edges in $E_j(e_i)$ when $e_i$ is mapped to $e_j^\prime$ by $[1,m_{i j}]$ where $m_{i j}=|E_j(e_i)|$. Similar to the proof of Theorem~\ref{thm:es:var}, we have
  \begin{align*}
    \textnormal{Val}[\widehat{C}_M]
    & = \textnormal{Val}\Big[\frac{1}{3pq} \sum_{i=1}^{m}\sum_{j=1}^{3}\sum_{r=1}^{m_{ij}} \omega_i \cdot \omega_{i j r} \cdot \eta(W_{i j r}) \Big] \\
    & = \sum_{i=1}^{m}\sum_{j=1}^{3}\sum_{r=1}^{m_{ij}} \frac{\eta^2(W_{i j r})}{9 p^2 q^2} \cdot \textnormal{Var}[\omega_i \cdot \omega_{i j r}] \\
    & = \frac{1-pq}{9pq} \cdot \sum_{i=1}^{m}\sum_{j=1}^{3}\sum_{r=1}^{m_{ij}} \eta^2(W_{i j r}) \leq \frac{1-pq}{pq} \cdot C_M^2
  \end{align*}
  where $\eta(W_{i j r})$ is the number of $\delta$-instances of $M$ containing a temporal wedge $W_{i j r}$ and $\omega_{i j r}$ is its indicator, the second equality holds because $\omega_i$ and $\omega_{i j r}$ are mutually independent, the third equality holds because $\textnormal{Var}[\omega_i \cdot \omega_{i j r}]=p q - p^2 q^2$, and the last inequality holds for $C_M=\frac{1}{3}\sum_{i=1}^{m}\sum_{j=1}^{3}\sum_{r=1}^{m_{i j}}\eta(W_{i j r})$.
\end{proof}

According to the result of Theorem~\ref{thm:ews:var} and Chebyshev's inequality, we have $\textnormal{Pr}[|\widehat{C}_M-C_M|\geq\varepsilon\cdot C_M]\leq\frac{1 - p q}{p q \varepsilon^2}$ and $\widehat{C}_M$ is an $(\varepsilon,\gamma)$-estimator of $C_M$ for parameters $\varepsilon,\gamma \in (0,1)$ when $p q = \frac{1}{1+\gamma\varepsilon^{2}}$.

\textbf{Time Complexity:}
We first analyze the time to compute $\widehat{\eta}_j(e)$. First, $|E_j(e)|$ is bounded by the maximum number of (in-/out-)edges connected with one vertex within any $\delta$-length time interval, i.e., $d_{\delta}$. Second, the time to compute $\eta(W)$ using a hash table is $O(\log{h})$ where $h$ is the maximum number of edges between any two vertices. Therefore, the time complexity per edge in EWS is $O(d_{\delta}\log{h})$. This is lower than $O(d_{\delta}^2)$ time per edge in ES (when $k, l = 3$). Finally, EWS provides an $(\varepsilon,\gamma)$-estimator of $C_M$ in $O(\frac{m d_{\delta}\log{h}}{1+\gamma\varepsilon^2})$ time.

\section{Algorithms For Temporal Graph Streams}\label{sec:s-alg}

\textbf{Algorithmic Description:}
In this section, we extend our proposed algorithms, i.e., ES and EWS, to work in a streaming setting so that they can deal with temporal graphs that are generated incrementally over time or too large to fit in main memory. Our basic idea is to sample a fixed-size subset of edges from a temporal graph stream via \emph{reservoir sampling}~\cite{DBLP:journals/toms/Vitter85} and then utilize a similar method as used in ES or EWS on the sampled edges for estimation. The advantages of reservoir sampling are three-fold. First, it uses a fixed memory that can be set in advance, even for a graph stream of unknown size. Thus, we can determine an appropriate memory size for the algorithm before running it. Second, it only requires a single pass over the stream without pre/post-processing. Third, it maintains a uniform sample of the observed data stream at any time, based on which an up-to-date estimator can always be acquired in real time. For the above advantages of reservoir sampling, it has been adopted in many different counting methods in graph streams~\cite{DBLP:journals/pvldb/AhmedDWR17, DBLP:journals/tkdd/StefaniERU17, DBLP:conf/icdm/Shin17, 2018PES}. Nevertheless, these methods still do not consider the temporal information of the graph and thus cannot be used for temporal motif counting. Next, we will present a general reservoir sampling-based framework to extend ES and EWS for temporal graph streams. We note that the extended streaming algorithms based on ES and EWS are referred to as the Streaming ES (SES) and Streaming EWS (SEWS) algorithms, respectively.

\begin{algorithm}
  \caption{Streaming ES/EWS}\label{alg:sf}
  \KwIn{Temporal graph stream $\varGamma$, temporal motif $M$, time span $\delta$, sample size $r$.}
  \KwOut{Estimator $\widehat{C}_{M, t}$ of the number of $\delta$-instances of $M$ in $\varGamma_t$ at time $t$}
  Initialize $\widehat{E}_{\varGamma} \gets \varnothing$, $C_{M}(\widehat{E}_{\varGamma}) \gets 0$\;\label{ln:sf:sample:s}
  \ForEach{$e=(u, v, t)\in \varGamma$}
  {
    Add $e$ to the set $E_{\varGamma}[t-\delta,t]$ of active edges\;\label{ln:sf:update:1}
    Remove the edges with timestamps smaller than $t-\delta$ from $E_{\varGamma}[t-\delta,t]$\;\label{ln:sf:update:2}
    \uIf{$m_t < r$\label{ln:sf:sample:s1}}
    {
      $\widehat{E}_{\varGamma} \gets \widehat{E}_{\varGamma} \cup \{e\}$\;\label{ln:sf:sample:t1}
      Count $\eta(e)$ by running \textsc{BackTracking} with $j=l$ using Line~\ref{ln:es:init}--\ref{ln:es:count:t} of Algorithm~\ref{alg:es} for SES (or estimate $\widehat{\eta}(e)$ when $j=3$ using Line~\ref{ln:ews:mapping}--\ref{ln:ews:wedge:t} of Algorithm~\ref{alg:ews} for SEWS)\;\label{ln:sf:local1}
      $C_{M}(\widehat{E}_{\varGamma}) \gets C_{M}(\widehat{E}_{\varGamma}) + \eta(e)$ for SES (or $C_{M}(\widehat{E}_{\varGamma}) + \widehat{\eta}(e)$ for SEWS)\;\label{ln:sf:counter:1}
      \Return{$\widehat{C}_{M,t} \gets C_{M}(\widehat{E}_{\varGamma})$}\;\label{ln:sf:estimate1}
    }
    \Else
    {
      Toss a biased coin with success probability $\frac{r}{m_t}$\;\label{ln:sf:sample:s2}
      \If{success}
      {
        Select an edge $g$ randomly from $\widehat{E}_{\varGamma}$\;\label{ln:sf:sample:s3}
        $\widehat{E}_{\varGamma} \gets \widehat{E}_{\varGamma} \setminus \{g\} \cup \{e\}$\;\label{ln:sf:sample:t2}
        Count $\eta(e)$ by running \textsc{BackTracking} when $j=l$ using Line~\ref{ln:es:init}--\ref{ln:es:count:t} of Algorithm~\ref{alg:es} for SES (or estimate $\widehat{\eta}(e)$ when $j=3$ using Line~\ref{ln:ews:mapping}--\ref{ln:ews:wedge:t} of Algorithm~\ref{alg:ews} for SEWS)\;\label{ln:sf:local2}
        $C_{M}(\widehat{E}_{\varGamma}) \gets C_{M}(\widehat{E}_{\varGamma}) + \eta(e) - \eta(g)$ for SES (or $C_{M}(\widehat{E}_{\varGamma}) + \widehat{\eta}(e) - \widehat{\eta}(g)$ for SEWS)\;\label{ln:sf:counter:2}
      }
      \Return{$\widehat{C}_{M,t} \gets \frac{m_t}{r} C_{M}(\widehat{E}_{\varGamma})$}\;\label{ln:sf:estimate2}
    }
  }  
\end{algorithm}

The detailed procedures of Streaming ES/EWS are presented in Algorithm~\ref{alg:sf}. We are given a temporal graph stream $\varGamma$, of which all the temporal edges are generated and observed in chronological order, a temporal motif $M$, a time span $\delta$, and a sample size $r$ for reservoir sampling. First of all, it initializes the set $\widehat{E}_{\varGamma}$ of sampled edges to $\varnothing$ and the counter $C_{M}(\widehat{E}_{\varGamma})$ for the number of instances of $M$ containing each sampled edge in $\widehat{E}_{\varGamma}$ to $0$ (Line~\ref{ln:sf:sample:s}). Then, when an edge $e=(u, v, t)$ arrives, it first updates the set $E_{\varGamma}[t-\delta,t]$ of active edges, which only contains the edges with timestamps between $t-\delta$ and $t$, by adding the new edge $e$ while deleting old edges whose timestamps are smaller than $t-\delta$ (Lines~\ref{ln:sf:update:1}--\ref{ln:sf:update:2}). Next, the reservoir sampling is performed to determine whether $e$ is added to $\widehat{E}_{\varGamma}$. Specifically, there are two cases for $e$. First, if the number $m_t$ of temporal edges in $\varGamma_{t}$ is smaller than the sample size $r$, i.e., the reservoir has not been filled yet, then $e$ will be definitely added into $\widehat{E}_{\varGamma}$ (Lines~\ref{ln:sf:sample:s1}); Otherwise, when $\widehat{E}_{\varGamma}$ has contained $r$ edges, $e$ will be added to $\widehat{E}_{\varGamma}$ with probability $\frac{r}{m_t}$ (Line~\ref{ln:sf:sample:s2}). If $e$ is successfully added to $\widehat{E}_{\varGamma}$, it will pick another edge $g$ from $\widehat{E}_\varGamma$ randomly to be replaced with $e$ so that the number of edges in $\widehat{E}_{\varGamma}$ is still equal to $r$ (Lines~\ref{ln:sf:sample:s3}--\ref{ln:sf:sample:t2}). If $e$ is sampled into $\widehat{E}_{\varGamma}$, it uses the same method as ES or EWS to count exactly or estimate approximately the number ${\eta}(e)$ of local $\delta$-instances of $M$ containing $e$ (see Lines~\ref{ln:sf:local1} and~\ref{ln:sf:local2}). After $\eta(e)$ or $\widehat{\eta}(e)$ is acquired, the counter $C_{M}(\widehat{E}_{\varGamma})$ will be updated by adding $\eta(e)$ or $\widehat{\eta}(e)$ (and subtracting $\eta(g)$ or $\widehat{\eta}(g)$ when $g$ is deleted from the sample) (Lines~\ref{ln:sf:counter:1} and~\ref{ln:sf:counter:2}).

Note that the procedure of counting or estimating ${\eta}(e)$ is slightly different from that of ES or EWS: Instead of mapping $e$ to all the $l$ edges of $M$, it only maps $e$ to the last edge $e^{\prime}_{l}$ of $M$ for computing $\eta(e)$ or $\widehat{\eta}(e)$. This is because only the active edges in $E_{\varGamma}[t-\delta,t]$ are available for matching the remaining edges. When $e$ is mapped to $e^{\prime}_{l}$, the remaining matched edges must fall in $E_{\varGamma}[t-\delta,t]$. But if $e$ is mapped to prior edges in $M$, since the edges after $e$ have not been observed yet, the mapping cannot be completed based on $E_{\varGamma}[t-\delta,t]$. Therefore, each local instance is enumerated at most once, and the final estimate $C_{M, t}$ is not scaled by $l$ in this case. To be specific, the estimate $C_{M, t}$ of the motif count in $\varGamma_t$ will be exactly $C_{M}(\widehat{E}_{\varGamma})$ when $m_t<r$ and the sampling probability is $1$ (Line~\ref{ln:sf:estimate1}) or $\frac{m_t}{r} C_{M}(\widehat{E}_{\varGamma})$ when $m_t \geq r$ and the sampling probability is $\frac{r}{m_t}$ (Line~\ref{ln:sf:estimate2}).

\textbf{Theoretical Analysis:}
Next, we will analyze the estimates $\widehat{C}_{M, t}$ returned by SES and SEWS theoretically. We first prove that the estimates $\widehat{C}_{M, t}$ of SES and SEWS are both unbiased in Theorem~\ref{thm:ses:exp}. Then, we provide the upper bounds of the variances of $\widehat{C}_{M, t}$ returned by SES and SEWS in Theorem~\ref{thm:ses:var} and~\ref{thm:sews:var}, respectively.

\begin{theorem}\label{thm:ses:exp}
  The expected values $\mathbb{E}[\widehat{C}_{M, t}]$ of $\widehat{C}_{M, t}$ returned by SES and SEWS are both $C_{M, t}$.
\end{theorem}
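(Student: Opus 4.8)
The plan is to reduce both statements to three ingredients: (i) the ``last-edge'' local count gives an \emph{exact} partition of the global count, (ii) reservoir sampling~\cite{DBLP:journals/toms/Vitter85} yields a uniform sample whose scaled counter is unbiased, and (iii) for SEWS, the conditional unbiasedness of the wedge-sampling estimator already established in the proof of Theorem~\ref{thm:ews:exp}. Everything else is linearity of expectation.

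First I would pin down the meaning of $\eta(e)$ as used in Algorithm~\ref{alg:sf}. Since a sampled edge $e=(u,v,t)$ is mapped only to the last motif edge $e'_l$, the quantity computed in Lines~\ref{ln:sf:local1} and~\ref{ln:sf:local2} is precisely $\eta_l(e)$ in the notation of Section~\ref{subsec:alg:1}: the number of $\delta$-instances of $M$ in which $e$ plays the role of $e'_l$. Every $\delta$-instance $S$ of $M$ in $\varGamma_t$ has a unique edge with the largest timestamp, namely the one matched to $e'_l$, so $S\mapsto(\text{its last edge})$ partitions the set of all $\delta$-instances and yields the exact identity $\sum_{e\in E_{\varGamma,t}}\eta_l(e)=C_{M,t}$ (no division by $l$). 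Crucially, $\eta_l(e)$ depends only on edges with timestamp at most $t(e)$ — all other edges of any such instance precede $e$ and hence lie in the active window $E_{\varGamma}[t(e)-\delta,t(e)]$ at the time $e$ arrives — so it is a fixed number, unaffected by later arrivals. This time-invariance is what makes the incremental counter well defined when an evicted edge's contribution is subtracted in Lines~\ref{ln:sf:counter:1}--\ref{ln:sf:counter:2}.

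Next I would treat SES. When $m_t<r$ every edge is retained, so a straightforward induction on the stream shows $C_M(\widehat{E}_\varGamma)=\sum_{e\in E_{\varGamma,t}}\eta_l(e)$, whence $\widehat{C}_{M,t}=C_{M,t}$ deterministically (Line~\ref{ln:sf:estimate1}). When $m_t\ge r$, reservoir sampling maintains a uniform random $r$-subset $\widehat{E}_\varGamma$ of $E_{\varGamma,t}$, and the same induction (using the time-invariance of $\eta_l$ to justify the subtraction of $\eta_l(g)$ on eviction) shows the maintained counter equals $C_M(\widehat{E}_\varGamma)=\sum_{e\in\widehat{E}_\varGamma}\eta_l(e)$ at every step. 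Indexing $E_{\varGamma,t}$ by $[1,m_t]$ and writing $\omega_i$ for the indicator of $e_i\in\widehat{E}_\varGamma$ with $\mathbb{E}[\omega_i]=\frac{r}{m_t}$, we obtain
\begin{equation*}
  \mathbb{E}[\widehat{C}_{M,t}] = \frac{m_t}{r}\,\mathbb{E}\Big[\sum_{i=1}^{m_t}\omega_i\,\eta_l(e_i)\Big] = \frac{m_t}{r}\cdot\frac{r}{m_t}\sum_{i=1}^{m_t}\eta_l(e_i) = C_{M,t}.
\end{equation*}

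Finally, for SEWS the value stored for a sampled edge $e$ is the random wedge-sampling estimate $\widehat{\eta}(e)$ rather than $\eta_l(e)$; by the argument in the proof of Theorem~\ref{thm:ews:exp} specialized to $j=l=3$ we have $\mathbb{E}[\widehat\eta(e)]=\eta_l(e)$. The coin tosses governing the reservoir membership of $e_i$ are disjoint from those governing $\widehat\eta(e_i)$, so (after the usual coupling that defines $\widehat\eta(e_i)$ even on runs where $e_i$ is not sampled) $\omega_i$ and $\widehat\eta(e_i)$ are independent; conditioning on the reservoir and applying linearity gives $\mathbb{E}[\widehat{C}_{M,t}] = \frac{m_t}{r}\sum_{i}\mathbb{E}[\omega_i]\,\mathbb{E}[\widehat\eta(e_i)] = C_{M,t}$ when $m_t\ge r$, and $=C_{M,t}$ directly in the $m_t<r$ case. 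I expect the main obstacle to be step two's bookkeeping — verifying precisely that the updates in Lines~\ref{ln:sf:counter:1}--\ref{ln:sf:counter:2} together with the eviction step preserve the invariant $C_M(\widehat{E}_\varGamma)=\sum_{e\in\widehat{E}_\varGamma}\eta_l(e)$ (resp.\ with $\widehat\eta$), since this is exactly what licenses treating $\widehat{C}_{M,t}$ as a clean scaled sum of (independent) terms; the time-invariance of $\eta_l(\cdot)$ noted in the second paragraph is the point on which this hinges.
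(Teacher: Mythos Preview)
Your proposal is correct and follows essentially the same approach as the paper: split into the cases $m_t<r$ (exact) and $m_t\ge r$, use indicator variables $\omega_i$ with $\mathbb{E}[\omega_i]=r/m_t$ from reservoir sampling together with linearity of expectation for SES, and then invoke $\mathbb{E}[\widehat\eta(e)]=\eta(e)$ from Theorem~\ref{thm:ews:exp} for SEWS. Your treatment is in fact more careful than the paper's --- you make explicit the last-edge partition identity $\sum_e\eta_l(e)=C_{M,t}$, the time-invariance of $\eta_l(e)$ needed for the counter bookkeeping, and the independence of the reservoir and wedge-sampling coin flips --- all of which the paper leaves implicit.
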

\begin{proof}
  We first consider the SES algorithm. When $m_t < r$, $\widehat{E}_{\varGamma}$ contains all the edges in $E_{\varGamma,t}$ at time $t$. Since each edge $e \in E_{\varGamma,t}$ is sampled and each $\delta$-instance of $M$ is counted once when $e$ is mapped to $e^{\prime}_{l}$, we have $\widehat{C}_{M, t}$ is exactly equal to $C_{M, t}$. Then, when $m_t \geq r$, we also index the edges in $E_{\varGamma,t}$ by $[1,m_t]$ and use an indicator $\omega_i$ to denote whether the $i$\textsuperscript{th} edge $e_i$ is sampled, i.e.,
  \begin{equation*}
    \omega_i =
    \begin{cases}
      1, & e_i \in \widehat{E}_{\varGamma} \\
      0, & e_i \notin \widehat{E}_{\varGamma}
    \end{cases}
  \end{equation*}
  Then, we have
  \begin{equation}\label{Eq:sn1}
    \widehat{C}_{M, t} = \frac{m_t}{r} \sum_{e \in \widehat{E}_{\varGamma}} \eta(e) = \frac{m_t}{r} \sum_{i=1}^{m_t} \omega_{i} \cdot \eta(e_i)
  \end{equation}
  Next, based on the result of Equation~\ref{Eq:sn1} and the fact that $\mathbb{E}[\omega_{i}]=\frac{r}{m_t}$ according to the property of \emph{reservoir sampling}~\cite{DBLP:journals/toms/Vitter85}, we have
  \begin{equation*}
    \mathbb{E}[\widehat{C}_{M, t}] = \frac{m_t}{r} \sum_{i=1}^{m_t} \mathbb{E}[\omega_{i}] \cdot \eta(e_i) = \sum_{i=1}^{m_t} \eta(e_i) = C_{M, t}
  \end{equation*}
  and conclude that $\widehat{C}_{M, t}$ returned by SES is unbiased.

  By combining the above results for SES together with the fact that $\mathbb{E}[\widehat{\eta}(e)]=\eta(e)$ as proven in Theorem~\ref{thm:ews:exp}, we also conclude that $\mathbb{E}[\widehat{C}_{M, t}] = C_{M, t}$ for the estimate $\widehat{C}_{M, t}$ returned by SEWS.
\end{proof}

\begin{theorem}\label{thm:ses:var}
  The variance $\textnormal{Val}[\widehat{C}_{M, t}]$ of $\widehat{C}_{M, t}$ returned by SES is at most $\frac{m_t-r}{r} \cdot C^2_{M, t}$.
\end{theorem}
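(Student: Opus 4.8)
The plan is to follow the variance computation of Theorem~\ref{thm:es:var}, but accounting for the fact that \emph{reservoir sampling} draws a sample \emph{without} replacement, so the sampling indicators $\omega_i$ are \emph{negatively} correlated rather than independent. First I would dispose of the trivial regime: when $m_t < r$ the reservoir contains every observed edge and, since each $\delta$-instance is counted exactly once (when its last edge is mapped to $e'_l$), $\widehat{C}_{M,t} = C_{M,t}$ deterministically, so its variance is $0$; the claim is thus understood in the non-trivial regime $m_t \geq r$, which is precisely the case covered by Equation~\ref{Eq:sn1} (and where both sides vanish when $m_t = r$).

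Starting from $\widehat{C}_{M,t} = \frac{m_t}{r}\sum_{i=1}^{m_t}\omega_i\,\eta(e_i)$, I would write $\textnormal{Val}[\widehat{C}_{M,t}] = \frac{m_t^2}{r^2}\sum_{i,j=1}^{m_t}\eta(e_i)\eta(e_j)\,\textnormal{Cov}(\omega_i,\omega_j)$. The essential inputs are the first two moments of the indicators: the reservoir after processing $\varGamma_t$ is a uniformly random $r$-subset of the $m_t$ edges~\cite{DBLP:journals/toms/Vitter85}, so $\textnormal{Pr}[\omega_i = 1] = \frac{r}{m_t}$ and $\textnormal{Pr}[\omega_i = \omega_j = 1] = \frac{r(r-1)}{m_t(m_t-1)}$ for $i \neq j$; hence $\textnormal{Var}[\omega_i] = \frac{r(m_t-r)}{m_t^2}$ and $\textnormal{Cov}(\omega_i,\omega_j) = -\frac{r(m_t-r)}{m_t^2(m_t-1)}$ for $i \neq j$.

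Substituting and factoring out $\frac{m_t-r}{r}$ reduces the variance to $\frac{m_t-r}{r}\bigl(\sum_i \eta^2(e_i) - \frac{1}{m_t-1}\sum_{i\neq j}\eta(e_i)\eta(e_j)\bigr)$. Using $\sum_{i\neq j}\eta(e_i)\eta(e_j) = (\sum_i\eta(e_i))^2 - \sum_i\eta^2(e_i)$ together with $\sum_i\eta(e_i) = C_{M,t}$ (valid here because each instance is enumerated once), this simplifies to $\frac{m_t-r}{r}\cdot\frac{m_t\sum_i\eta^2(e_i) - C_{M,t}^2}{m_t-1}$. Finally, since every $\eta(e_i)\geq 0$, $\sum_i\eta^2(e_i) \leq (\sum_i\eta(e_i))^2 = C_{M,t}^2$, and plugging this in gives exactly $\frac{m_t-r}{r}\cdot\frac{(m_t-1)C_{M,t}^2}{m_t-1} = \frac{m_t-r}{r}C_{M,t}^2$.

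The main obstacle, I expect, is twofold. One is pinning down the pairwise inclusion probability $\textnormal{Pr}[\omega_i = \omega_j = 1]$ for reservoir sampling, i.e.\ justifying that the reservoir maintained by Algorithm~\ref{alg:sf} is distributed as a uniform $r$-subset at each time step. The other, more delicate point, is that one must \emph{not} bound $\sum_i\eta^2(e_i)$ by $C_{M,t}^2$ prematurely: doing so before combining the diagonal and off-diagonal terms leaves a spurious factor $\frac{m_t}{m_t-1}$ and overshoots the claimed bound, so the exact cancellation against the $\frac{1}{m_t-1}$ coming from the negative covariance must be carried through before the final nonnegativity estimate is applied.
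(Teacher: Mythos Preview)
Your proposal is correct and, in fact, more careful than the paper's own argument. The paper's proof of Theorem~\ref{thm:ses:var} asserts that the reservoir indicators $\omega_i$ and $\omega_j$ are \emph{independent} for $i\neq j$ and sets $\textnormal{Cov}(\omega_i,\omega_j)=0$, then bounds $\sum_i\eta^2(e_i)\leq(\sum_i\eta(e_i))^2=C_{M,t}^2$ exactly as in Theorem~\ref{thm:es:var}. But Algorithm~\ref{alg:sf} is standard reservoir sampling, so the final sample is a uniform $r$-subset and the indicators are genuinely \emph{negatively} correlated, as you note; the paper's independence claim is not literally true. Its conclusion survives only because the dropped cross terms $\frac{m_t^2}{r^2}\eta(e_i)\eta(e_j)\,\textnormal{Cov}(\omega_i,\omega_j)$ are nonpositive (each $\eta(\cdot)\geq 0$ and the covariance is $\leq 0$), so discarding them already yields an upper bound.

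What your approach buys is rigor and a slightly sharper exact expression: you carry the negative covariance through to the closed form $\frac{m_t-r}{r}\cdot\frac{m_t\sum_i\eta^2(e_i)-C_{M,t}^2}{m_t-1}$ before applying $\sum_i\eta^2(e_i)\leq C_{M,t}^2$, and you correctly flag that bounding too early leaves a stray $\frac{m_t}{m_t-1}$. The paper's route is shorter precisely because it (tacitly) exploits the sign of the off-diagonal terms rather than computing them. Either way the final bound is the same; your version simply makes explicit the step the paper glosses over.
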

\begin{proof}
  According to Equation~\ref{Eq:sn1}, we have
  \begin{equation*}
    \textnormal{Val}[\widehat{C}_{M, t}] = \textnormal{Val}\Big[\sum_{i=1}^{m_t} \frac{m_t \eta(e_i)}{r} \cdot \omega_i \Big] = \sum_{i, j = 1}^{m_t} \frac{m_t\eta(e_i)}{r} \cdot \frac{m_t\eta(e_j)}{r} \cdot \textnormal{Cov}(\omega_i,\omega_j)
  \end{equation*}
  Because the indicator $\omega_{i}$ and $\omega_{j}$ are independent if $i \neq j$, we have $\textnormal{Cov}(\omega_i,\omega_j) = 0$ for any $i \neq j$. In addition, $\textnormal{Cov}(\omega_i,\omega_i) = \textnormal{Val}[\omega_i]=\frac{r}{m_t}-(\frac{r}{m_t})^2$. Based on the above results, we have
  \begin{align*}
    \textnormal{Val}[\widehat{C}_{M, t}]
    & = \sum_{i=1}^{m_t} \frac{{m_t}^2\eta^2(e_i)}{{r}^2}(\frac{r}{m_t}-\frac{r^2}{m^2_t}) = \frac{m_t-r}{r}\sum_{i=1}^{m_t}\eta^2(e_i) \\
    & \leq \frac{m_t-r}{r} \Big(\sum_{i=1}^{m_t} \eta(e_i)\Big)^2 = \frac{m_t-r}{r} \cdot C^2_{M, t}
  \end{align*}
  and conclude the proof.
\end{proof}

\begin{theorem}\label{thm:sews:var}
  The variance $\textnormal{Val}[\widehat{C}_{M, t}]$ of $\widehat{C}_{M, t}$ returned by SEWS is at most $\frac{m_t - r q}{r q} \cdot C^2_{M, t}$.
\end{theorem}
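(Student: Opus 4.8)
The plan is to exploit the two-stage structure of SEWS---reservoir sampling of edges, then wedge sampling inside each sampled edge---through the \emph{law of total variance}, conditioning on the reservoir $\widehat{E}_{\varGamma}$. Write $\widehat{\eta}(e)$ for the wedge-sampling estimate of $\eta(e)$ that SEWS computes by mapping $e$ only to the last edge $e_l^\prime$; index by $r$ the edges of $E_l(e)$ and let $W_{er}$ be the $\delta$-instance of the wedge pattern $W_l^\prime$ formed by $e$ and the $r$-th such edge, so $\widehat{\eta}(e)=\frac{1}{q}\sum_r\omega_{er}\,\eta(W_{er})$ where the $\omega_{er}\sim\mathrm{Bernoulli}(q)$ are mutually independent and independent of the reservoir. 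As in the proof of Theorem~\ref{thm:ses:exp}, mapping only the last-arriving edge to $e_l^\prime$ together with the deterministic wedge-center rule of Lines~\ref{ln:ews:wedge:gb}--\ref{ln:ews:wedge:ge} assigns each $\delta$-instance of $M$ in $\varGamma_t$ to a unique pair $(i,r)$, whence $C_{M,t}=\sum_{i=1}^{m_t}\eta(e_i)=\sum_{i=1}^{m_t}\sum_r\eta(W_{ir})$. As in Theorem~\ref{thm:ses:var}, I treat the active regime $m_t\ge r$ (the case $m_t<r$, where only wedge randomness remains, is analogous).

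By the law of total variance, $\textnormal{Val}[\widehat{C}_{M,t}]=\mathbb{E}\big[\textnormal{Val}[\widehat{C}_{M,t}\mid\widehat{E}_{\varGamma}]\big]+\textnormal{Val}\big[\mathbb{E}[\widehat{C}_{M,t}\mid\widehat{E}_{\varGamma}]\big]$. Since wedge sampling is unbiased (Theorem~\ref{thm:ews:exp}) and independent of the reservoir, $\mathbb{E}[\widehat{C}_{M,t}\mid\widehat{E}_{\varGamma}]=\frac{m_t}{r}\sum_{e\in\widehat{E}_{\varGamma}}\eta(e)$, which is exactly the SES estimator; thus Theorem~\ref{thm:ses:var} bounds the second term by $\frac{m_t-r}{r}C_{M,t}^2$. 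For the first term, conditioned on $\widehat{E}_{\varGamma}$ the summands $\widehat{\eta}(e)$ are independent across $e$ (disjoint wedge coins), so $\textnormal{Val}[\widehat{C}_{M,t}\mid\widehat{E}_{\varGamma}]=\frac{m_t^2}{r^2}\sum_{e\in\widehat{E}_{\varGamma}}\textnormal{Val}[\widehat{\eta}(e)]$ with $\textnormal{Val}[\widehat{\eta}(e)]=\frac{1}{q^2}\sum_r\eta^2(W_{er})\,\textnormal{Val}[\omega_{er}]=\frac{1-q}{q}\sum_r\eta^2(W_{er})$; taking the expectation over the reservoir via $\textnormal{Pr}[e_i\in\widehat{E}_{\varGamma}]=\frac{r}{m_t}$ gives $\frac{m_t^2}{r^2}\cdot\frac{1-q}{q}\cdot\frac{r}{m_t}\sum_{i,r}\eta^2(W_{ir})=\frac{m_t(1-q)}{rq}\sum_{i,r}\eta^2(W_{ir})\le\frac{m_t(1-q)}{rq}C_{M,t}^2$, using $\sum_{i,r}\eta^2(W_{ir})\le\big(\sum_{i,r}\eta(W_{ir})\big)^2=C_{M,t}^2$.

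Adding the two bounds, $\textnormal{Val}[\widehat{C}_{M,t}]\le\big(\frac{m_t(1-q)}{rq}+\frac{m_t-r}{r}\big)C_{M,t}^2=\frac{m_t(1-q)+q(m_t-r)}{rq}C_{M,t}^2=\frac{m_t-rq}{rq}C_{M,t}^2$, as claimed. The main obstacle, beyond this routine algebra, is the bookkeeping behind the identity $C_{M,t}=\sum_{i,r}\eta(W_{ir})$: one must verify that the last-edge-only mapping plus the fixed wedge center assign every $\delta$-instance to exactly one $(i,r)$ with no double counting---in particular for temporal triangles, where a priori either endpoint of the last edge could host the sampled wedge---and confirm that the wedge coins are independent of reservoir membership so the conditioning steps are valid. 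A direct expansion $\widehat{C}_{M,t}=\frac{m_t}{rq}\sum_{i,r}\omega_i\,\omega_{ir}\,\eta(W_{ir})$ in the style of Theorems~\ref{thm:ses:var} and~\ref{thm:ews:var} also works, but it forces one to track the positive covariances $\textnormal{Cov}(\omega_i\omega_{ir},\omega_i\omega_{ir'})$ among distinct wedges of the same sampled edge, which the total-variance route sidesteps.
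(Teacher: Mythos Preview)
Your argument is correct and reaches the stated bound, but by a genuinely different route than the paper. The paper writes $\widehat{C}_{M,t}=\frac{m_t}{rq}\sum_{i}\sum_{j}\omega_i\,\omega_{ij}\,\eta(W_{ij})$ and passes in one line to $\sum_{i,j}\frac{m_t^2\eta^2(W_{ij})}{r^2q^2}\,\textnormal{Var}[\omega_i\omega_{ij}]$, then substitutes $\textnormal{Var}[\omega_i\omega_{ij}]=\frac{rq}{m_t}-\big(\frac{rq}{m_t}\big)^2$ and bounds $\sum_{i,j}\eta^2(W_{ij})\le C_{M,t}^2$. You instead condition on the reservoir and apply the law of total variance, recycling Theorem~\ref{thm:ses:var} for the outer term and isolating the wedge-coin variance in the inner term; the two pieces then recombine algebraically to $\frac{m_t-rq}{rq}$. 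Your decomposition buys something concrete that you flag yourself: the direct expansion has nonzero cross terms $\textnormal{Cov}(\omega_i\omega_{ij},\omega_i\omega_{ij'})=q^2\,\textnormal{Var}[\omega_i]>0$ for distinct wedges of the same sampled edge, which the paper's one-line equality silently drops; your conditional argument never encounters these because the wedge coins are independent given the reservoir. The paper's route is shorter but glosses over this point; yours is a bit longer but cleanly separates the two layers of randomness and is more rigorous as written. One cosmetic remark: you overload $r$ as both the reservoir size and the wedge index, which is harmless but jarring---the paper uses $j$ for the latter in this theorem.
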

\begin{proof}
  Let us index the edges in $E_{\varGamma,t}$ at time $t$ by $[1,m_t]$ and the edges in $E(e_i)$ when $e_i$ is mapped to $e^\prime_3$ by $[1,m_{ti}]$, where $m_{ti}=|E(e_i)|$. Similar to the proof of Theorem~\ref{thm:ews:var}, we have
  \begin{align*}
    \textnormal{Val}[\widehat{C}_{M, t}]
    & = \textnormal{Val}\Big[\frac{m_t}{r q} \sum_{i=1}^{m_t}\sum_{j=1}^{m_{ti}} \omega_i \cdot \omega_{ij} \cdot \eta(W_{ij}) \Big] \\
    & = \sum_{i=1}^{m_t}\sum_{j=1}^{m_{ti}} \frac{m_t^2\eta^2(W_{ij})}{r^2q^2} \cdot \textnormal{Var}[\omega_i \cdot \omega_{ij}] \\
    & = \frac{m_t - r q}{r q} \cdot \sum_{i=1}^{m_t}\sum_{j=1}^{m_{ti}} \eta^2(W_{ij}) \leq \frac{m_t - r q}{r q} \cdot C^2_{M, t}
  \end{align*}
  where $\eta(W_{i j})$ is the number of $\delta$-instances of $M$ containing a temporal wedge $W_{i j}$ and $\omega_{i j}$ is its indicator. All the inequalities are adapted from the ones in Theorem~\ref{thm:ews:var}.
\end{proof}

\textbf{Space and Time Complexity:}
First of all, both SES and SEWS only keep the set $E_{\varGamma}[t-\delta,t]$ of active edges and the set $\widehat{E}_{\varGamma}$ of sampled edges as well as their local counts in memory. Thus, the space complexities of both algorithms are $O(m_{\delta} + r)$, where $m_{\delta} = \max_{t} |E_{\varGamma}[t-\delta,t]|$. Then, for both algorithms, the procedures of maintaining active edges, reservoir sampling, updating and returning the estimate take only constant time per edge. The SES algorithm takes $O(d_{\delta}^{l-1})$ time to compute $\eta(e)$ for each edge $e$, where $d_{\delta}$ is the maximum number of (in-/out-)edges connected with one vertex in $E_{\varGamma}[t-\delta,t]$, since SES only runs \textsc{BackTracking} once, which is lower than $O(l d_{\delta}^{l-1})$ time per edge in ES. Similar to EWS, the time to compute $\widehat{\eta}(e)$ in SEWS is also $O(d_{\delta}\log{h})$, where $O(\log{h})$ is the time to compute $\eta(W)$ using a hash table. Therefore, the time complexities of SES and SEWS to update $\widehat{C}_{M, t}$ for each edge are $O(d_{\delta}^{l-1})$ and $O(d_{\delta}\log{h})$, respectively.

\section{Experimental Evaluation}\label{sec:exp}

In this section, we evaluate the empirical performance of our proposed algorithms on real-world datasets. We first introduce the experimental setup in Section~\ref{subsec:exp:setup}. The experimental results for offline and streaming temporal graphs are presented in Sections~\ref{subsec:exp:results} and~\ref{subsec:exp:results1}, respectively.

\subsection{Experimental Setup}\label{subsec:exp:setup}

\noindent\textbf{Experimental Environment:}
All experiments were conducted on a server running Ubuntu 18.04.1 LTS with an Intel\textsuperscript{\textregistered} Xeon\textsuperscript{\textregistered} Gold 6140 2.30GHz processor and 250GB RAM. We downloaded the code\footnote{\url{http://snap.stanford.edu/temporal-motifs/}}$^,$\footnote{\url{https://github.com/rohit13k/CycleDetection}}$^,$\footnote{\url{https://gitlab.com/paul.liu.ubc/sampling-temporal-motifs}} of the baseline algorithms published by the authors and followed the compilation and usage instructions. Our proposed algorithms were implemented in \texttt{C++11} compiled by \texttt{GCC~v7.4} with \texttt{-O3} optimizations, and ran on a single thread. Our code are publicly available on GitHub\footnote{\url{https://github.com/jingjing-hnu/Temporal-Motif-Counting}}.

\textbf{Datasets:}
We used five different real-world datasets in our experiments including AskUbuntu (AU), SuperUser (SU), StackOverflow (SO), BitCoin (BC), and RedditComments (RC). All the datasets were downloaded from publicly available sources like the SNAP repository~\cite{DBLP:journals/tist/LeskovecS16}. Each dataset contains a sequence of temporal edges in chronological order. We report several statistics of these datasets in Table~\ref{tab:Datasets}, including the number of vertices, the number of static edges, the number of temporal edges, and the overall time span, i.e., the time difference between the first and last edges.

\textbf{Algorithms:}
The offline algorithms we compare are listed as follows.
\begin{itemize}
  \item \textbf{EX}: An exact algorithm for temporal motif counting in~\cite{DBLP:conf/wsdm/ParanjapeBL17}. The implementation published by the authors is applicable only to $3$-edge motifs and cannot support motifs with $4$ or more edges (e.g.,~Q5 in Fig.~\ref{fig:queries}).
  \item \textbf{2SCENT}: An algorithm for simple temporal cycle (e.g.,~Q4 and Q5 in Fig.~\ref{fig:queries}) enumeration in~\cite{DBLP:journals/pvldb/KumarC18}.
  \item \textbf{BT}: A \textsc{BackTracking} algorithm for temporal subgraph isomorphism in~\cite{DBLP:conf/bigdataconf/MackeyPFCC18}. It can provide the exact count of any temporal motif by enumerating all instances of them.
  \item \textbf{IS-BT}: An interval-based sampling algorithm for temporal motif counting in~\cite{DBLP:conf/wsdm/LiuBC19}. BT~\cite{DBLP:conf/bigdataconf/MackeyPFCC18} is used as a subroutine for any motif with more than $2$ vertices.
  \item \textbf{ES}: Our generic edge sampling algorithm for temporal motif counting in Section~\ref{subsec:alg:1}.
  \item \textbf{EWS}: Our improved edge-wedge sampling algorithm for counting temporal motifs with $3$ vertices and $3$ edges (e.g.~Q1--Q4 in Figure~\ref{fig:queries}) in Section~\ref{subsec:alg:2}.
\end{itemize}
The two streaming algorithms we compare are listed as follows.
\begin{itemize}
  \item \textbf{SES}: Our streaming extension of ES for counting temporal motifs of any kind in Section~\ref{sec:s-alg}.
  \item \textbf{SEWS}: Our streaming extension of EWS for counting temporal motifs with $3$ vertices and $3$ edges in Section~\ref{sec:s-alg}.
\end{itemize}

\begin{table}[t]
  \small
  \centering
  \caption{Statistics of datasets}
  \label{tab:Datasets}
  \begin{tabular}{|c|c|c|c|c|}
    \hline
    \textbf{Dataset} & \#\textbf{Vertices} & \#\textbf{Static Edges} & \#\textbf{Temporal Edges} & \textbf{Time Span} \\
    \hline
    \textbf{AU} & $157,222$  &$544,621$ & $726,639$  & $7.16$ years \\
    \hline
    \textbf{SU} & $192,409$  & $854,377$& $1,108,716$ & $7.60$ years \\
    \hline
    \textbf{SO} & $2,584,164$ & $34,875,684$& $47,902,865$ & $7.60$ years \\
    \hline
    \textbf{BC} & $48,098,591$ &$86,798,226$ & $113,100,979$  & $7.08$ years \\
    \hline
    \textbf{RC} & $5,688,164$ &$329,485,956$ & $399,523,749$  & $7.44$ years \\ 
    \hline
  \end{tabular}
\end{table}

\begin{figure}[t]
  \centering
  \includegraphics[width=0.6\textwidth]{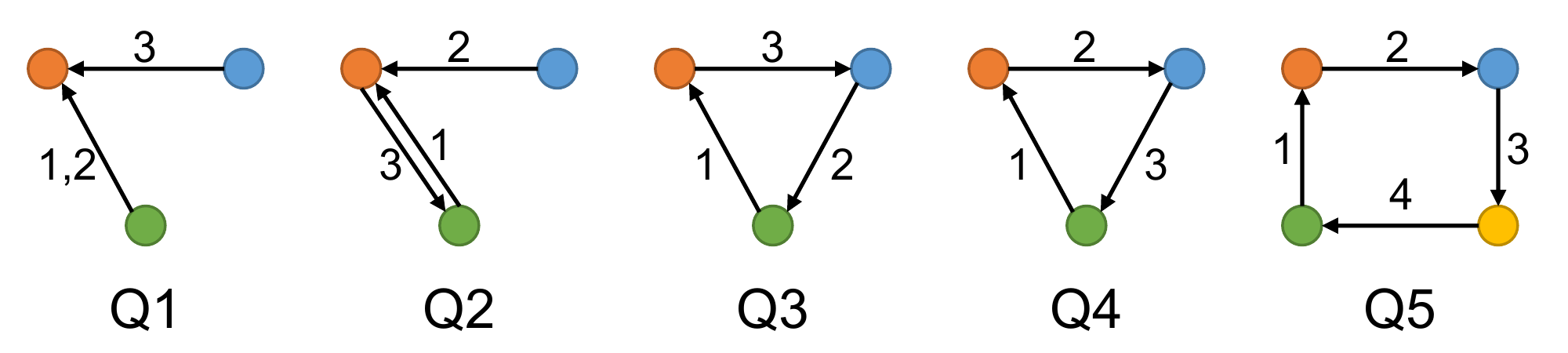}
  \caption{Query motifs}
  \Description{queries}
  \label{fig:queries}
\end{figure}

\textbf{Queries:}
The five query motifs we use in the experiments are listed in Fig.~\ref{fig:queries}. Since different algorithms specialize in different types of motifs, we select a few motifs that can best represent the specializations of all the algorithms. As discussed above, an algorithm may not be applicable to some of the query motifs. In this case, the algorithm is ignored in the experiments on these motifs.

\textbf{Performance Measures:}
In the offline setting, the efficiency is measured by the CPU time (in seconds) of an algorithm to count a query motif in a temporal graph. In the streaming setting, when a new edge arrives, an algorithm updates and returns an estimate.
The efficiency is measured by the CPU time (in seconds) of an algorithm to update the motif count for each 1,000 new edges. The accuracy of a sampling algorithm is measured by the relative error $\frac{|\widehat{x}-x|}{x}$ where $x$ is the exact number of instances of a query motif in a temporal graph and $\widehat{x}$ is an estimate of $x$ returned by the algorithm. In each experiment, we run all the algorithms $10$ times and use the average CPU time and relative errors for comparison.

\subsection{Experimental Results in Offline Temporal Graphs}\label{subsec:exp:results}

\textbf{Overall Performance.}
The experimental results of all offline algorithms are reported in Table~\ref{tab:results}. Here, the time span $\delta$ is set to $86400$ seconds (i.e., one day) on AU and SU, and $3600$ seconds (i.e., one hour) on SO, BC, and RC (Note that we use the same values of $\delta$ across all experiments, unless specified). For IS-BT, we report the results in the default setting as indicated in~\cite{DBLP:conf/wsdm/LiuBC19}, i.e., we fix the interval length to $30\delta$ and present the result for the smallest interval sampling probability that can guarantee the relative error is at most $5\%$. For ES and EWS, we report the results when $p=0.01$ by default; in a few cases when the numbers of motif instances are too small or their distribution is highly skewed among edges, we report the results when $p=0.1$ (marked with ``*'' in Table~\ref{tab:results}) because ES and EWS cannot provide accurate estimates when $p=0.01$. In addition, we set $q=1$ on AU and SU and $q=0.1$ on SO, BC, and RC for EWS.

First of all, the efficiencies of EX and 2SCENT are significantly lower than the other algorithms. This is because they use an algorithm for subgraph isomorphism or cycle detection in static graphs as a subroutine for candidate generation without considering the temporal information of edges. As a result, a large number of redundant candidates that violates the duration constraint are generated, which leads to the degradation in performance. Second, on medium-sized datasets (i.e., AU and SU), ES runs faster than IS-BT in most cases. Meanwhile, their relative errors are close to each other. On large datasets (i.e, SO, BC, and RC), ES demonstrates both much higher efficiency (up to $10.3$x speedup) and lower estimation errors ($2.42\%$ vs. $4.61\%$) than IS-BT. Third, EWS runs $1.7$x--$19.6$x faster than ES due to its lower computational cost per edge. The relative errors of ES and EWS are the same on AU and SU since we use $q=1$ for EWS. When $q=0.1$, EWS achieves further speedups at the expense of higher relative errors. A more detailed analysis of the effect of $q$ will be provided in the following paragraph.

\begin{table}[t]
  \small
  \centering
  \caption{Running time (in seconds) and average errors ($\%$) of all offline algorithms on each dataset. We use ``---'' and ``\ding{53}'' to denote ``motif not supported'' and ``running out of memory'', respectively. For IS-BT, ES, and EWS, we show their speedup ratios over BT for comparison. We use ``*'' to mark the results of ES and EWS for $p=0.1$ instead of $p=0.01$.}
  \label{tab:results}
  \begin{tabular}{|c|c|c|c|c|c|c|c|c|c|c|}
    \hline
    \multirow{2}{*}{\textbf{Dataset}} & \multirow{2}{*}{\textbf{Motif}} & \textbf{EX} & \textbf{2SCENT} & \textbf{BT} & \multicolumn{2}{c|}{\textbf{IS-BT}} & \multicolumn{2}{c|}{\textbf{ES}} & \multicolumn{2}{c|}{\textbf{EWS}} \\ \cline{3-11}
    & & time (s) & time (s) & time (s) & error & time (s) & error & time (s) & error & time (s) \\
    \hline
    \multirow{5}{*}{\textbf{AU}}
    & Q1 & \multirow{2}{*}{1.8} & \multirow{3}{*}{\textbf{---}} & 0.758 & 4.84\% & 0.402/1.9x &\textbf{4.32\%} & 0.059/12.8x & 4.32\% & \textbf{0.027/28.1x} \\ \cline{2-2} \cline{5-11} 
    & Q2 &  &  & 1.104 & \textbf{4.16\%} & 0.434/2.5x & 4.57\% & 0.048/23.0x & 4.57\% & \textbf{0.029/38.1x} \\ \cline{2-3} \cline{5-11} 
    & Q3 & \multirow{2}{*}{2.3} &  & 0.884 & 3.97\% & 0.50/1.8x & *\textbf{3.73\%} & *0.605/1.5x & *3.73\% & *\textbf{0.183/4.8x} \\ \cline{2-2} \cline{4-11}
    & Q4 &  & \multirow{2}{*}{23.68} & 1.038 & 4.67\% & 0.492/2.1x & *\textbf{4.63\%} & *0.628/1.7x &*4.63\% & *\textbf{0.173/6x} \\ \cline{2-3} \cline{5-11} 
    & Q5 & \textbf{---} &  & 1.262 & \textbf{3.98\%} & 0.536/2.4x & *4.62\% & *\textbf{0.322/3.9x} & \multicolumn{2}{c|}{\textbf{---}}\\ \hline
    \multirow{5}{*}{\textbf{SU}}
    & Q1 & \multirow{2}{*}{3.26} & \multirow{3}{*}{\textbf{---}} & 1.499 & 3.99\% & 0.620/2.4x & \textbf{3.06\%} & 0.102/14.7x & 3.06\% & \textbf{0.052/28.8x} \\ \cline{2-2} \cline{5-11} 
    & Q2 &  &  & 1.650 & 3.23\% & 0.671/2.5x & \textbf{2.47\%} & 0.083/19.9x & 2.47\% & \textbf{0.046/35.9x} \\ \cline{2-3} \cline{5-11} 
    & Q3 & \multirow{2}{*}{4.6} &  & 1.506 & 4.85\% & 0.723/2.1x & \textbf{4.66\%} & 0.113/13.3x & 4.66\% & \textbf{0.030/50.2x} \\ \cline{2-2} \cline{4-11} 
    & Q4 &  & \multirow{2}{*}{46.0} & 1.434 & \textbf{3.79\%} & 0.725/2.0x & 4.63\% & 0.128/11.2x & 4.63\% & \textbf{0.042/34.1x} \\ \cline{2-3} \cline{5-11} 
    & Q5 & \textbf{---} &  & 1.521 & 4.55\% & 0.759/2.0x & *\textbf{4.52\%} & *\textbf{0.453/3.4x} & \multicolumn{2}{c|}{\textbf{---}}\\ \hline
    \multirow{5}{*}{\textbf{SO}}
    & Q1 & \multirow{2}{*}{169} & \multirow{3}{*}{\textbf{---}} & 105.8 & 4.82\% & 8.626/12.3x &\textbf{0.97\%} & 4.419/23.9x & 1.22\% & \textbf{1.528/69.2x} \\ \cline{2-2} \cline{5-11} 
    & Q2 &  &  & 110.7 & 4.82\% & 27.48/4.0x & \textbf{0.20\%} & 3.985/27.8x & 0.89\% & \textbf{1.514/73.1x} \\ \cline{2-3} \cline{5-11} 
    & Q3 & \multirow{2}{*}{466} &  & 107.4 & 4.30\% & 25.70/4.2x & \textbf{1.36\%} & 4.031/26.6x & 3.6\% & \textbf{1.235/87x} \\ \cline{2-2} \cline{4-11} 
    & Q4 &  & \multirow{2}{*}{243.7} & 105.5 & 4.90\% & 6.775/15.6x & \textbf{1.78\%} & 3.936/26.8x & 3.31\% & \textbf{1.153/91.5x}\\ \cline{2-3} \cline{5-11} 
    & Q5 & \textbf{---} &  & 91.83 & 4.91\% & 9.451/9.7x & \textbf{3.48\%} & \textbf{1.505/61.0x} & \multicolumn{2}{c|}{\textbf{---}}\\ \hline
    \multirow{5}{*}{\textbf{BC}}
    & Q1 & \multirow{2}{*}{8143} & \multirow{3}{*}{\textbf{---}} & 220.0 & 4.75\% & 50.02/4.4x & \textbf{0.64\%} & 59.12/3.7x & 0.67\% & \textbf{9.463/23.2x} \\ \cline{2-2} \cline{5-11} 
    & Q2 &  &  & 399.8 & 4.90\% & 125.1/3.2x & \textbf{1.11\%} & 34.74/11.5x & 1.16\% & \textbf{8.126/49.2x} \\ \cline{2-3} \cline{5-11} 
    & Q3 & \multirow{2}{*}{8116} &  & 396.8 & 3.89\% & 90.19/4.4x & \textbf{1.49\%} & 41.49/9.6x & 3.02\% & \textbf{2.121/187x} \\ \cline{2-2} \cline{4-11} 
    & Q4 &  & \multirow{2}{*}{473.7} & 473.4 & 4.93\% & 95.47/5.0x & \textbf{0.83\%} & 37.43/12.6x & 1.91\% & \textbf{2.262/209x} \\ \cline{2-3} \cline{5-11} 
    & Q5 & \textbf{---} &  & 596.4 & 4.83\% & 319.7/1.9x & \textbf{2.92\%} & \textbf{20.47/29.1x} & \multicolumn{2}{c|}{\textbf{---}}\\ \hline
    \multirow{5}{*}{\textbf{RC}}
    & Q1 & \multirow{2}{*}{2799} & \multirow{3}{*}{\textbf{---}} & 1966 & 4.76\% & 840.5/2.3x & \textbf{3.27\%} & 257.4/7.6x & 3.36\% & \textbf{31.49/62.4x} \\ \cline{2-2} \cline{5-11} 
    & Q2 &  &  & 2113 & 4.67\% & 428/4.9x & 0.63\% & 120.6/17.5x & \textbf{0.6\%} & \textbf{30.57/69.1x} \\ \cline{2-3} \cline{5-11} 
    & Q3 &  \multirow{2}{*}{\ding{53}} &  & 2069 & 4.61\% & 784.4/2.6x & 2.42\% & 76.09/27.2x & \textbf{2.27\%} & \textbf{16.17/128x} \\ \cline{2-2} \cline{4-11} 
    & Q4 &  & \multirow{2}{*}{2245} & 1897 & 4.86\% & 683/2.8x & \textbf{3.47\%} & 68.60/27.7x & 4.57\% & \textbf{15.91/119x} \\ \cline{2-3} \cline{5-11} 
    & Q5 & \textbf{---} &  & 1613 & 4.41\% & 706.6/2.3x & *\textbf{4.32\%} & *\textbf{120.3/13.4x} & \multicolumn{2}{c|}{\textbf{---}}\\ \hline
  \end{tabular}
\end{table}

\begin{figure}[t]
  \subfigure[Q2 on SU]{
    \label{subfig:q:a1}
    \includegraphics[height=0.9in]{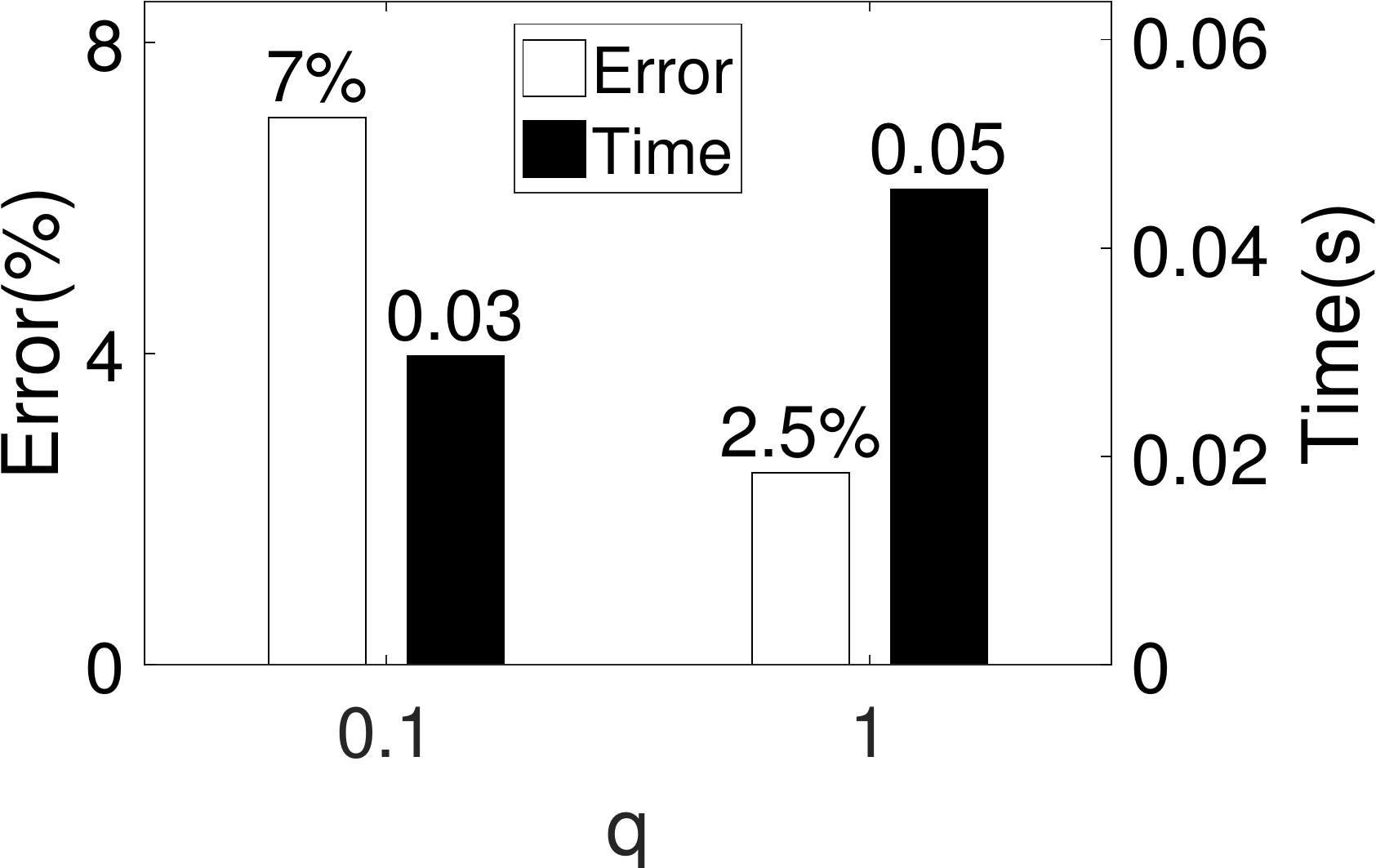}
  }
  \hfill
  \subfigure[Q3 on SU]{
    \label{subfig:q:a2}
    \includegraphics[height=0.9in]{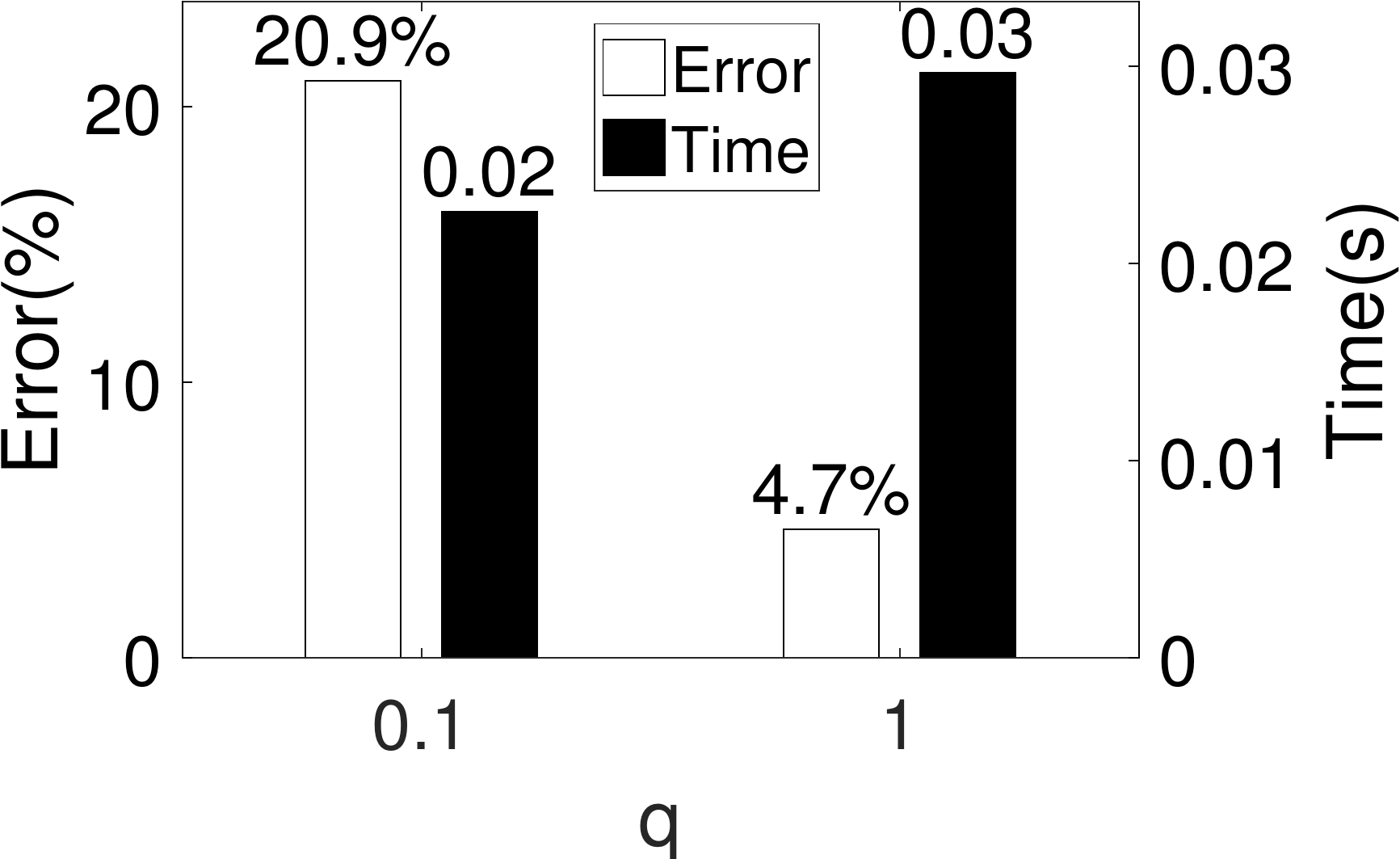}
  }
  \hfill
  \subfigure[Q2 on BC]{
    \label{subfig:q:b1}
    \includegraphics[height=0.9in]{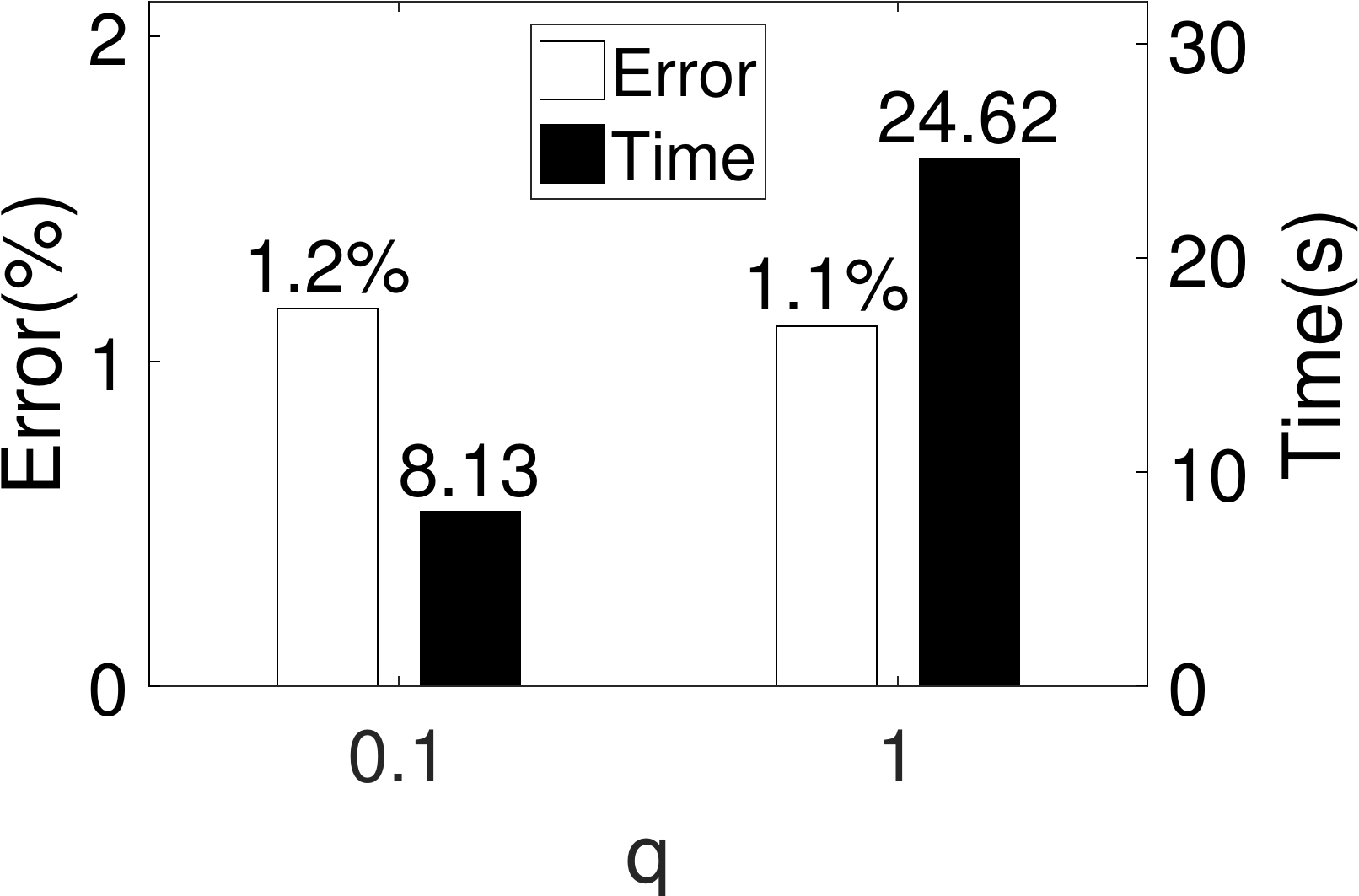}
  }
  \hfill
  \subfigure[Q3 on BC]{
    \label{subfig:q:b2}
    \includegraphics[height=0.9in]{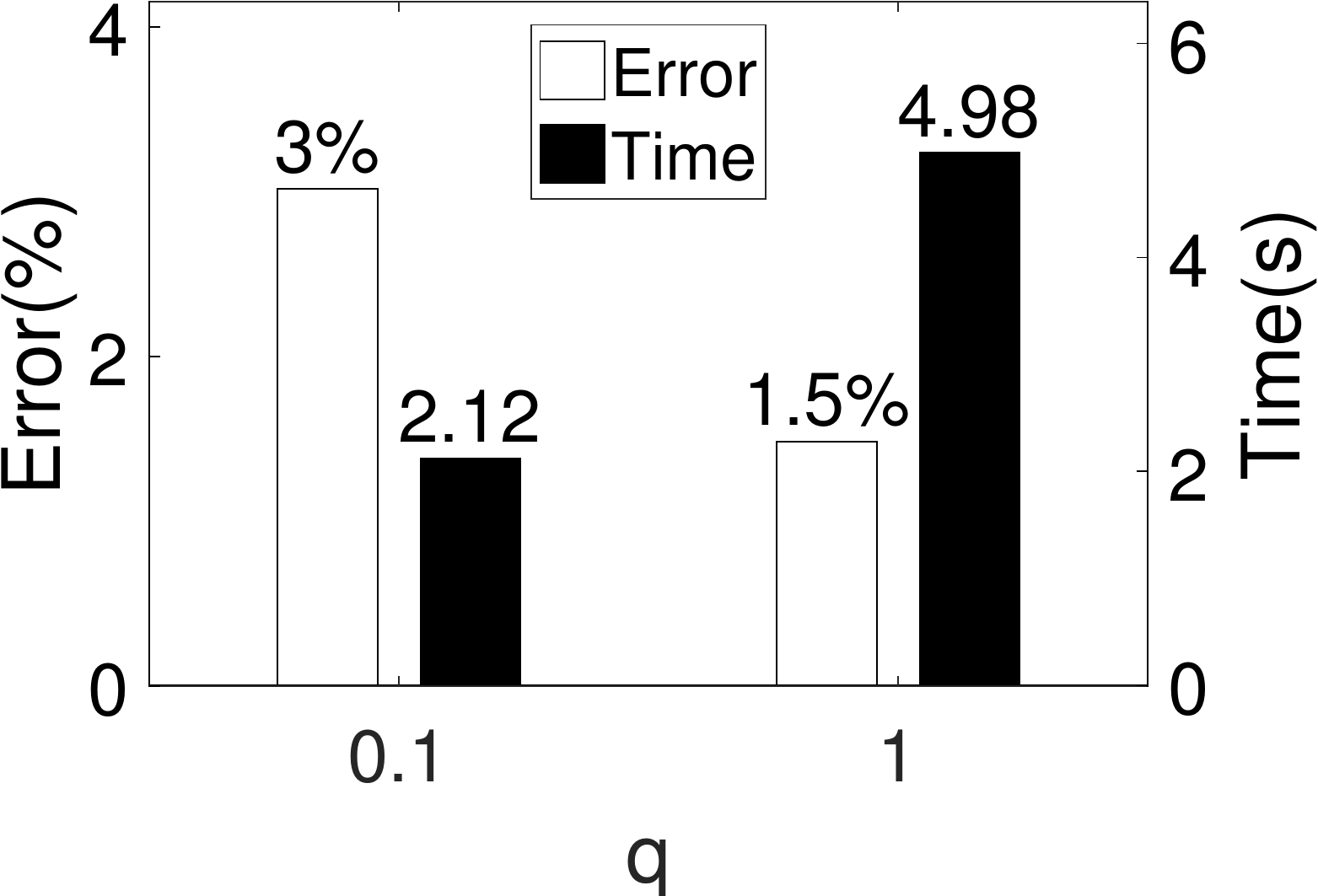}
  }
  \caption{Comparison of the performance of EWS when $q=0.1$ and $1$.}
  \Description{Q}
  \label{fig:q}
\end{figure}

\textbf{Effect of $q$ for EWS:}
In Fig.~\ref{fig:q}, we compare the relative errors and running time of EWS for $q=1$ and $0.1$ when $p$ is fixed to $0.01$. We observe different effects of $q$ on medium-sized (e.g., SU) and large (e.g., BC) datasets. On the SU dataset, the benefit of smaller $q$ is marginal: the running time decreases slightly but the errors become obviously higher. But on the BC dataset, by setting $q=0.1$, EWS achieves $2$x--$3$x speedups without affecting the accuracy seriously. These results imply that \emph{temporal wedge sampling} is more effective on larger datasets than smaller ones, since the number of instances of a query motif is much greater on larger datasets and thus acceptable estimation errors can be achieved at lower sampling rates ($0.001$ vs.~$0.01$). Therefore, we set $q=1$ on AU and SU, and $q=0.1$ on SO, BC, and RC for EWS in the remaining experiments.

\begin{figure}
  \centering
  \includegraphics[height=0.12in]{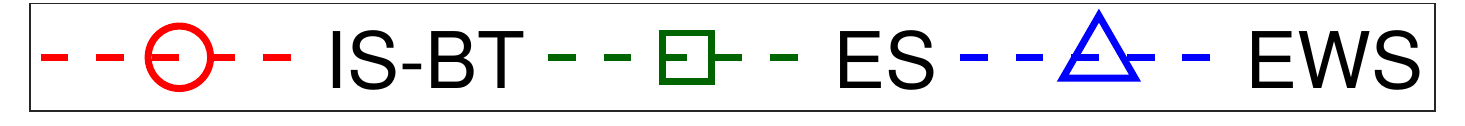}
  \\
  \subfigure[Q1 on AU]{
    \label{fig:p:m1:au}
    \includegraphics[height=0.78in]{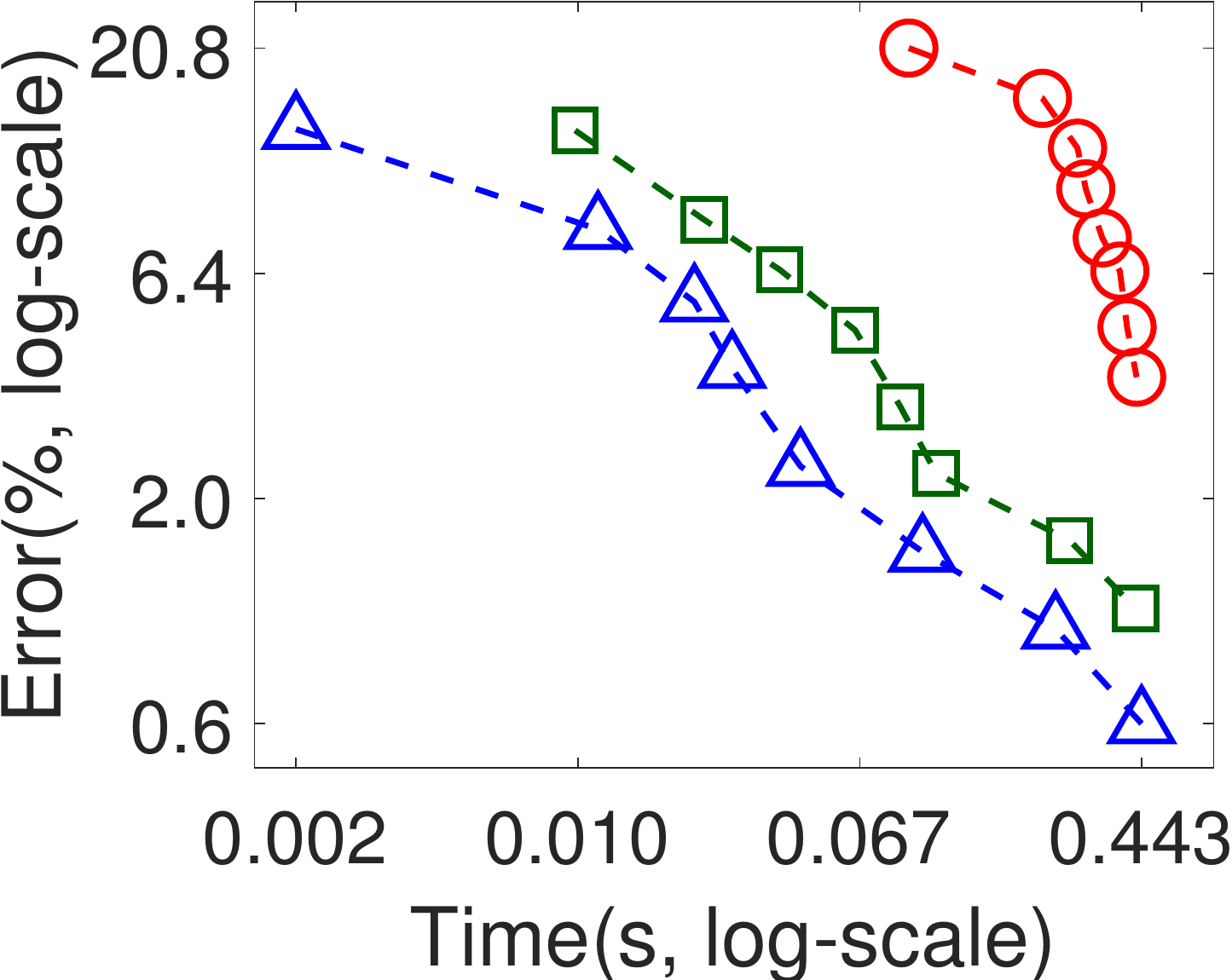}
  }
  \hfill
  \subfigure[Q2 on AU]{
    \label{fig:p:m2:au}
    \includegraphics[height=0.78in]{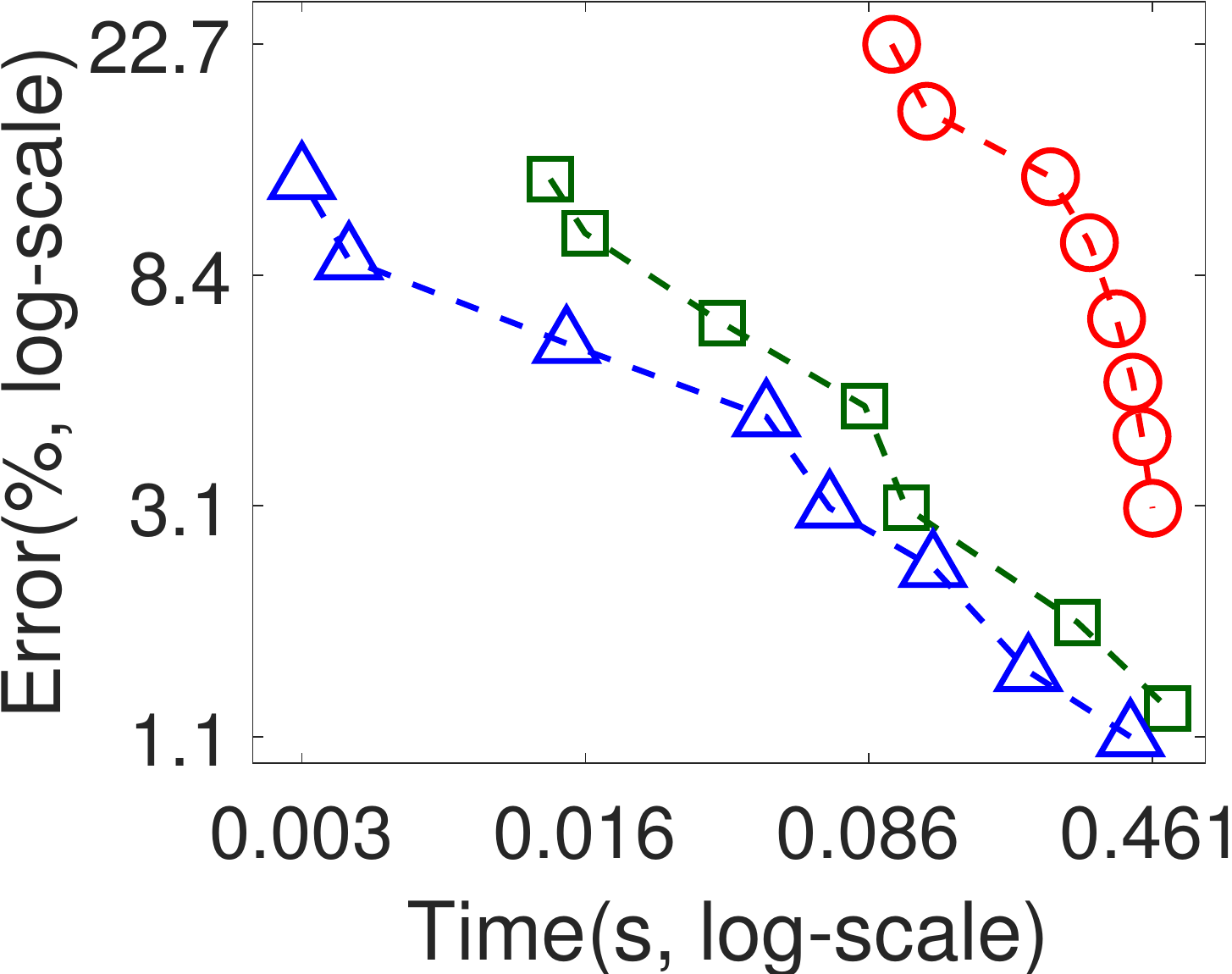}
  }
  \hfill
  \subfigure[Q3 on AU]{
    \label{fig:p:m3:au}
    \includegraphics[height=0.78in]{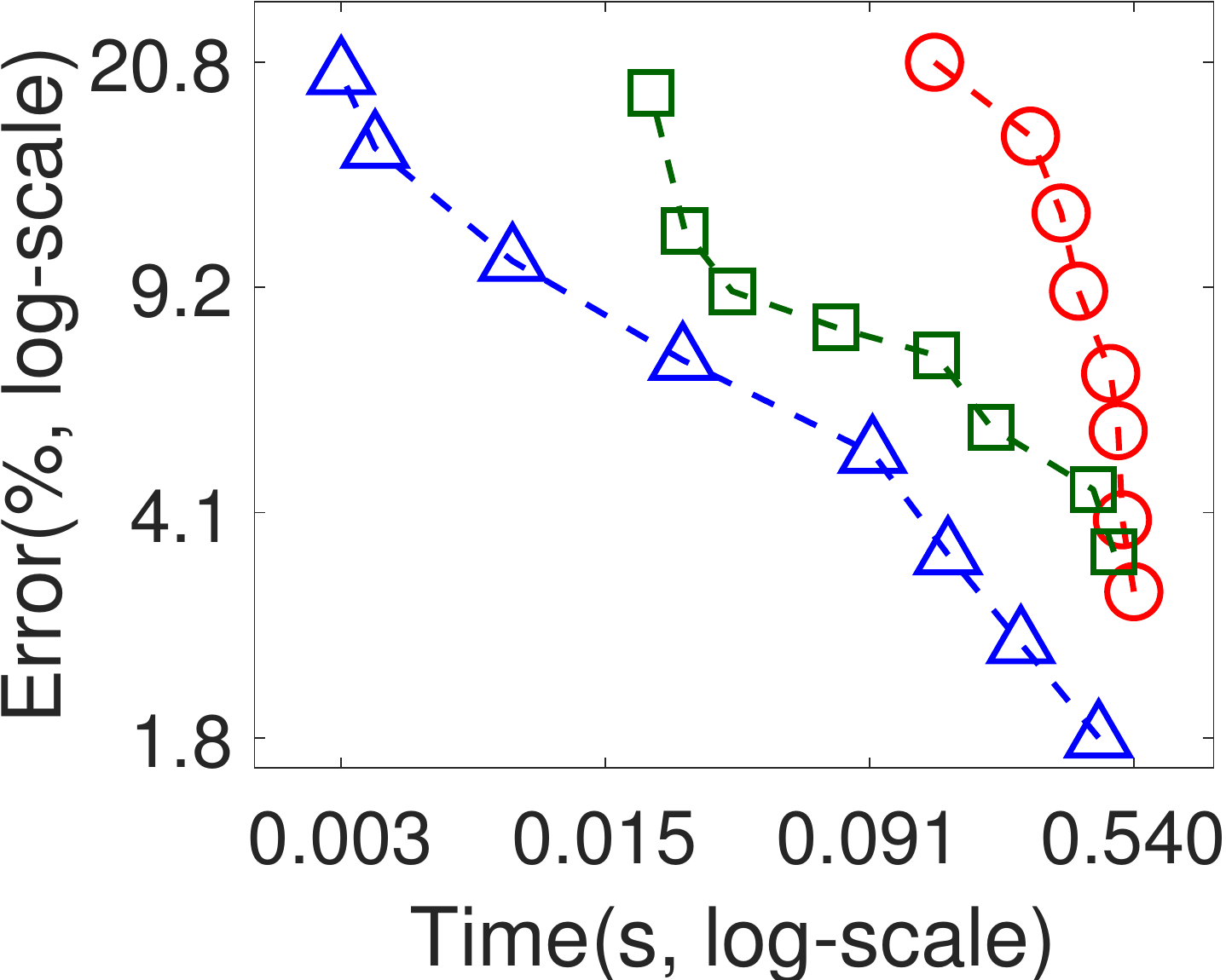}
  }
  \hfill
  \subfigure[Q4 on AU]{
    \label{fig:p:m4:au}
    \includegraphics[height=0.78in]{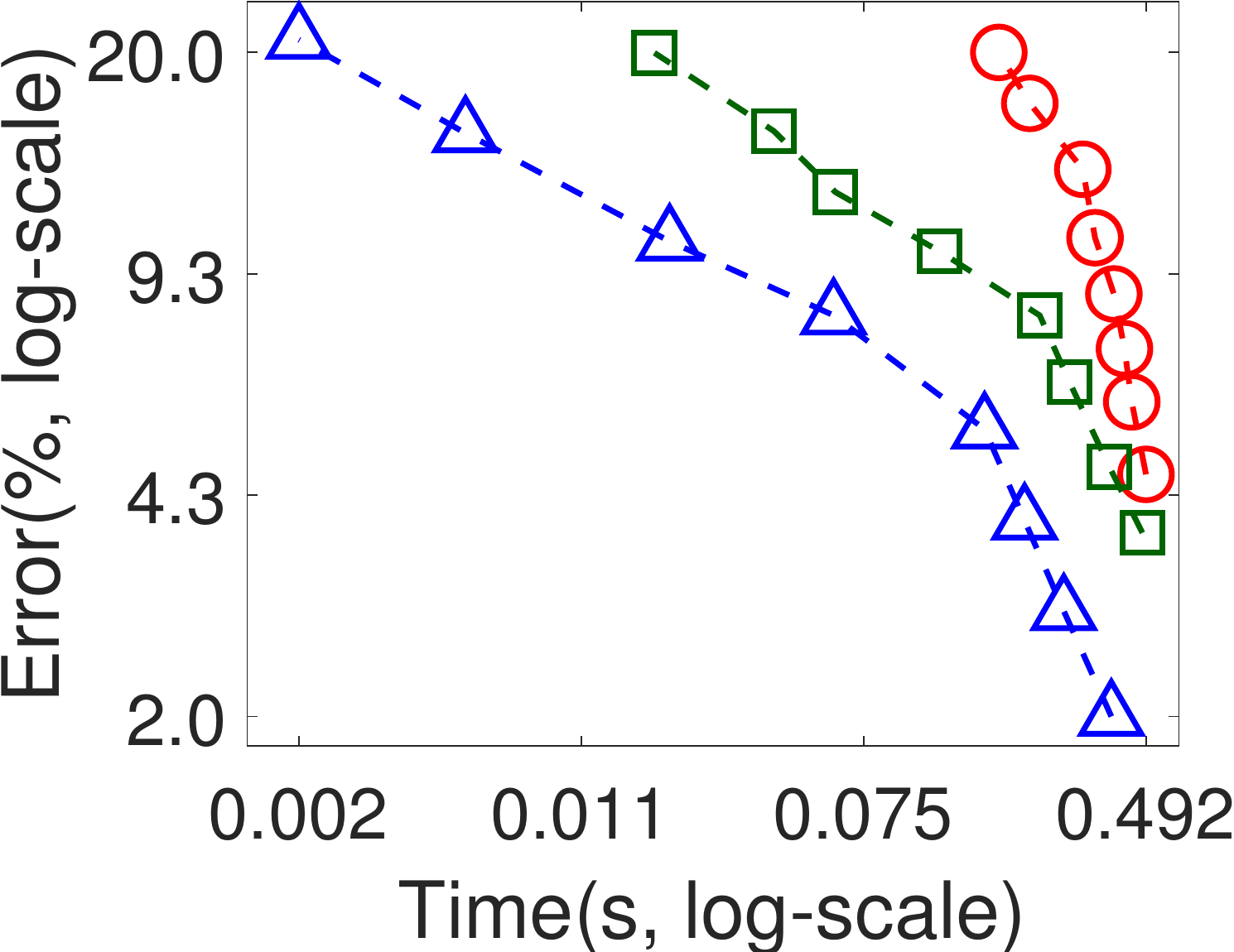}
  }
  \hfill
  \subfigure[Q5 on AU]{
    \label{fig:p:m5:au}
    \includegraphics[height=0.78in]{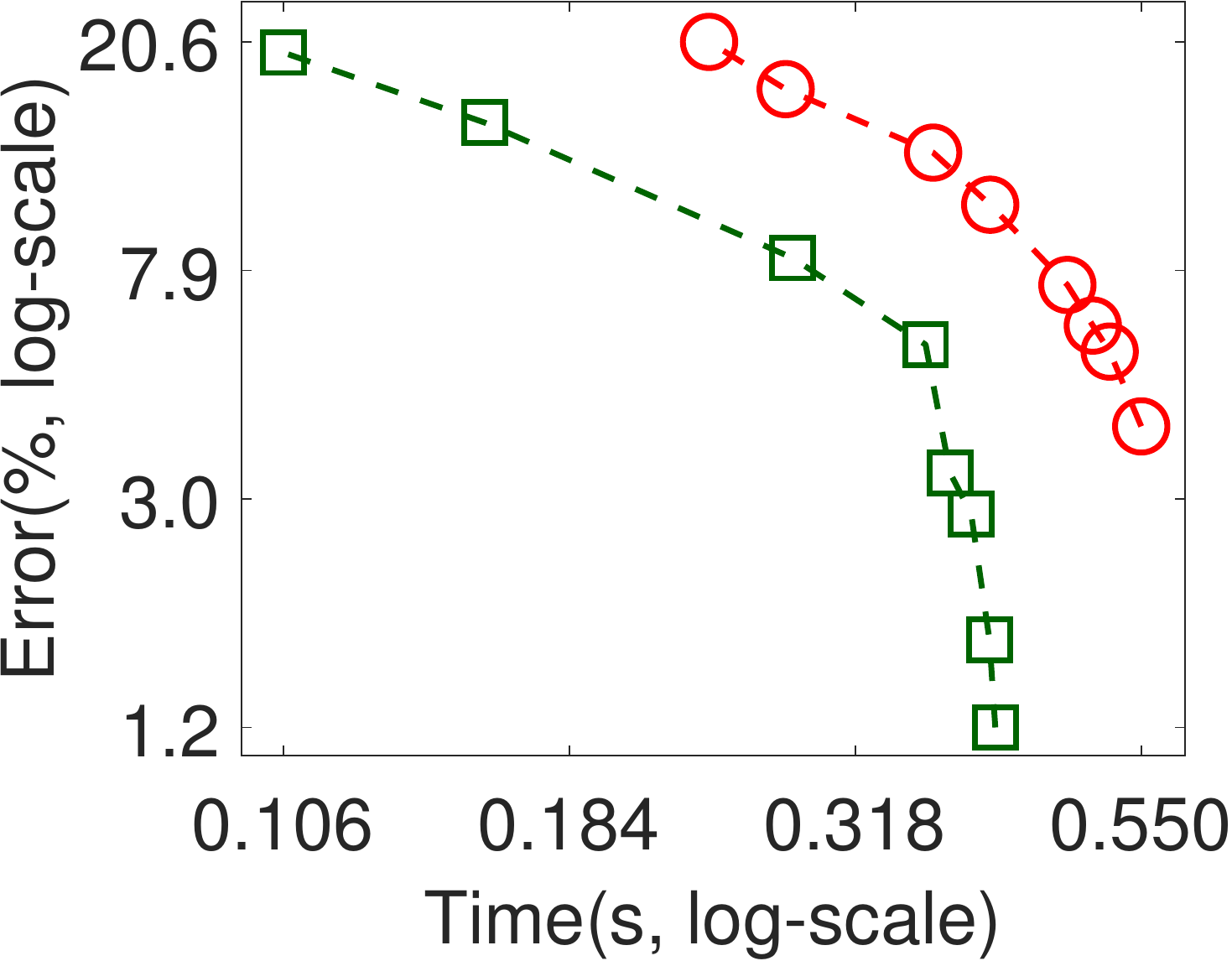}
  }
  \subfigure[Q1 on SU]{
    \label{fig:p:m1:su}
    \includegraphics[height=0.77in]{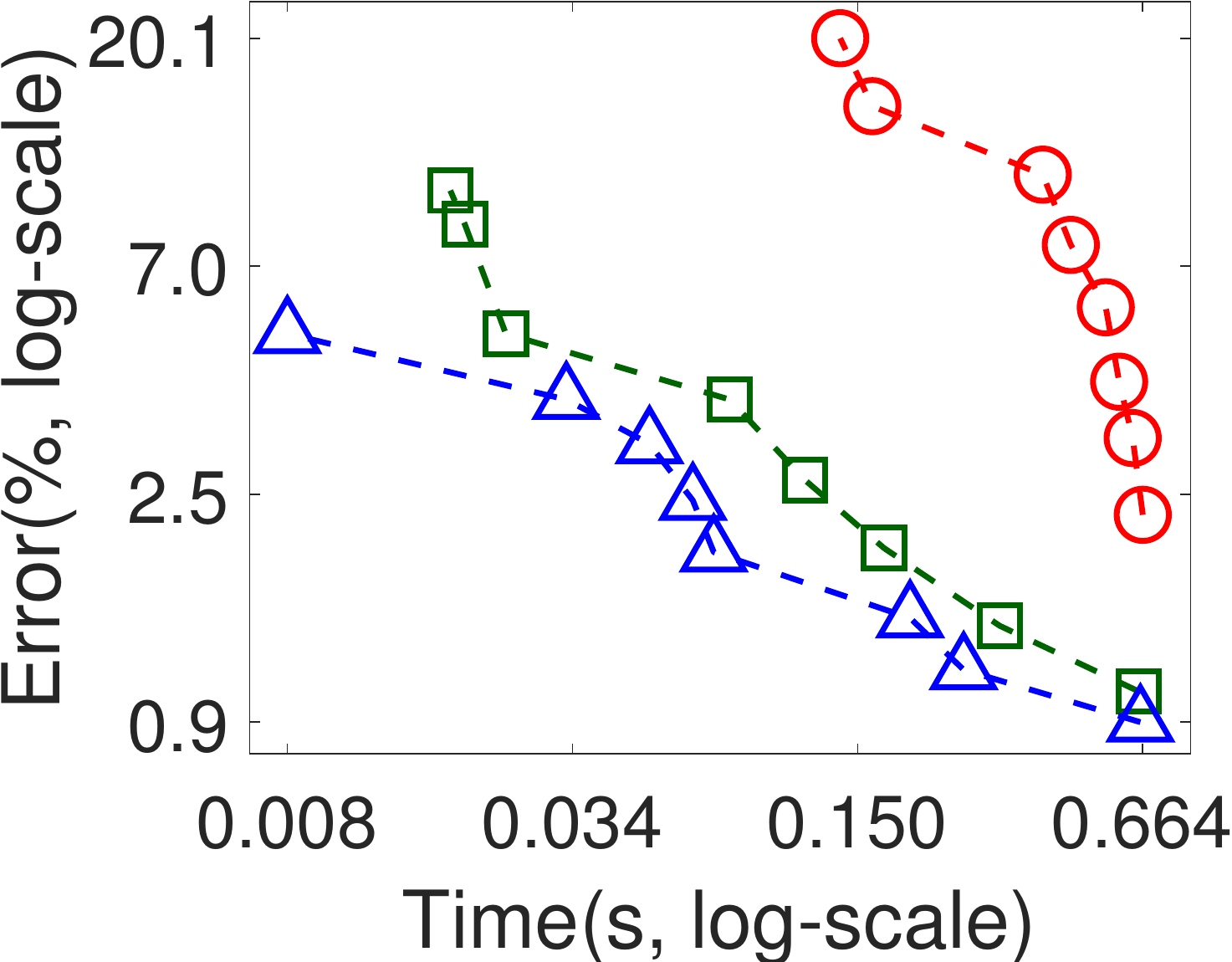}
  }
  \hfill
  \subfigure[Q2 on SU]{
    \label{fig:p:m2:su}
    \includegraphics[height=0.77in]{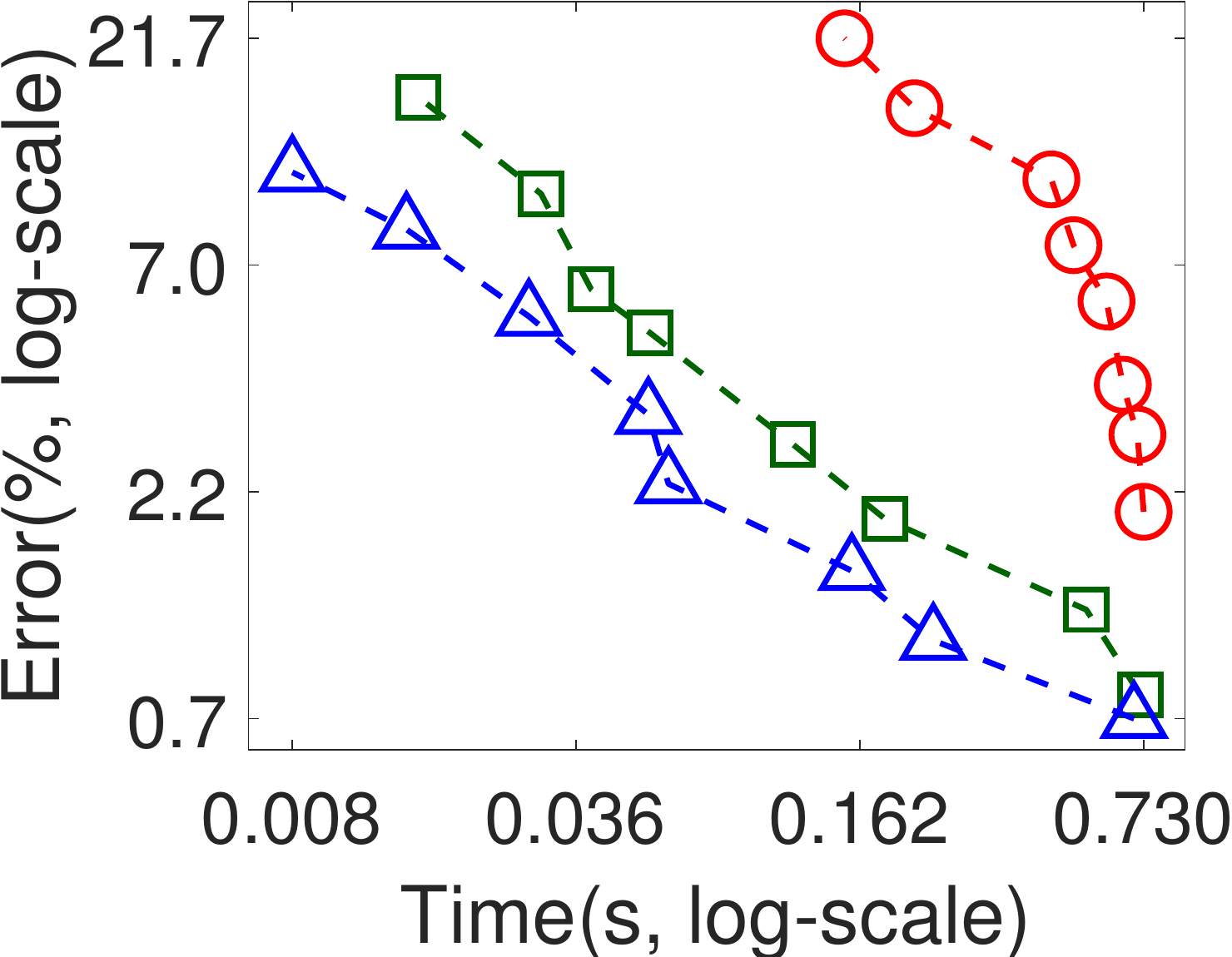}
  }
  \hfill
  \subfigure[Q3 on SU]{
    \label{fig:p:m3:su}
    \includegraphics[height=0.77in]{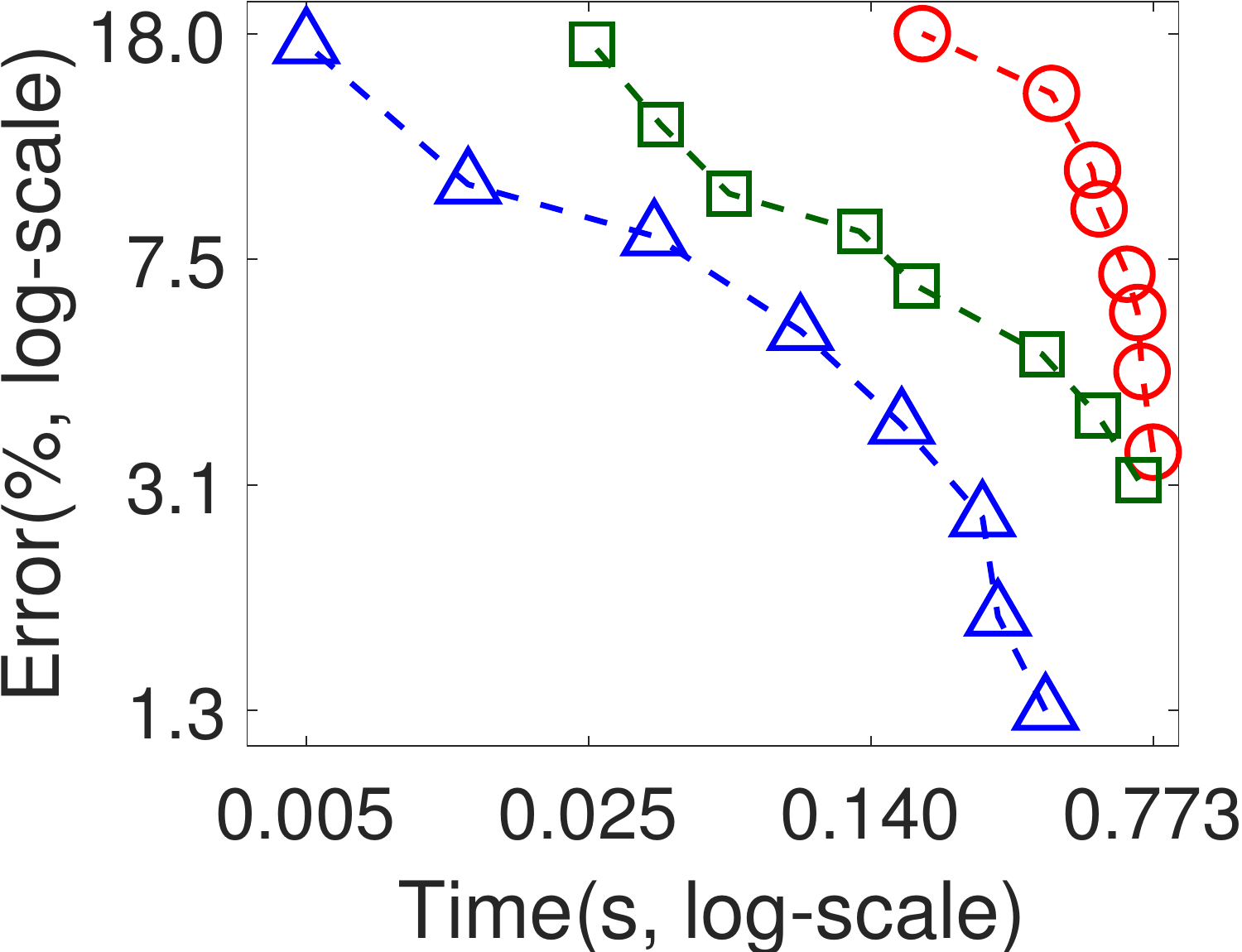}
  }
  \hfill
  \subfigure[Q4 on SU]{
    \label{fig:p:m4:su}
    \includegraphics[height=0.77in]{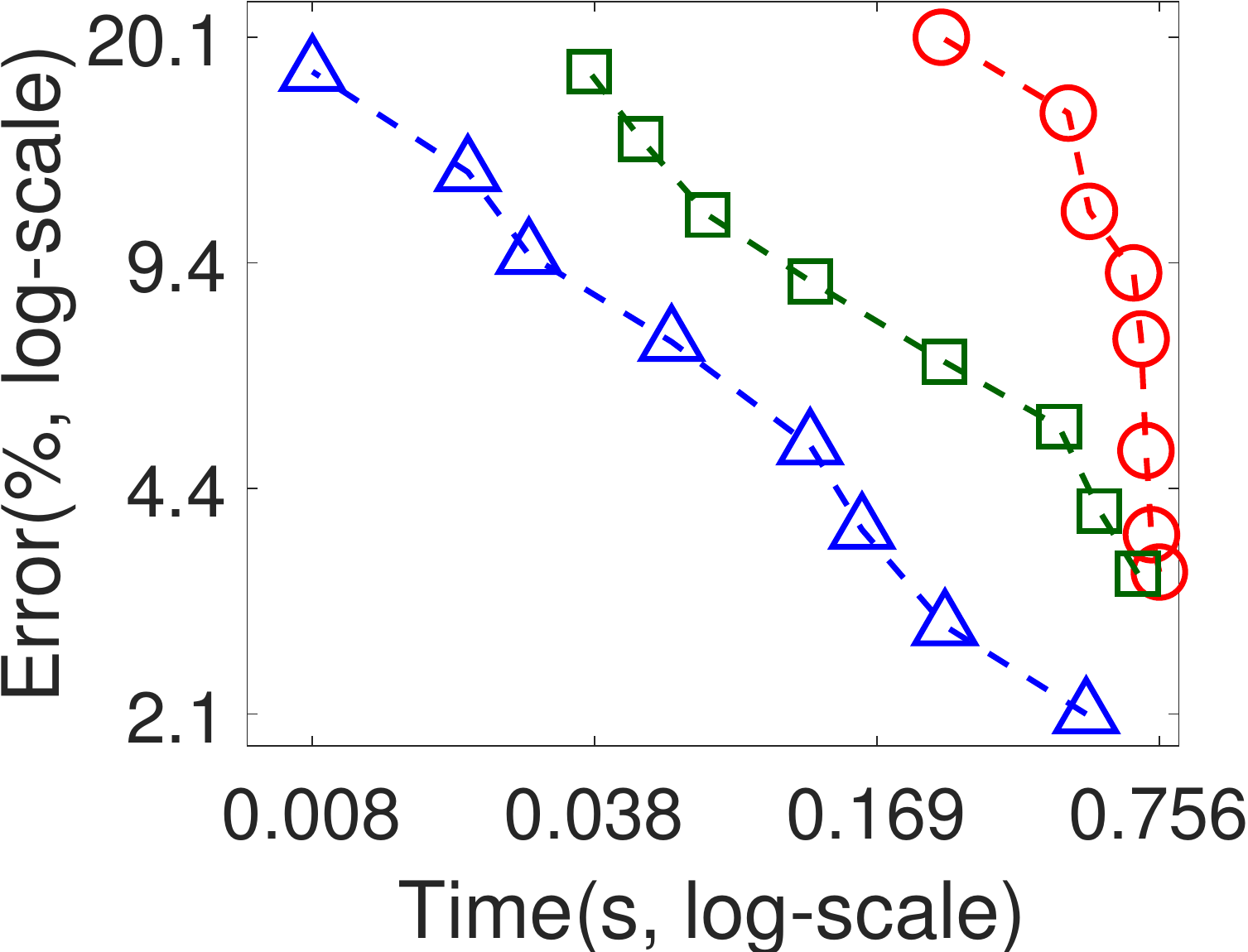}
  }
  \hfill
  \subfigure[Q5 on SU]{
    \label{fig:p:m5:su}
    \includegraphics[height=0.77in]{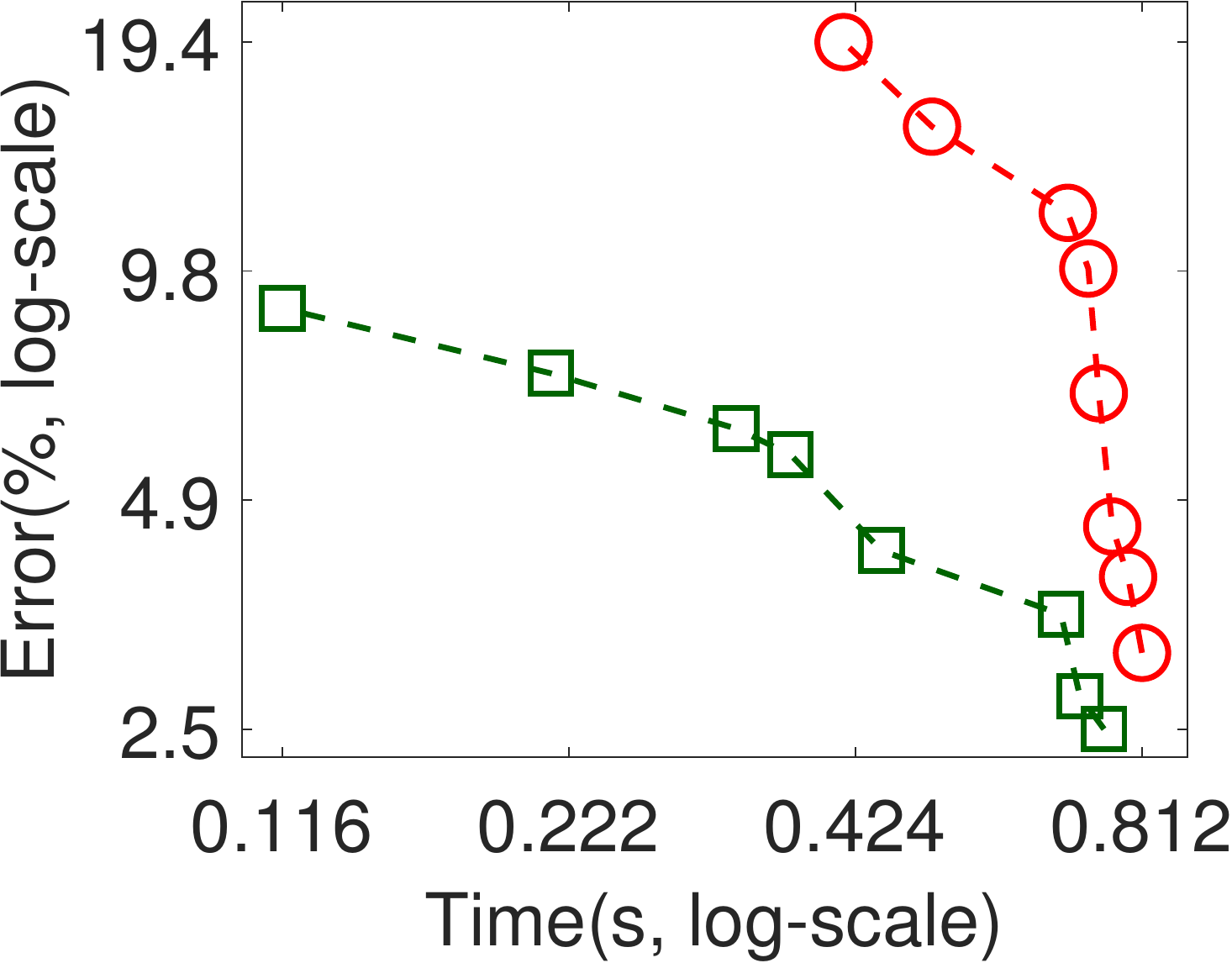}
  }
  \subfigure[Q1 on SO]{
    \label{fig:p:m1:so}
    \includegraphics[height=0.78in]{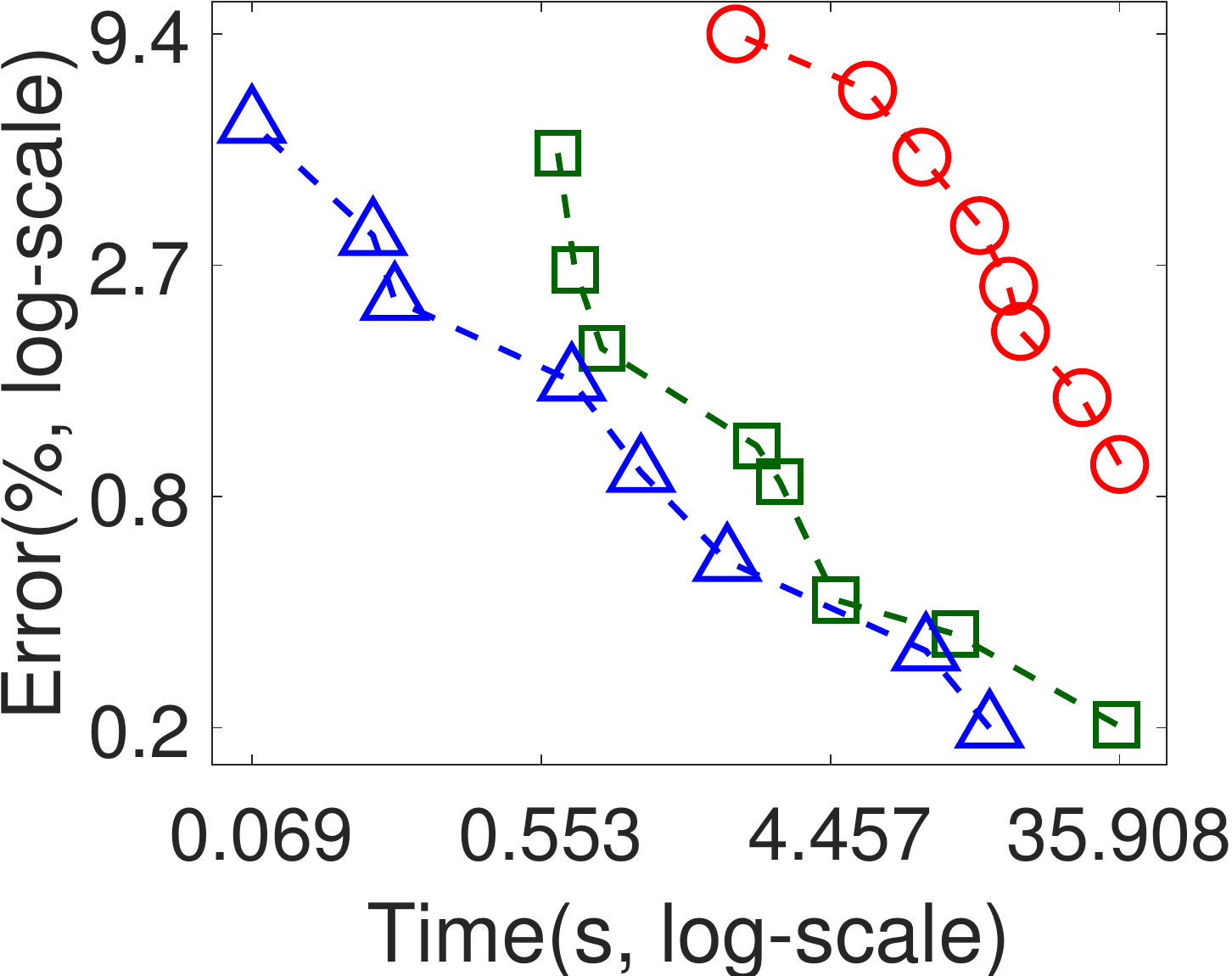}
  }
  \hfill
  \subfigure[Q2 on SO]{
    \label{fig:p:m2:so}
    \includegraphics[height=0.78in]{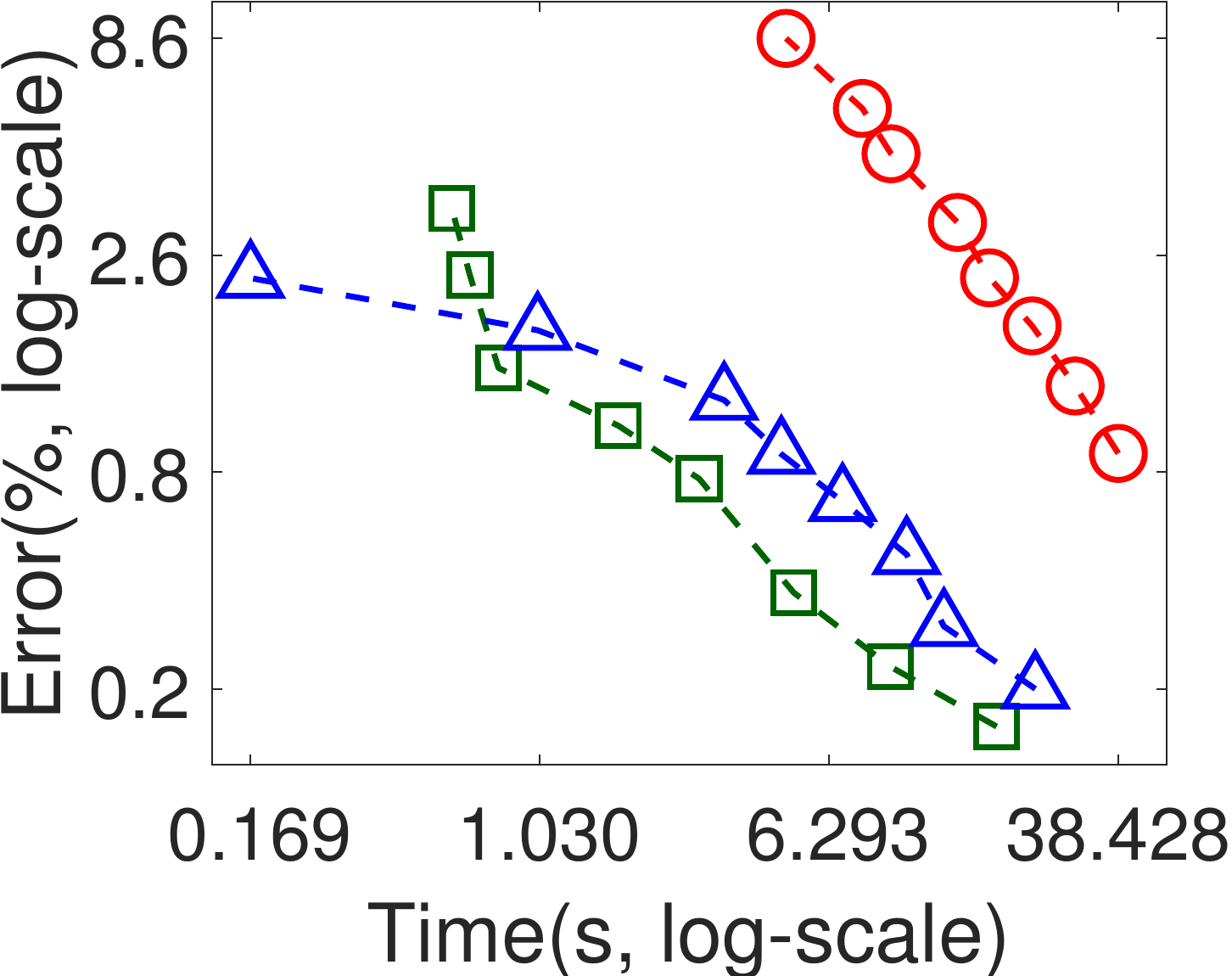}
  }
 \hfill
  \subfigure[Q3 on SO]{
    \label{fig:p:m3:so}
    \includegraphics[height=0.78in]{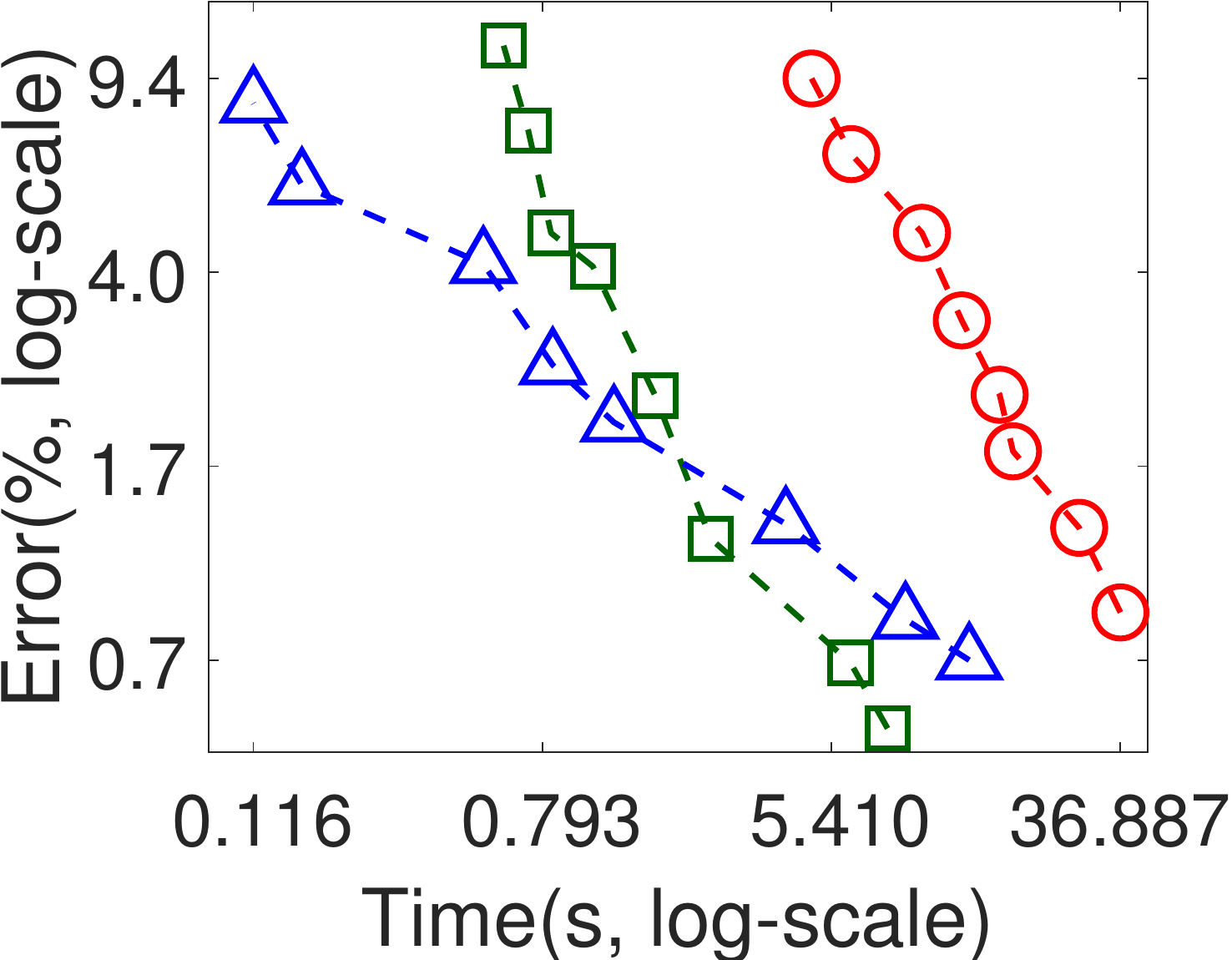}
  }
  \hfill
  \subfigure[Q4 on SO]{
    \label{fig:p:m4:so}
    \includegraphics[height=0.78in]{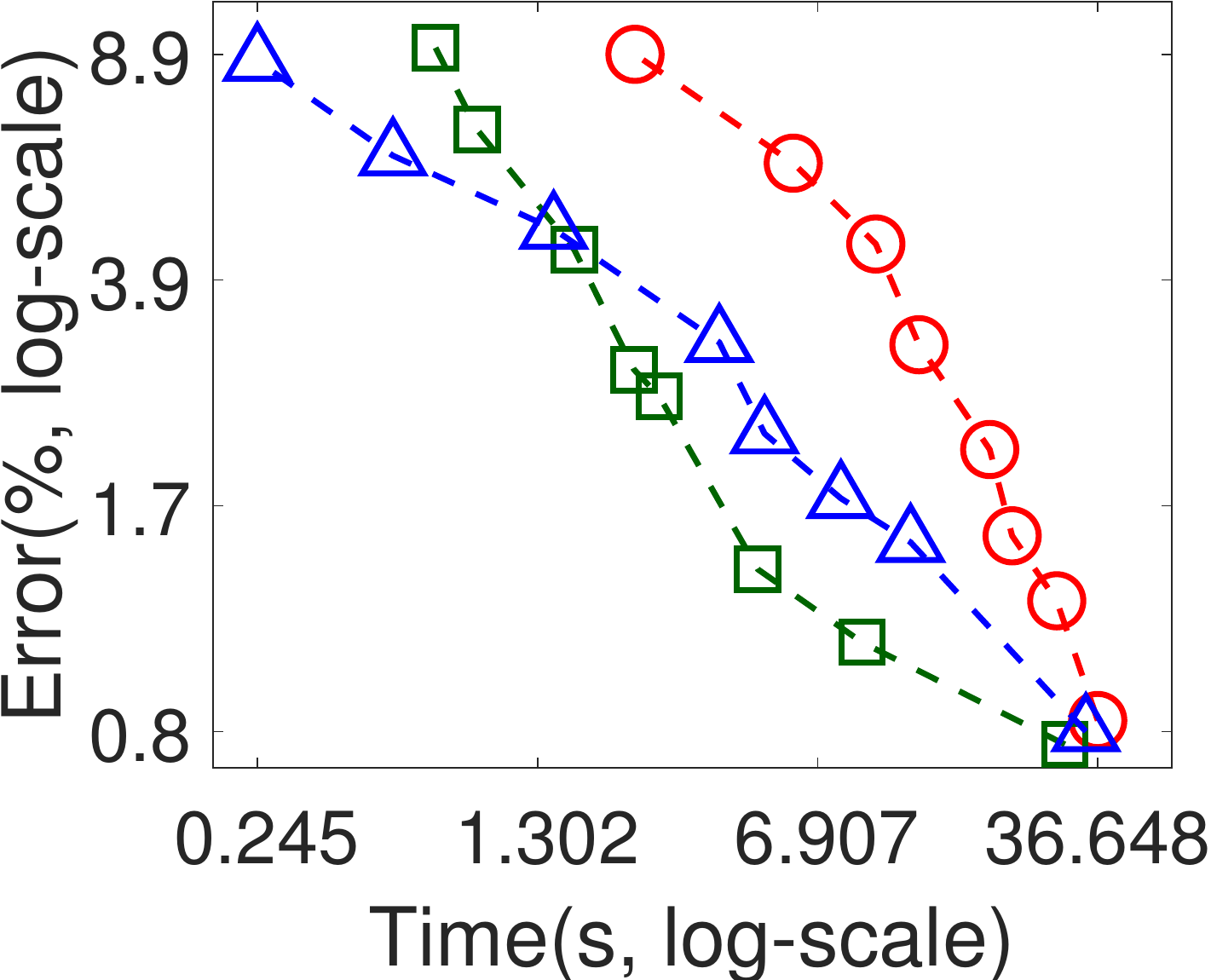}
  }
  \hfill
  \subfigure[Q5 on SO]{
    \label{fig:p:m5:so}
    \includegraphics[height=0.78in]{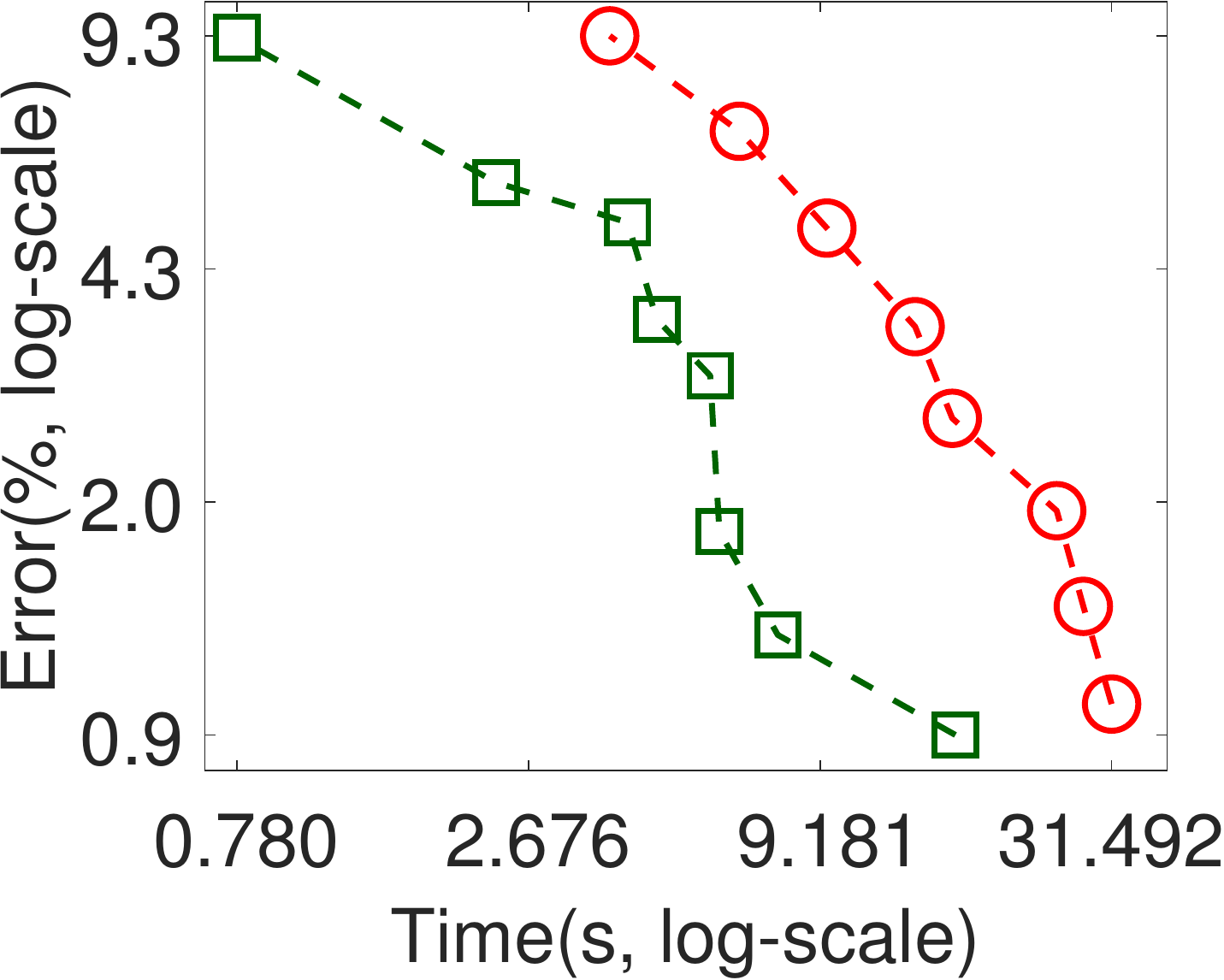}
  }
  \subfigure[Q1 on BC]{
    \label{fig:p:m1:bt}
    \includegraphics[height=0.77in]{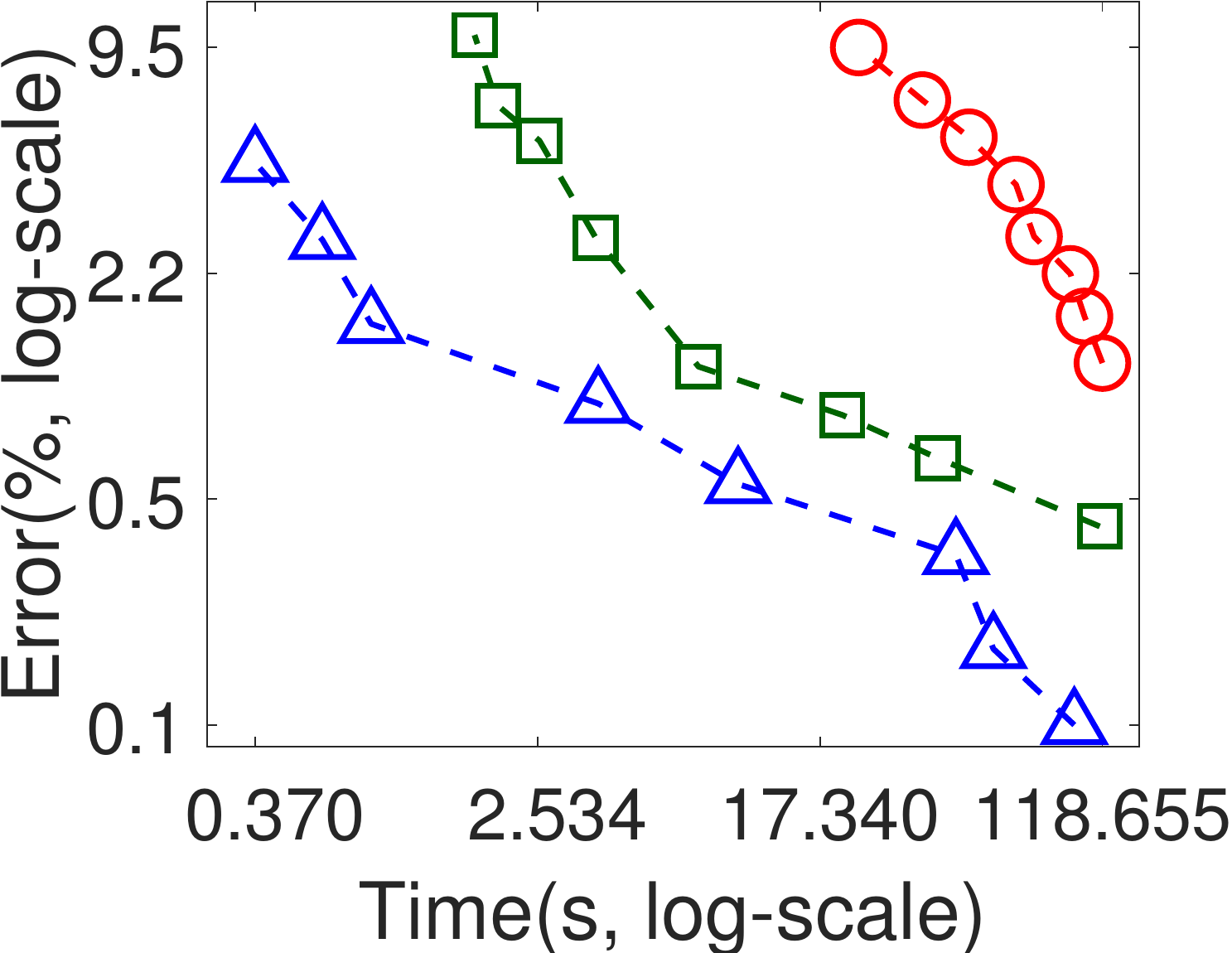}
  }
  \hfill
  \subfigure[Q2 on BC]{
    \label{fig:p:m2:bt}
    \includegraphics[height=0.77in]{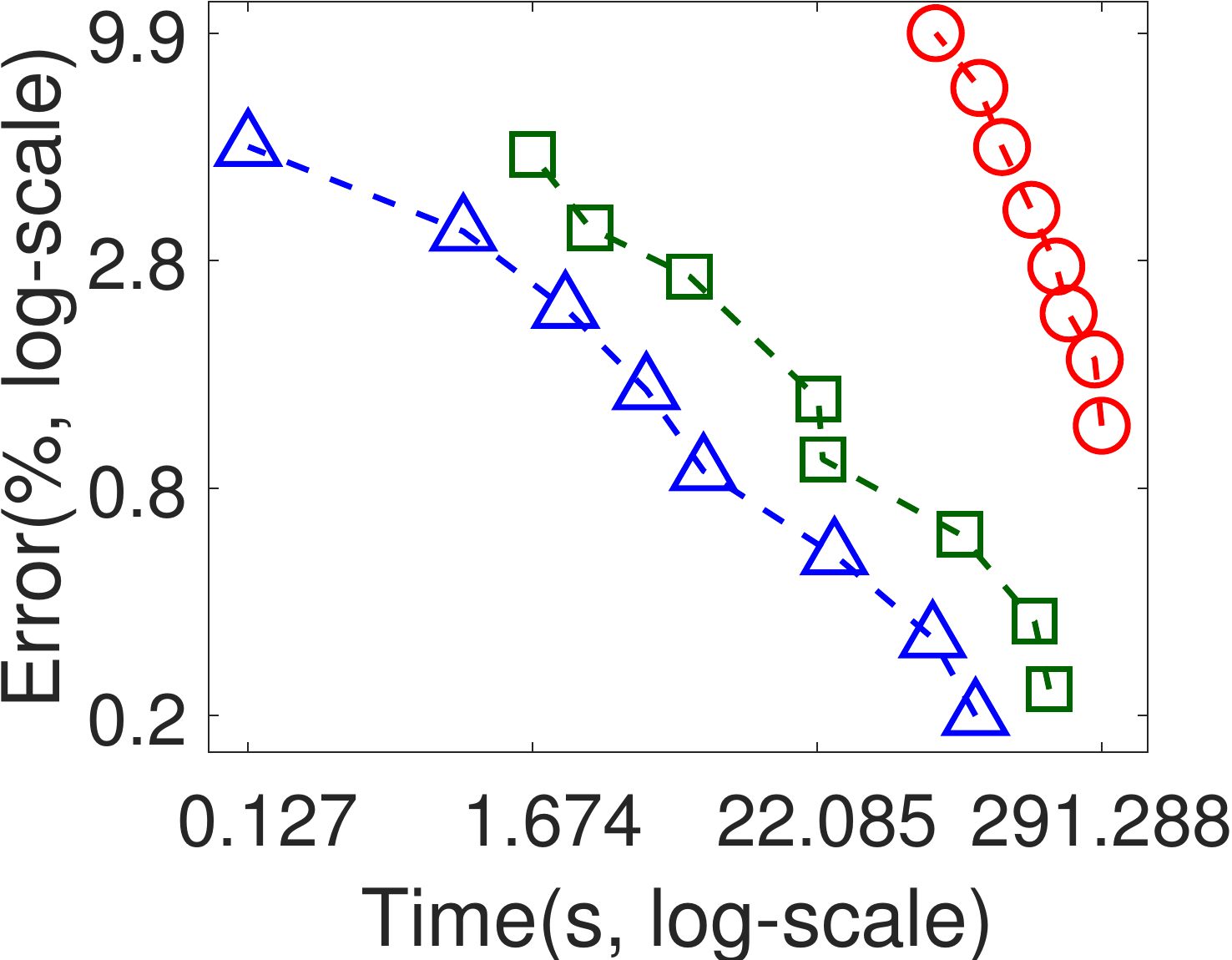}
  }
  \hfill
  \subfigure[Q3 on BC]{
    \label{fig:p:m3:bt}
    \includegraphics[height=0.77in]{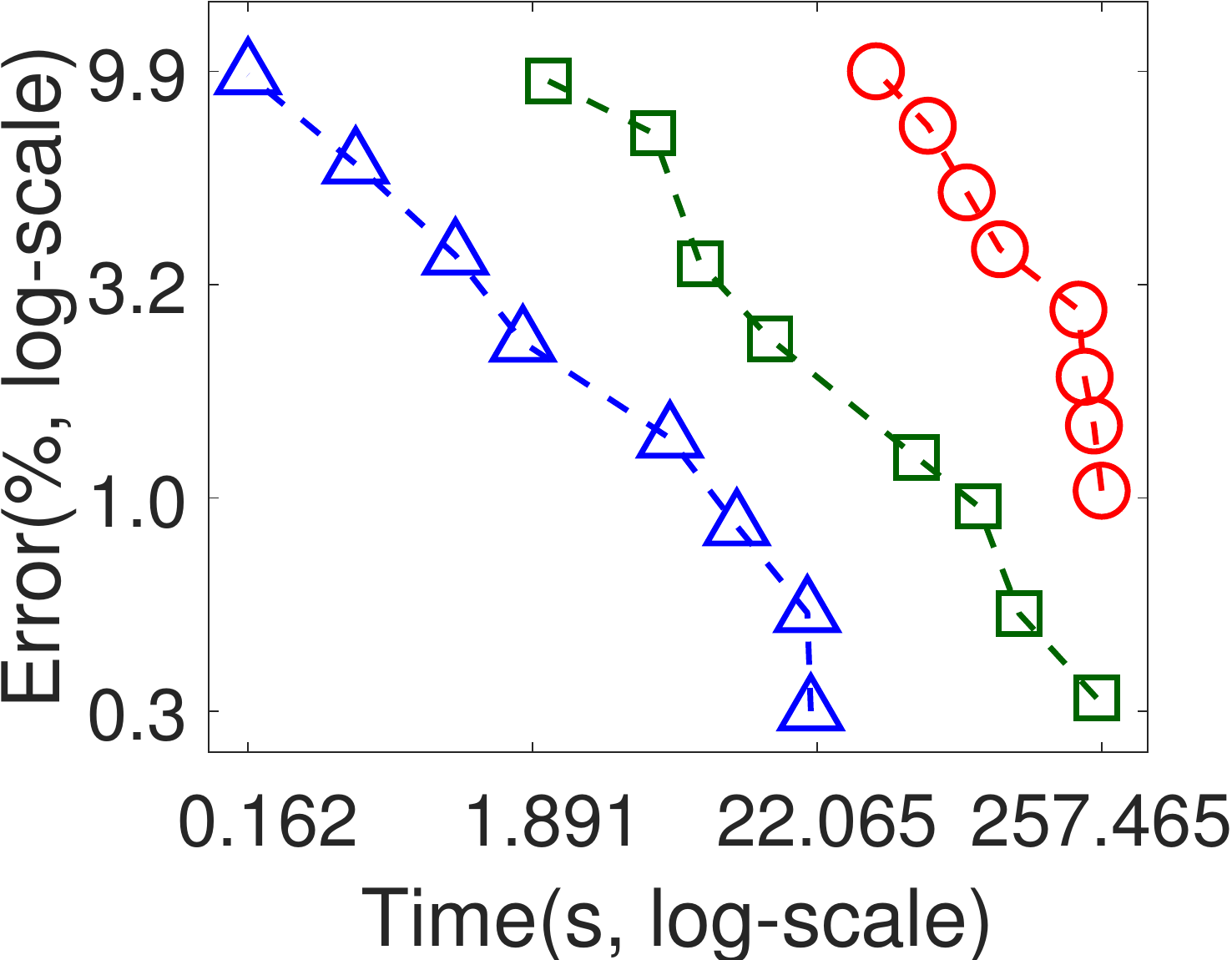}
  }
  \hfill
  \subfigure[Q4 on BC]{
    \label{fig:p:m4:bt}
    \includegraphics[height=0.77in]{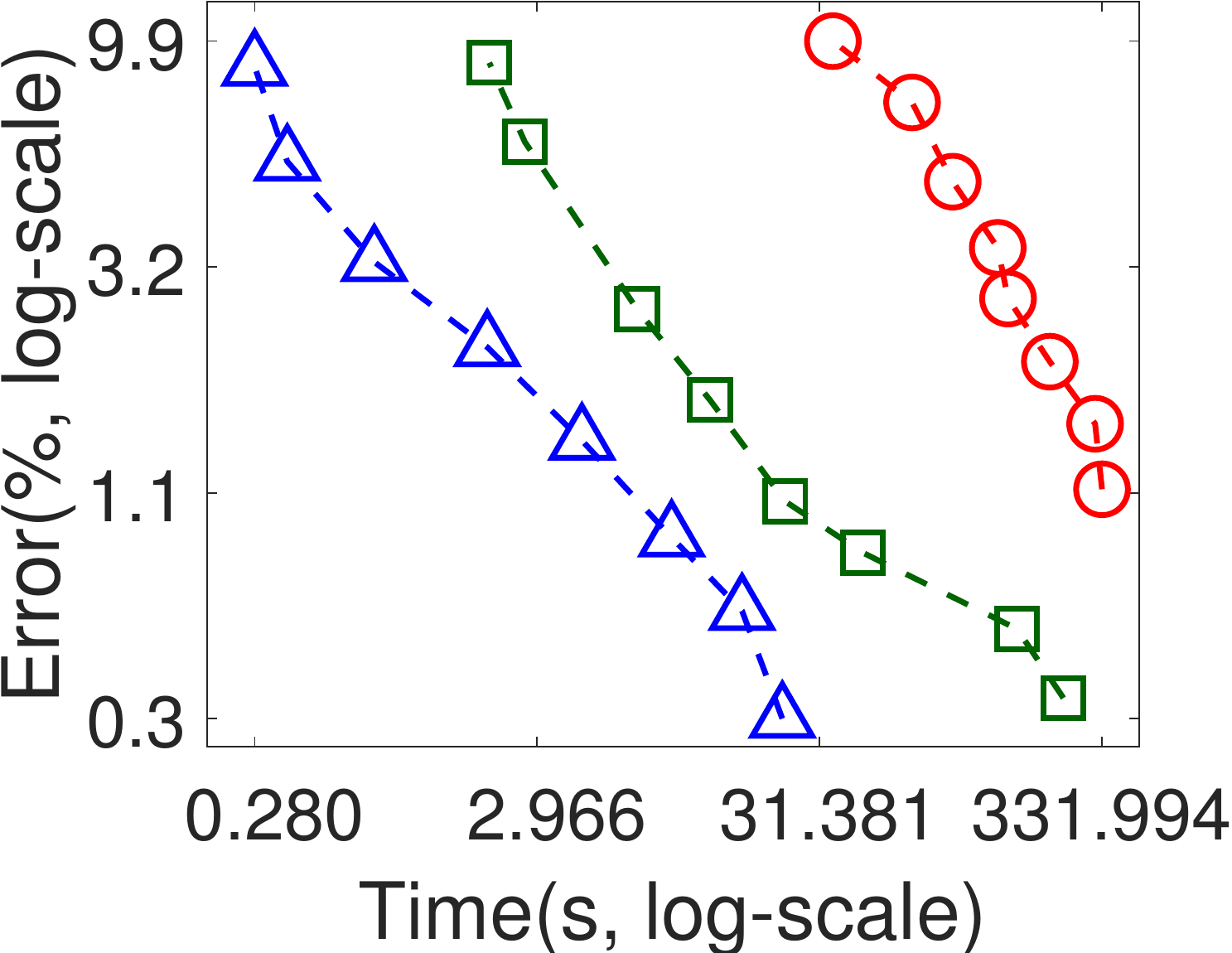}
  }
  \hfill
  \subfigure[Q5 on BC]{
    \label{fig:p:m5:bt}
    \includegraphics[height=0.77in]{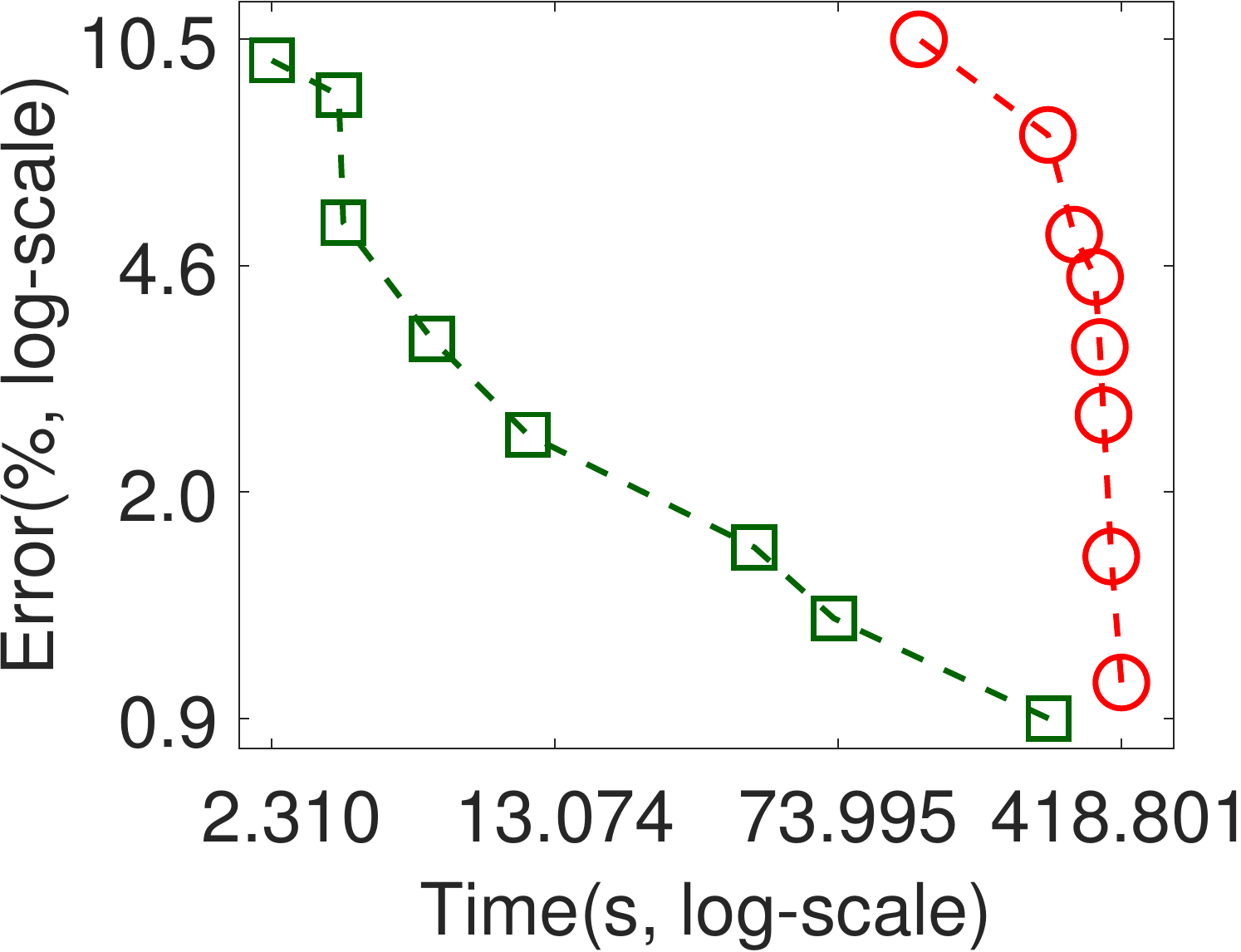}
  }
  \subfigure[Q1 on RC]{
    \label{fig:p:m1:rc}
    \includegraphics[height=0.76in]{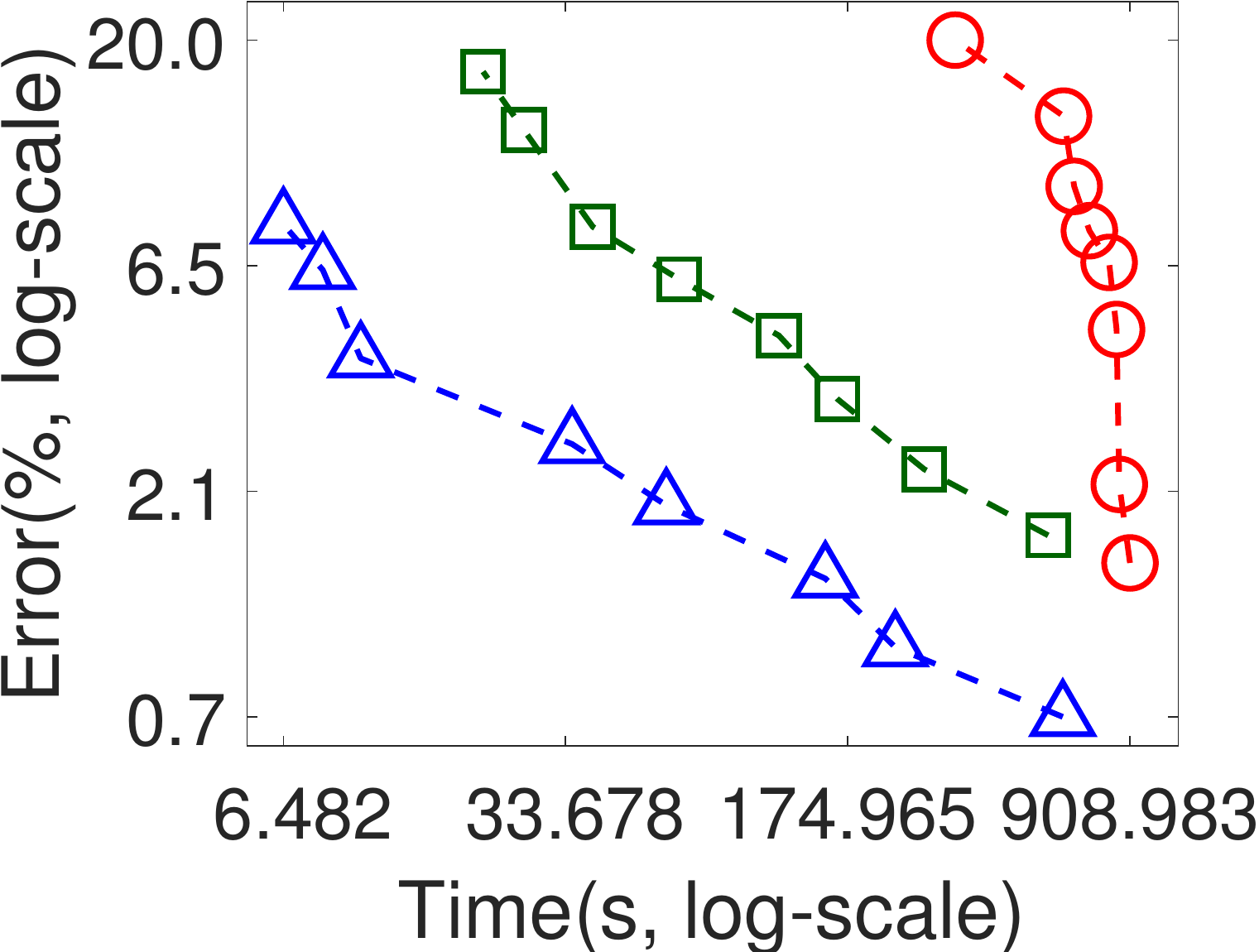}
  }
  \hfill
  \subfigure[Q2 on RC]{
    \label{fig:p:m2:rc}
    \includegraphics[height=0.76in]{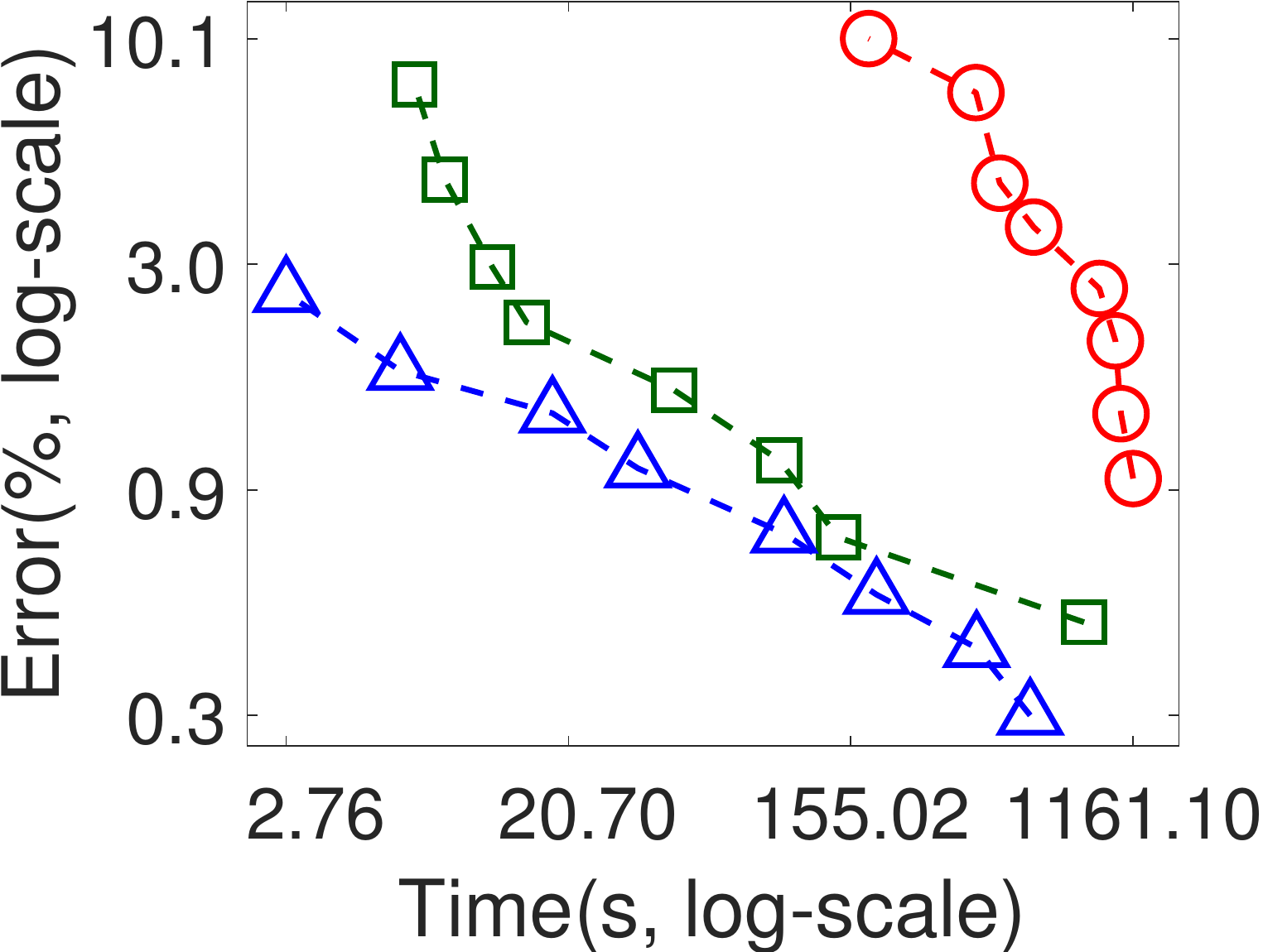}
  }
  \hfill
  \subfigure[Q3 on RC]{
    \label{fig:p:m3:rc}
    \includegraphics[height=0.76in]{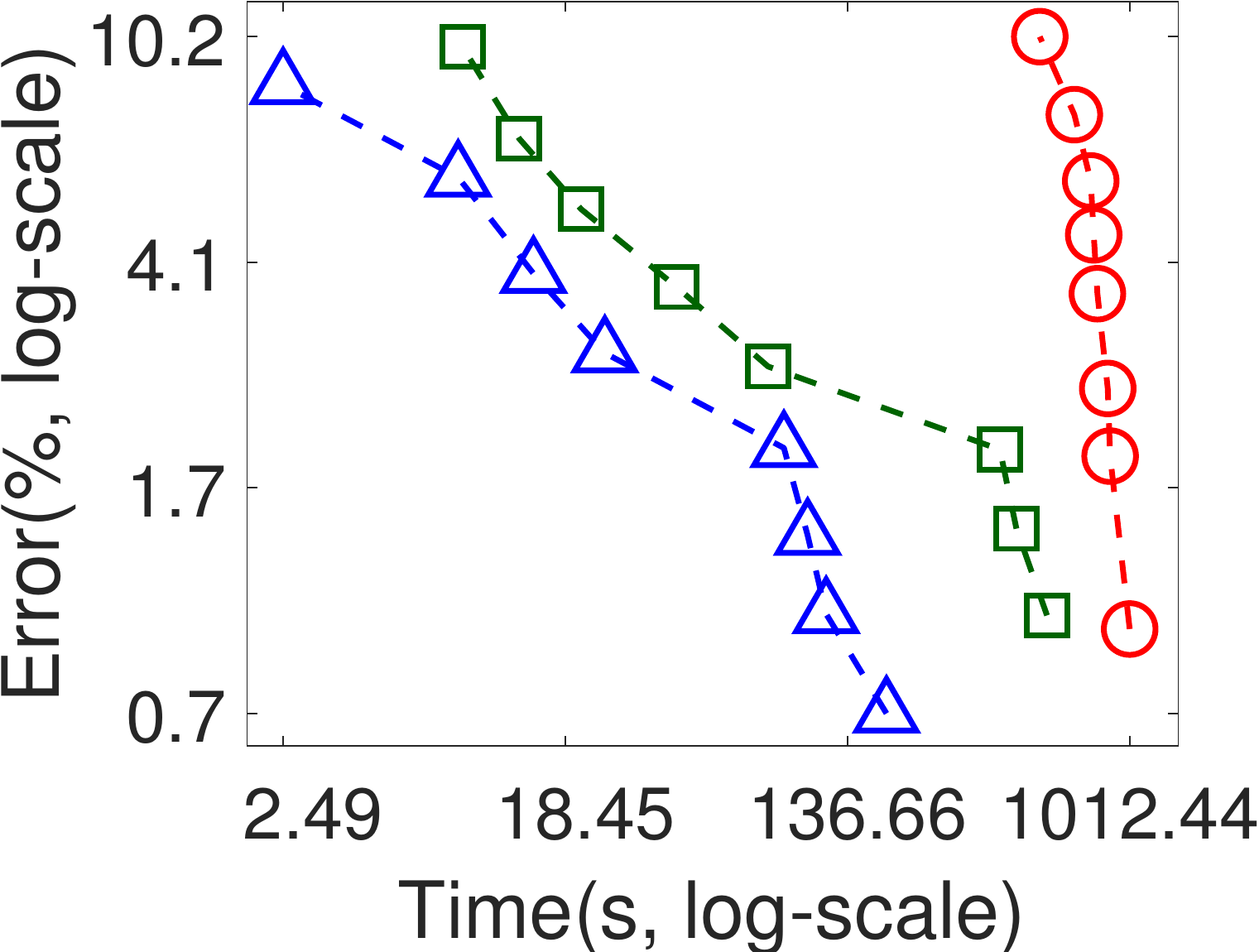}
  }
  \hfill
  \subfigure[Q4 on RC]{
    \label{fig:p:m4:rc}
    \includegraphics[height=0.76in]{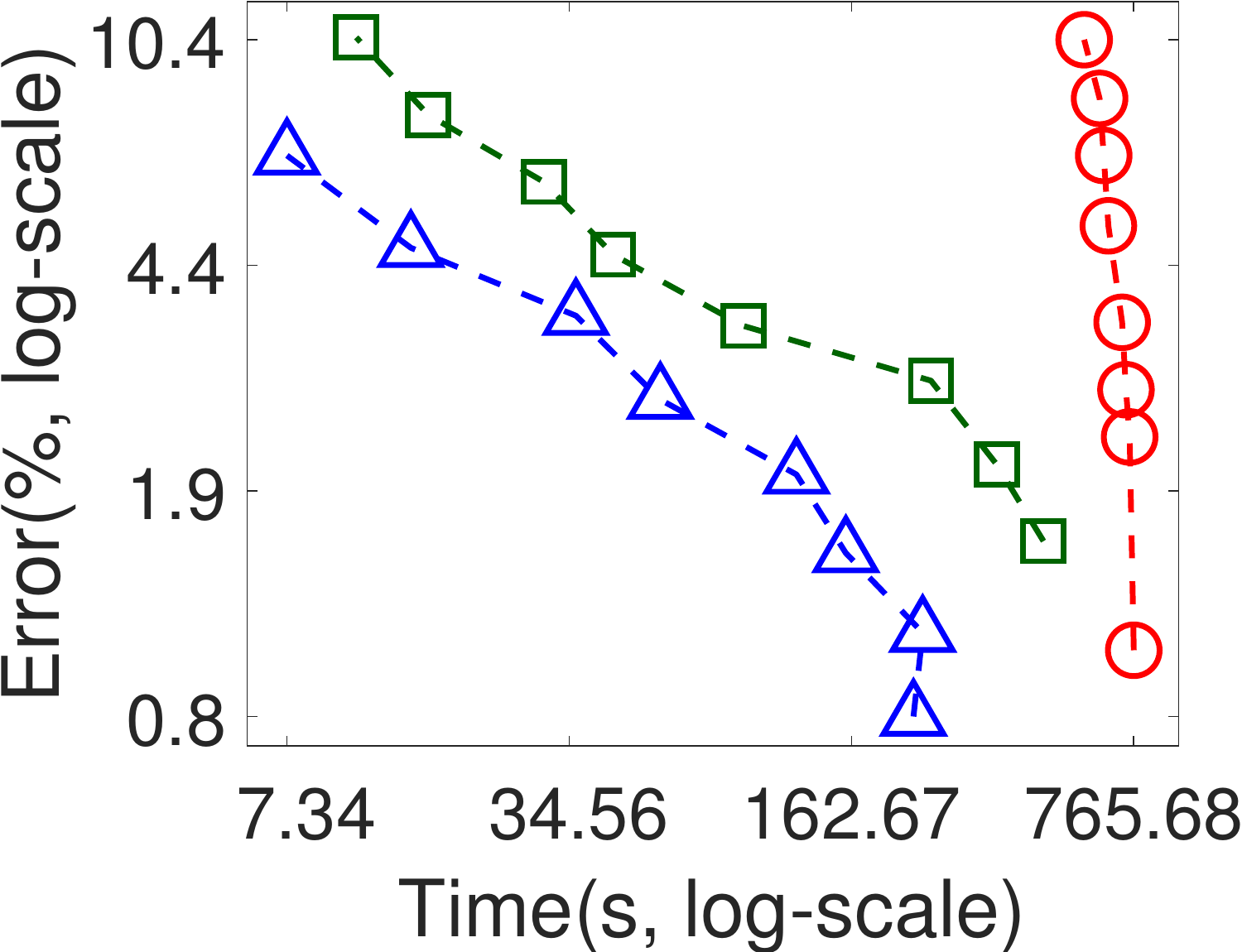}
  }
  \hfill
  \subfigure[Q5 on RC]{
    \label{fig:p:m5:rc}
    \includegraphics[height=0.76in]{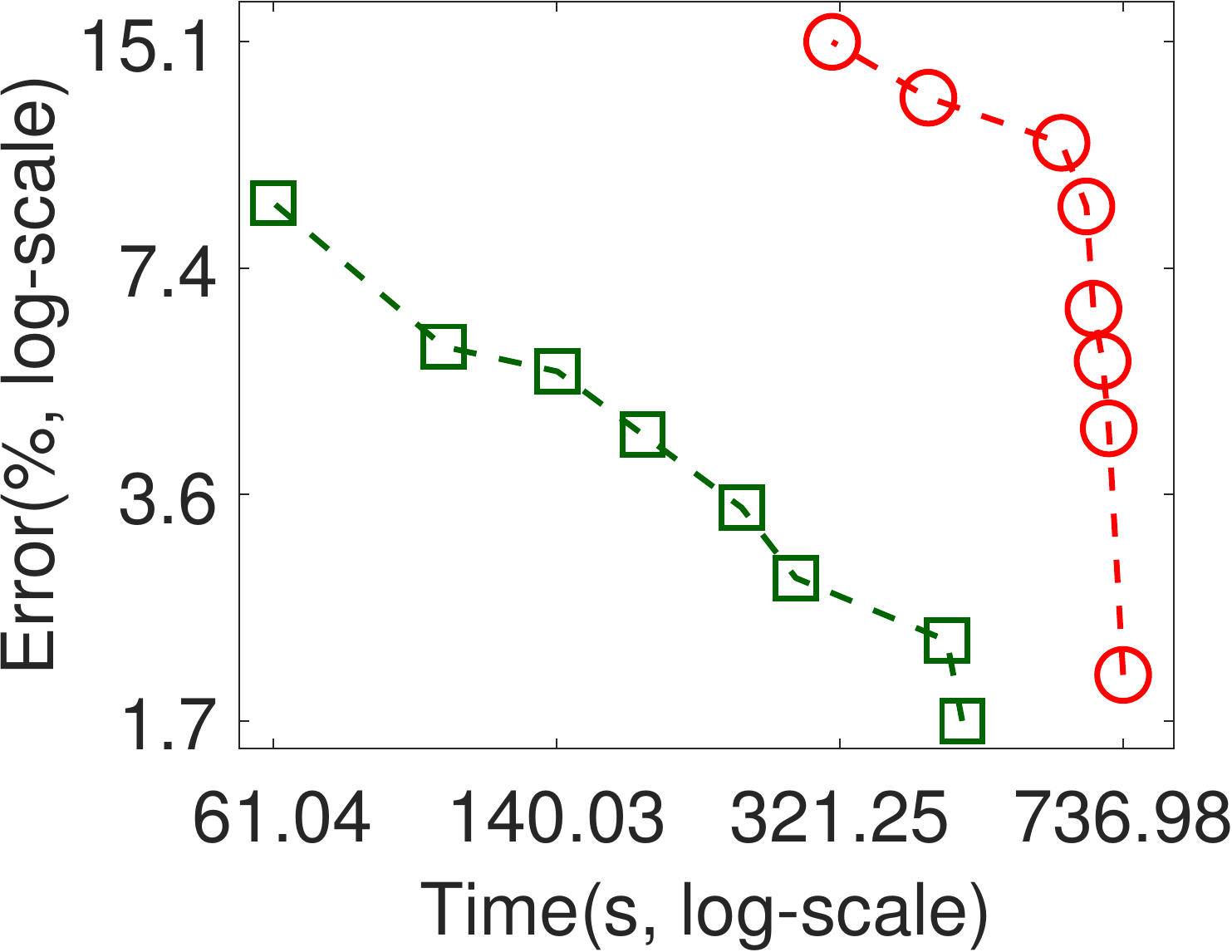}
  }
  \caption{Relative error ($\%$) vs.~running time (in seconds) with varying sampling probability $p$.}
  \Description{P}
  \label{fig:p}
\end{figure}

\begin{figure}[t]
  \centering
  \subfigure[Q3 on BC with varying time span $\delta$]{
    \label{subfig:delta}
    \includegraphics[height=0.9in]{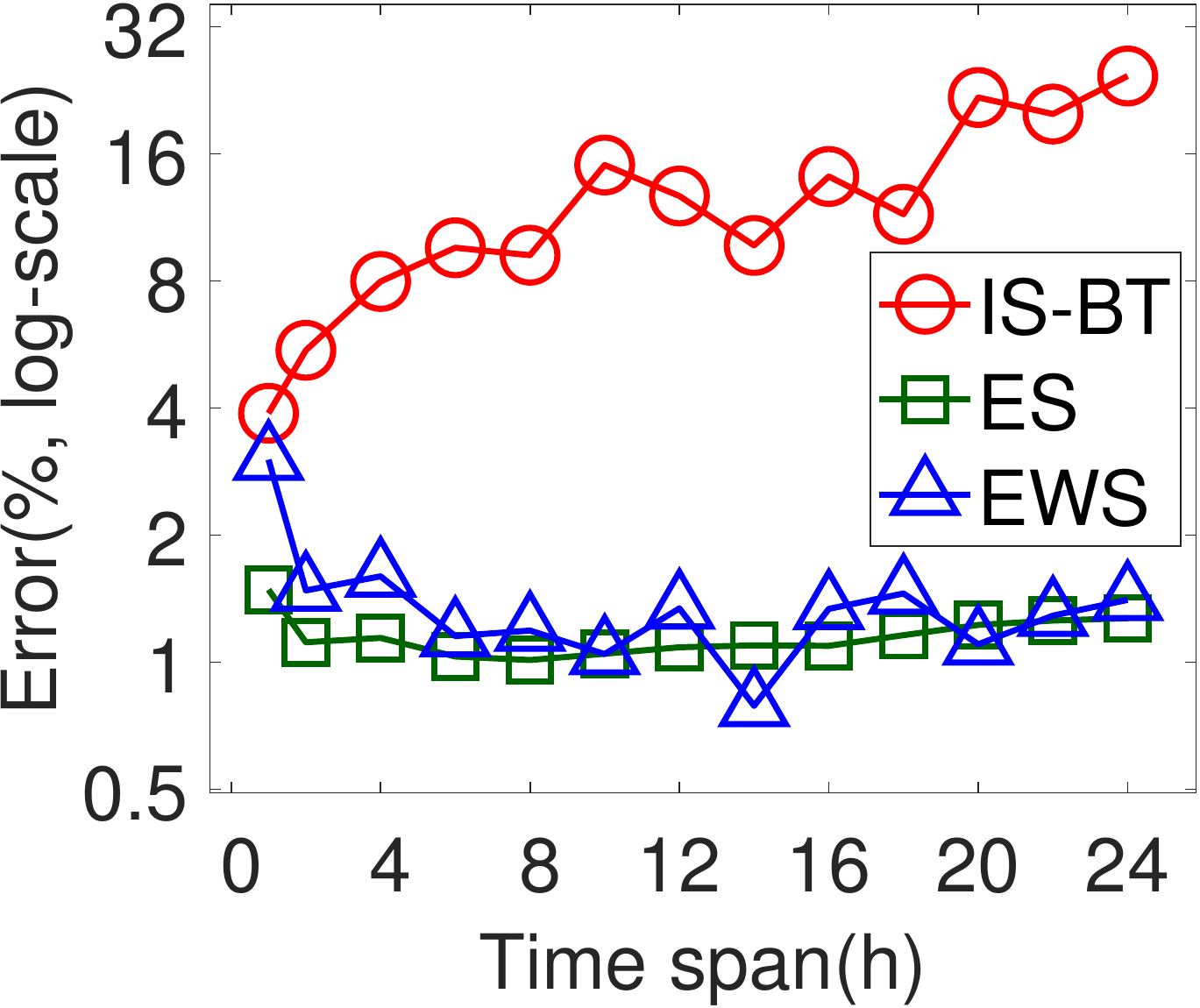}
    \hspace{0.5em}
    \includegraphics[height=0.9in]{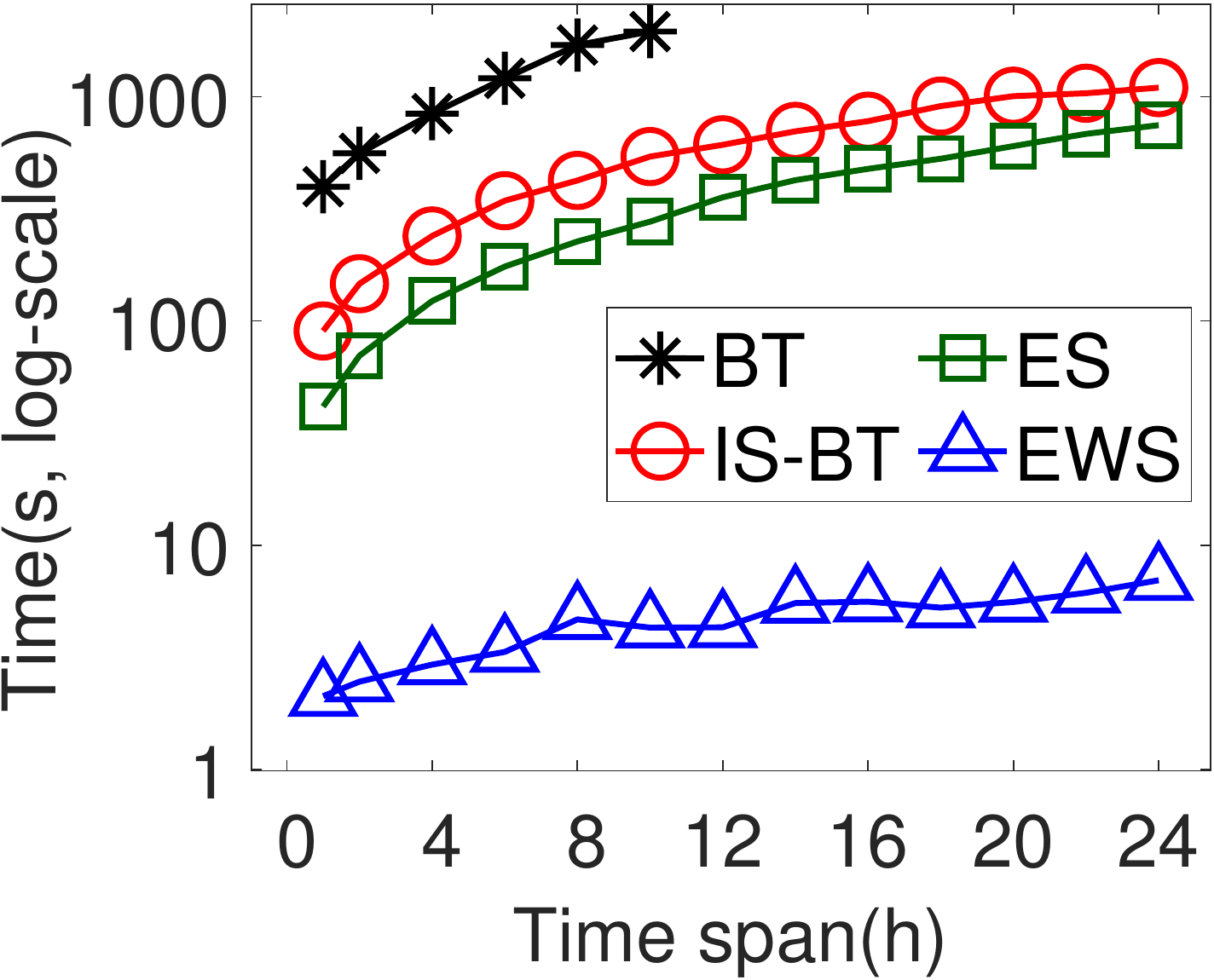}
  }
  \hspace{1em}
  \subfigure[Q2 on RC with varying number of edges $m$]{
    \label{subfig:size}
    \includegraphics[height=0.9in]{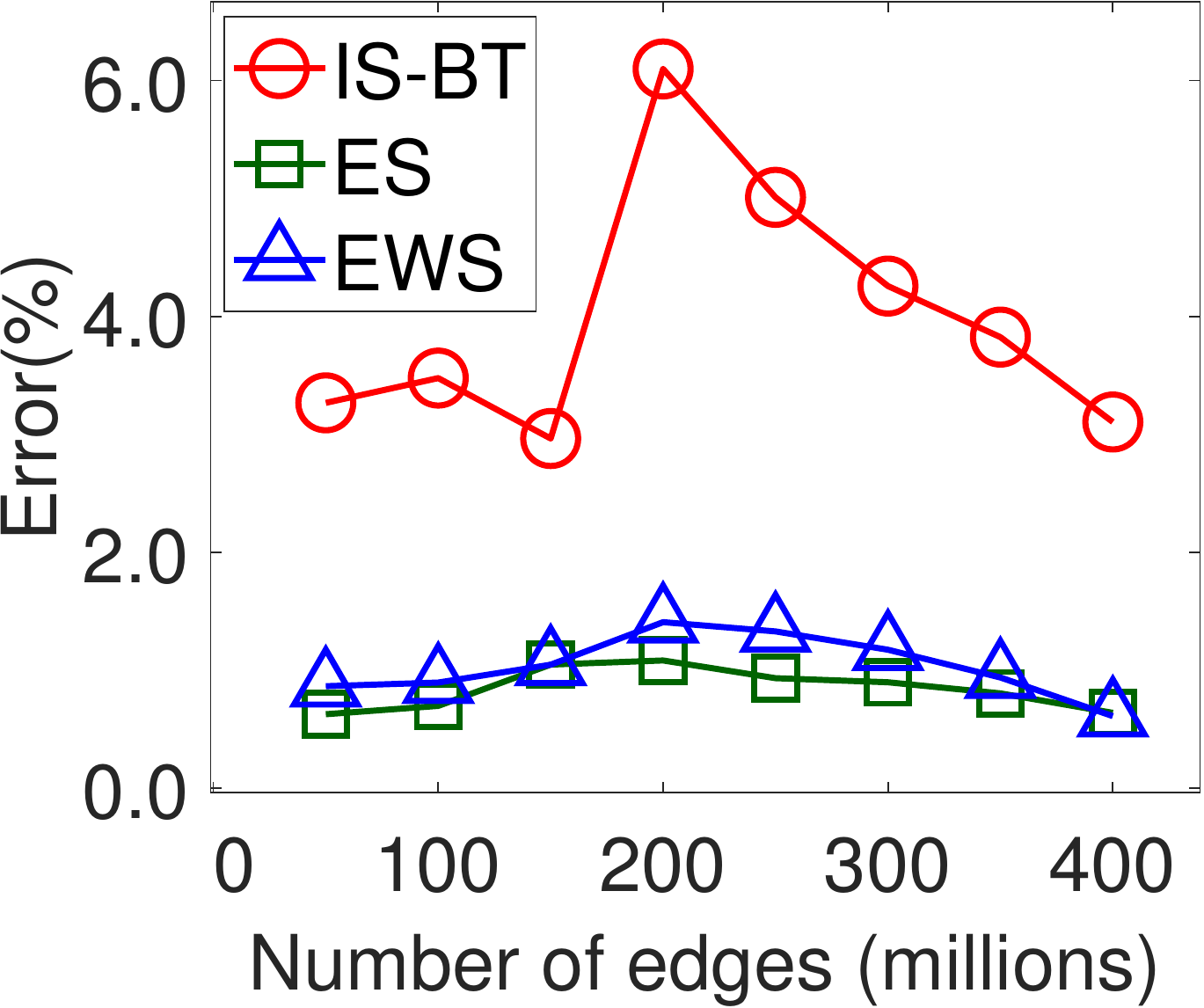}
    \hspace{0.5em}
    \includegraphics[height=0.9in]{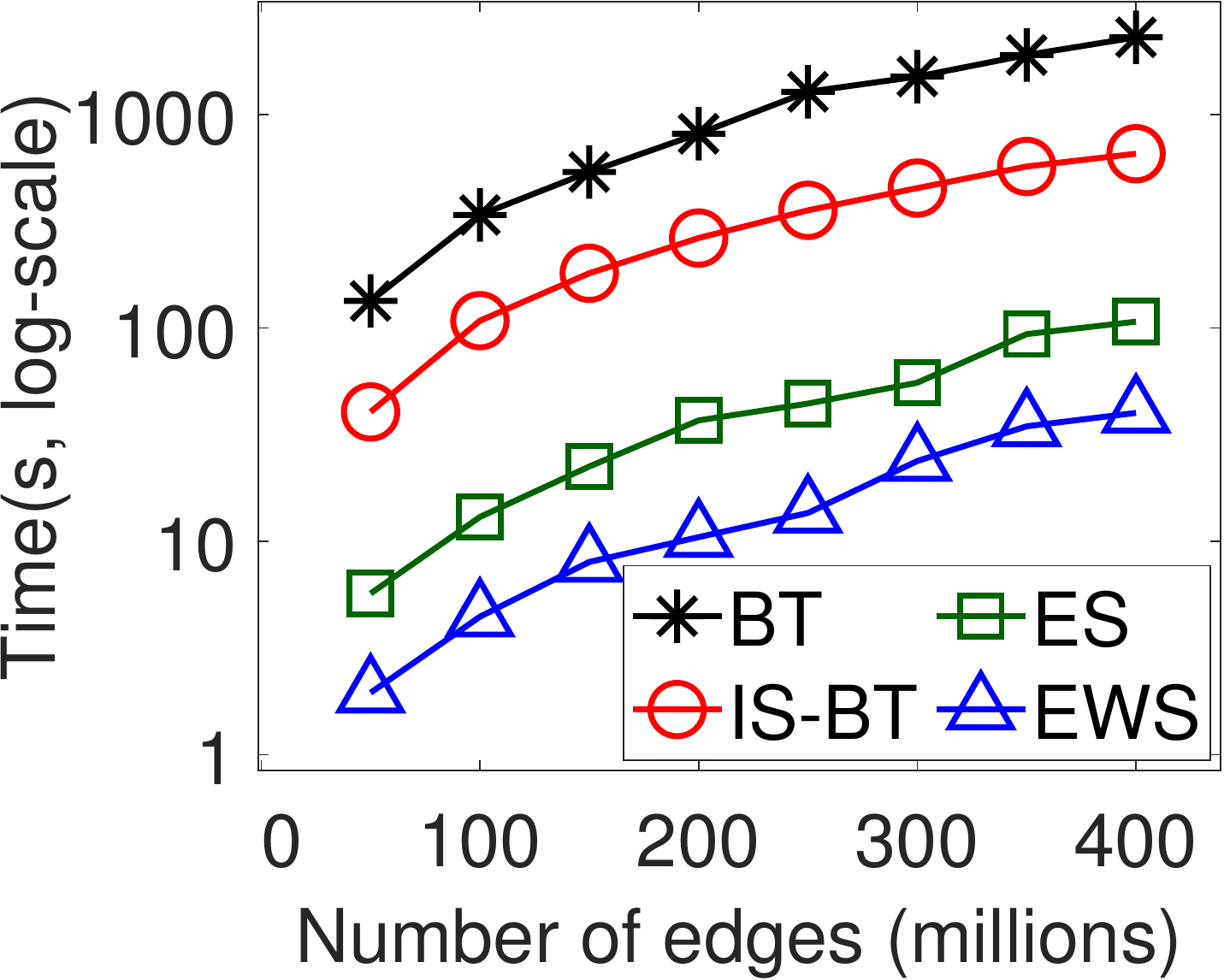}
  }
  \caption{Scalability of different offline algorithms with varying time span $\delta$ and number of temporal edges $m$.}
  \Description{scalability}
  \label{fig:scalability}
\end{figure}

\begin{figure}[t]
  \centering
  \includegraphics[height=0.15in]{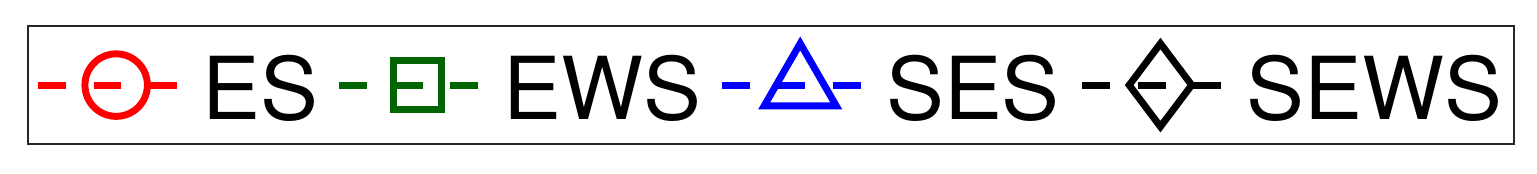}
  \\
  \subfigure[Q1 on AU]{
    \label{fig:r:m1:au}
    \includegraphics[height=0.8in]{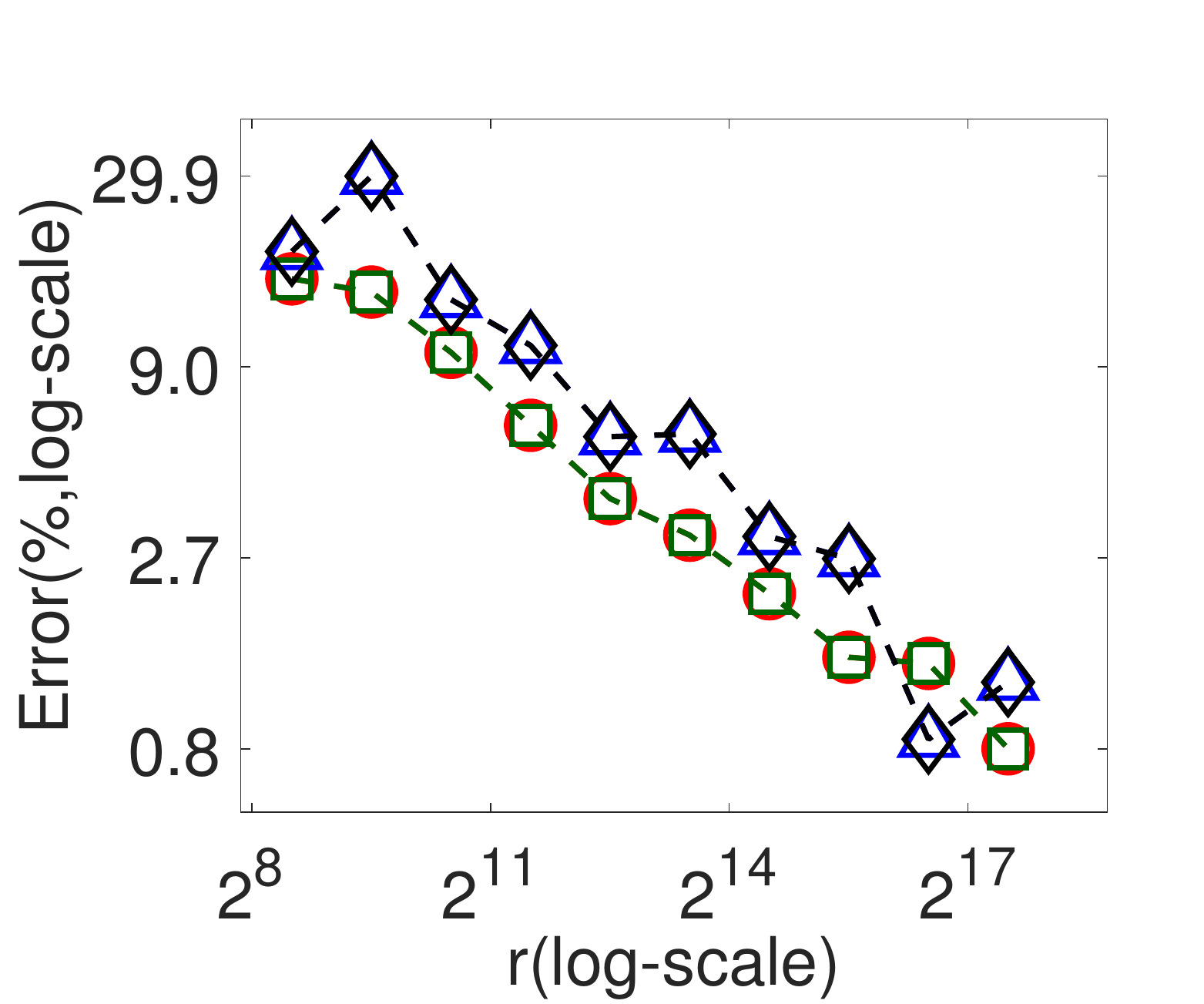}
  }
  \hfill
  \subfigure[Q2 on AU]{
    \label{fig:r:m2:au}
    \includegraphics[height=0.8in]{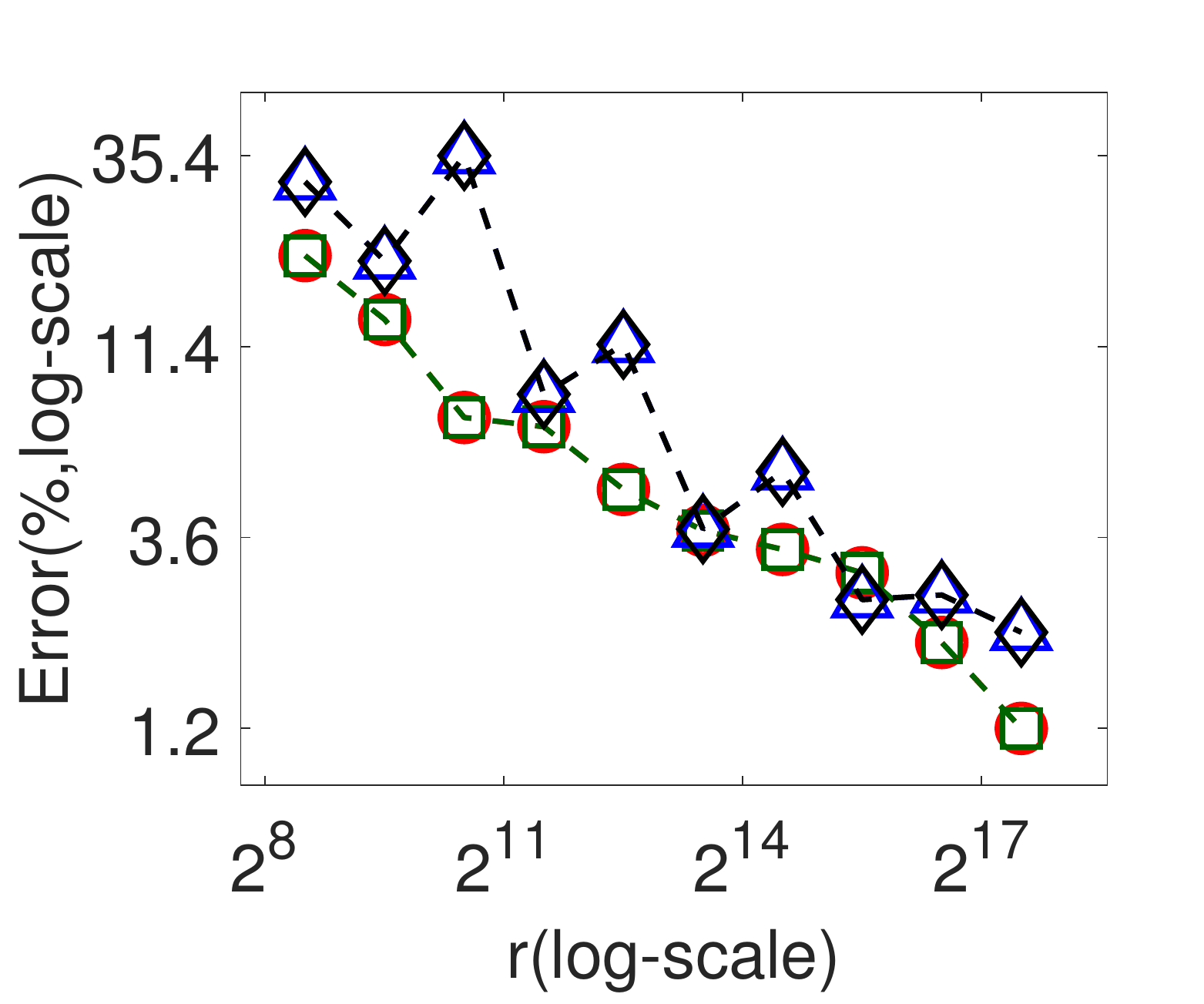}
  }
  \hfill
  \subfigure[Q3 on AU]{
    \label{fig:r:m3:au}
    \includegraphics[height=0.8in]{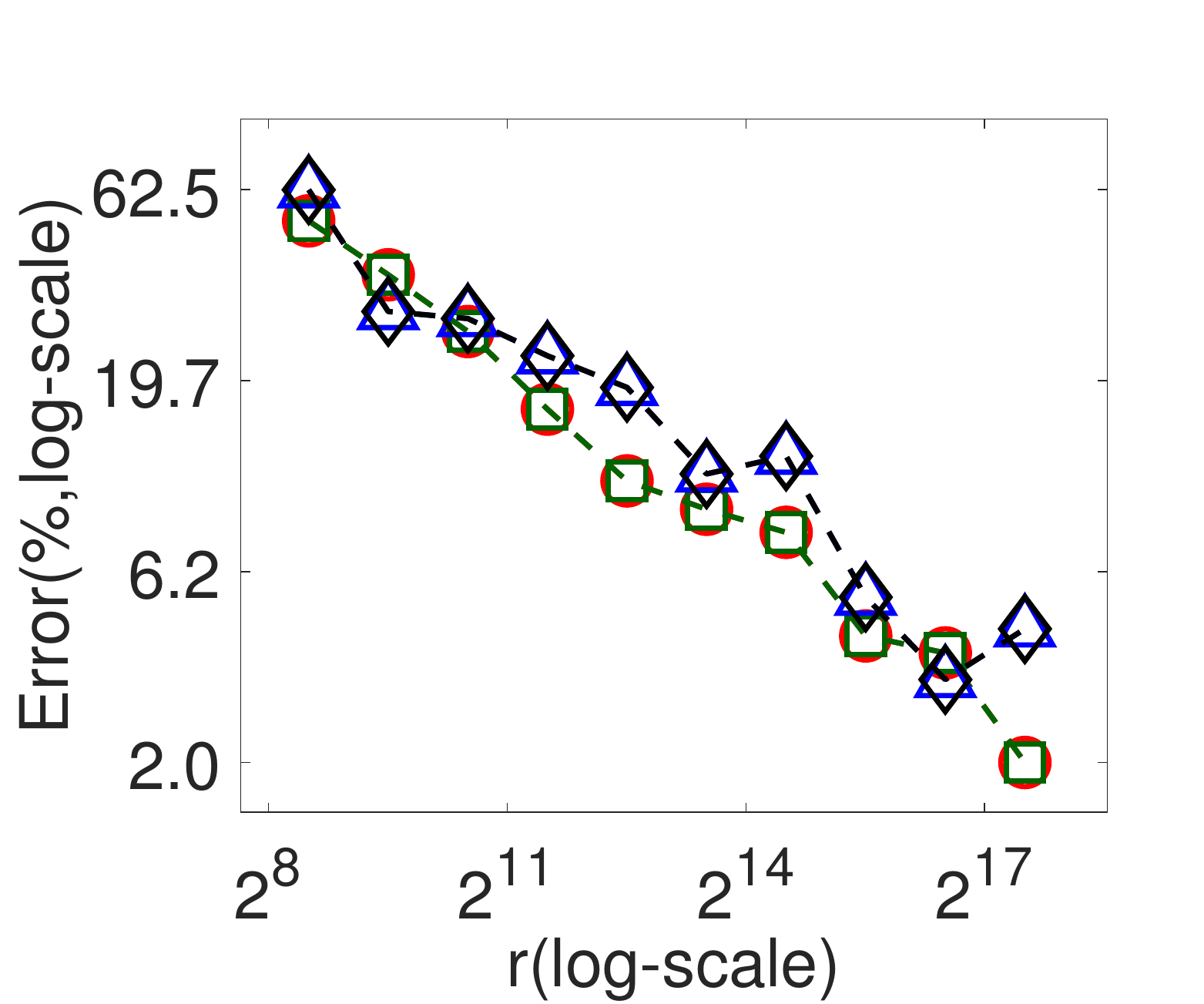}
  }
  \hfill
  \subfigure[Q4 on AU]{
    \label{fig:r:m4:au}
    \includegraphics[height=0.8in]{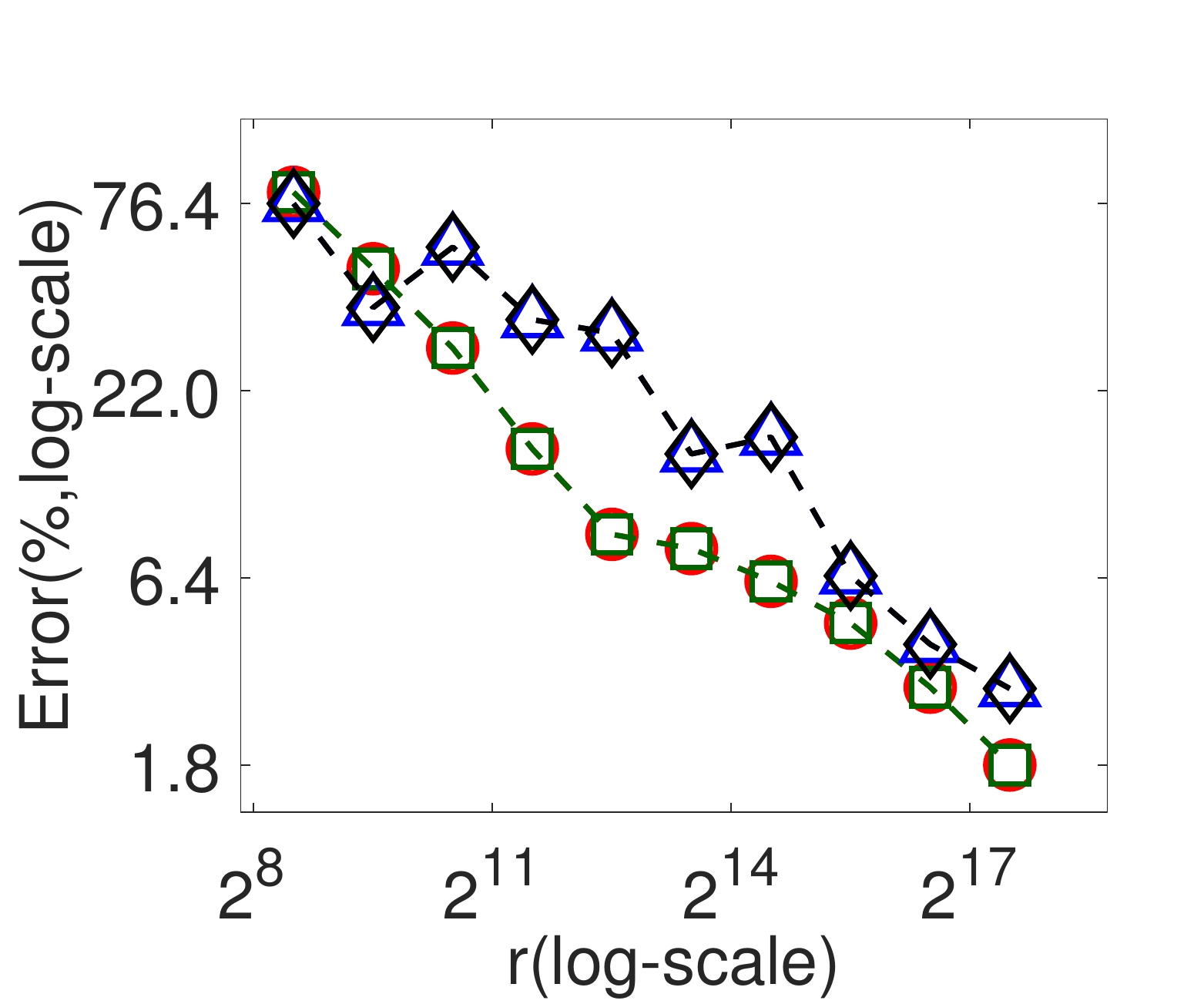}
  }
  \hfill
  \subfigure[Q5 on AU]{
    \label{fig:r:m5:au}
    \includegraphics[height=0.8in]{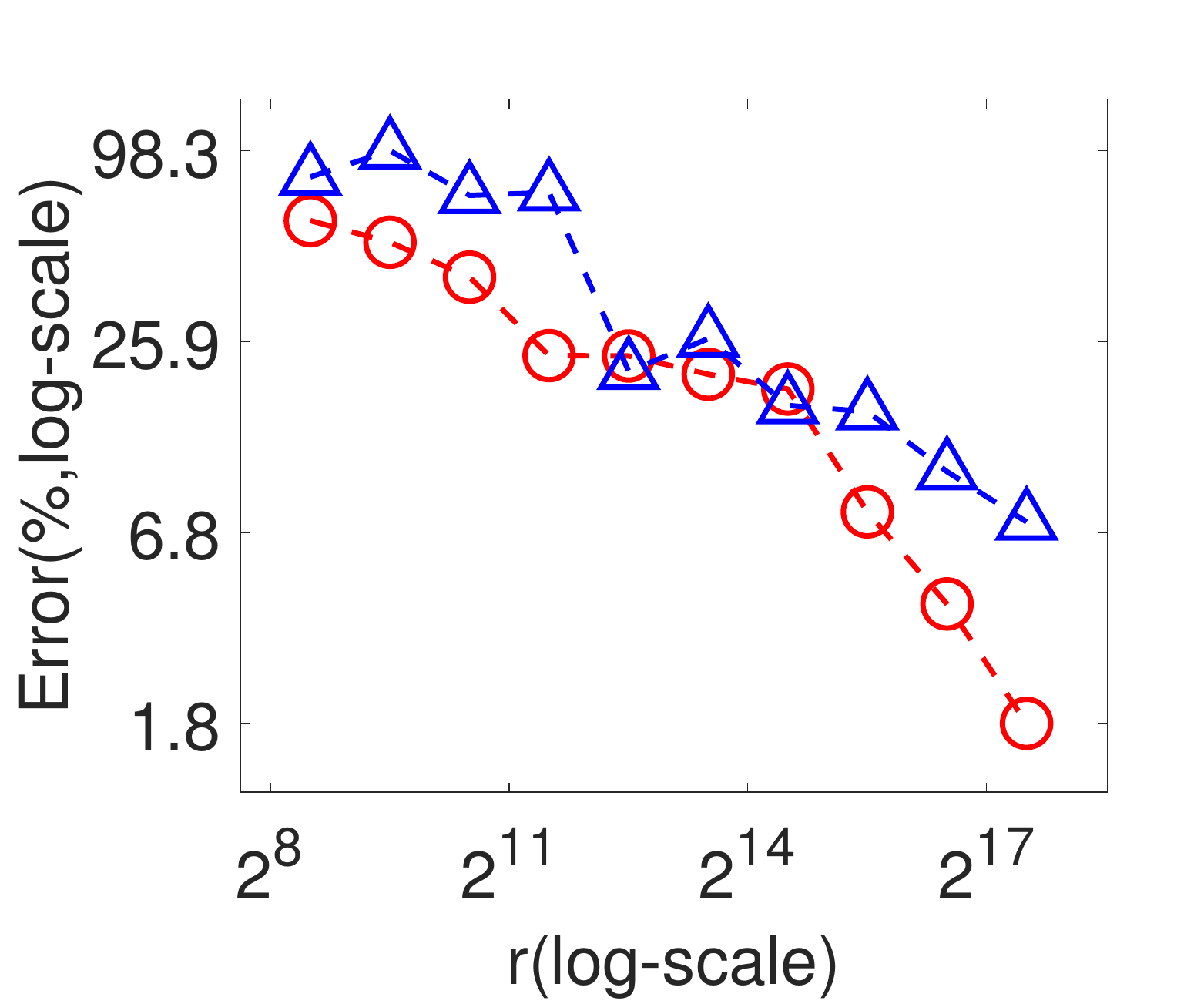}
  }
  \subfigure[Q1 on SU]{
    \label{fig:r:m1:su}
    \includegraphics[height=0.8in]{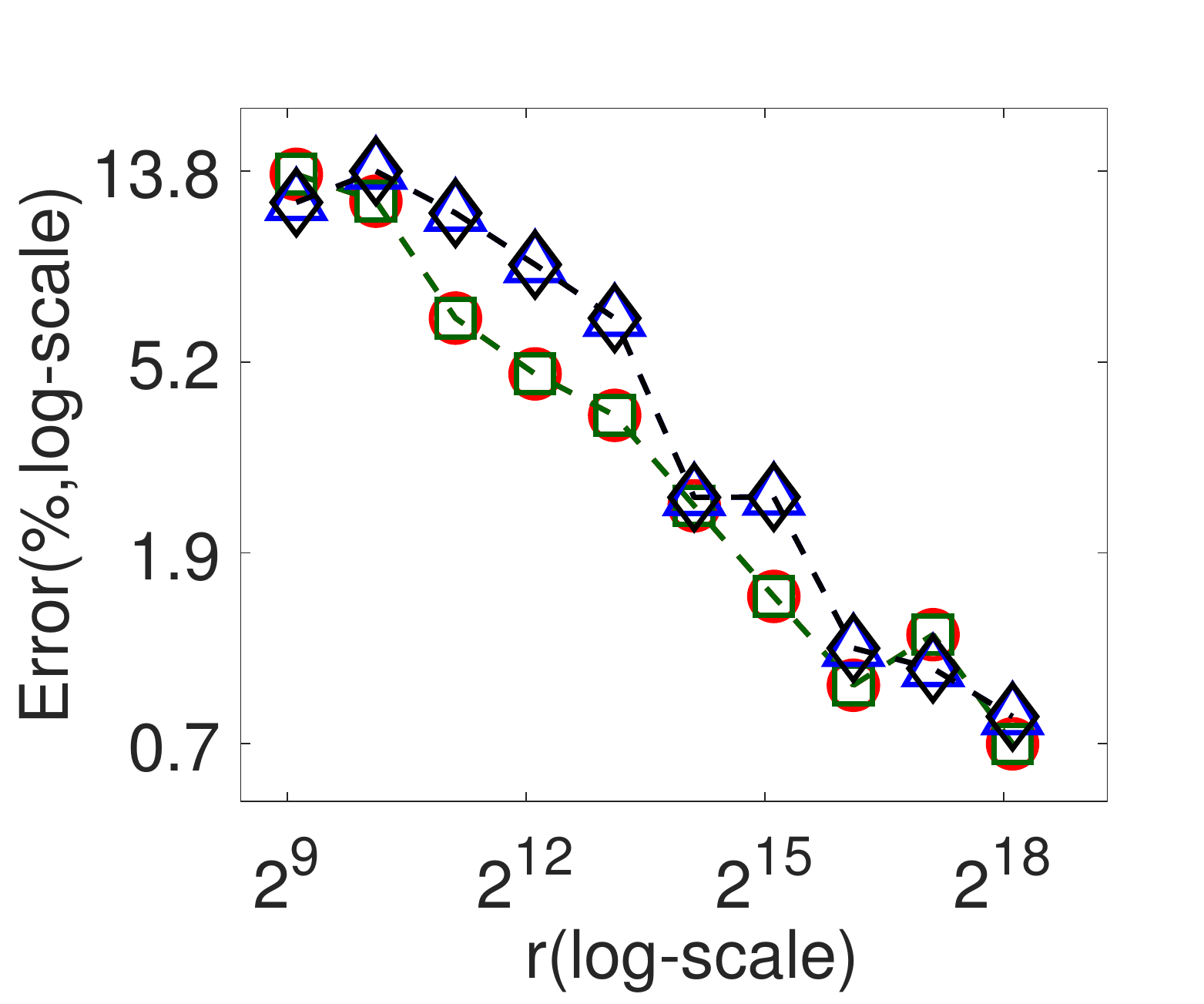}
  }
  \hfill
  \subfigure[Q2 on SU]{
    \label{fig:r:m2:su}
    \includegraphics[height=0.8in]{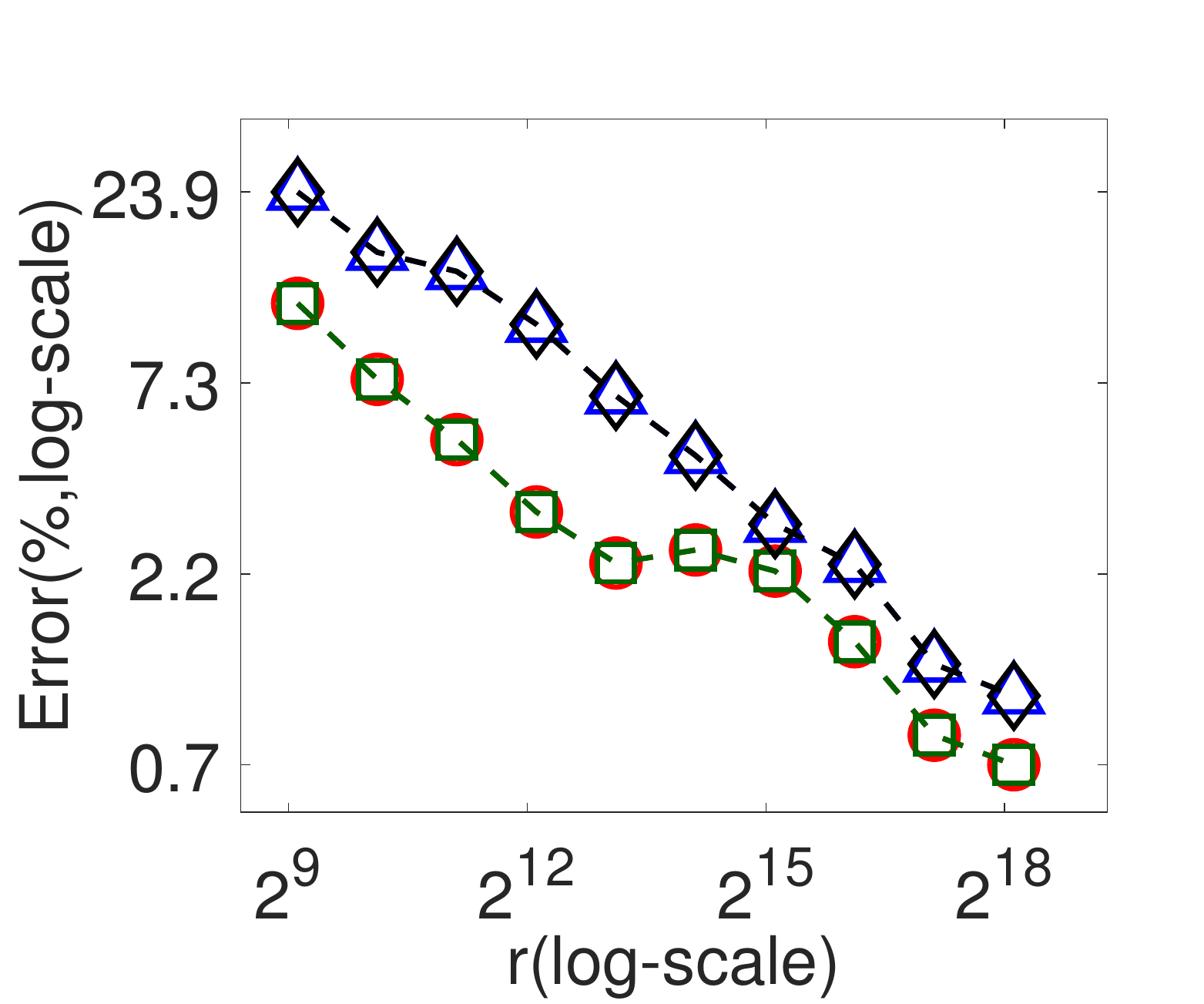}
  }
  \hfill
  \subfigure[Q3 on SU]{
    \label{fig:r:m3:su}
    \includegraphics[height=0.8in]{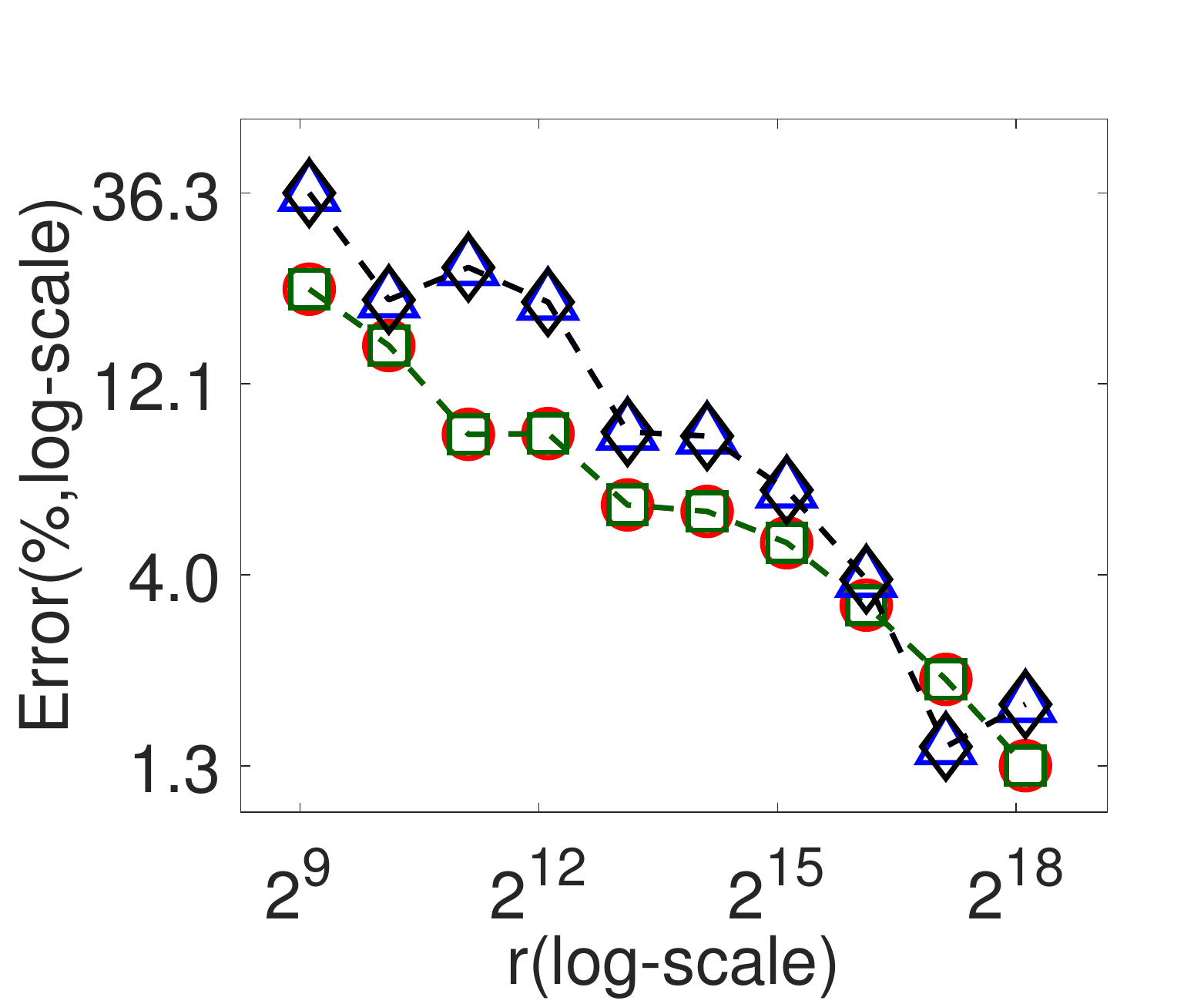}
  }
  \hfill
  \subfigure[Q4 on SU]{
    \label{fig:r:m4:su}
    \includegraphics[height=0.8in]{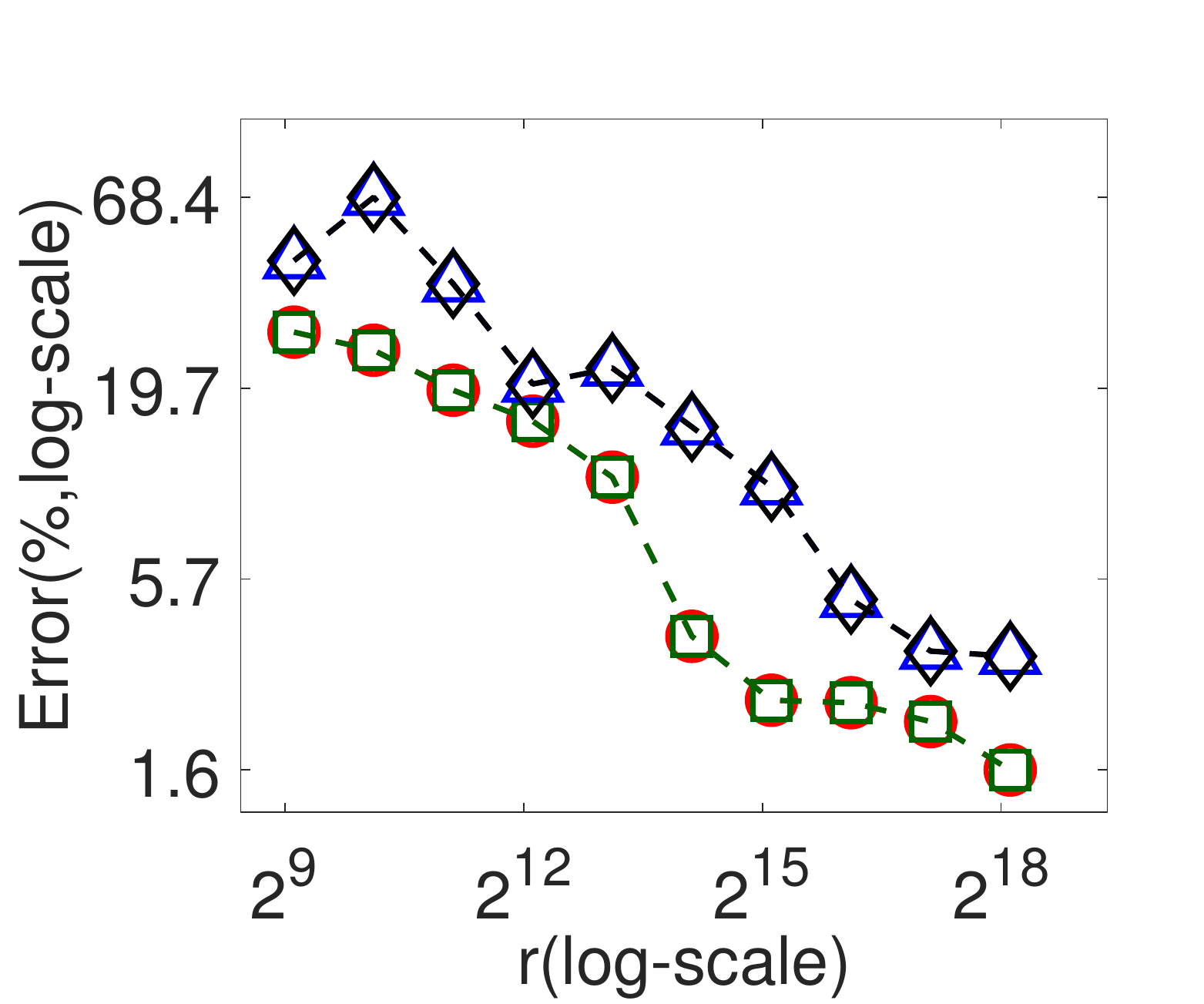}
  }
  \hfill
  \subfigure[Q5 on SU]{
    \label{fig:r:m5:su}
    \includegraphics[height=0.8in]{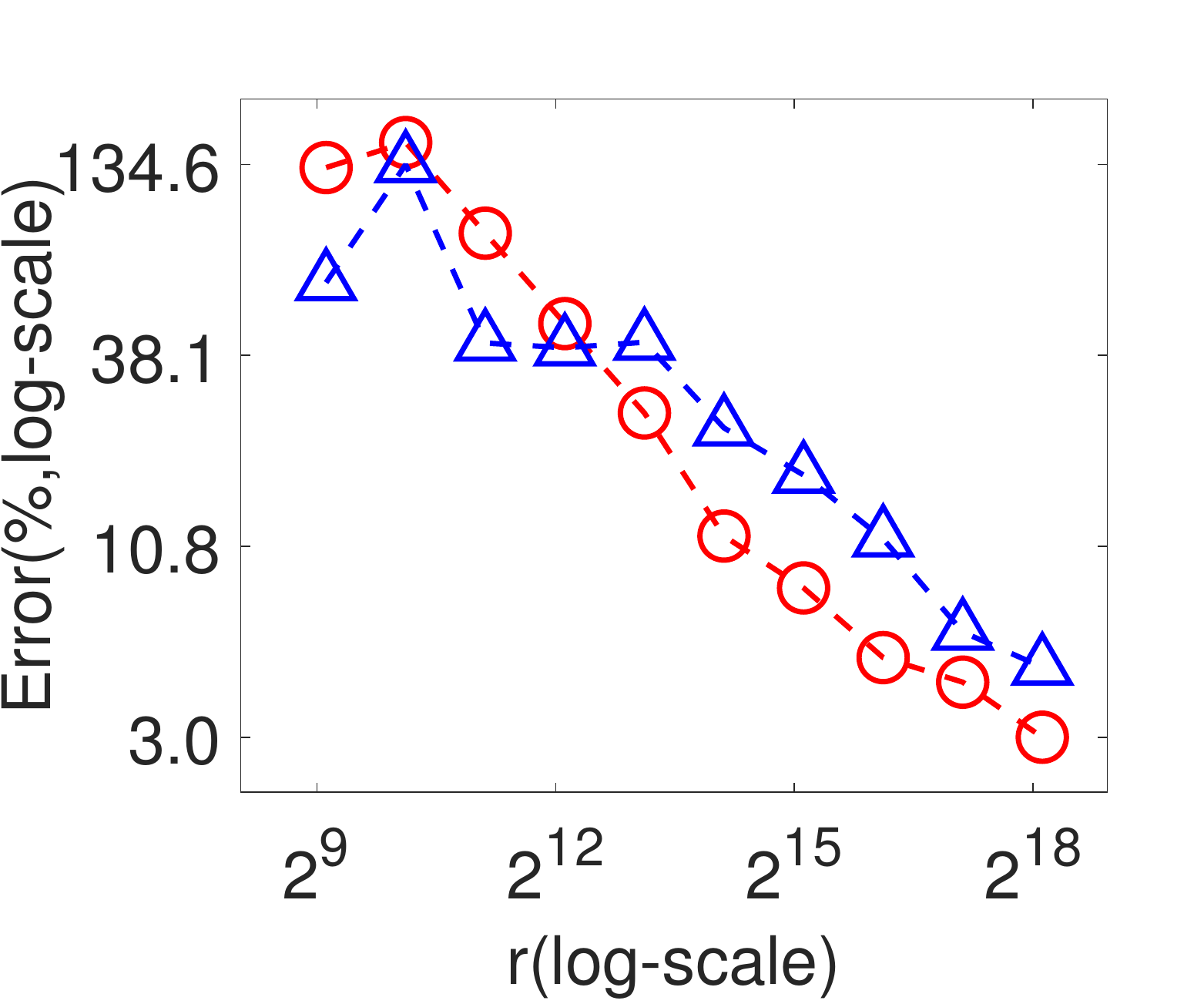}
  }
  \subfigure[Q1 on SX]{
    \label{fig:r:m1:sx}
    \includegraphics[height=0.8in]{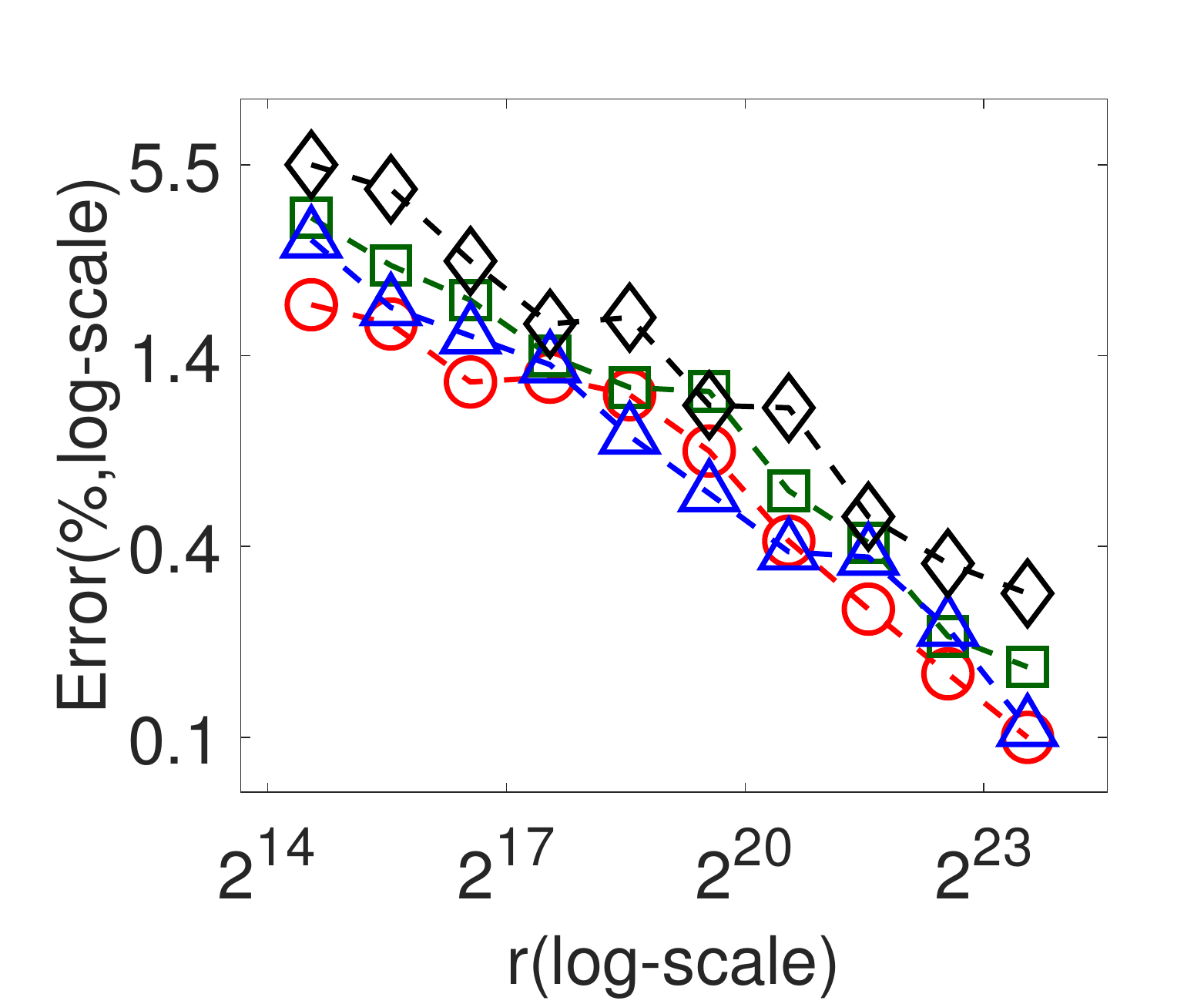}
  }
  \hfill
  \subfigure[Q2 on SX]{
    \label{fig:r:m2:sx}
    \includegraphics[height=0.8in]{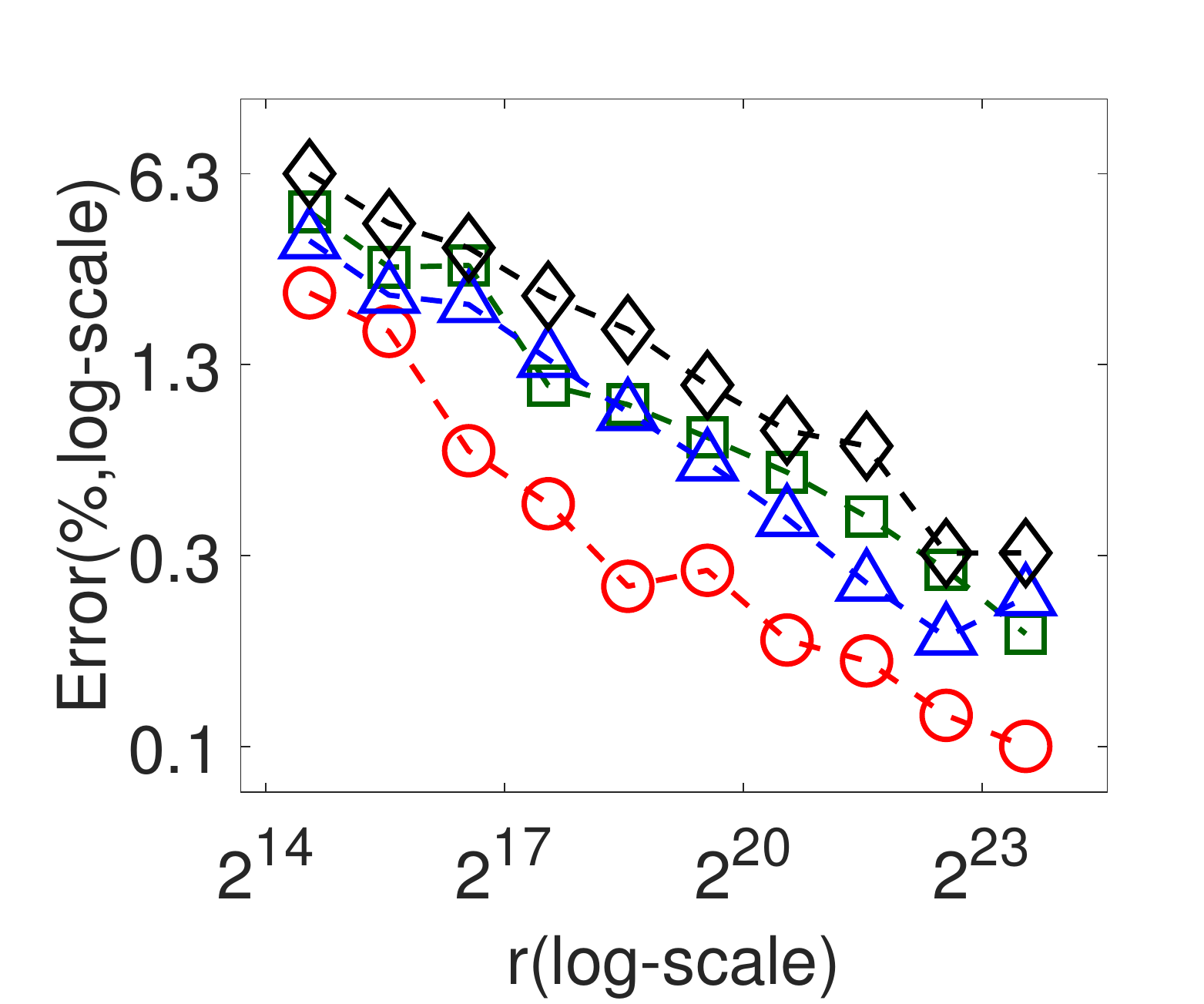}
  }
  \hfill
  \subfigure[Q3 on SX]{
    \label{fig:r:m3:sx}
    \includegraphics[height=0.8in]{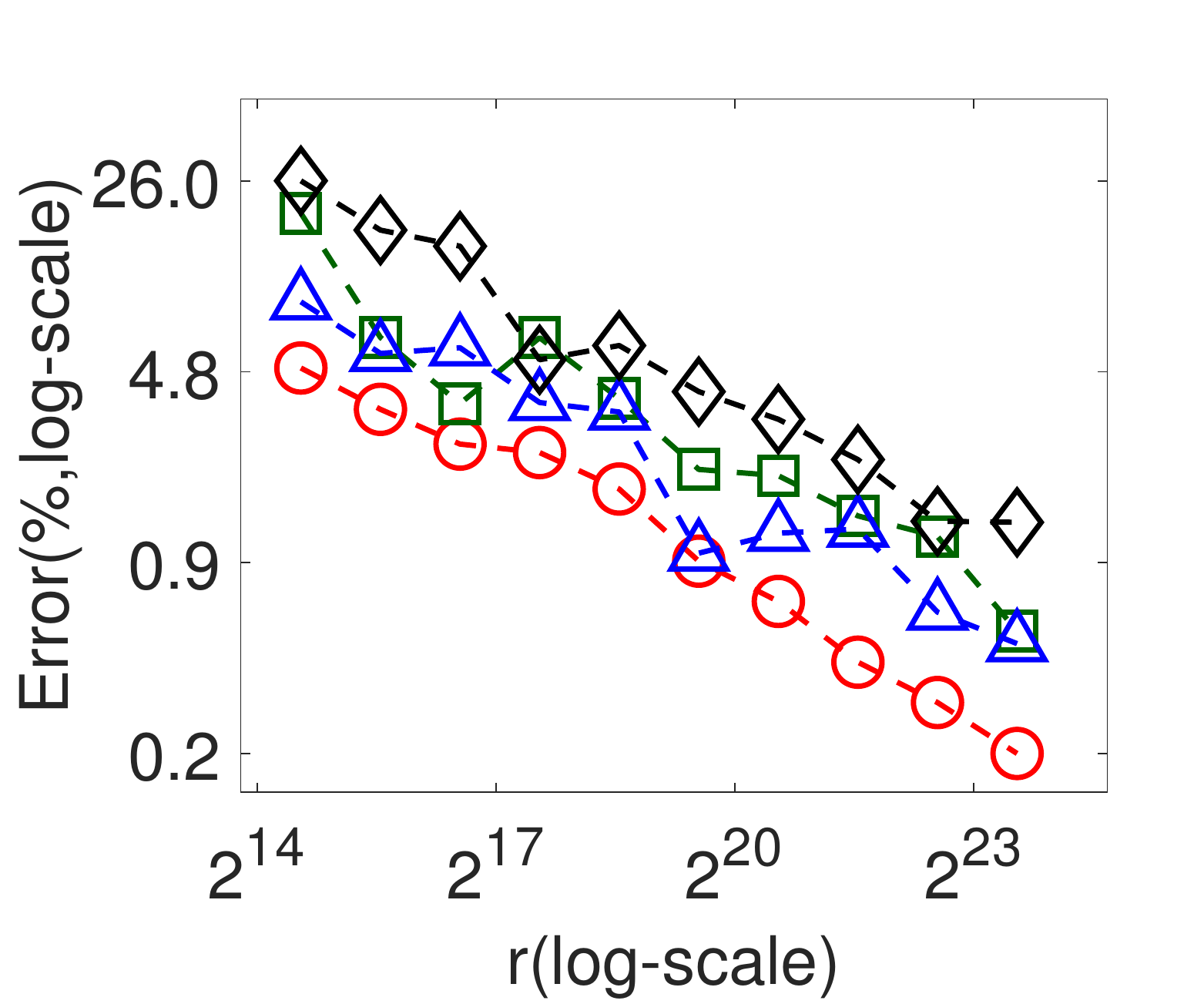}
  }
  \hfill
  \subfigure[Q4 on SX]{
    \label{fig:r:m4:sx}
    \includegraphics[height=0.8in]{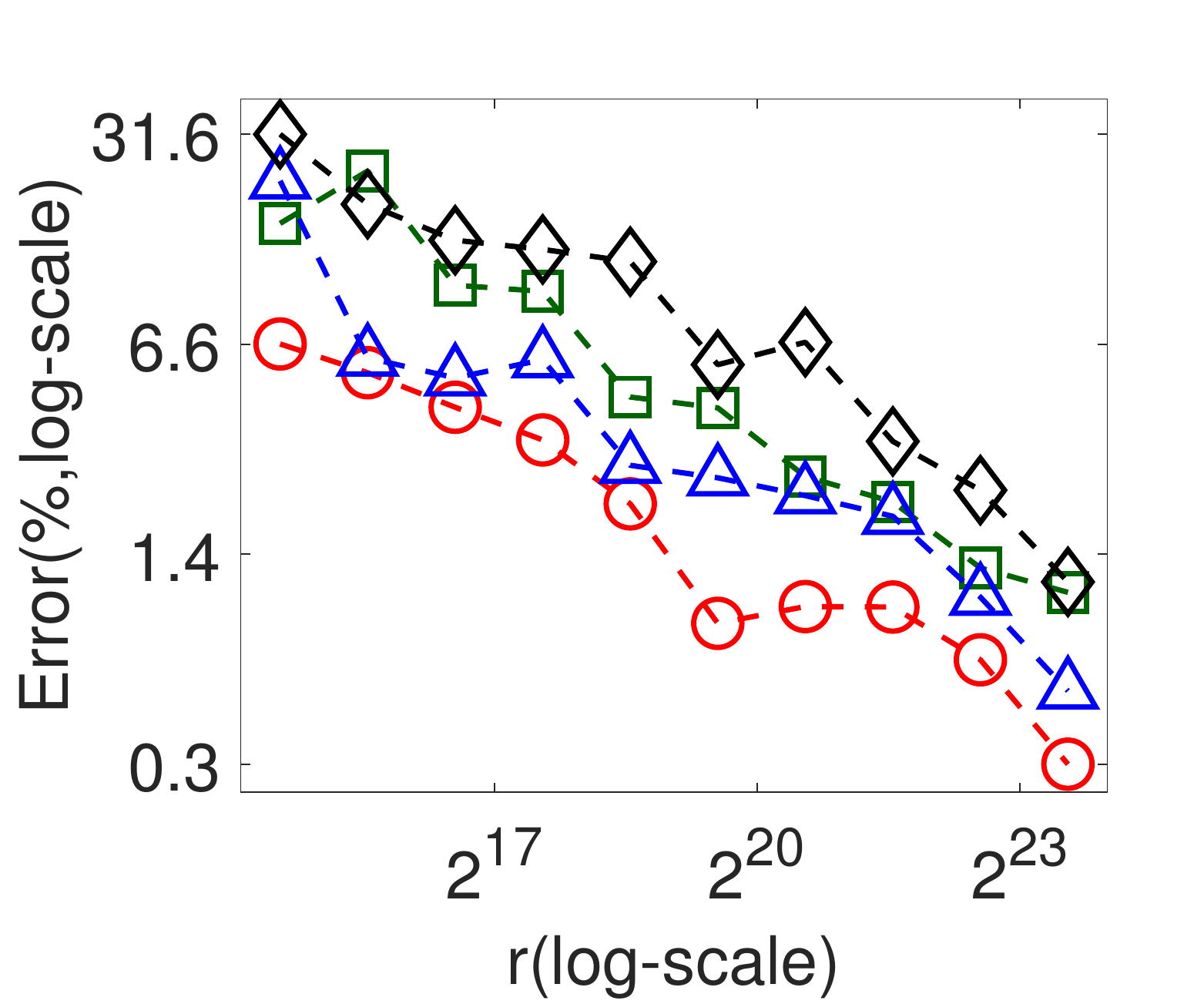}
  }
  \hfill
  \subfigure[Q5 on SX]{
    \label{fig:r:m5:sx}
    \includegraphics[height=0.8in]{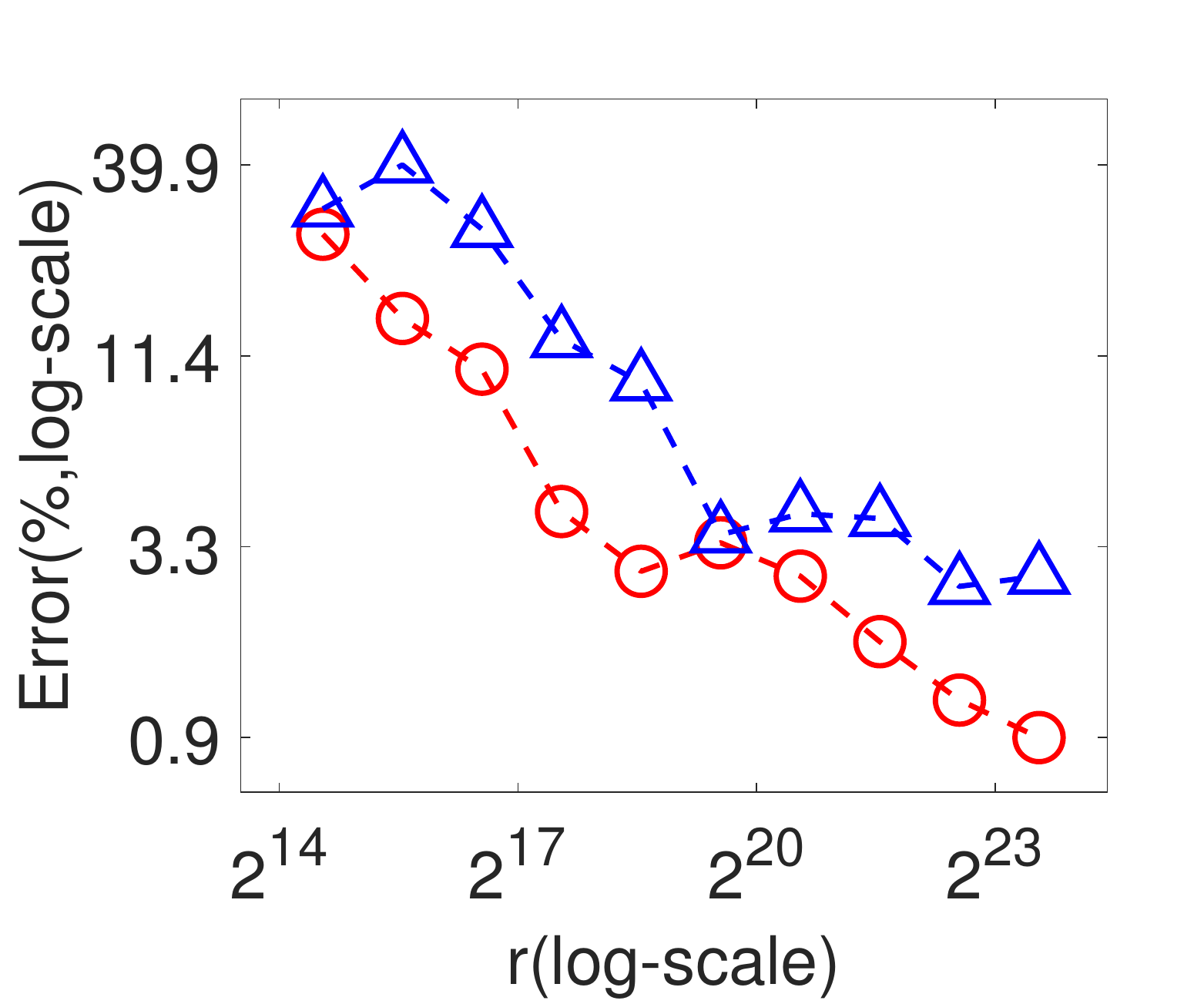}
  }
  \subfigure[Q1 on BC]{
    \label{fig:r:m1:bt}
    \includegraphics[height=0.8in]{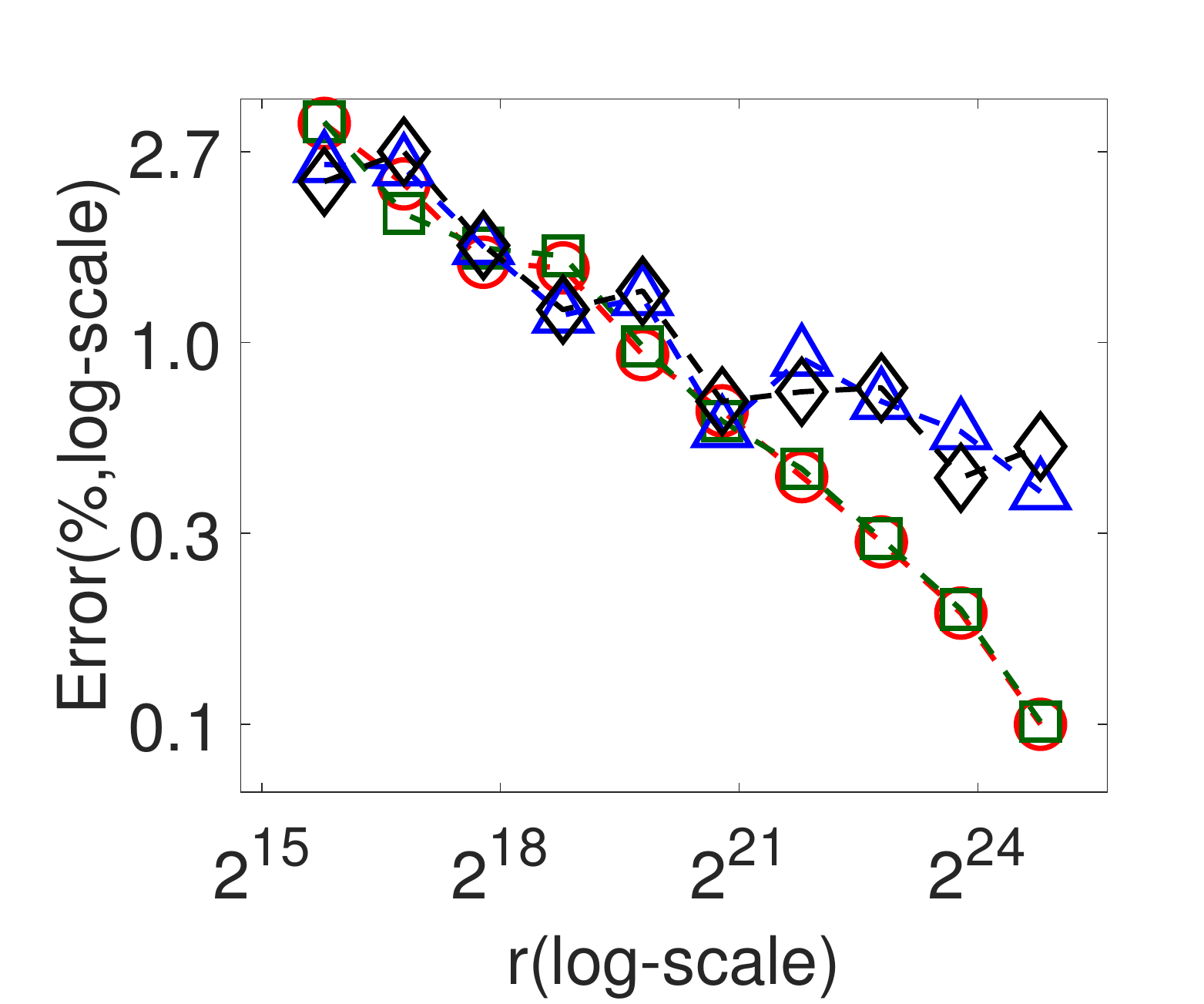}
  }
  \hfill
  \subfigure[Q2 on BC]{
    \label{fig:r:m2:bt}
    \includegraphics[height=0.8in]{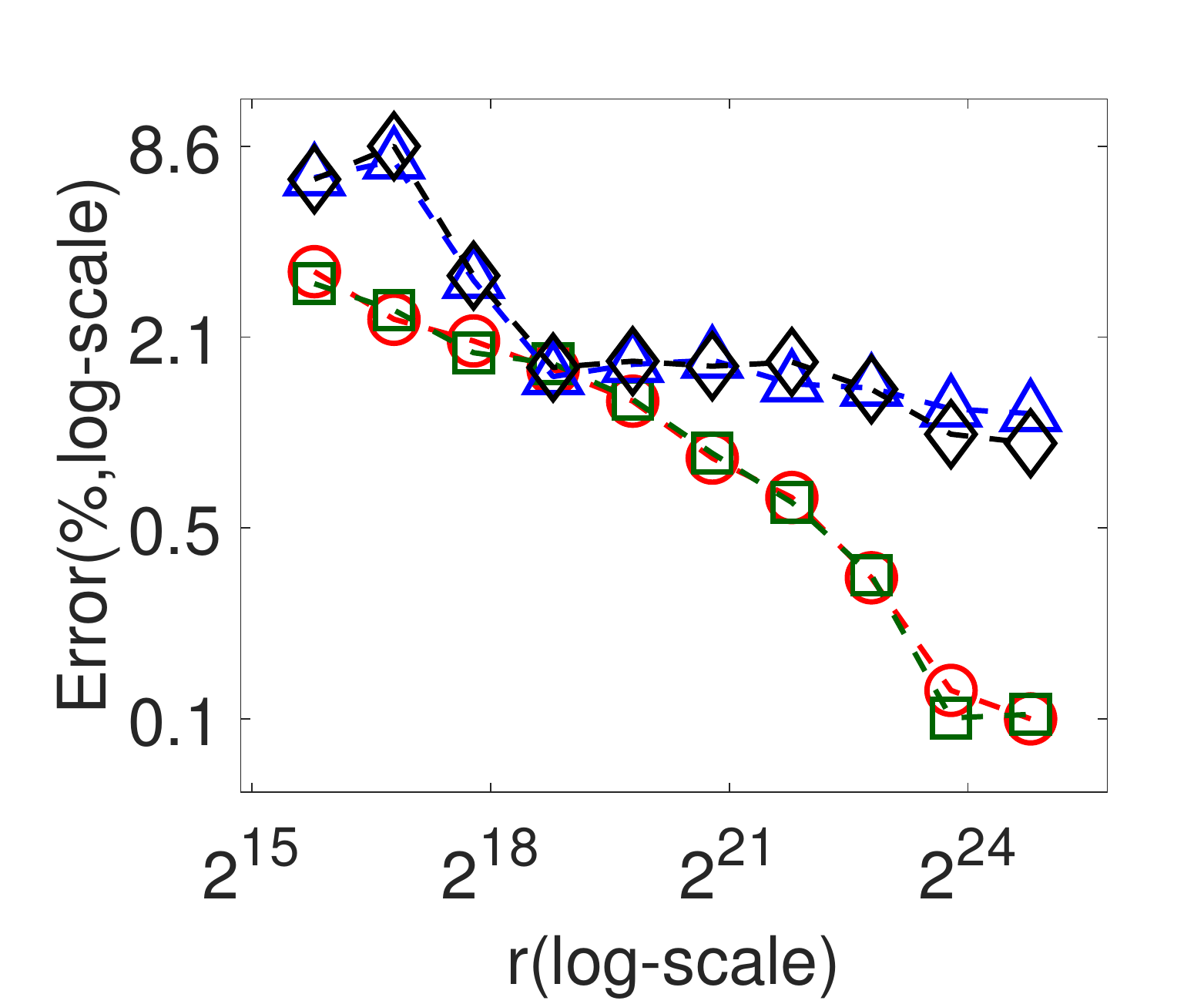}
  }
  \hfill
  \subfigure[Q3 on BC]{
    \label{fig:r:m3:bt}
    \includegraphics[height=0.8in]{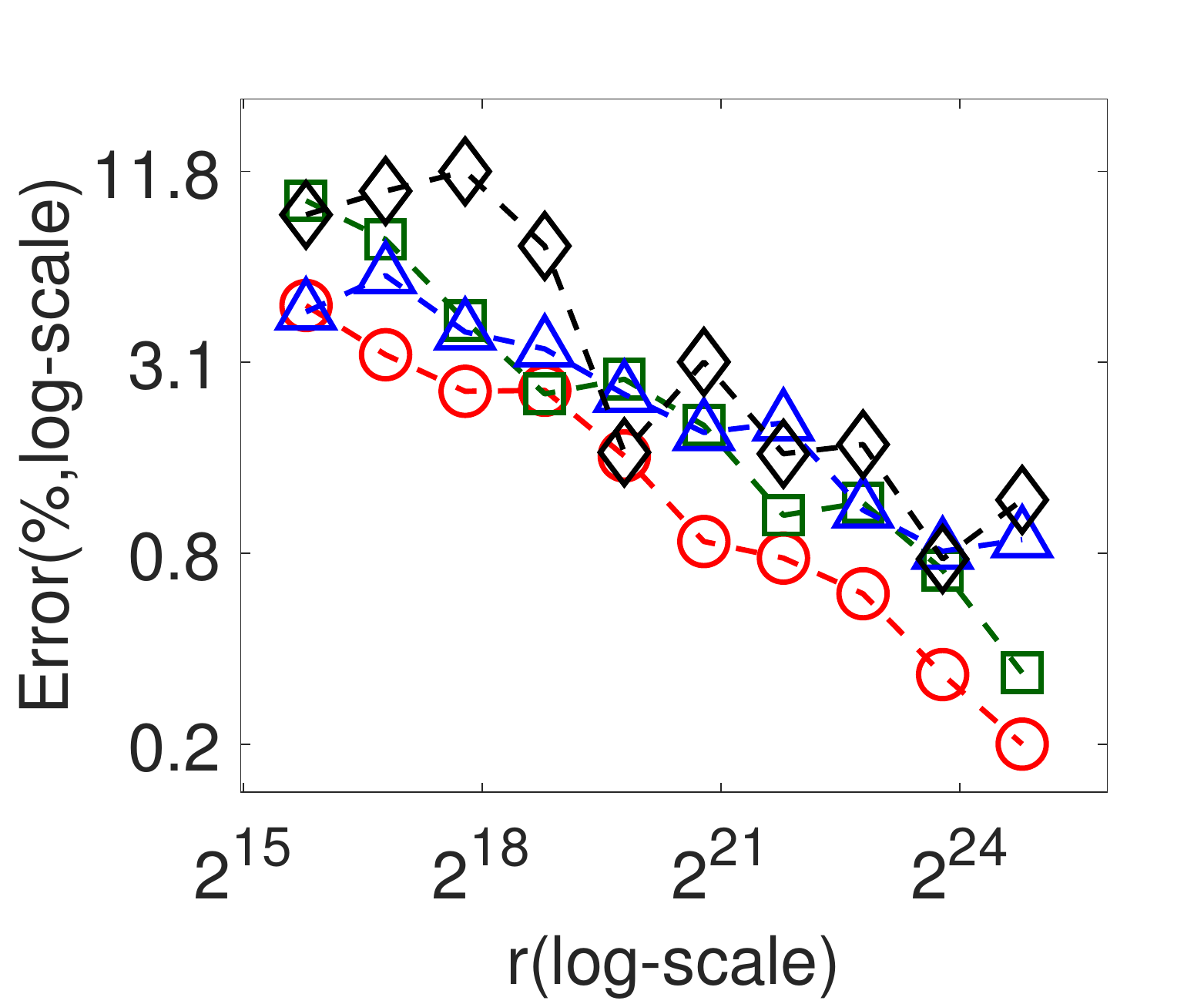}
  }
  \hfill
  \subfigure[Q4 on BC]{
    \label{fig:r:m4:bt}
    \includegraphics[height=0.8in]{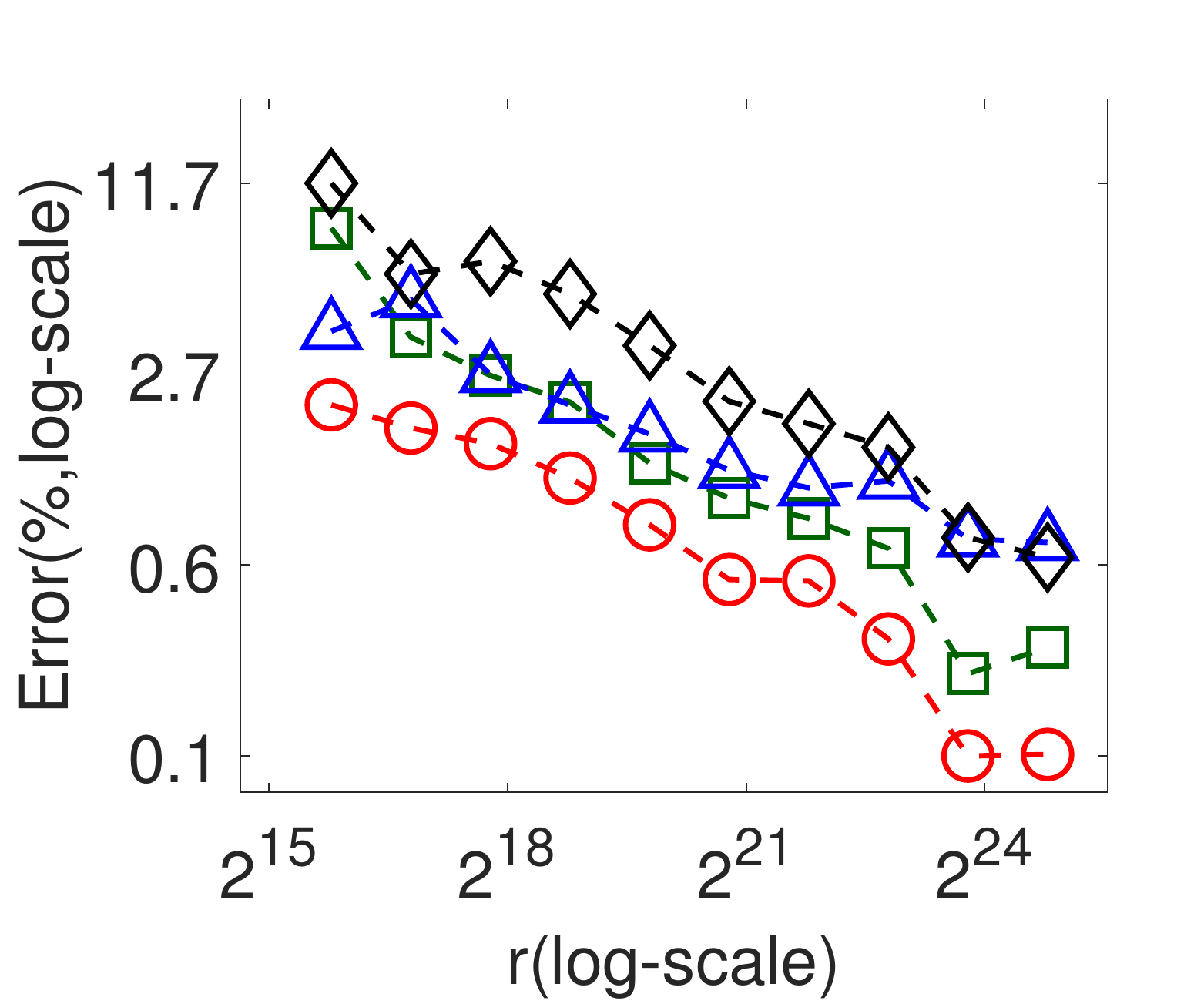}
  }
  \hfill
  \subfigure[Q5 on BC]{
    \label{fig:r:m5:bt}
    \includegraphics[height=0.8in]{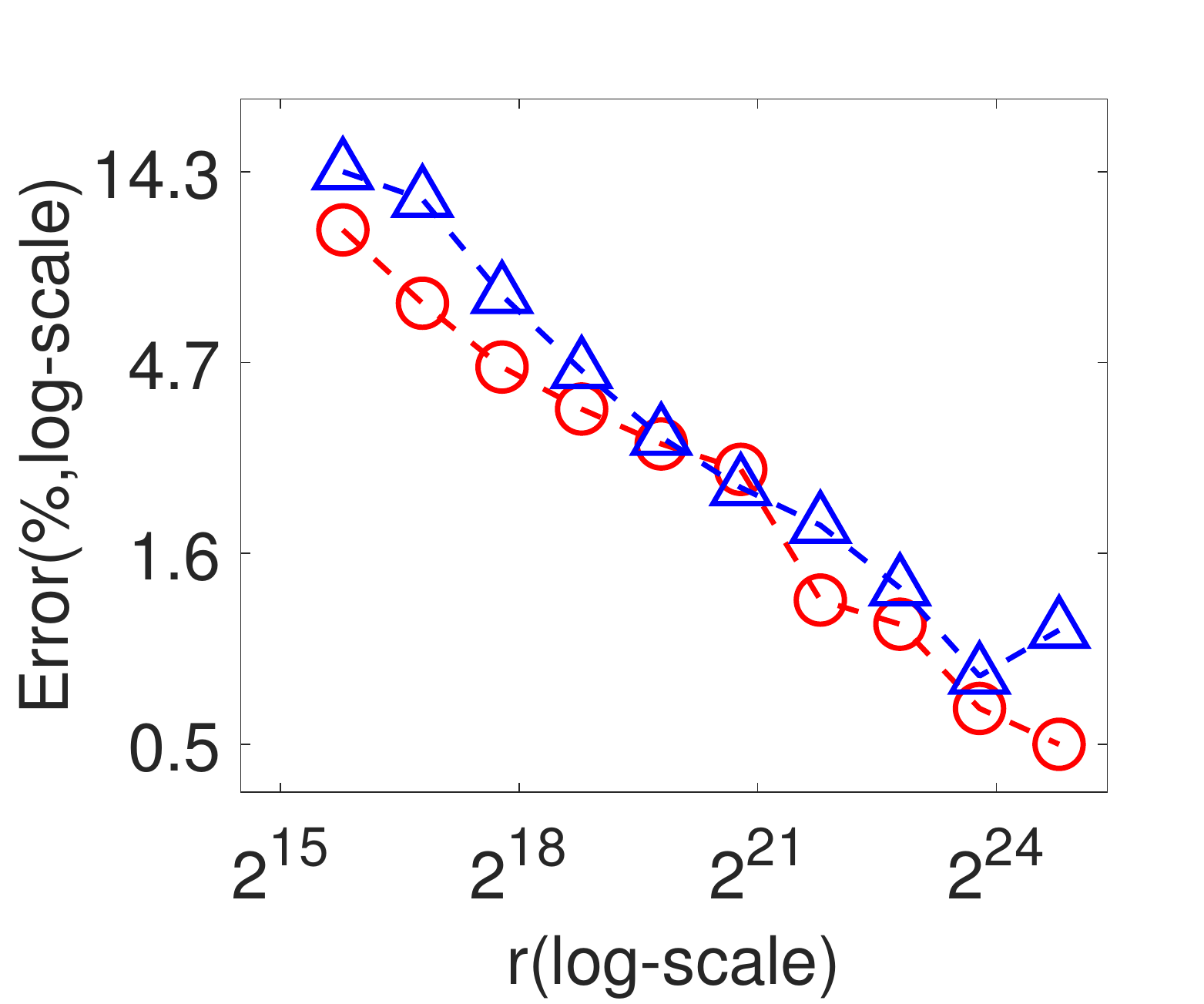}
  }
  \subfigure[Q1 on RC]{
    \label{fig:r:m1:rc}
    \includegraphics[height=0.8in]{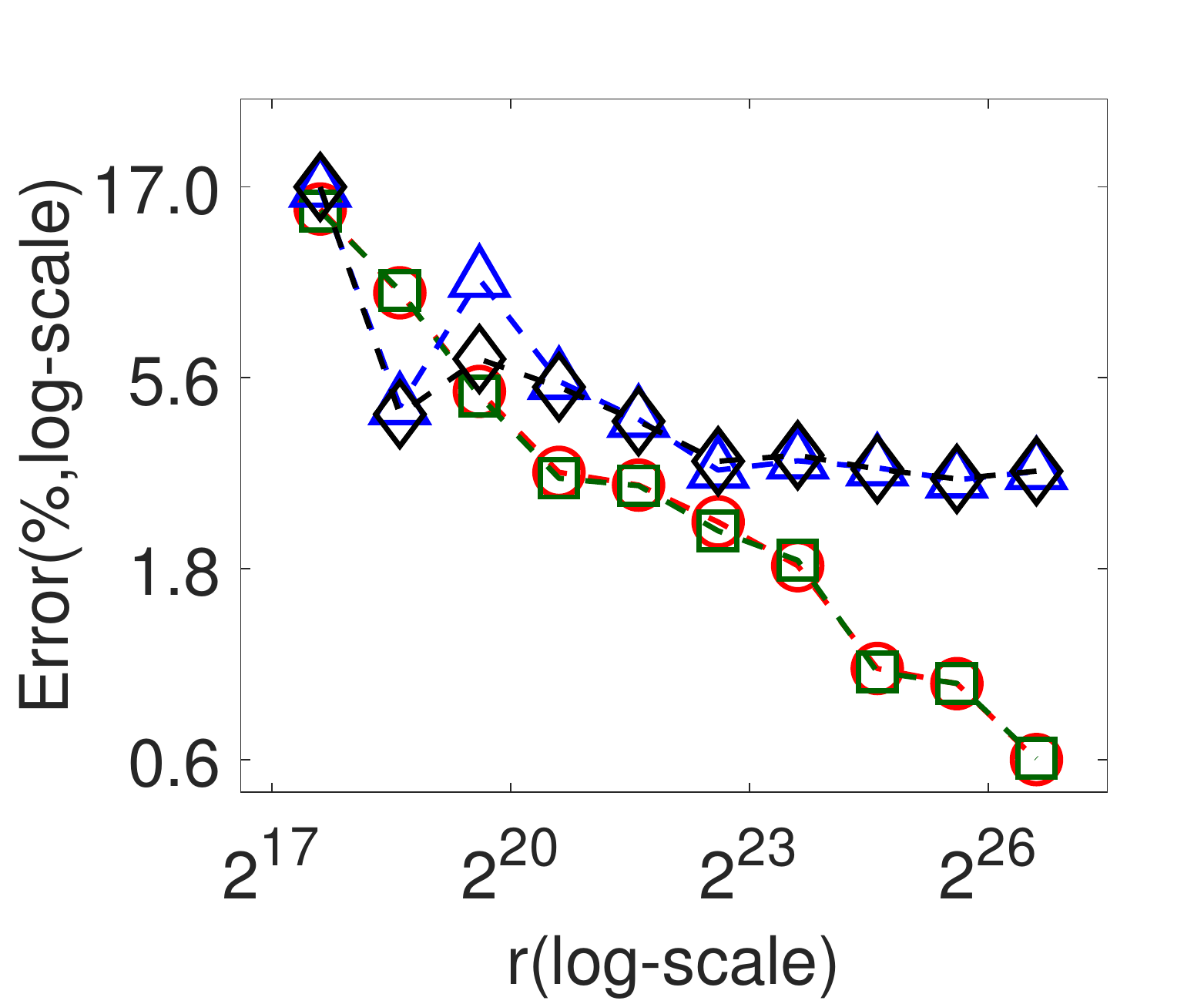}
  }
  \hfill
  \subfigure[Q2 on RC]{
    \label{fig:r:m2:rc}
    \includegraphics[height=0.8in]{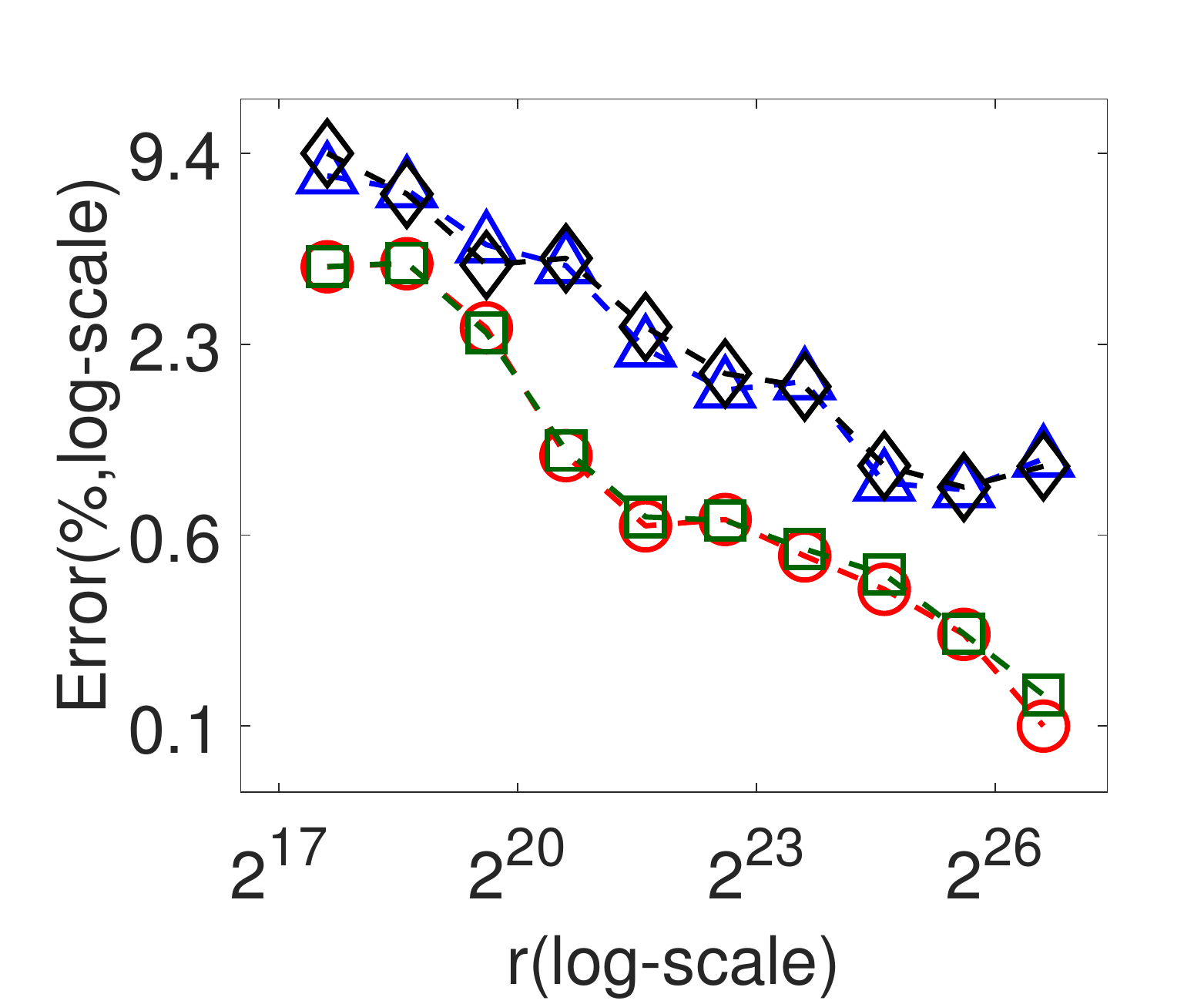}
  }
  \hfill
  \subfigure[Q3 on RC]{
    \label{fig:r:m3:rc}
    \includegraphics[height=0.8in]{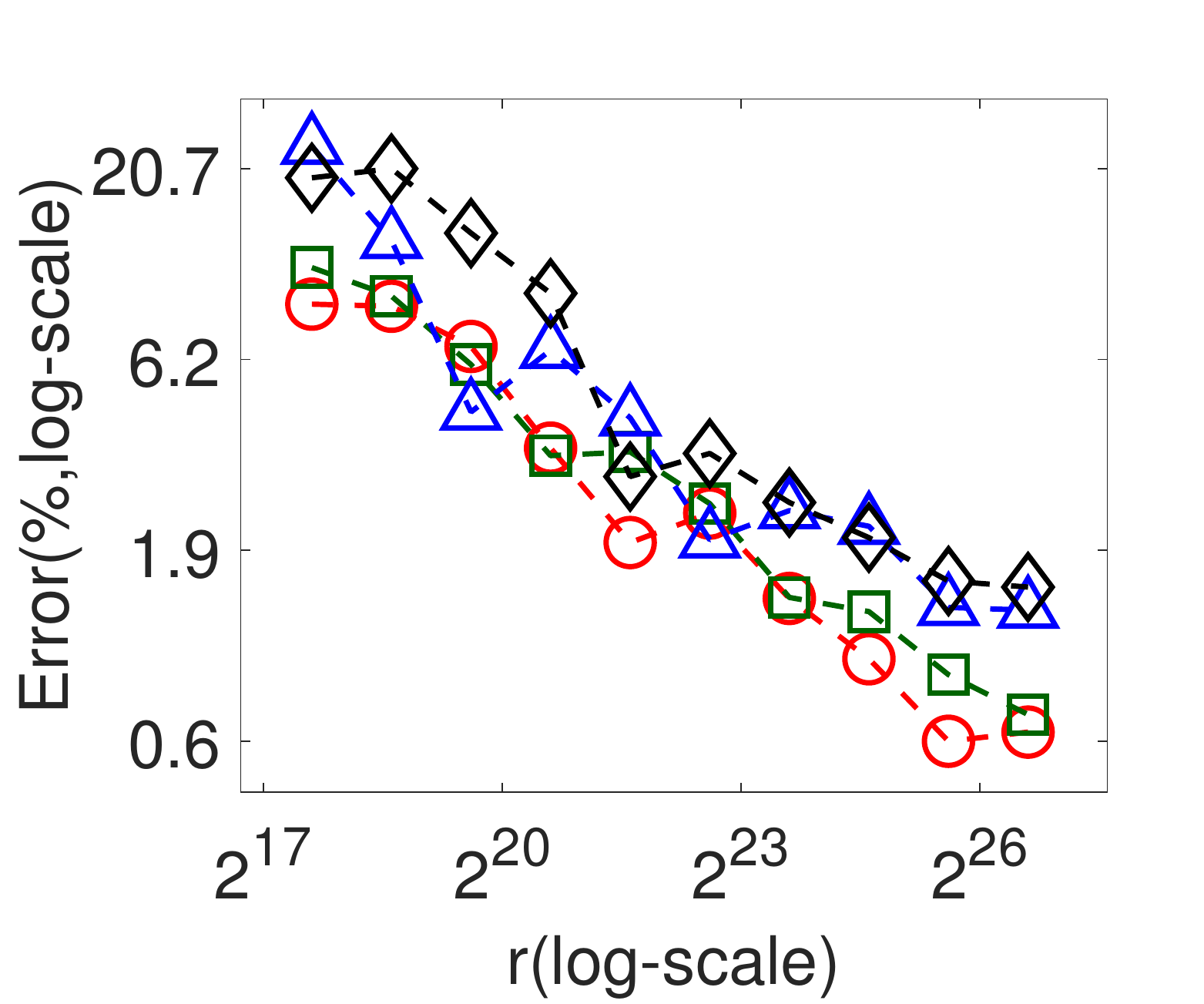}
  }
  \hfill
  \subfigure[Q4 on RC]{
    \label{fig:r:m4:rc}
    \includegraphics[height=0.8in]{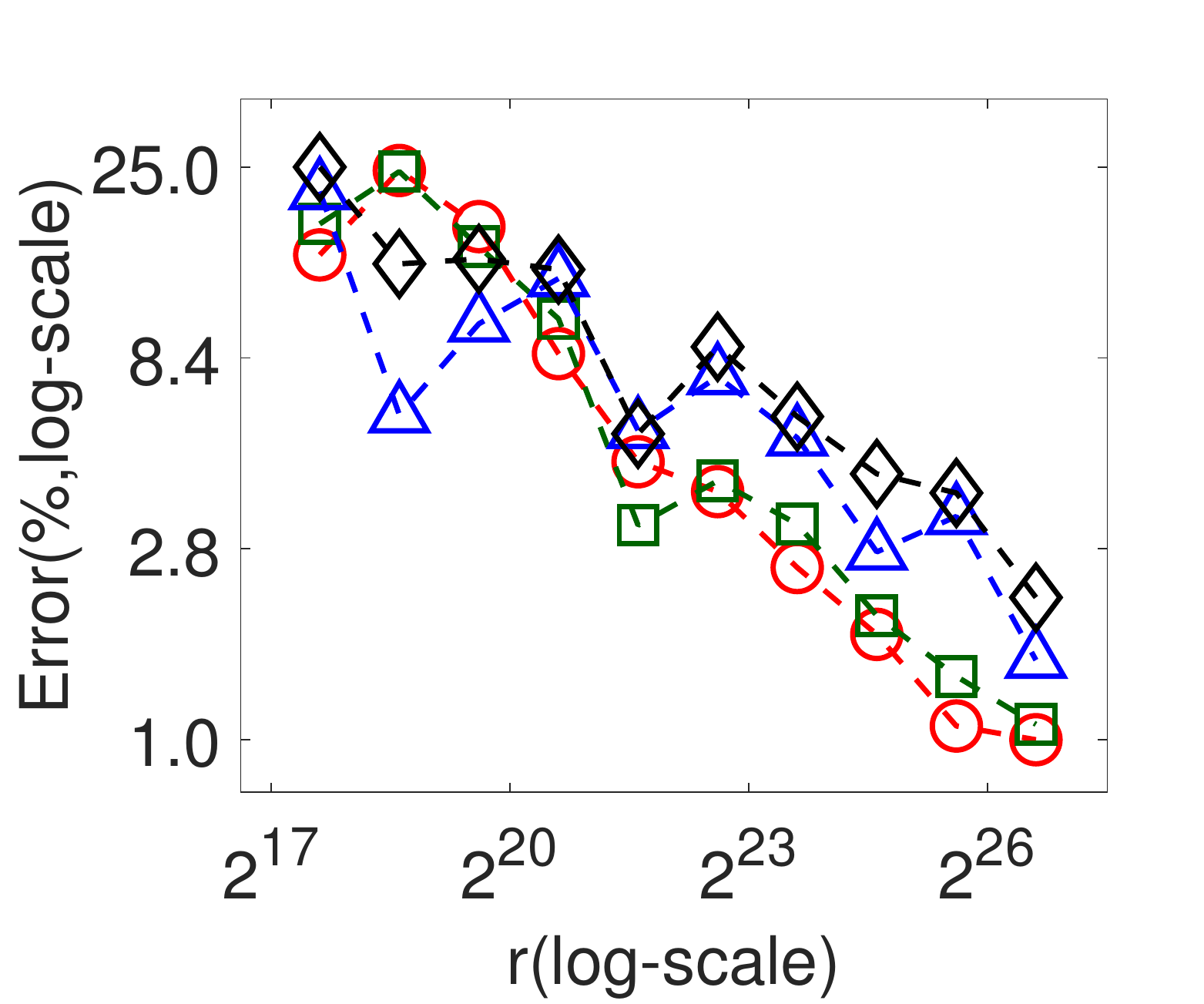}
  }
  \hfill
  \subfigure[Q5 on RC]{
    \label{fig:r:m5:rc}
    \includegraphics[height=0.8in]{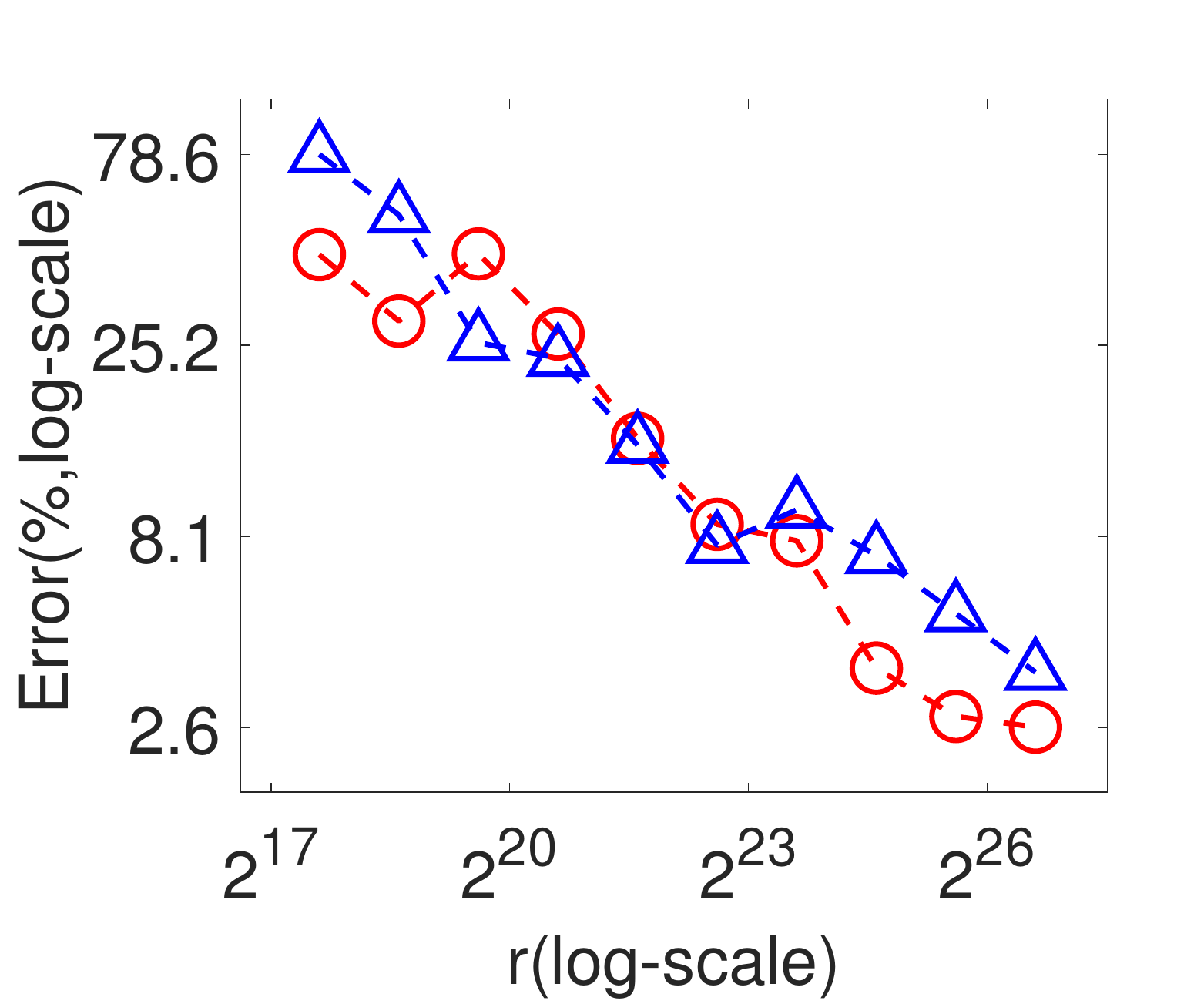}
  }
  \caption{Relative error (\%) with varying sample size $r$ in the streaming setting.}
  \label{fig:r:error}
  \Description{Error-r}
\end{figure}

\begin{figure}[t]
  \centering
  \includegraphics[height=0.15in]{figures/r/legend1.pdf}
  \\
  \subfigure[Q1 on AU]{
    \label{fig:rt:m1:au}
    \includegraphics[height=0.8in]{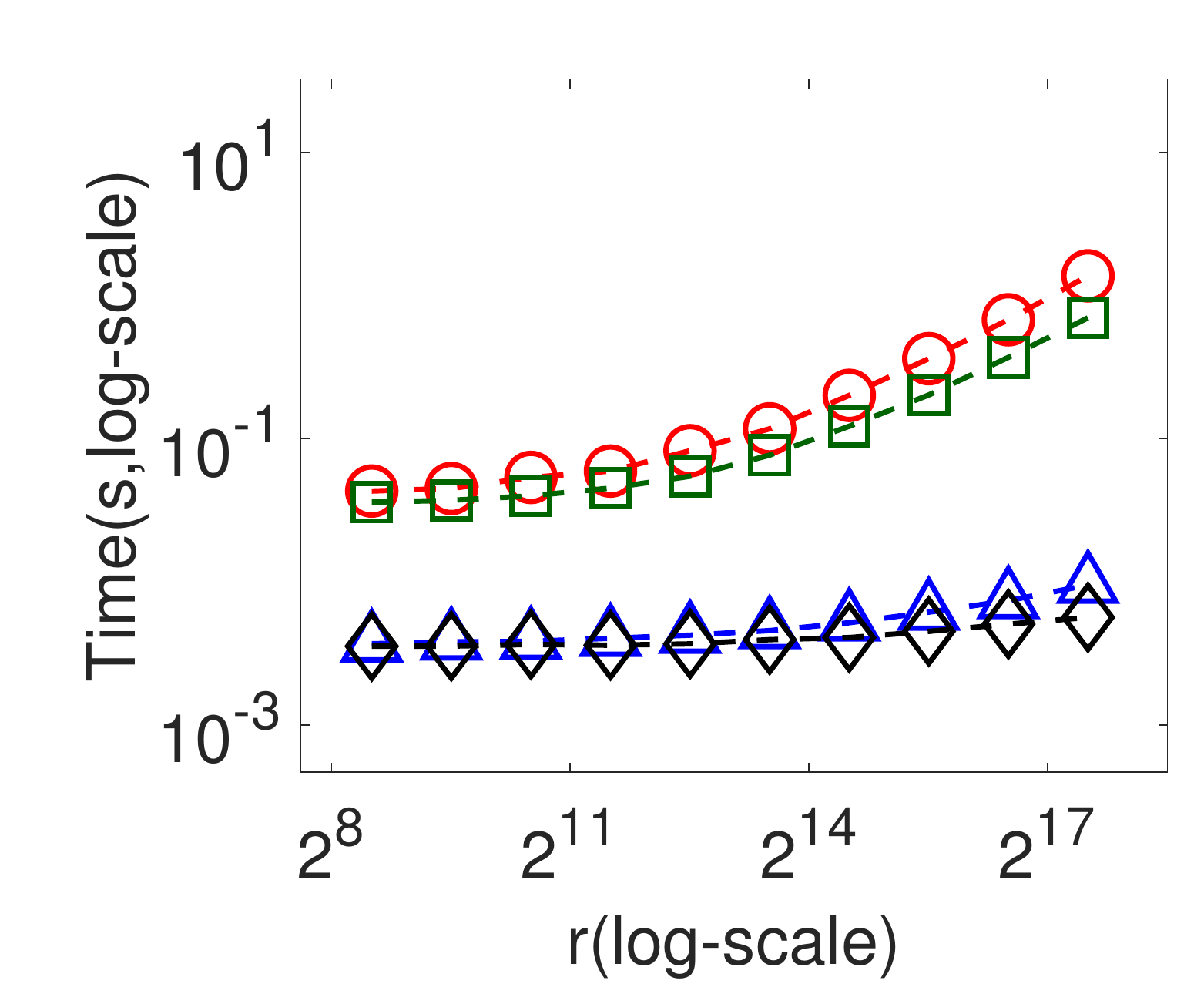}
  }
  \hfill
  \subfigure[Q2 on AU]{
    \label{fig:rt:m2:au}
    \includegraphics[height=0.8in]{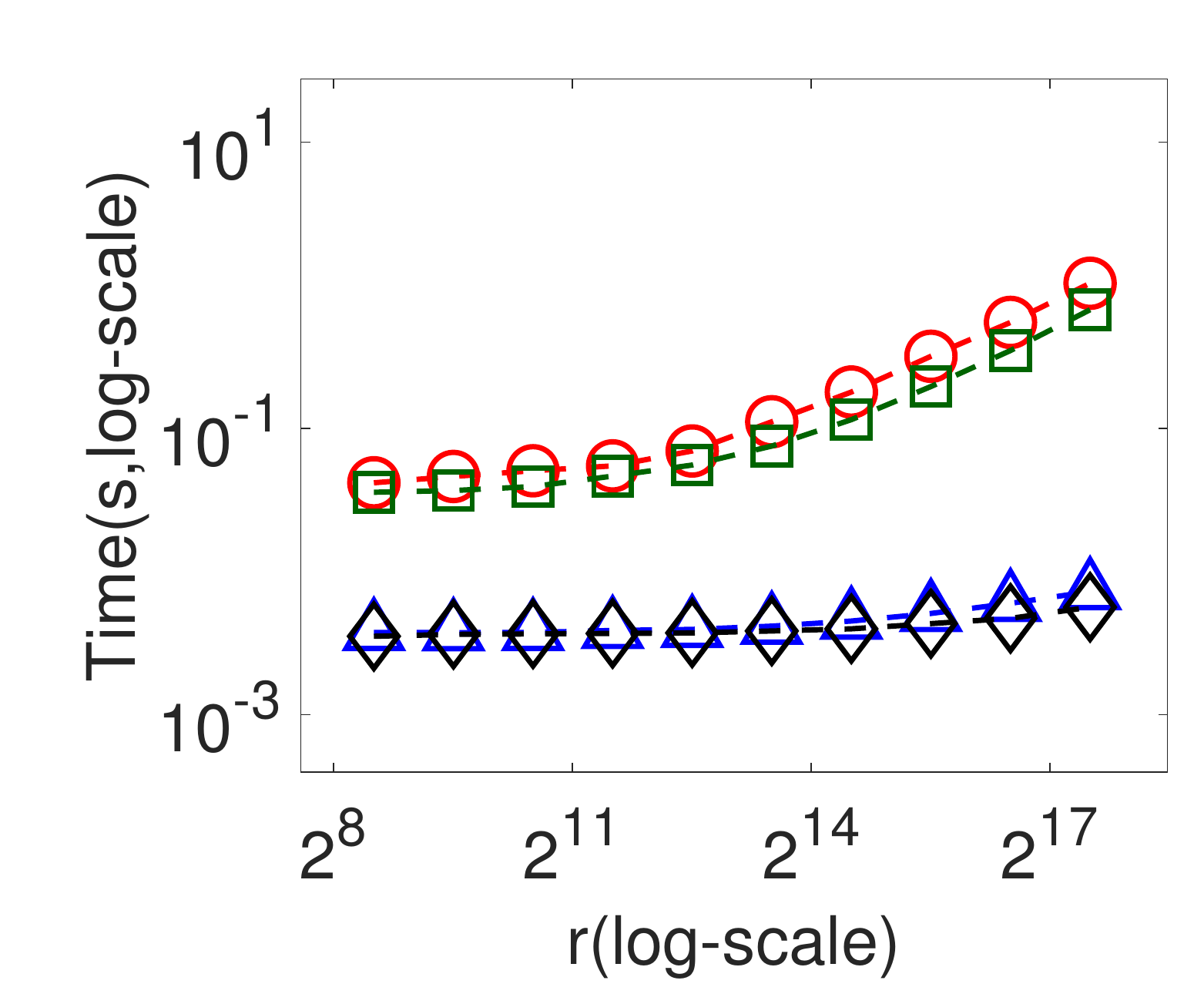}
  }
  \hfill
  \subfigure[Q3 on AU]{
    \label{fig:rt:m3:au}
    \includegraphics[height=0.8in]{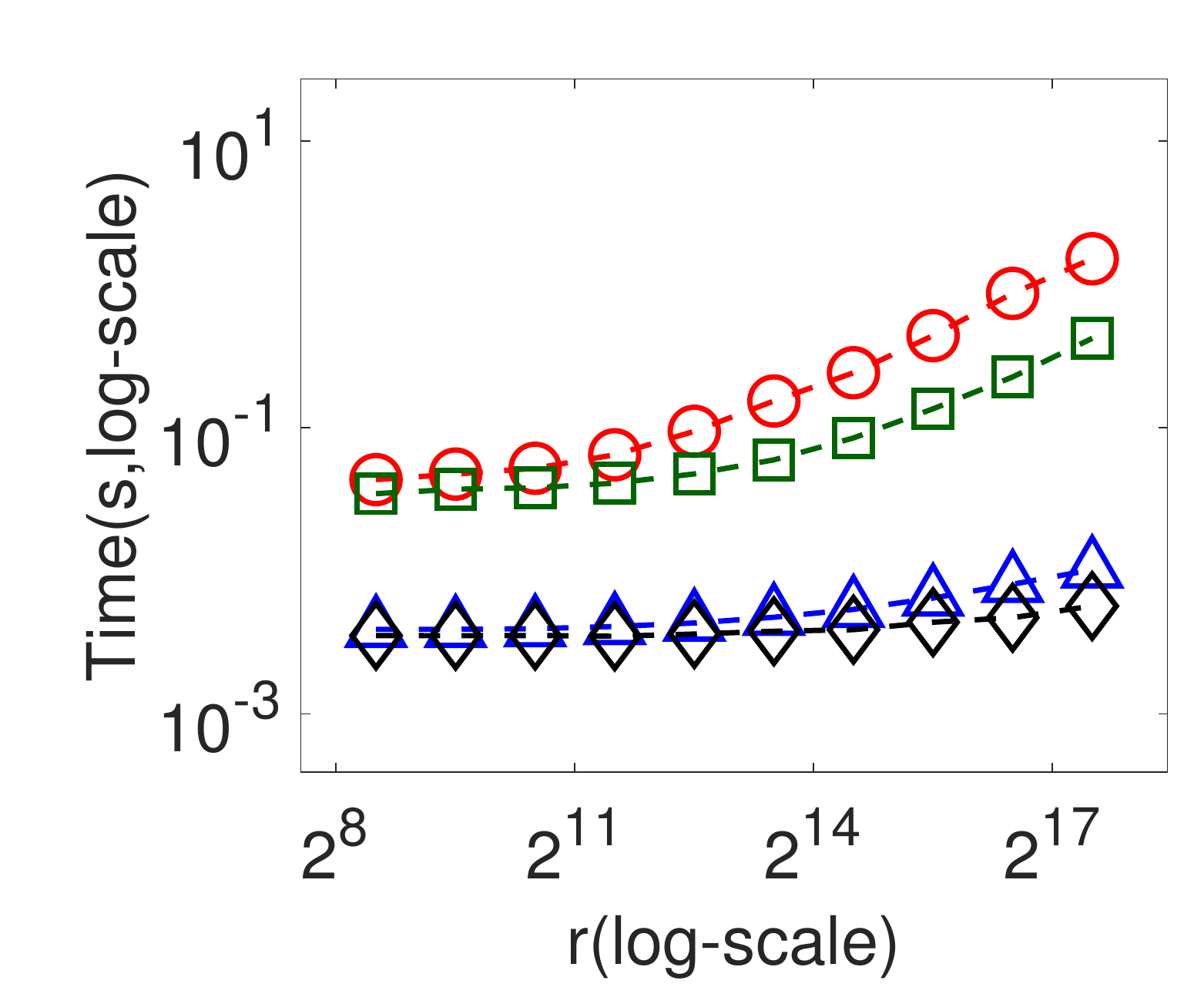}
  }
  \hfill
  \subfigure[Q4 on AU]{
    \label{fig:rt:m4:au}
    \includegraphics[height=0.8in]{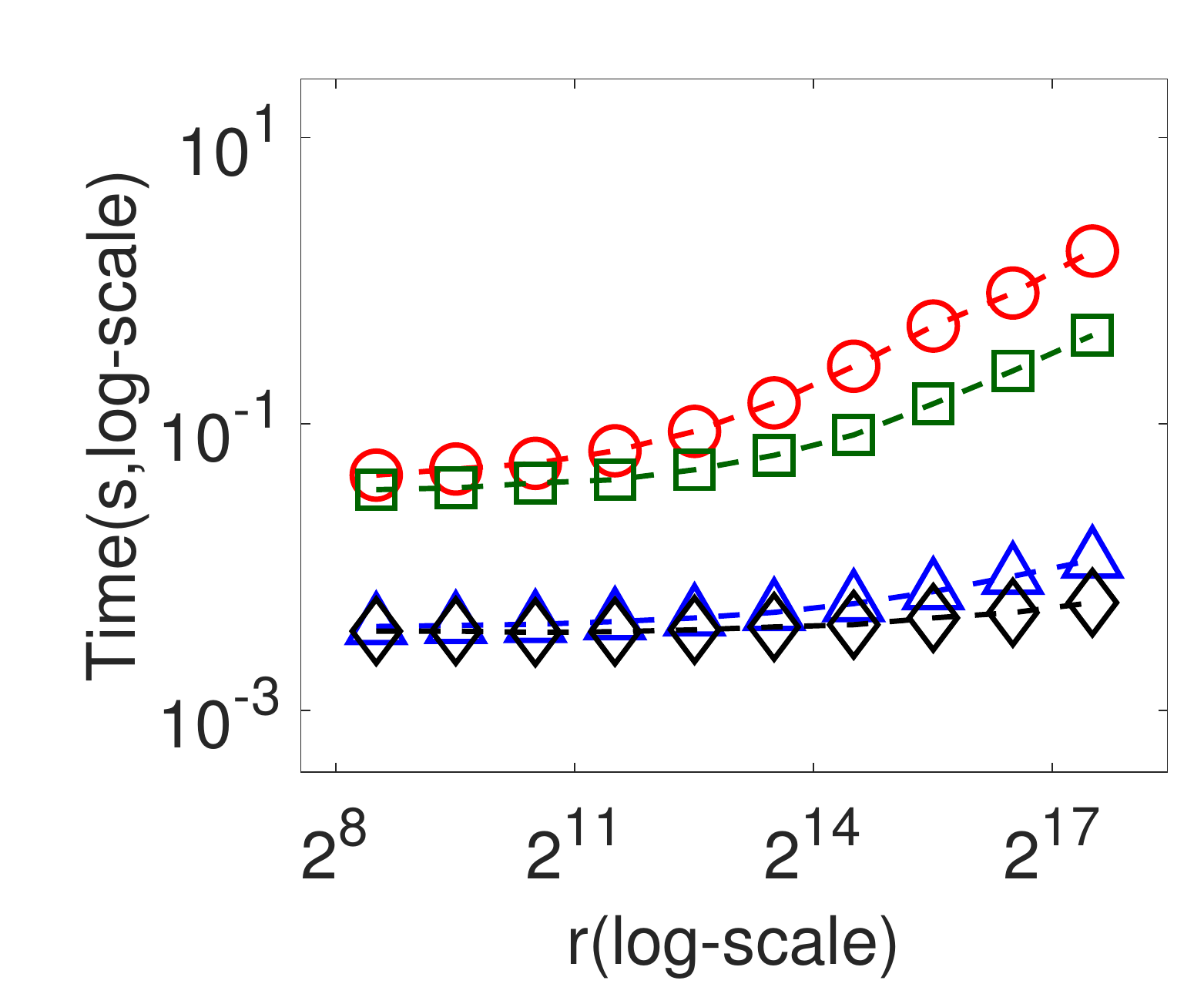}
  }
  \hfill
  \subfigure[Q5 on AU]{
    \label{fig:rt:m5:au}
    \includegraphics[height=0.8in]{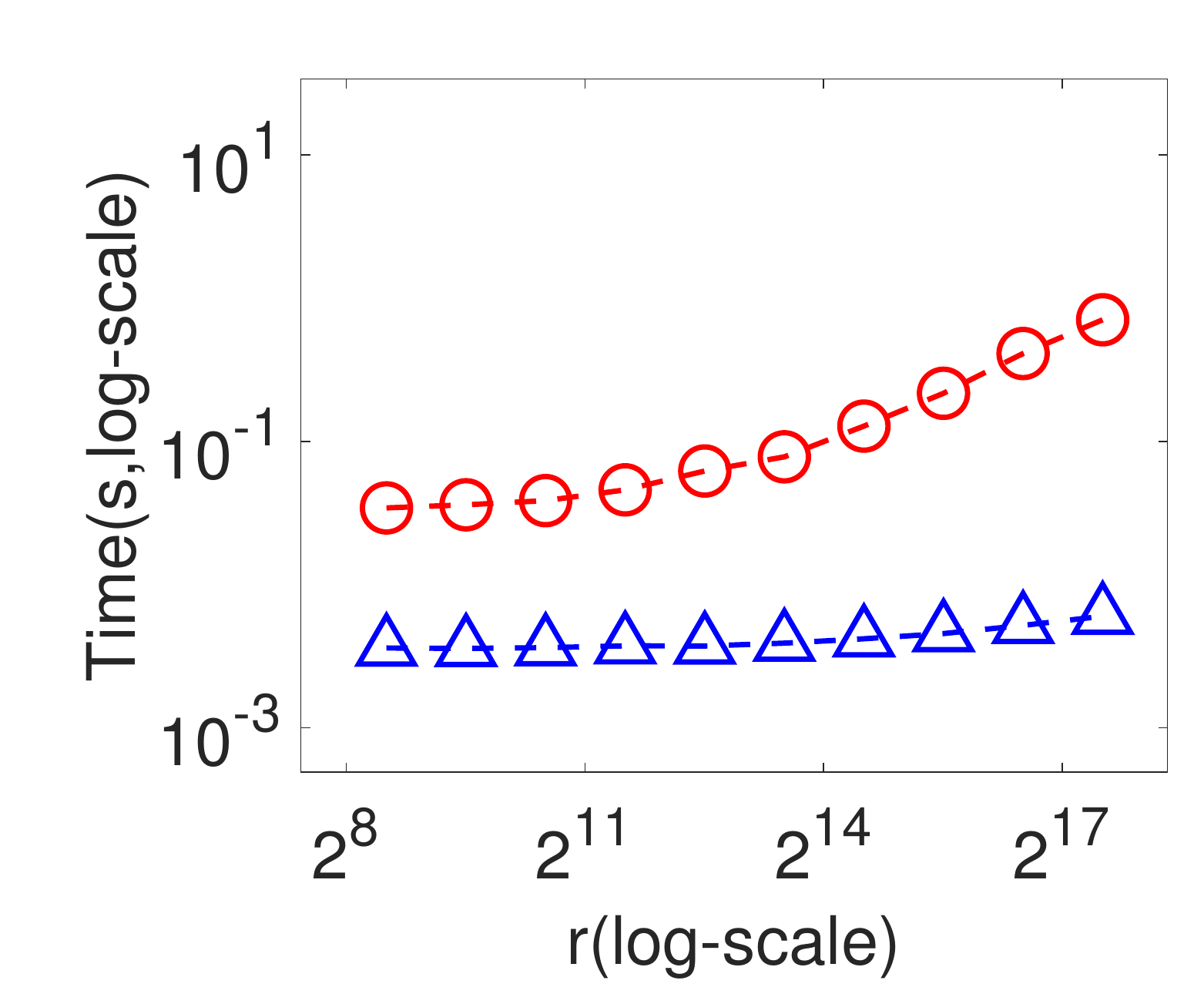}
  }
  \subfigure[Q1 on SU]{
    \label{fig:rt:m1:su}
    \includegraphics[height=0.8in]{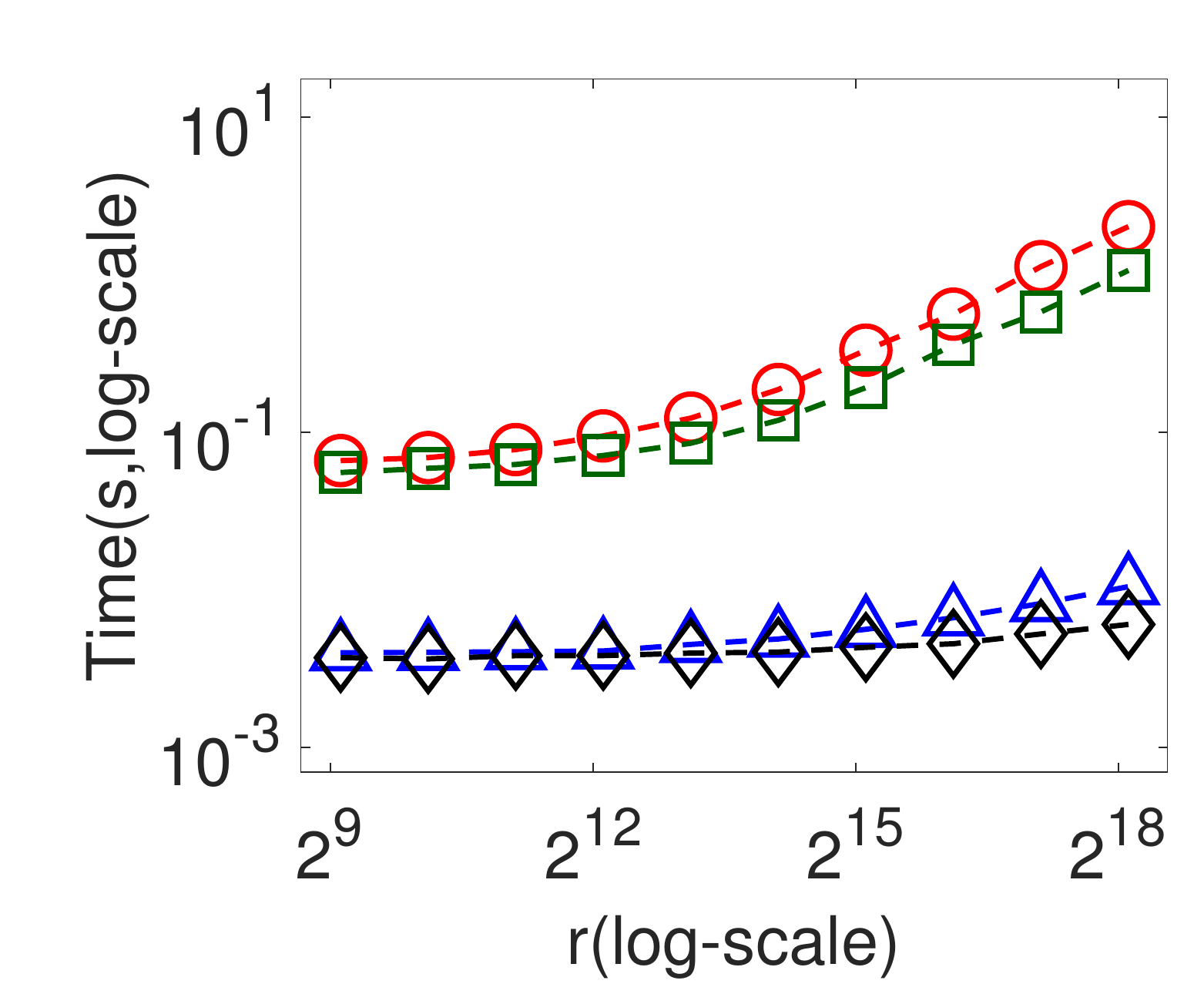}
  }
  \hfill
  \subfigure[Q2 on SU]{
    \label{fig:rt:m2:su}
    \includegraphics[height=0.8in]{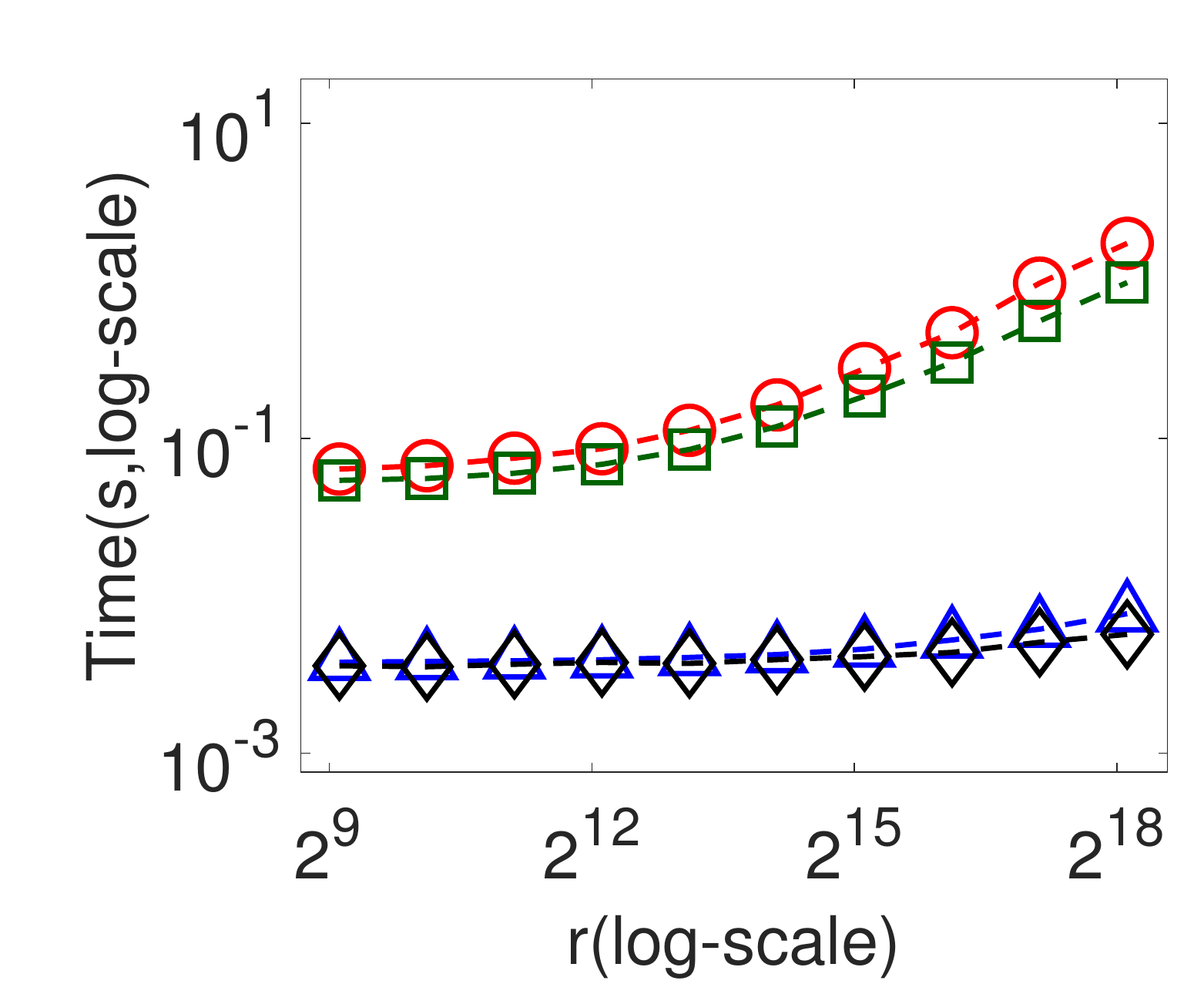}
  }
  \hfill
  \subfigure[Q3 on SU]{
    \label{fig:rt:m3:su}
    \includegraphics[height=0.8in]{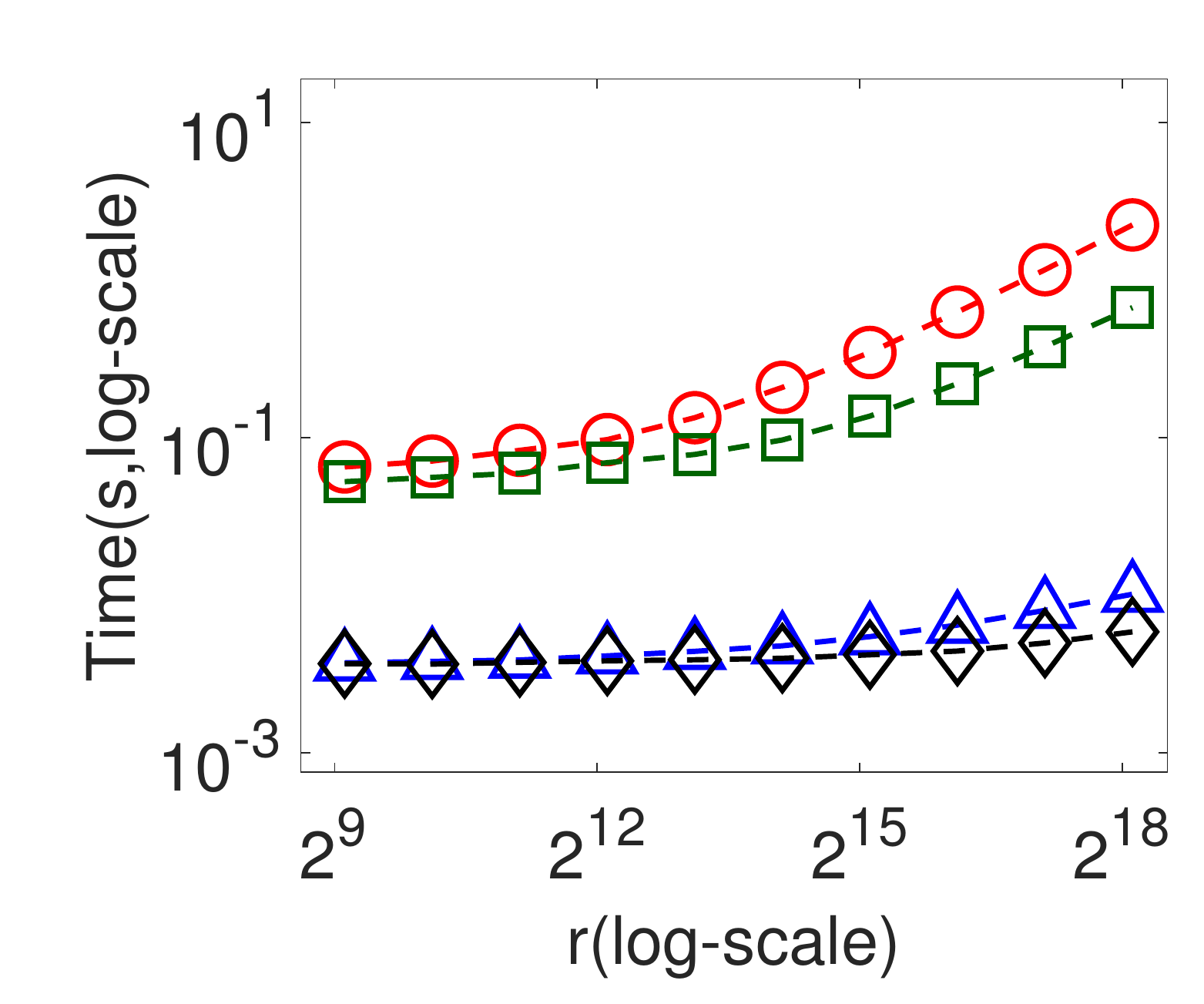}
  }
  \hfill
  \subfigure[Q4 on SU]{
    \label{fig:rt:m4:su}
    \includegraphics[height=0.8in]{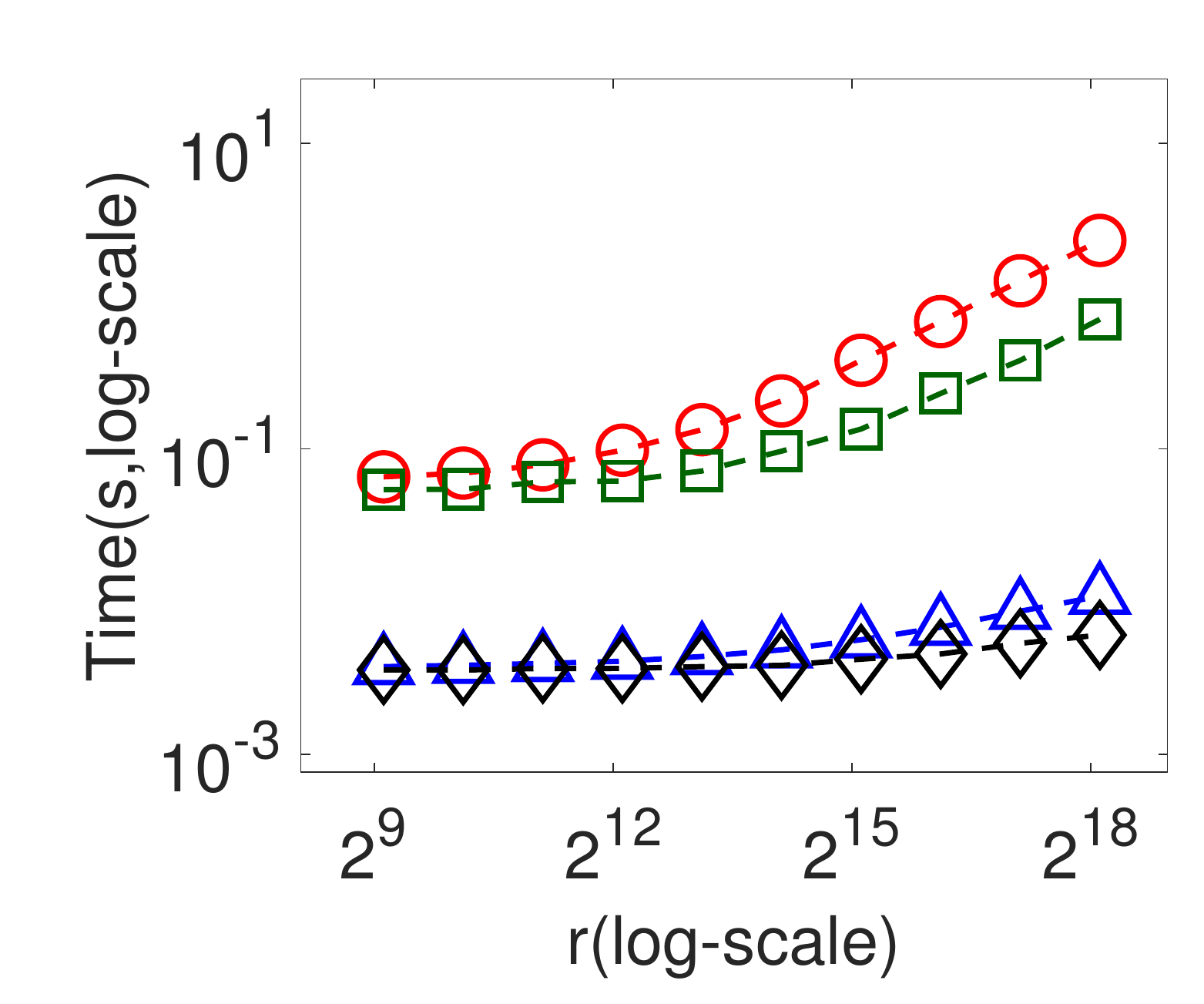}
  }
  \hfill
  \subfigure[Q5 on SU]{
    \label{fig:rt:m5:su}
    \includegraphics[height=0.8in]{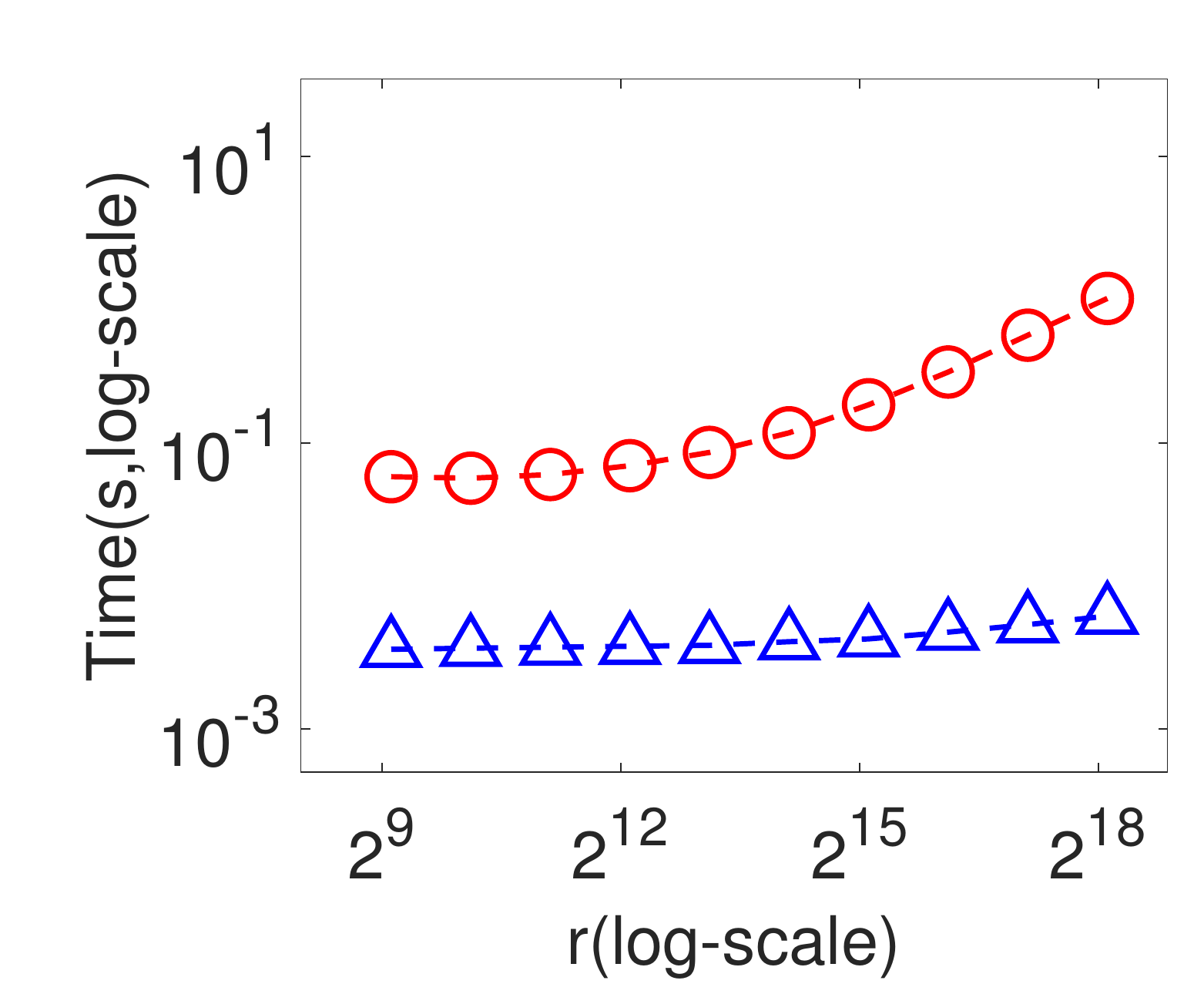}
  }
  \subfigure[Q1 on SX]{
    \label{fig:rt:m1:sx}
    \includegraphics[height=0.8in]{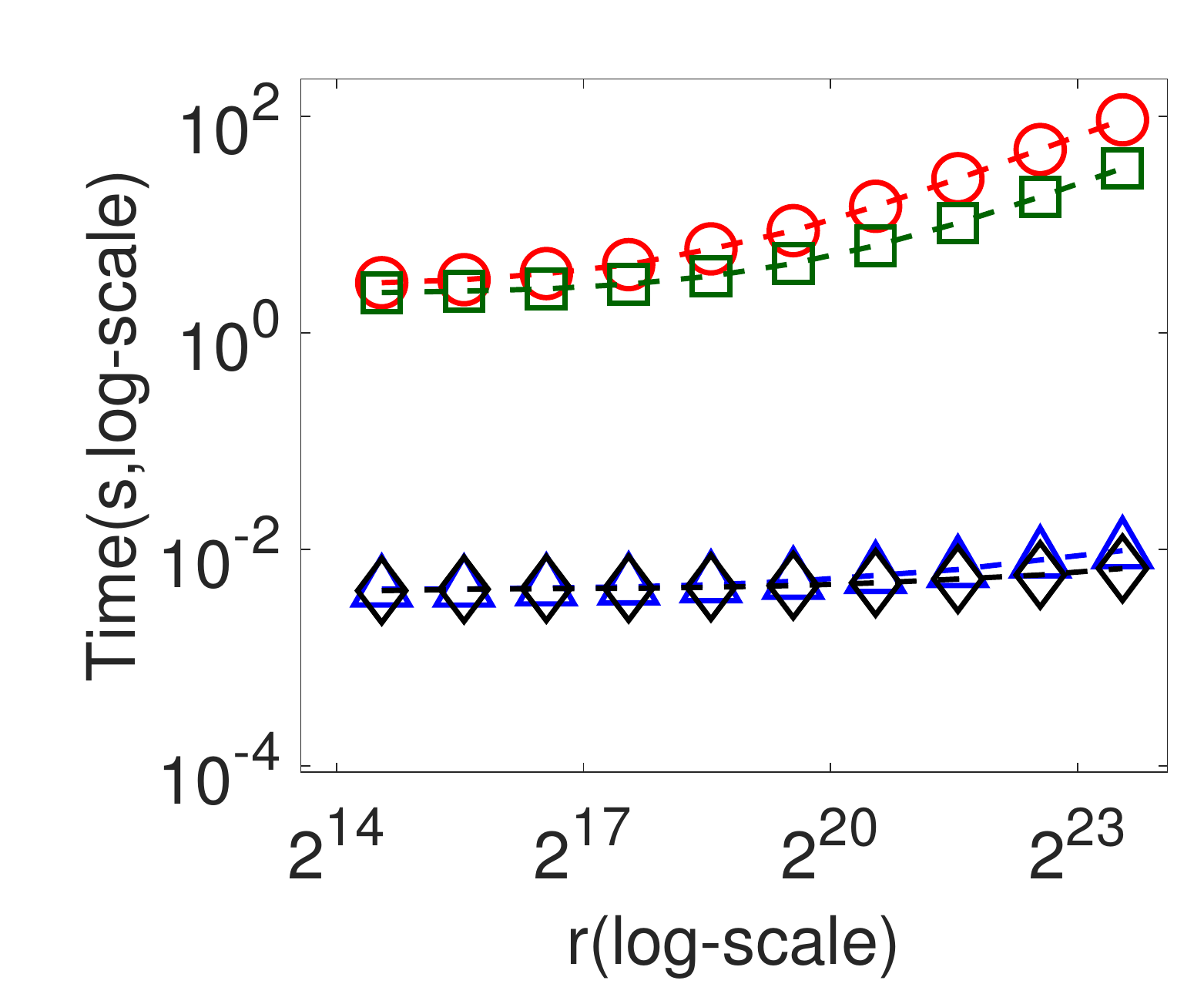}
  }
  \hfill
  \subfigure[Q2 on SX]{
    \label{fig:rt:m2:sx}
    \includegraphics[height=0.8in]{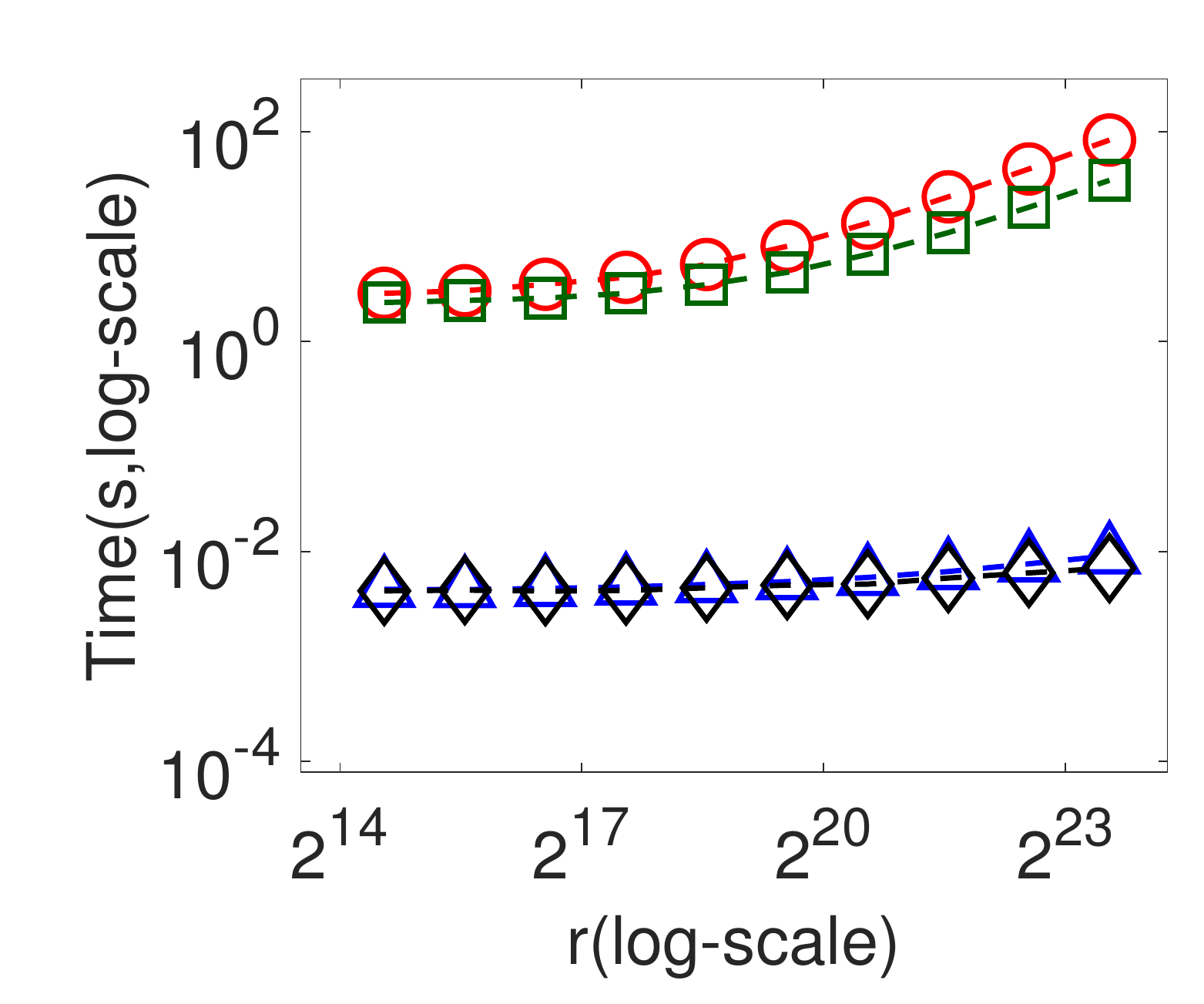}
  }
  \hfill
  \subfigure[Q3 on SX]{
    \label{fig:rt:m3:sx}
    \includegraphics[height=0.8in]{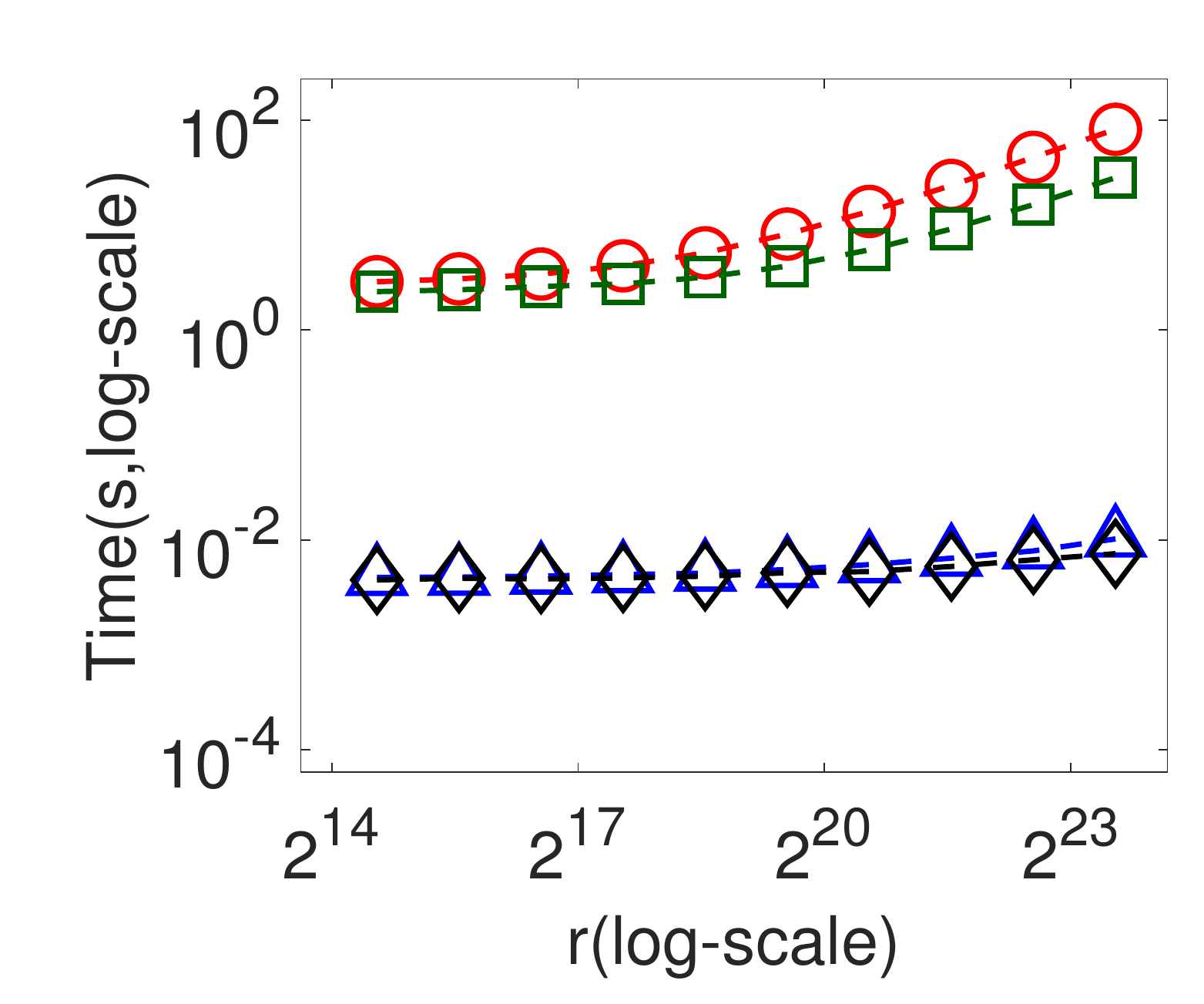}
  }
  \hfill
  \subfigure[Q4 on SX]{
    \label{fig:rt:m4:sx}
    \includegraphics[height=0.8in]{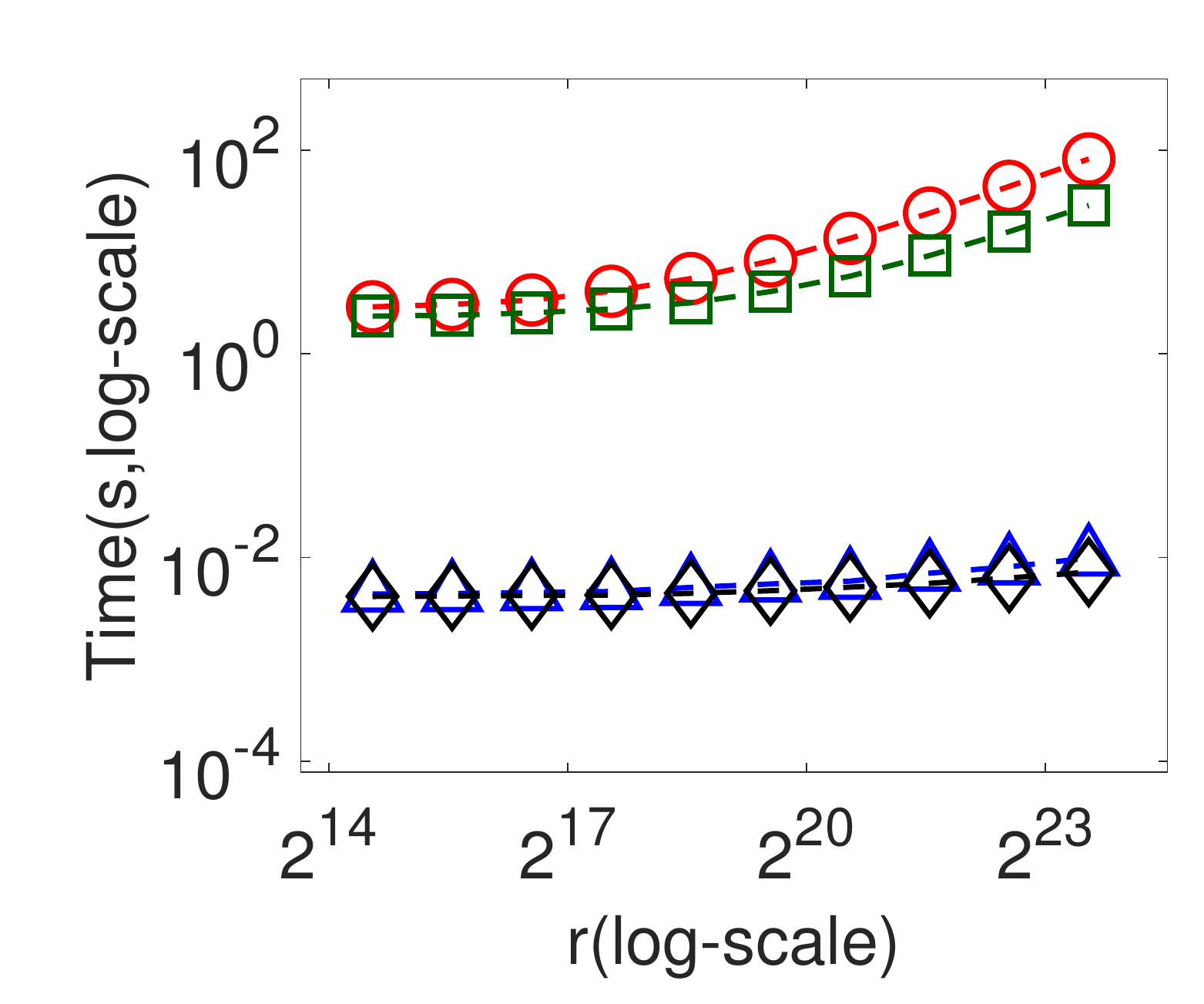}
  }
  \hfill
  \subfigure[Q5 on SX]{
    \label{fig:rt:m5:sx}
    \includegraphics[height=0.8in]{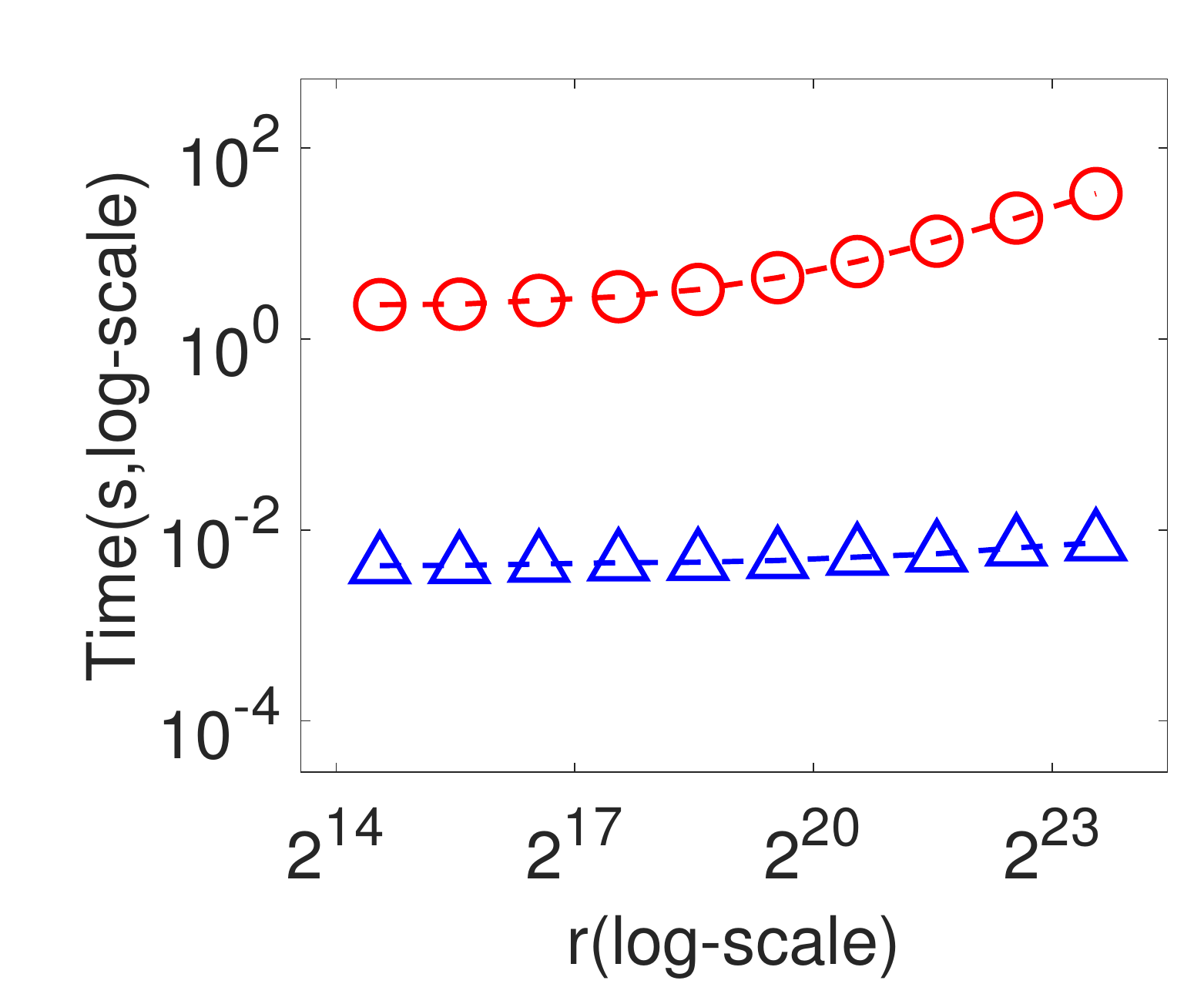}
  }
  \subfigure[Q1 on BC]{
    \label{fig:rt:m1:bt}
    \includegraphics[height=0.8in]{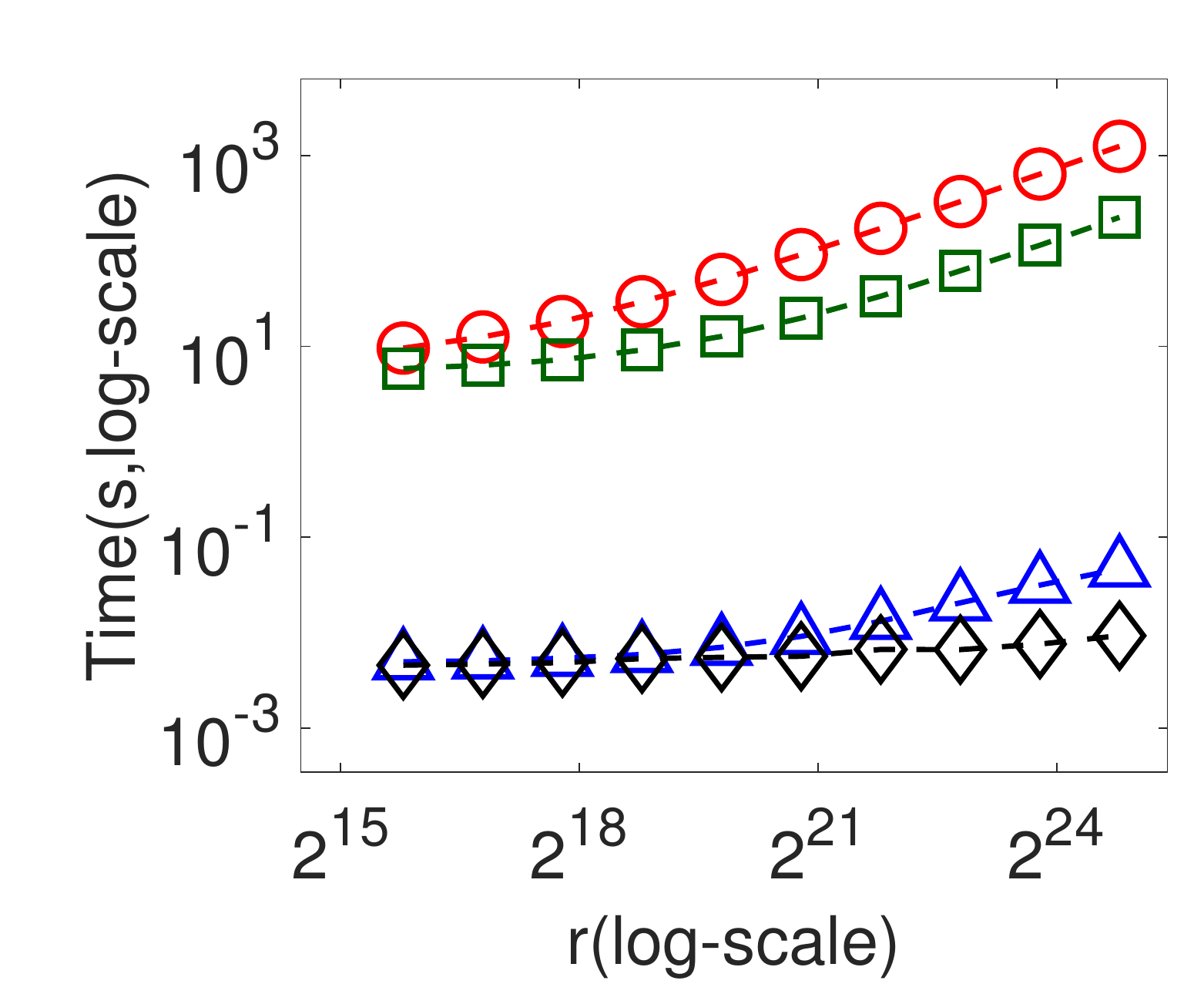}
  }
  \hfill
  \subfigure[Q2 on BC]{
    \label{fig:rt:m2:bt}
    \includegraphics[height=0.8in]{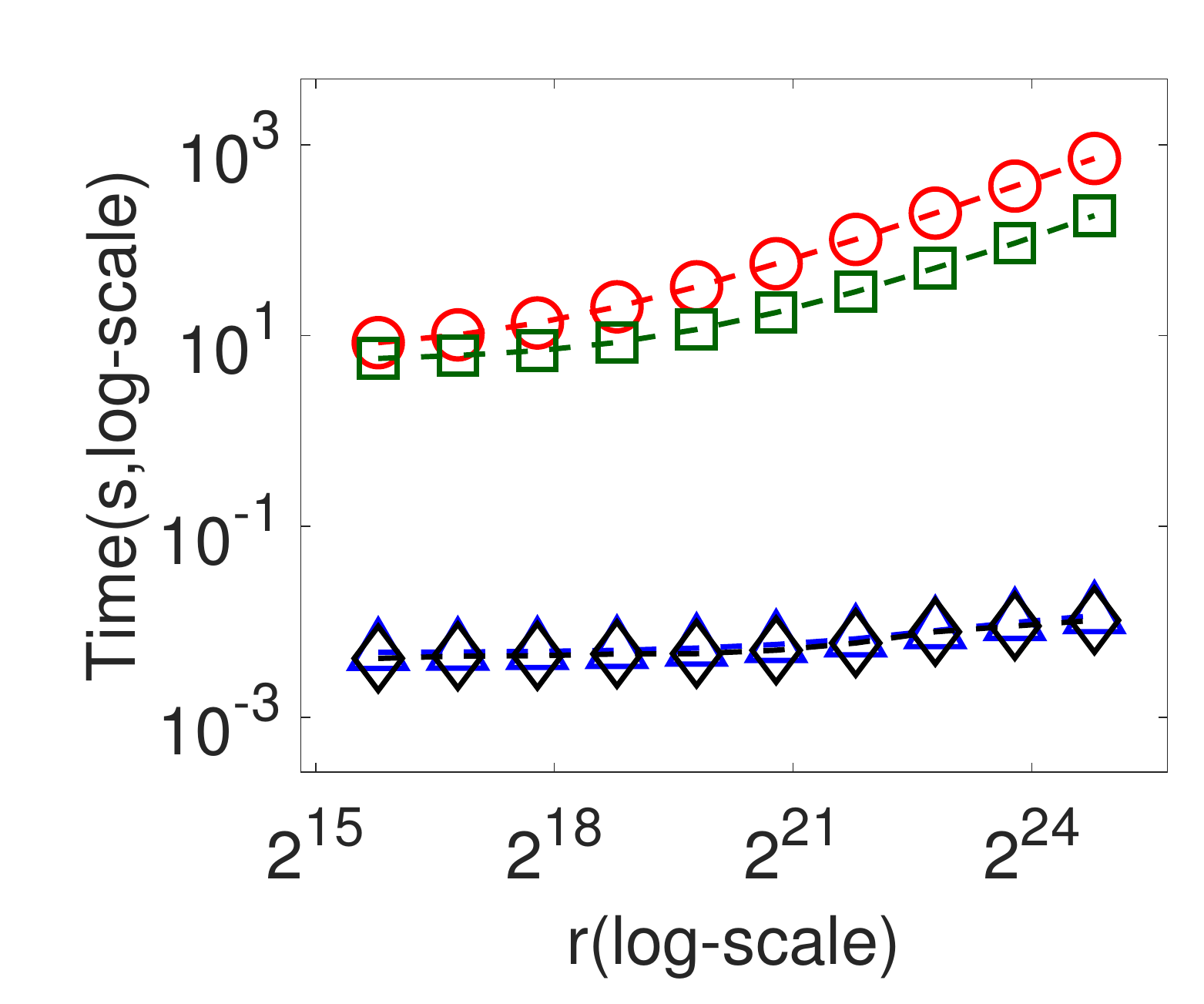}
  }
  \hfill
  \subfigure[Q3 on BC]{
    \label{fig:rt:m3:bt}
    \includegraphics[height=0.8in]{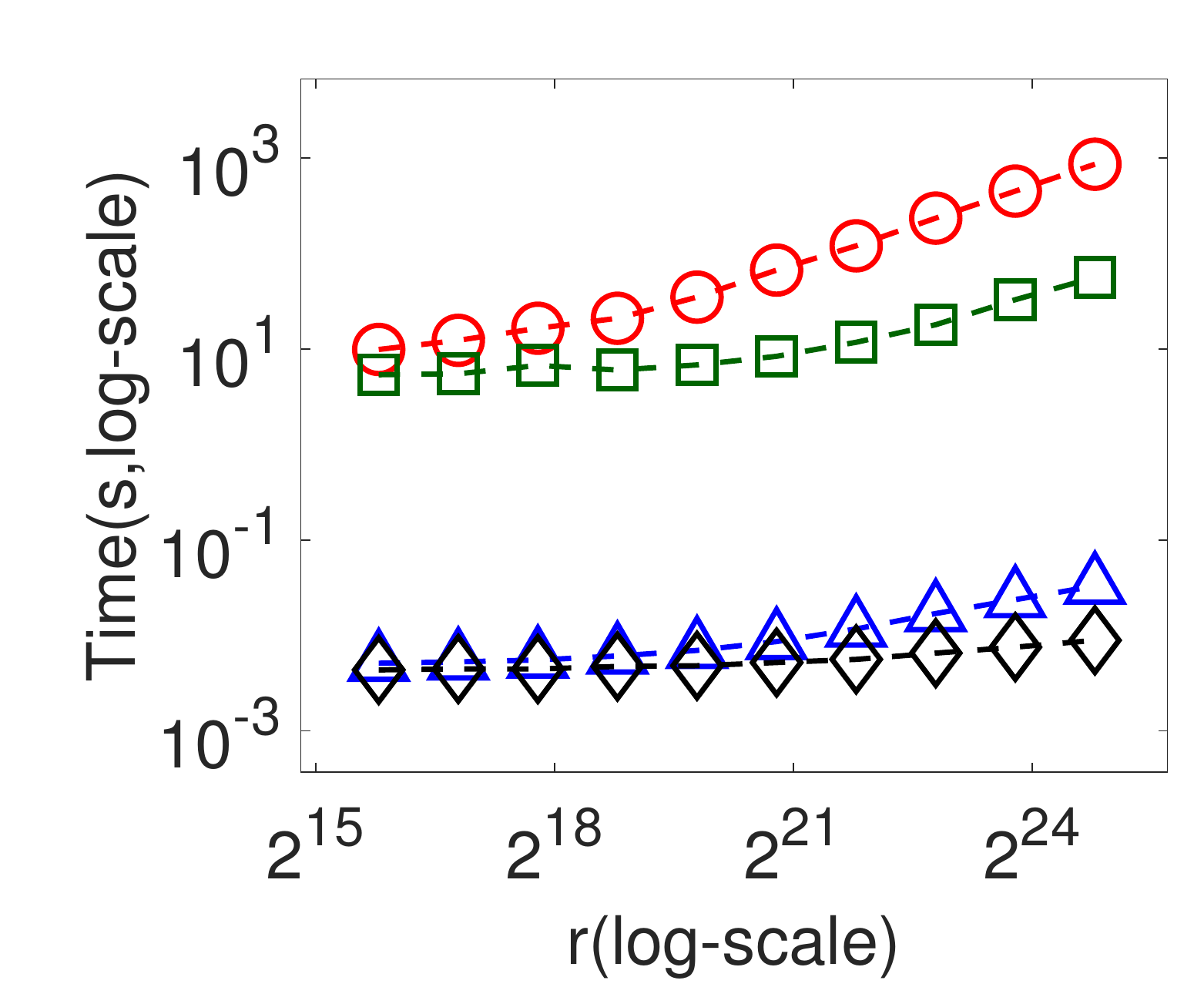}
  }
  \hfill
  \subfigure[Q4 on BC]{
    \label{fig:rt:m4:bt}
    \includegraphics[height=0.8in]{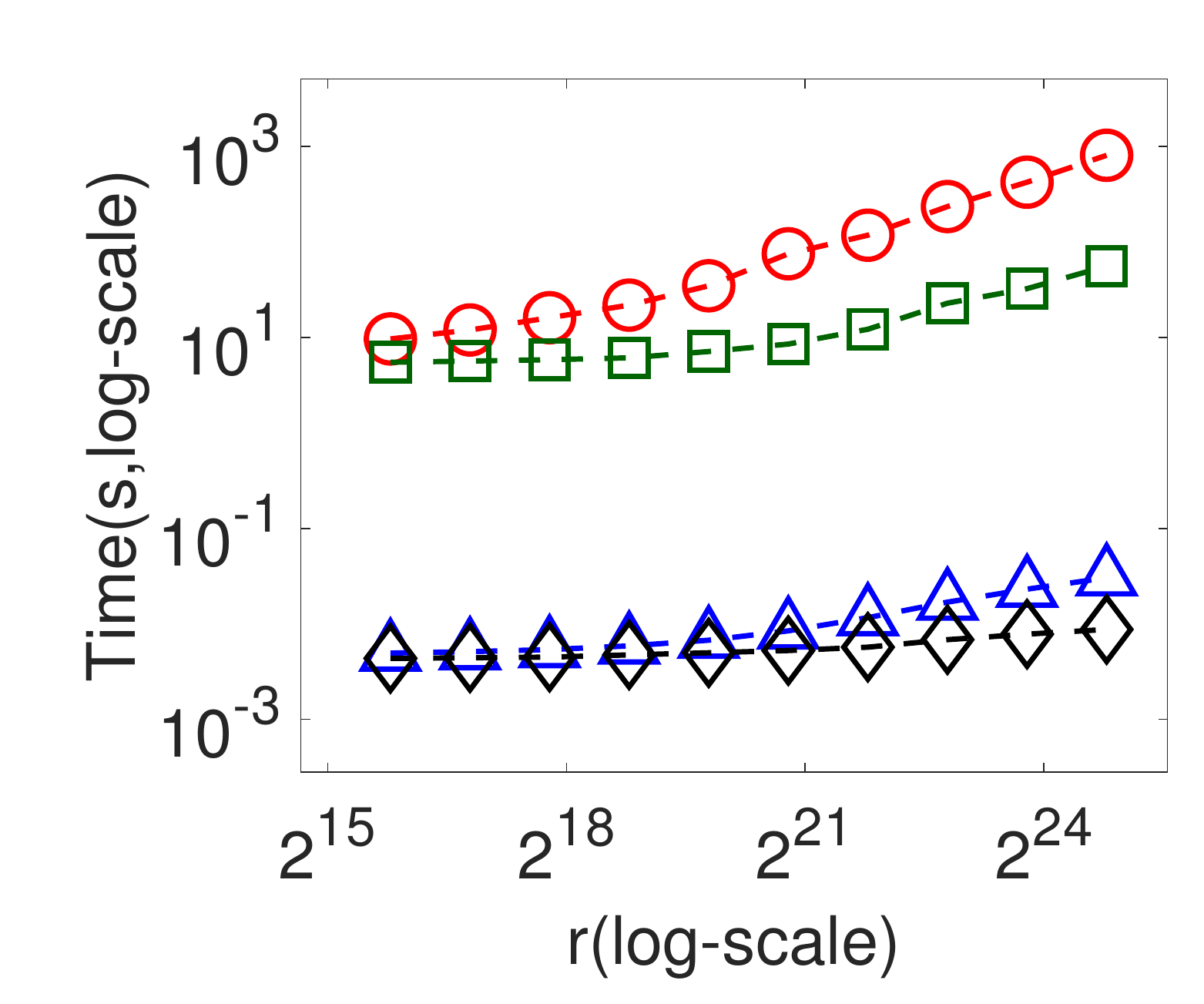}
  }
  \hfill
  \subfigure[Q5 on BC]{
    \label{fig:rt:m5:bt}
    \includegraphics[height=0.8in]{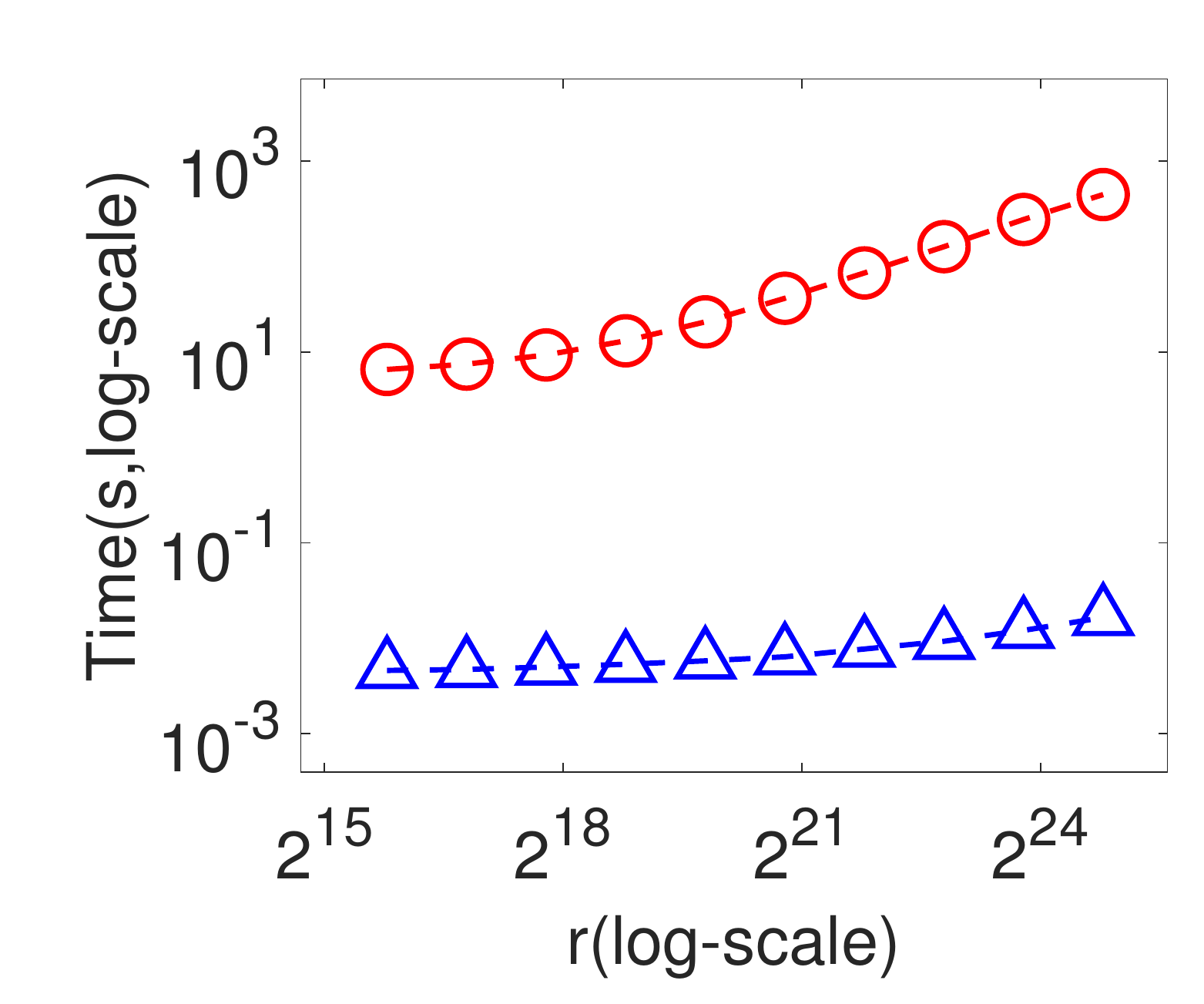}
  }
  \subfigure[Q1 on RC]{
    \label{fig:rt:m1:rc}
    \includegraphics[height=0.8in]{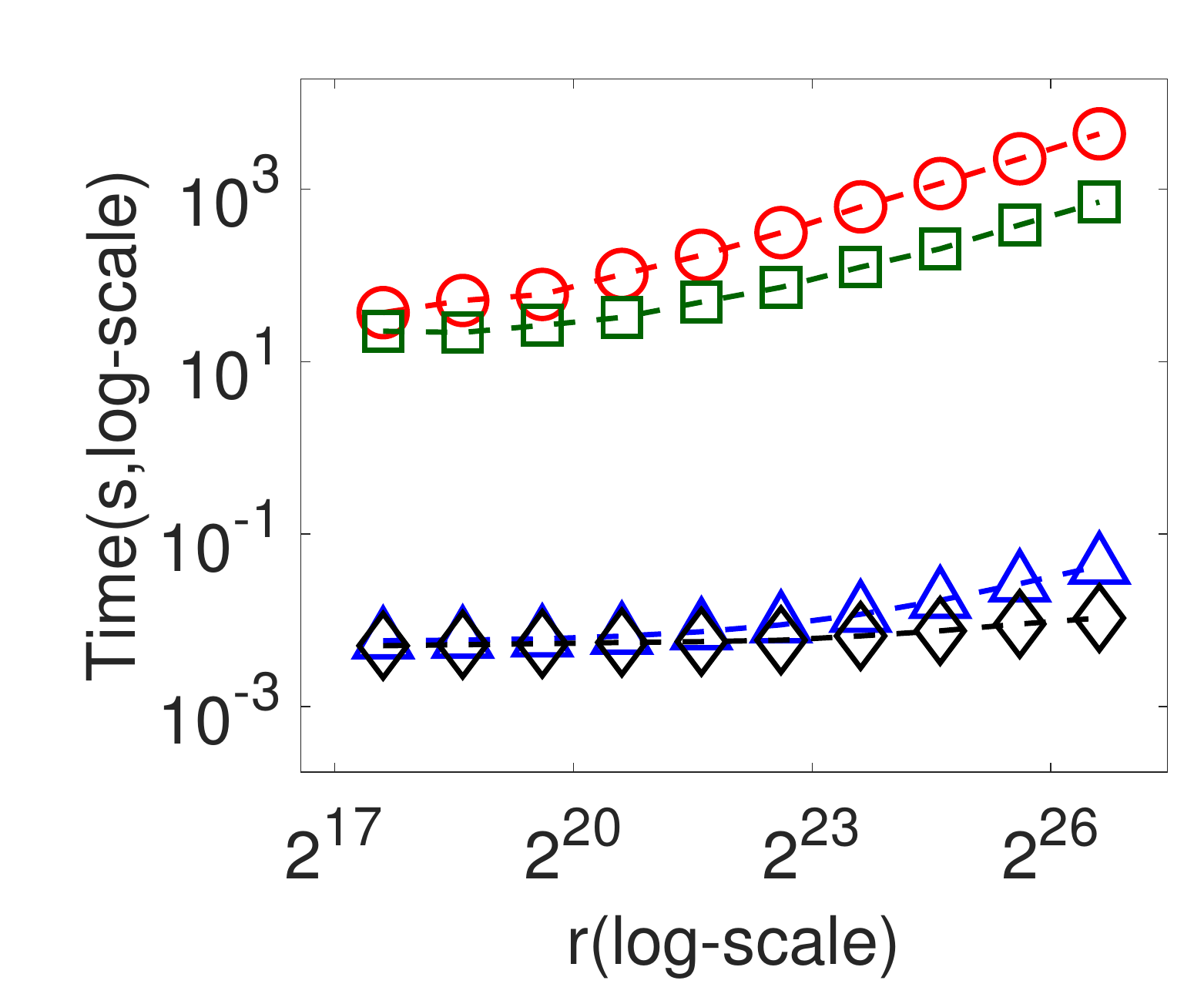}
  }
  \hfill
  \subfigure[Q2 on RC]{
    \label{fig:rt:m2:rc}
    \includegraphics[height=0.8in]{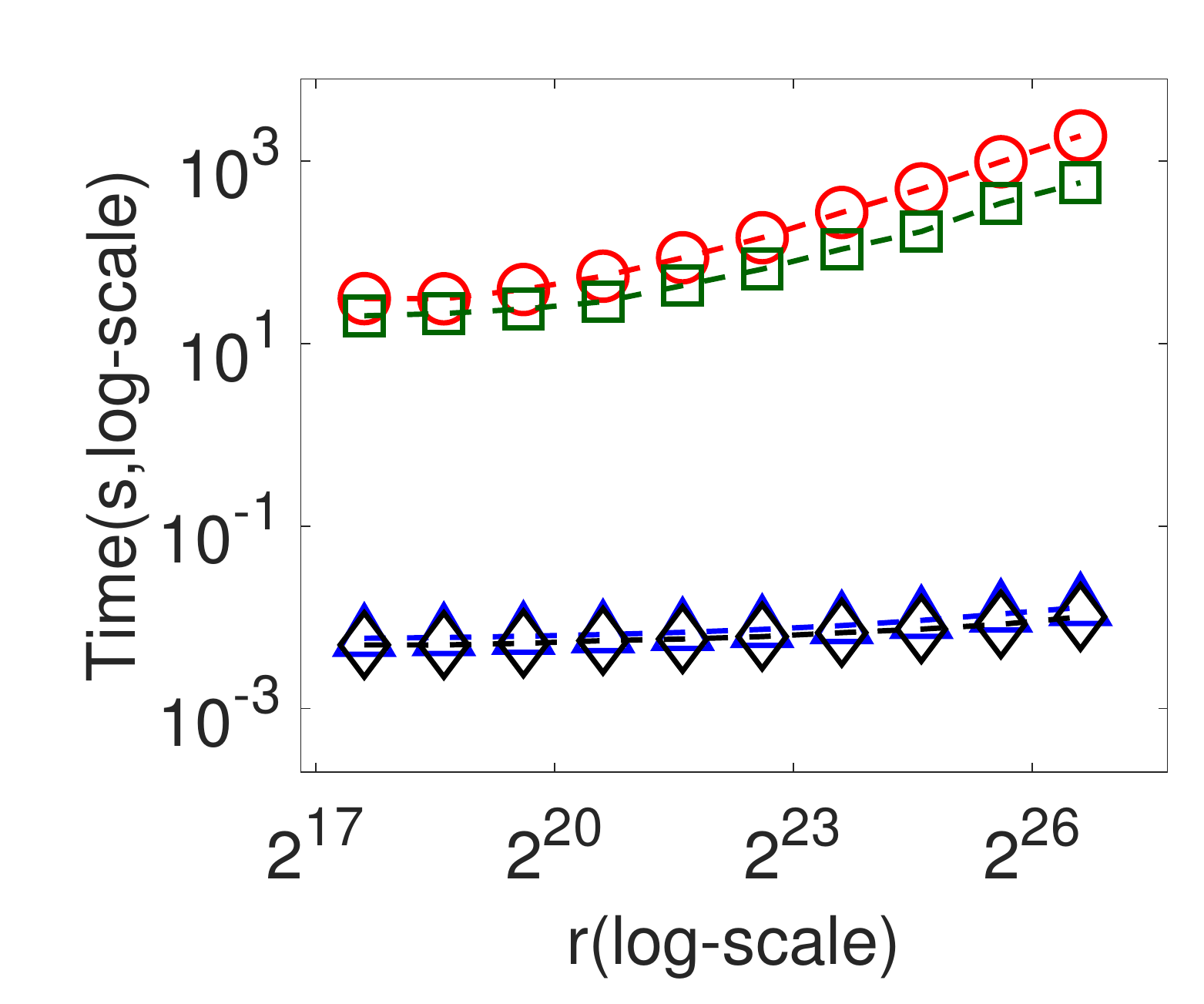}
  }
  \hfill
  \subfigure[Q3 on RC]{
    \label{fig:rt:m3:rc}
    \includegraphics[height=0.8in]{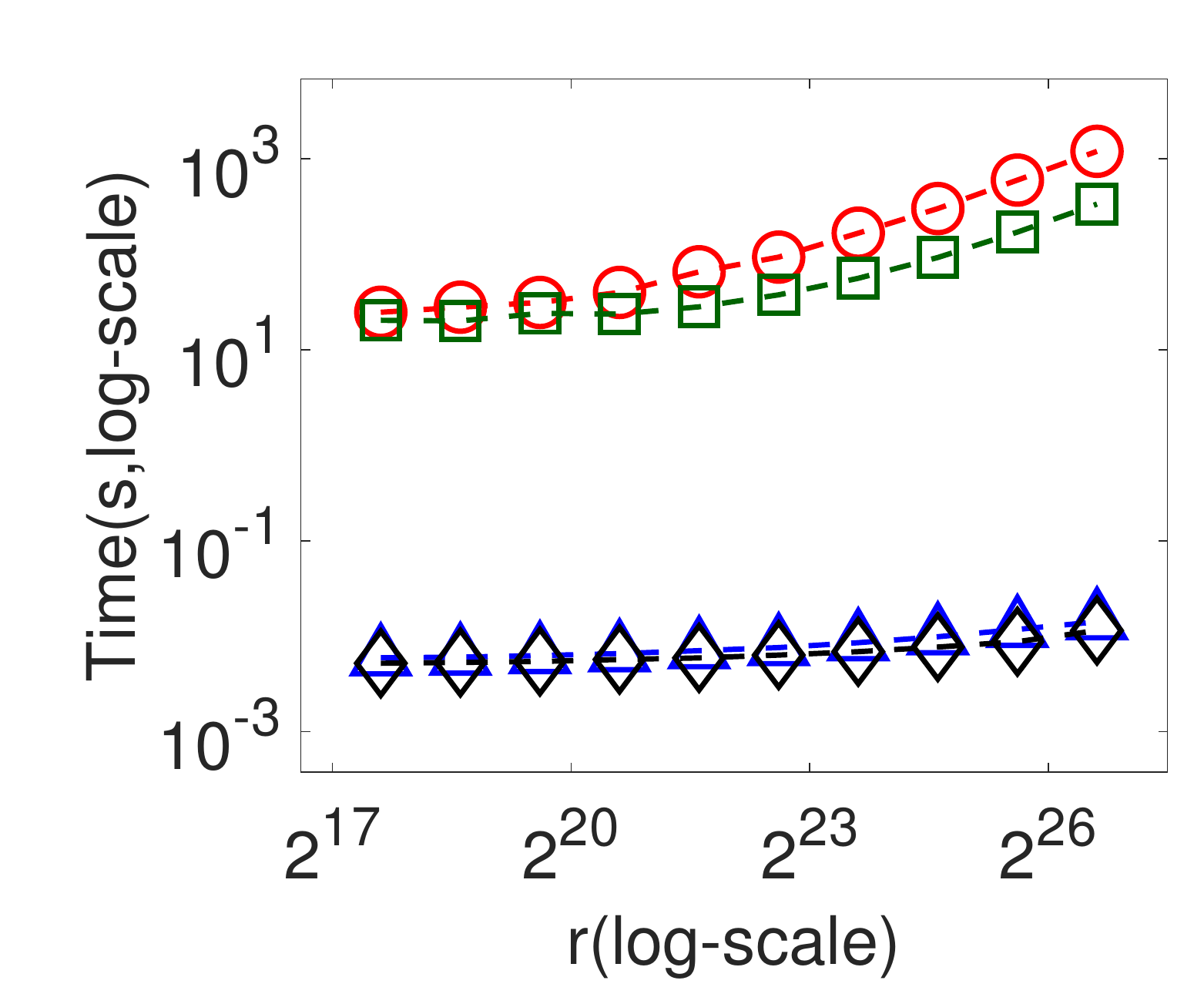}
  }
  \hfill
  \subfigure[Q4 on RC]{
    \label{fig:rt:m4:rc}
    \includegraphics[height=0.8in]{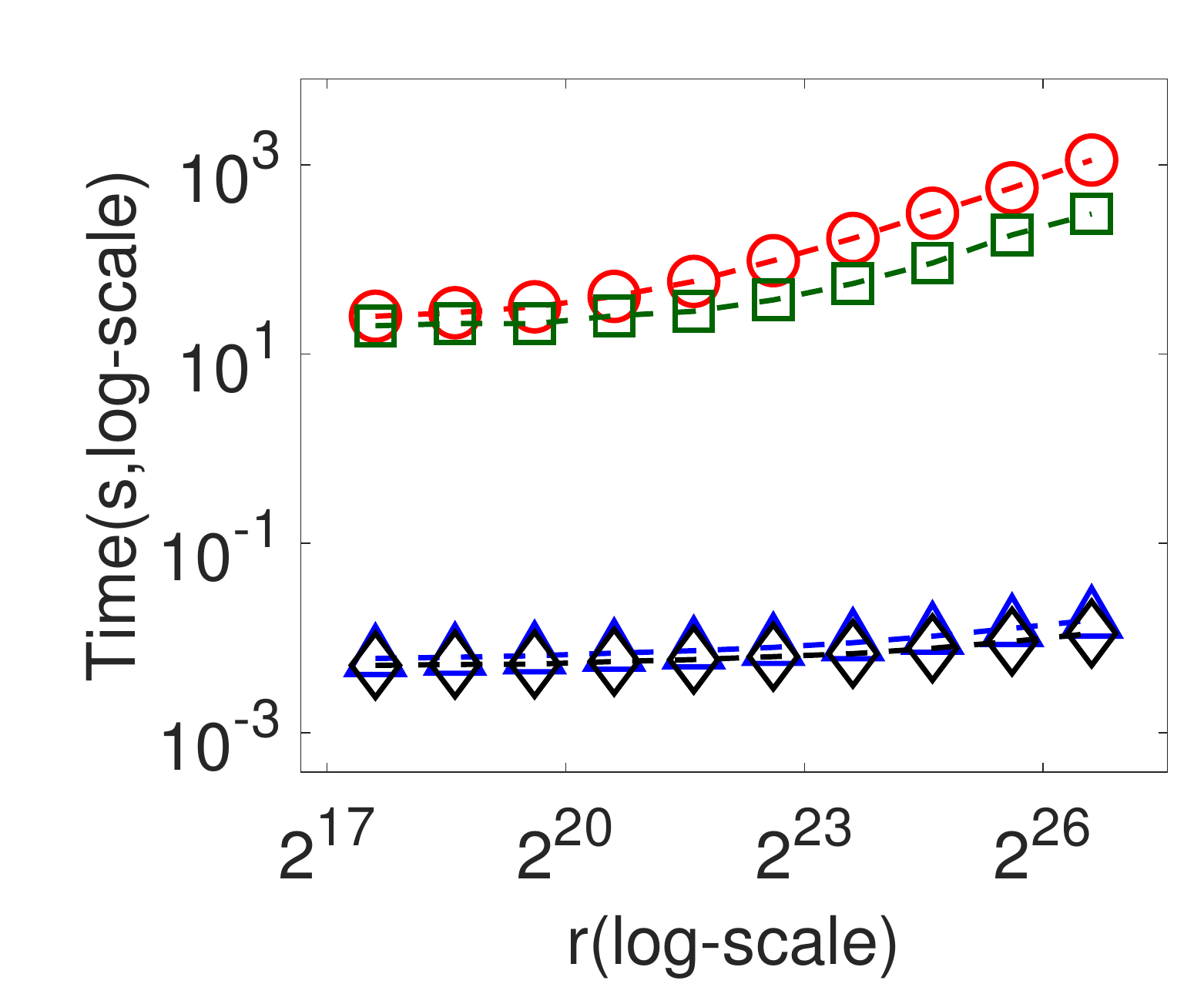}
  }
  \hfill
  \subfigure[Q5 on RC]{
    \label{fig:rt:m5:rc}
    \includegraphics[height=0.8in]{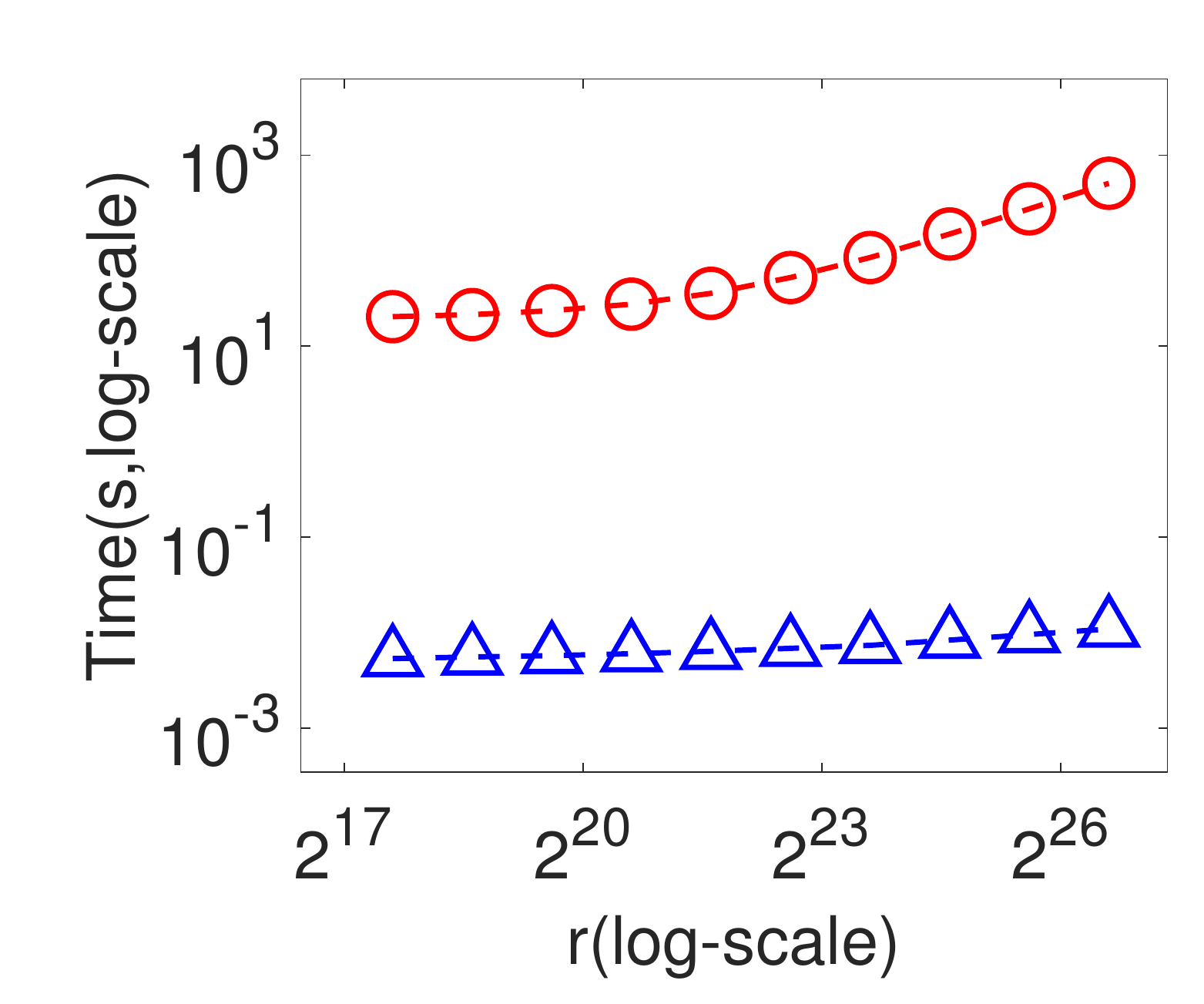}
  }
  \caption{Average update time (in seconds) per 1,000 edges with varying sample size $r$ in the streaming setting.}
  \label{fig:r:time}
  \Description{Time-r}
\end{figure}

\begin{figure}[t]
  \centering
  \subfigure[Q1 on BC]{
    \label{fig:T:m1:bt}
    \includegraphics[height=0.9in]{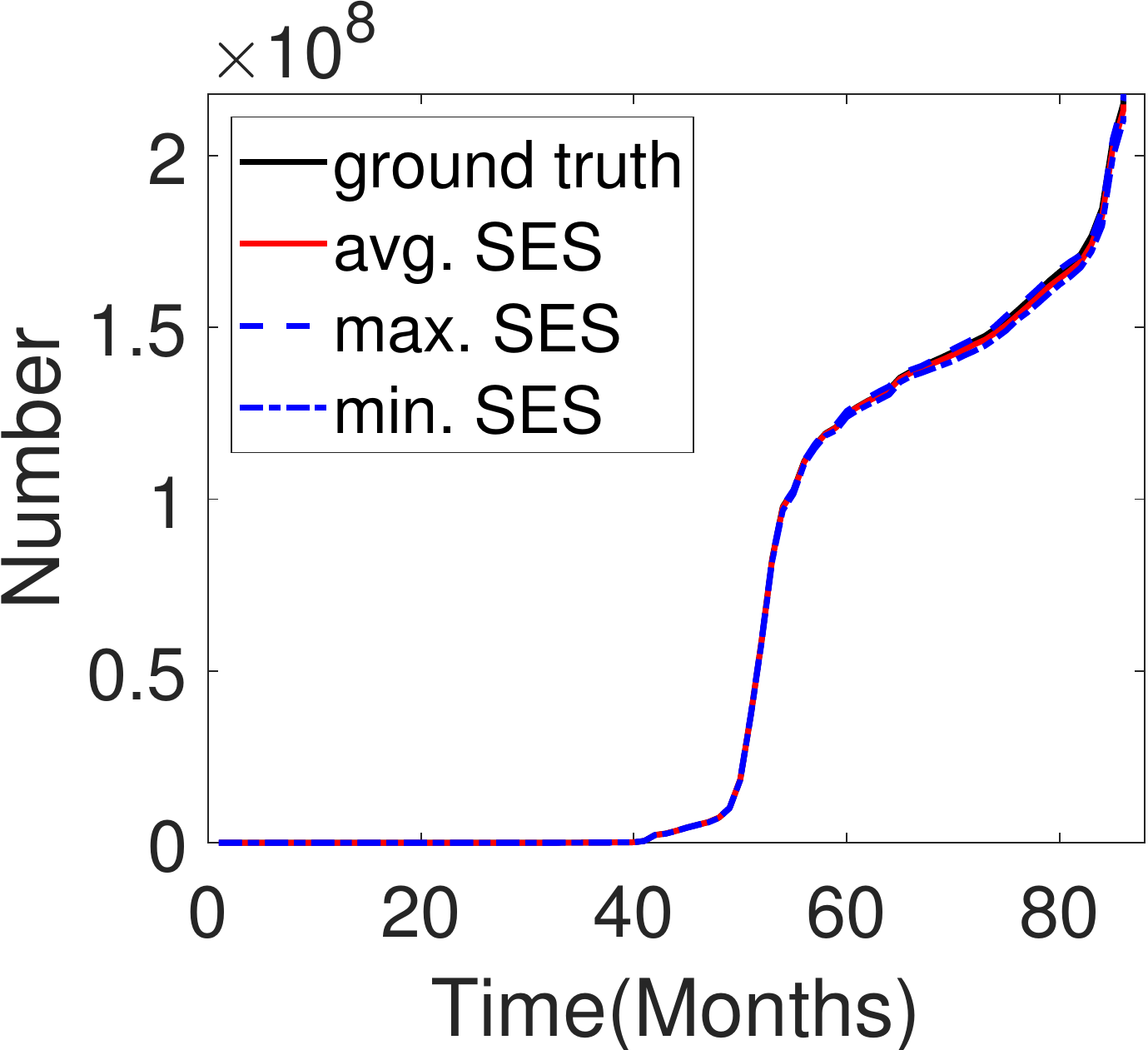}
    \hspace{0.5em}
    \includegraphics[height=0.9in]{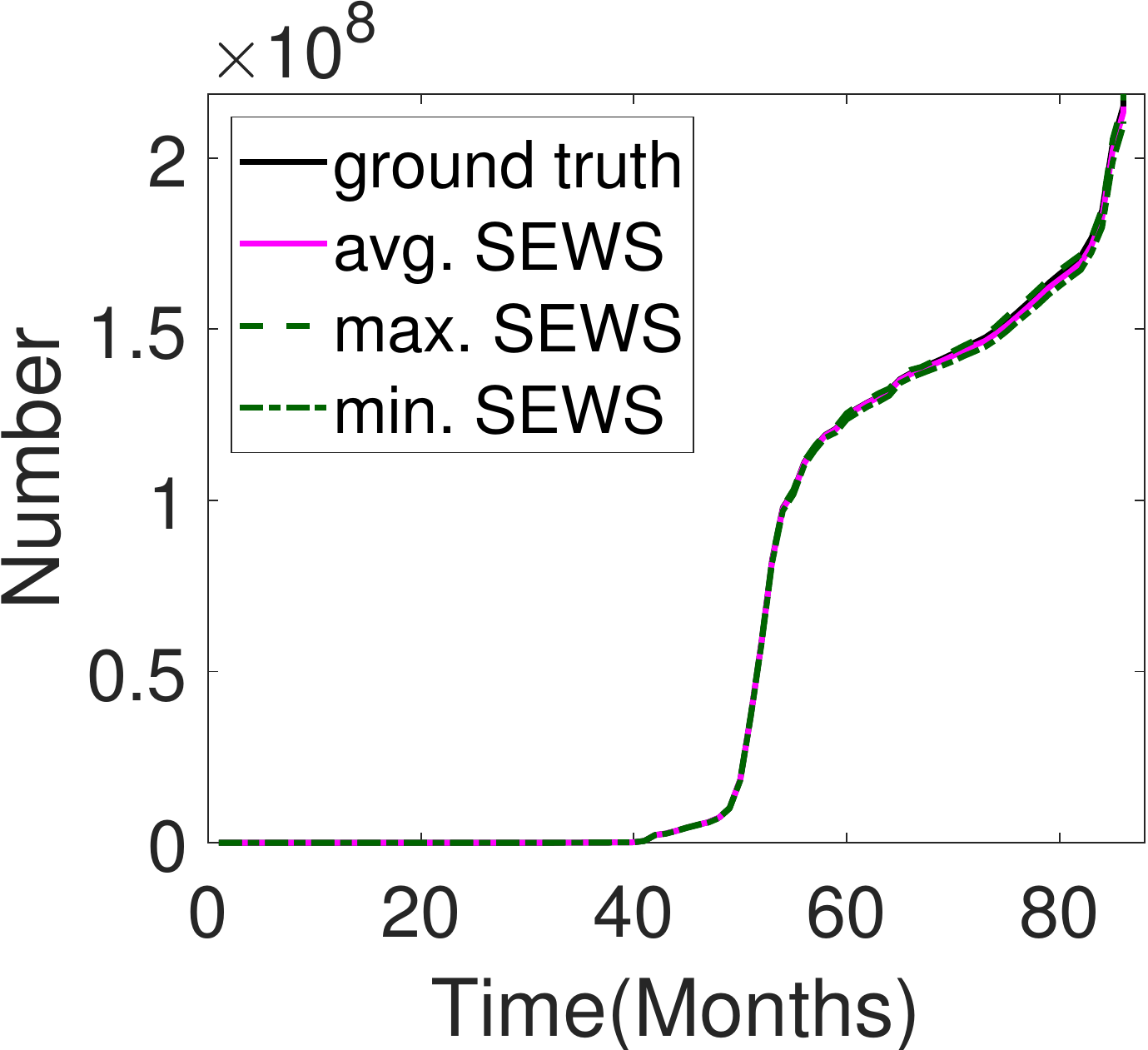}
  }
  \hspace{1em}
  \subfigure[Q2 on BC]{
    \label{fig:T:m2:bt}
    \includegraphics[height=0.9in]{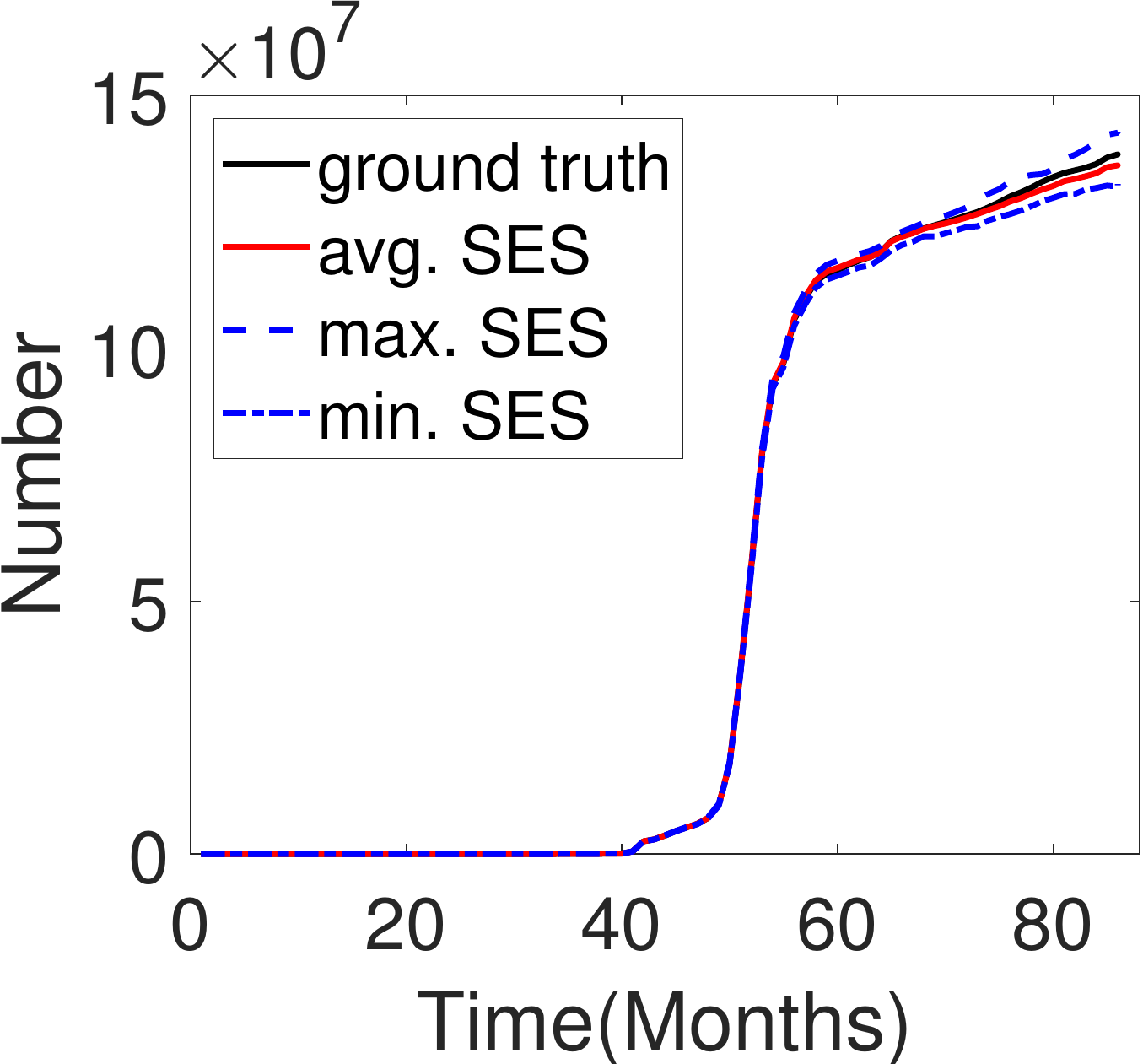}
    \hspace{0.5em}
    \includegraphics[height=0.9in]{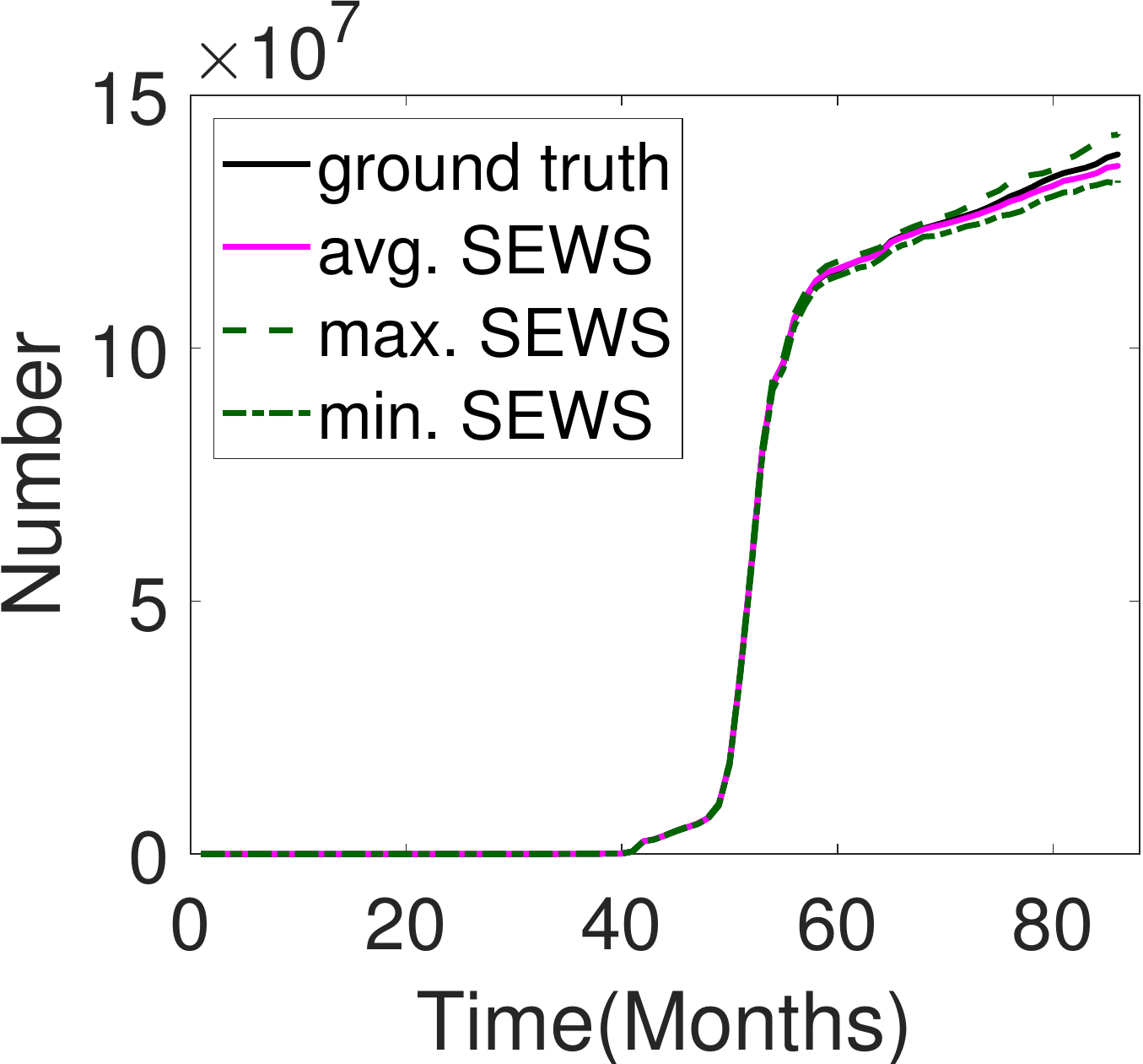}
  }
 \\
  \subfigure[Q3 on BC]{
    \label{fig:T:m3:bt}
    \includegraphics[height=0.9in]{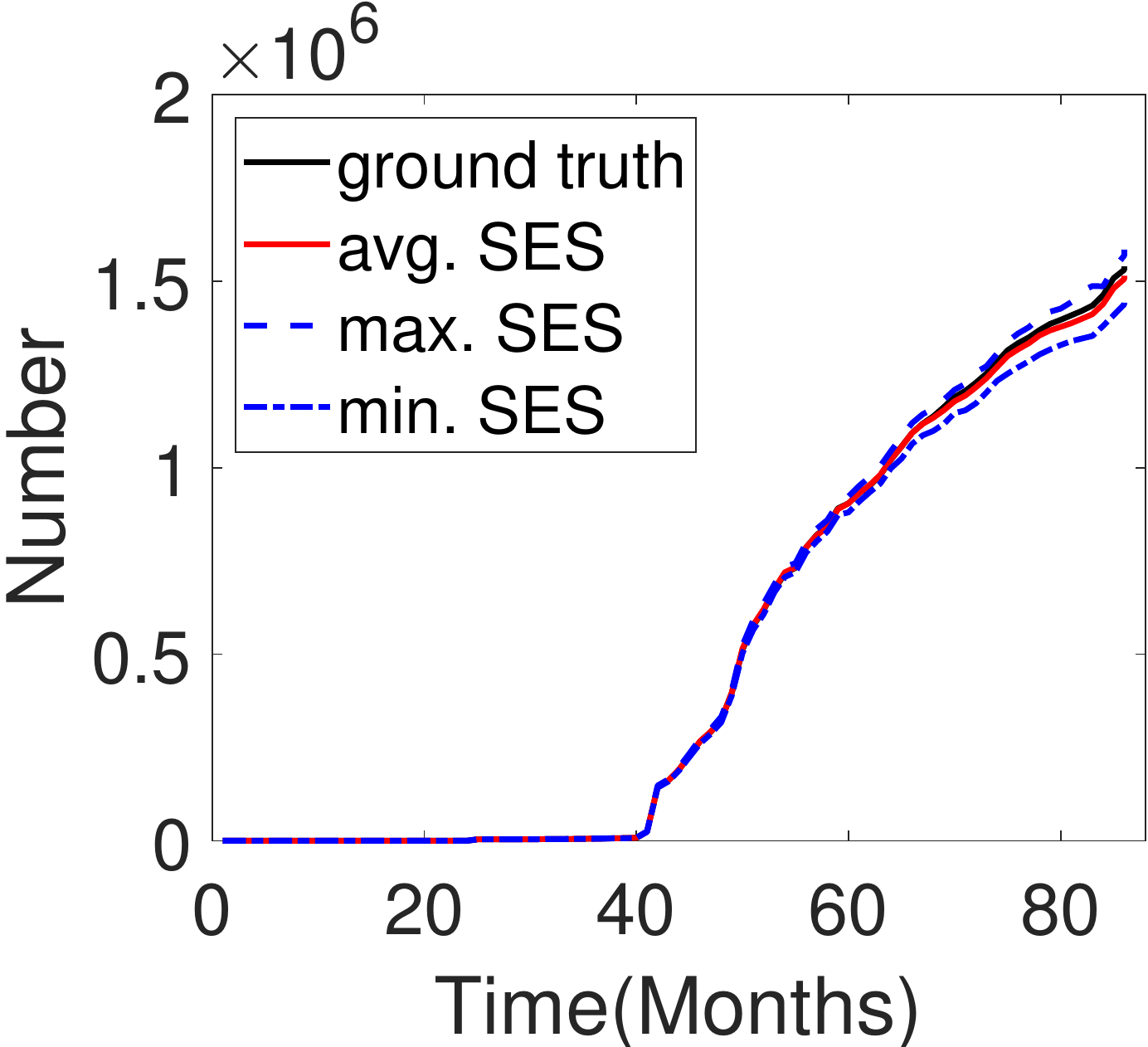}
    \hspace{0.5em}
    \includegraphics[height=0.9in]{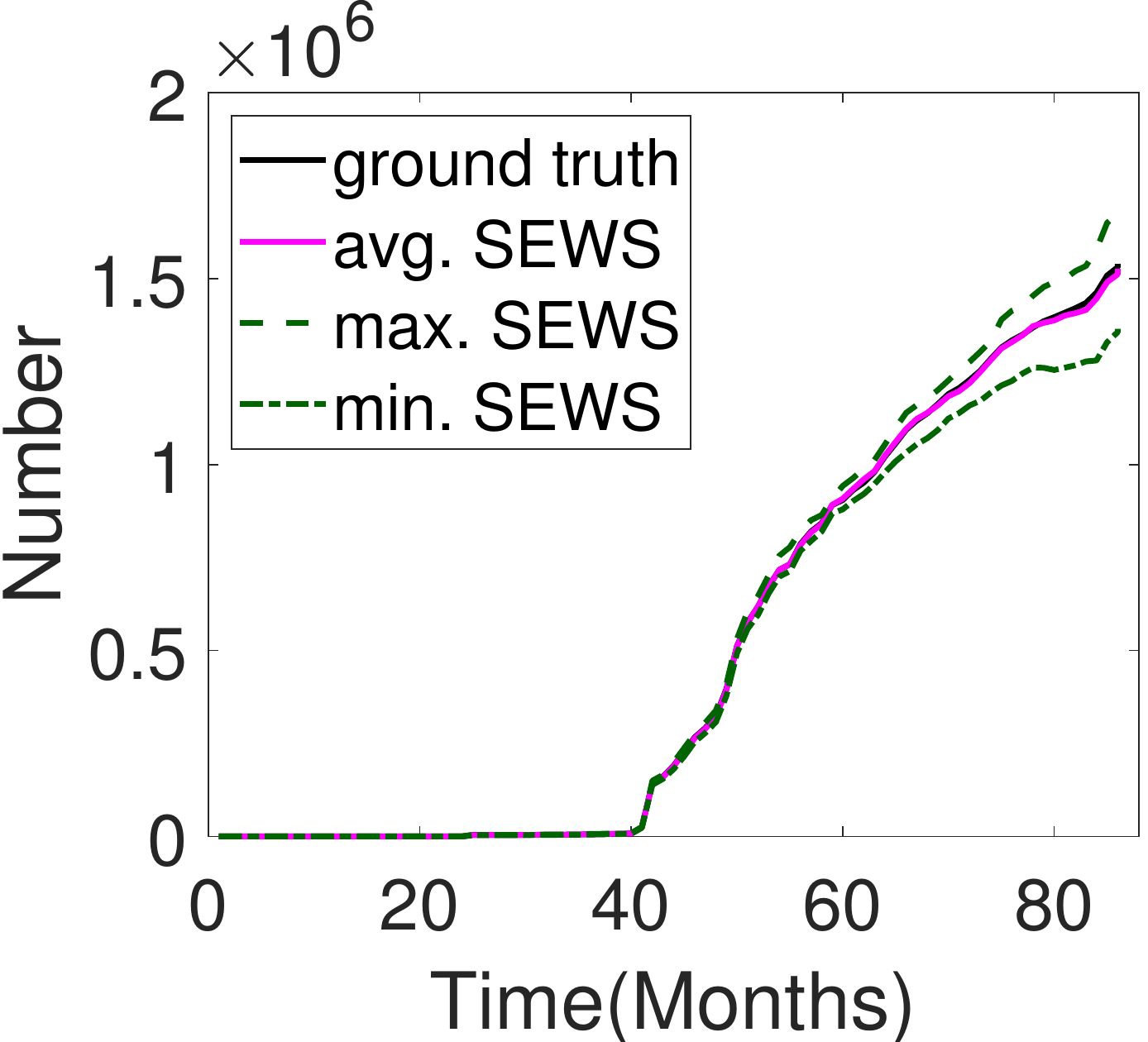}
  }
  \hspace{1em}
  \subfigure[Q4 on BC]{
    \label{fig:T:m4:bt}
    \includegraphics[height=0.9in]{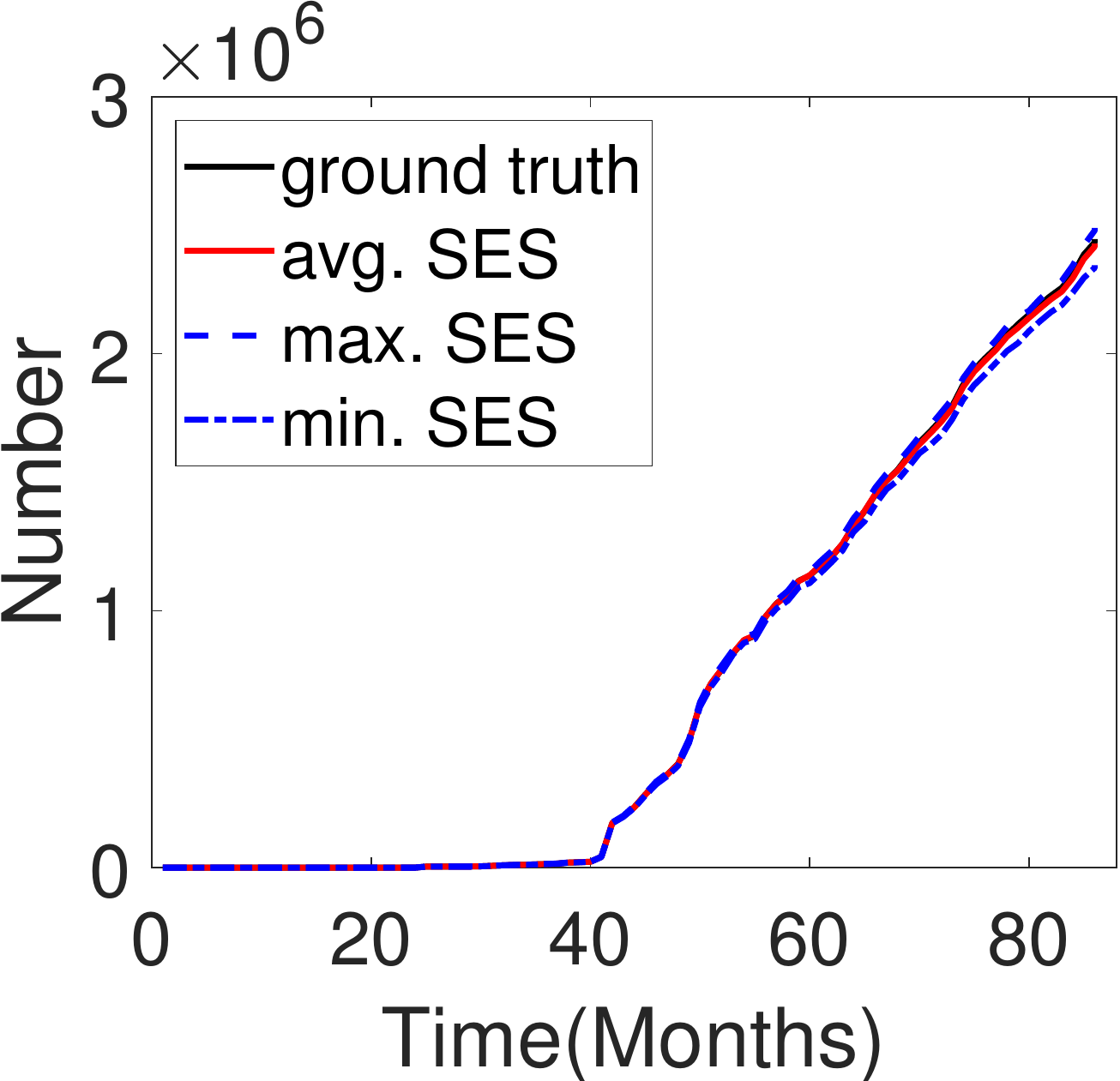}
    \hspace{0.5em}
    \includegraphics[height=0.9in]{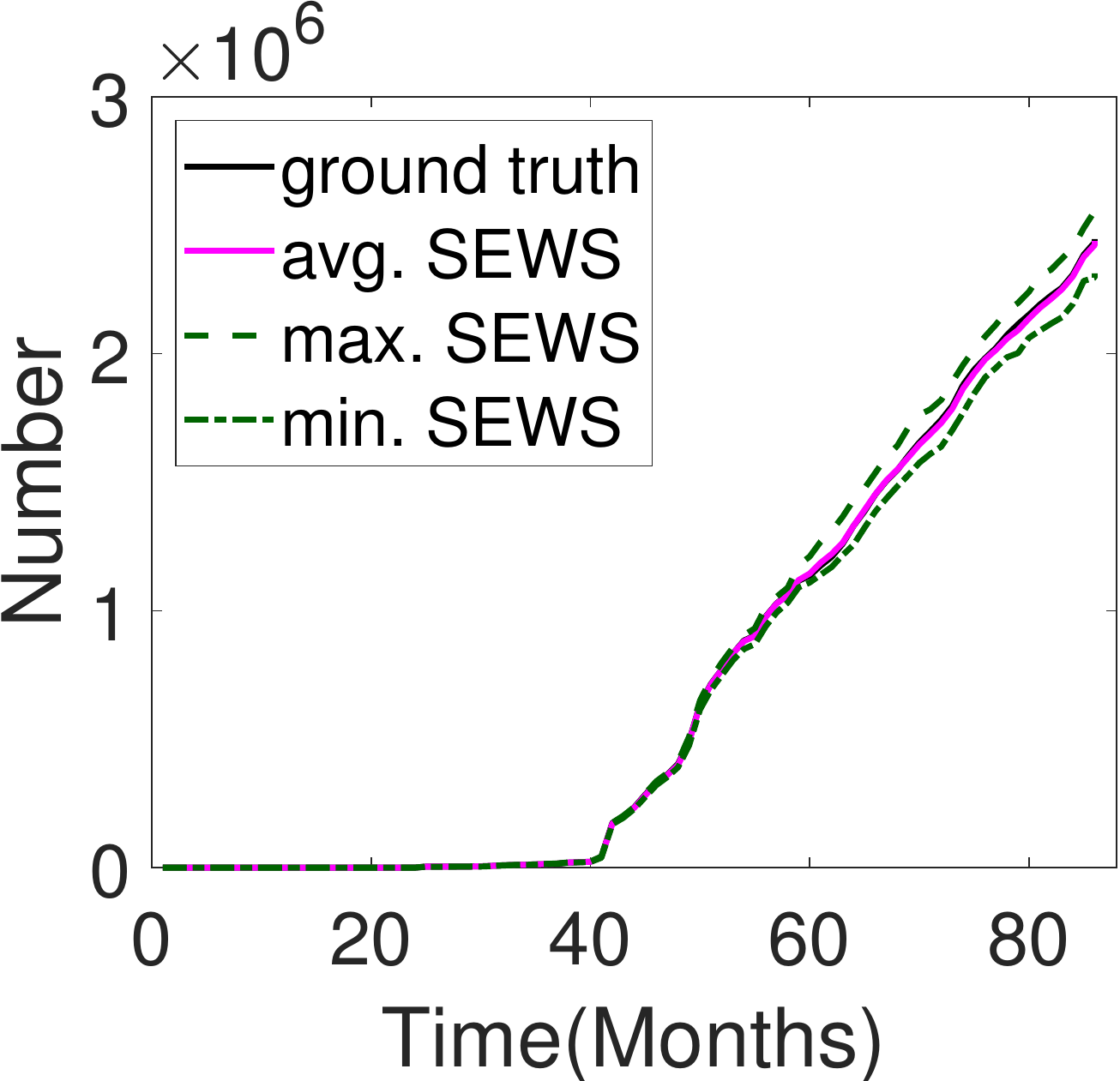}
  }
  \\
  \subfigure[Q1 on RC]{
    \label{fig:T:m1:rc}
    \includegraphics[height=0.9in]{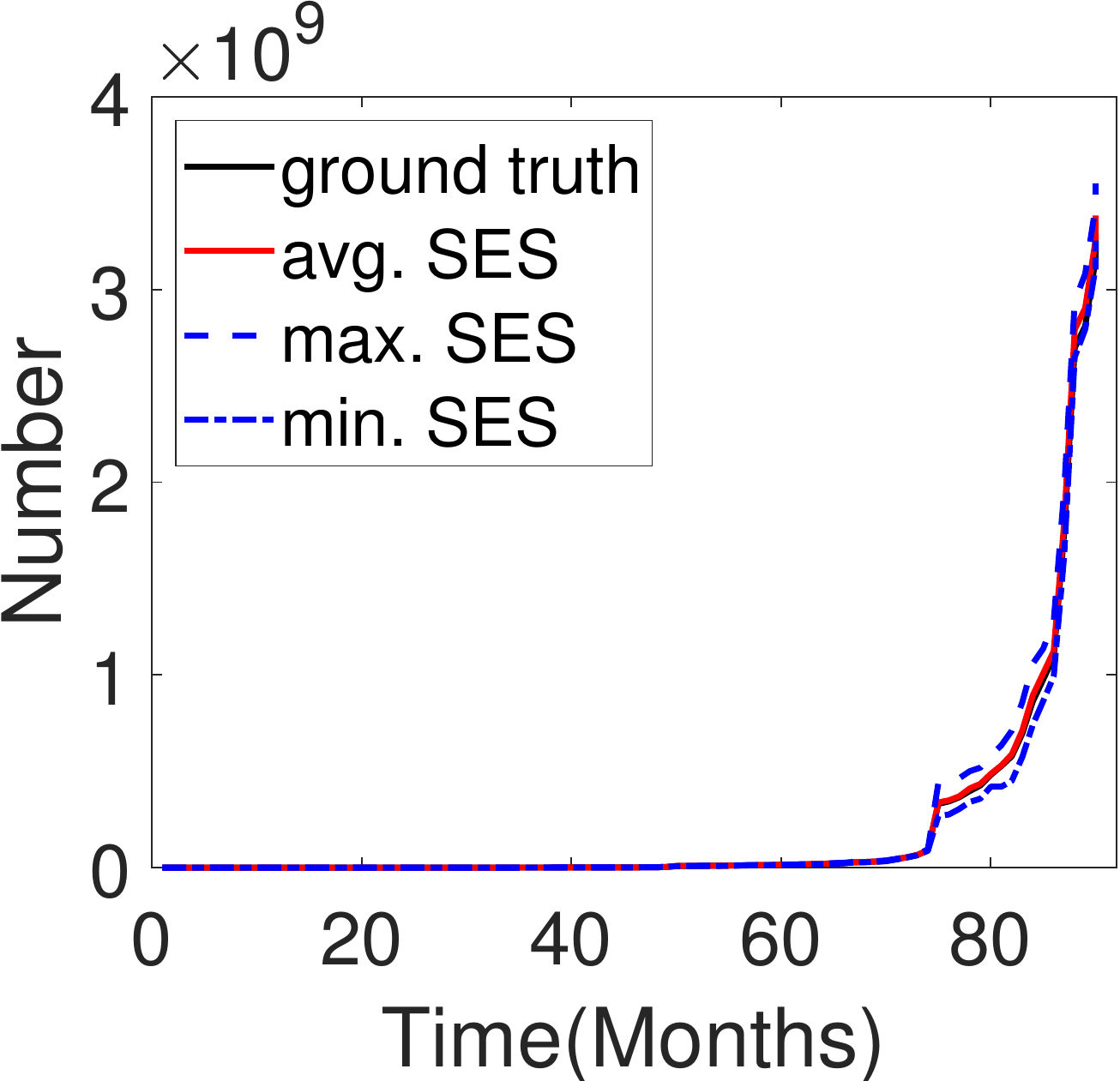}
    \hspace{0.5em}
    \includegraphics[height=0.9in]{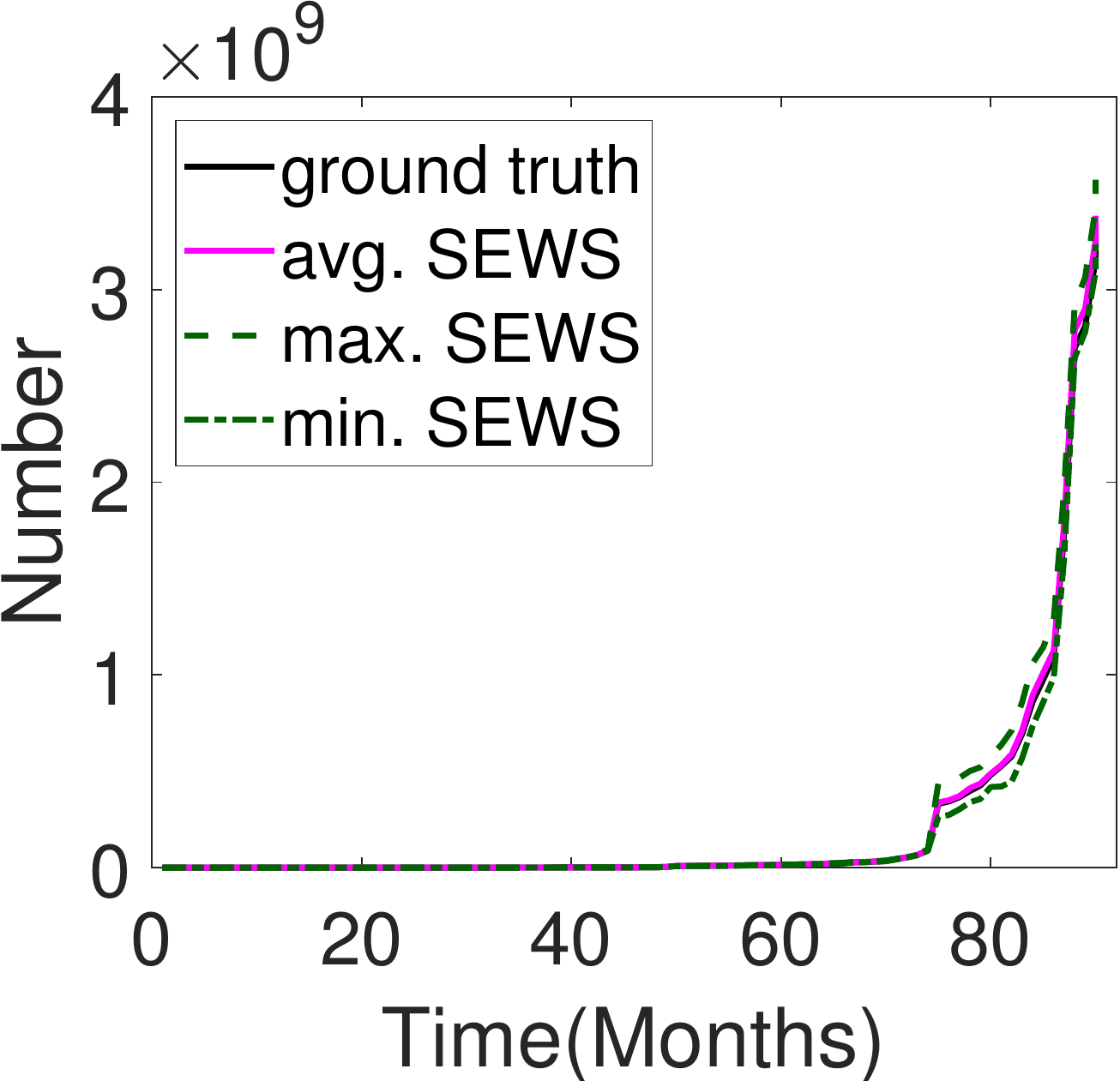}
  }
  \hspace{1em}
  \subfigure[Q2 on RC]{
	\label{fig:T:m2:rc}
	\includegraphics[height=0.9in]{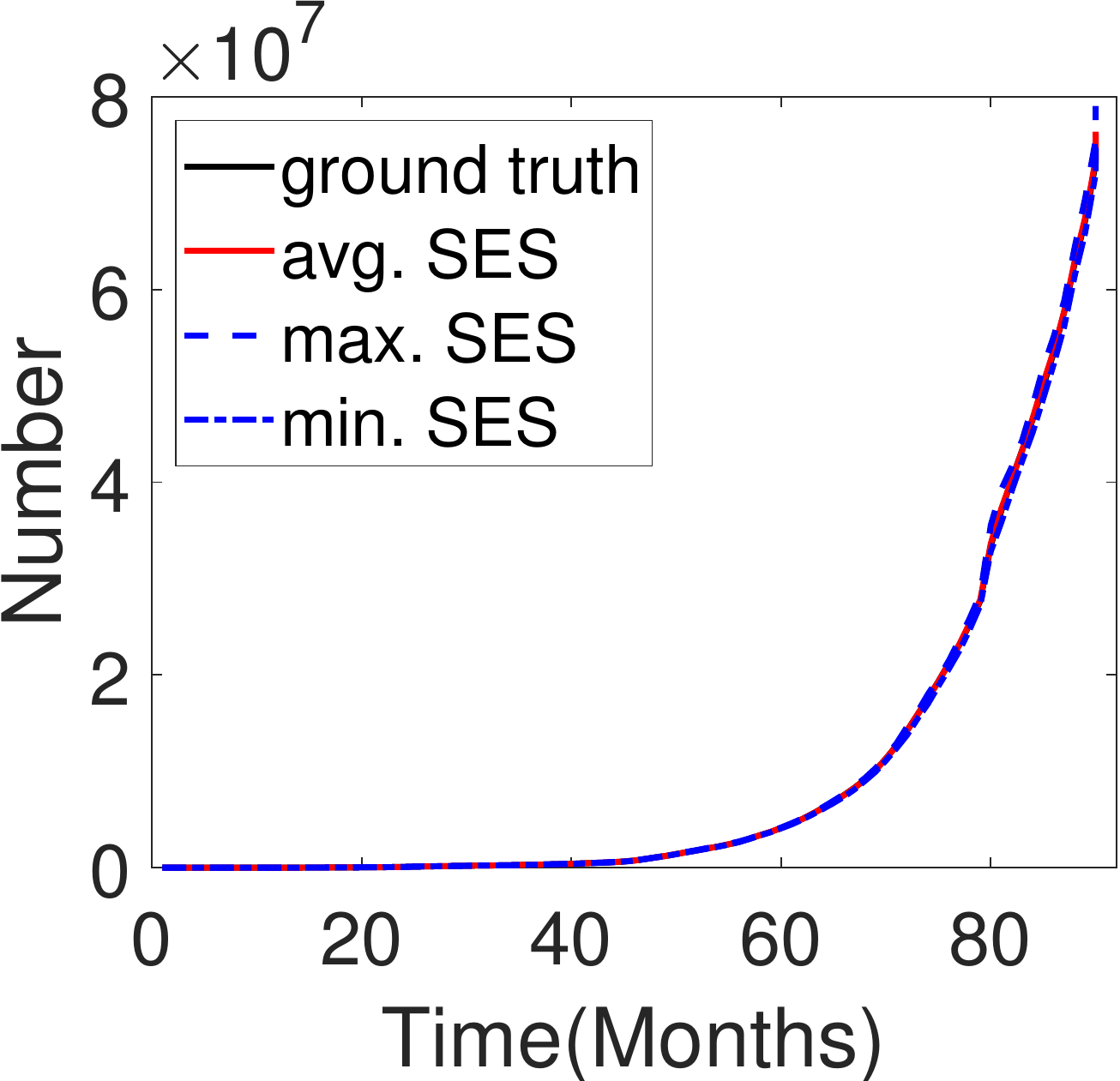}
  \hspace{0.5em}
	\includegraphics[height=0.9in]{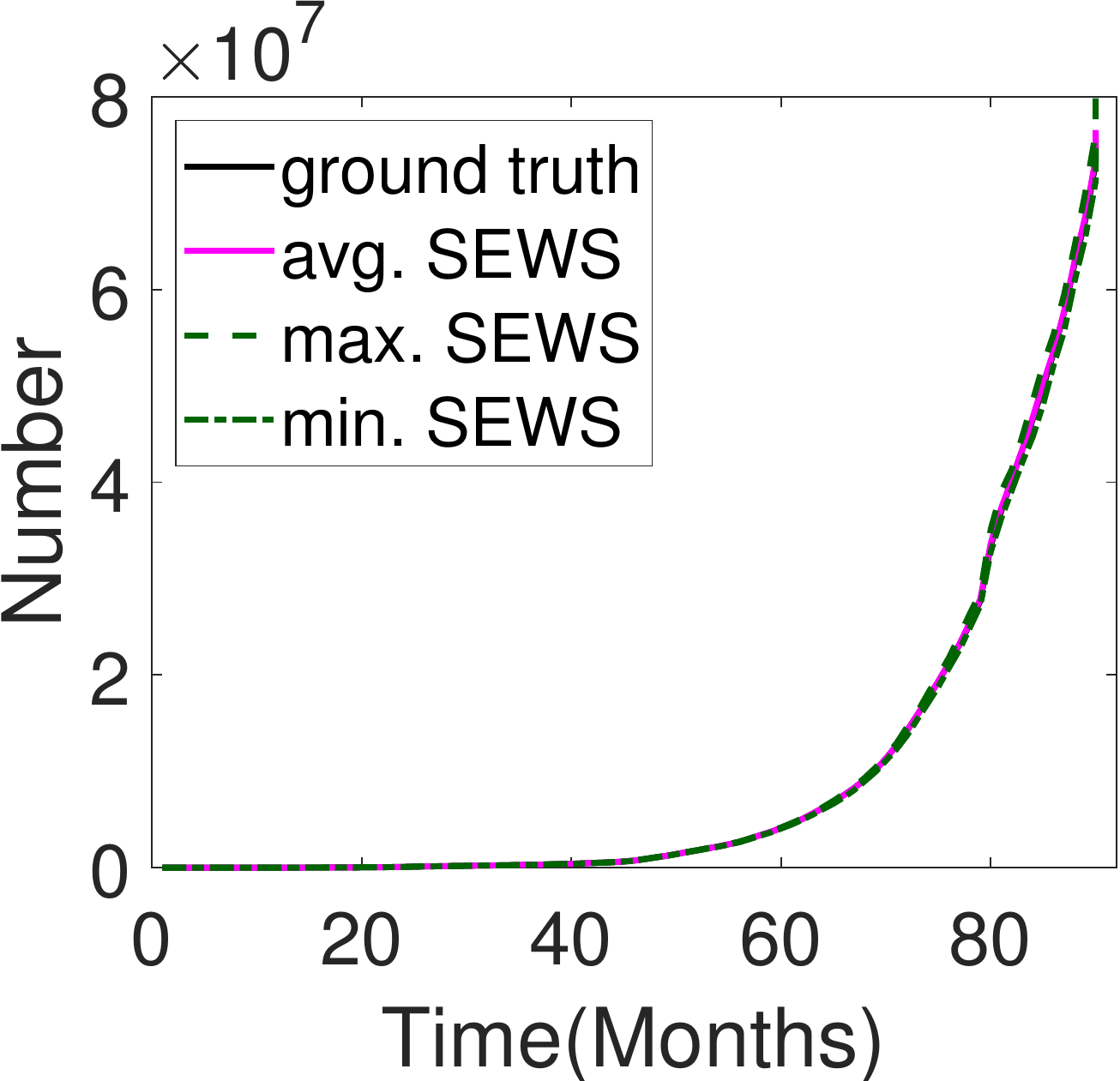}
  }
  \\
  \subfigure[Q3 on RC]{
    \label{fig:T:m3:rc}
    \includegraphics[height=0.9in]{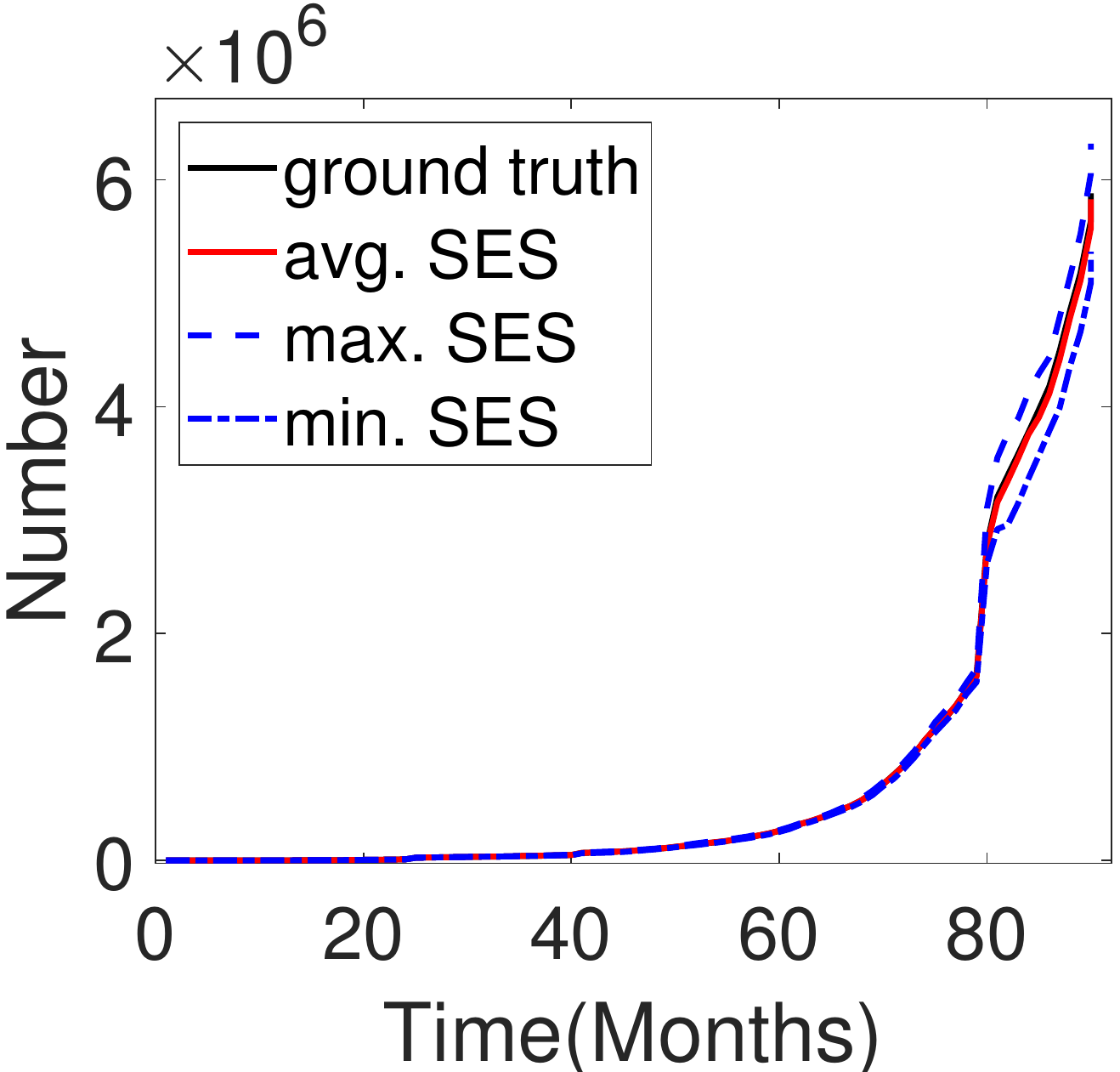}
    \hspace{0.5em}
    \includegraphics[height=0.9in]{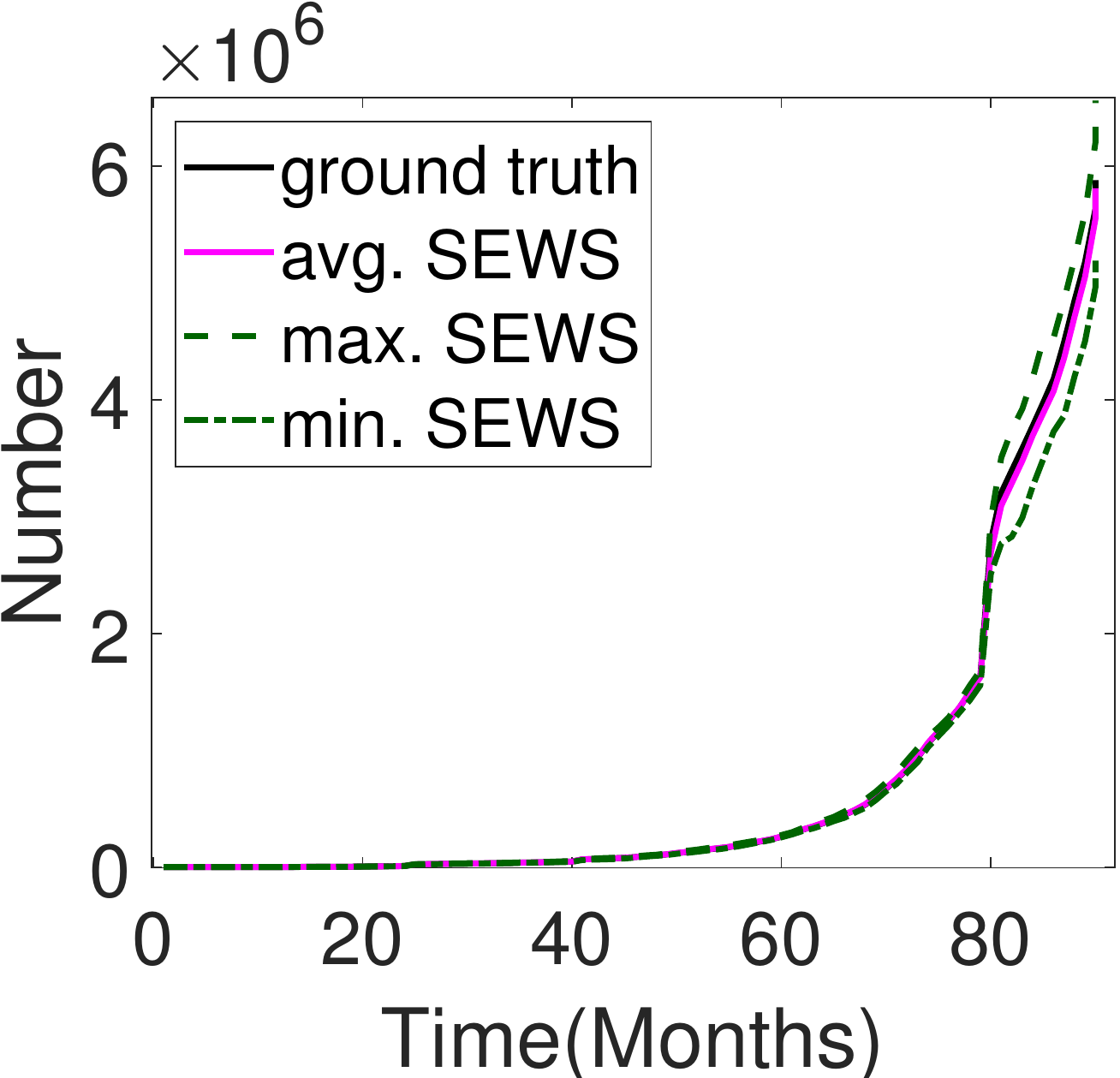}
  }
  \hspace{1em}
   \subfigure[Q4 on RC]{
  	\label{fig:T:m4:rc}
  	\includegraphics[height=0.9in]{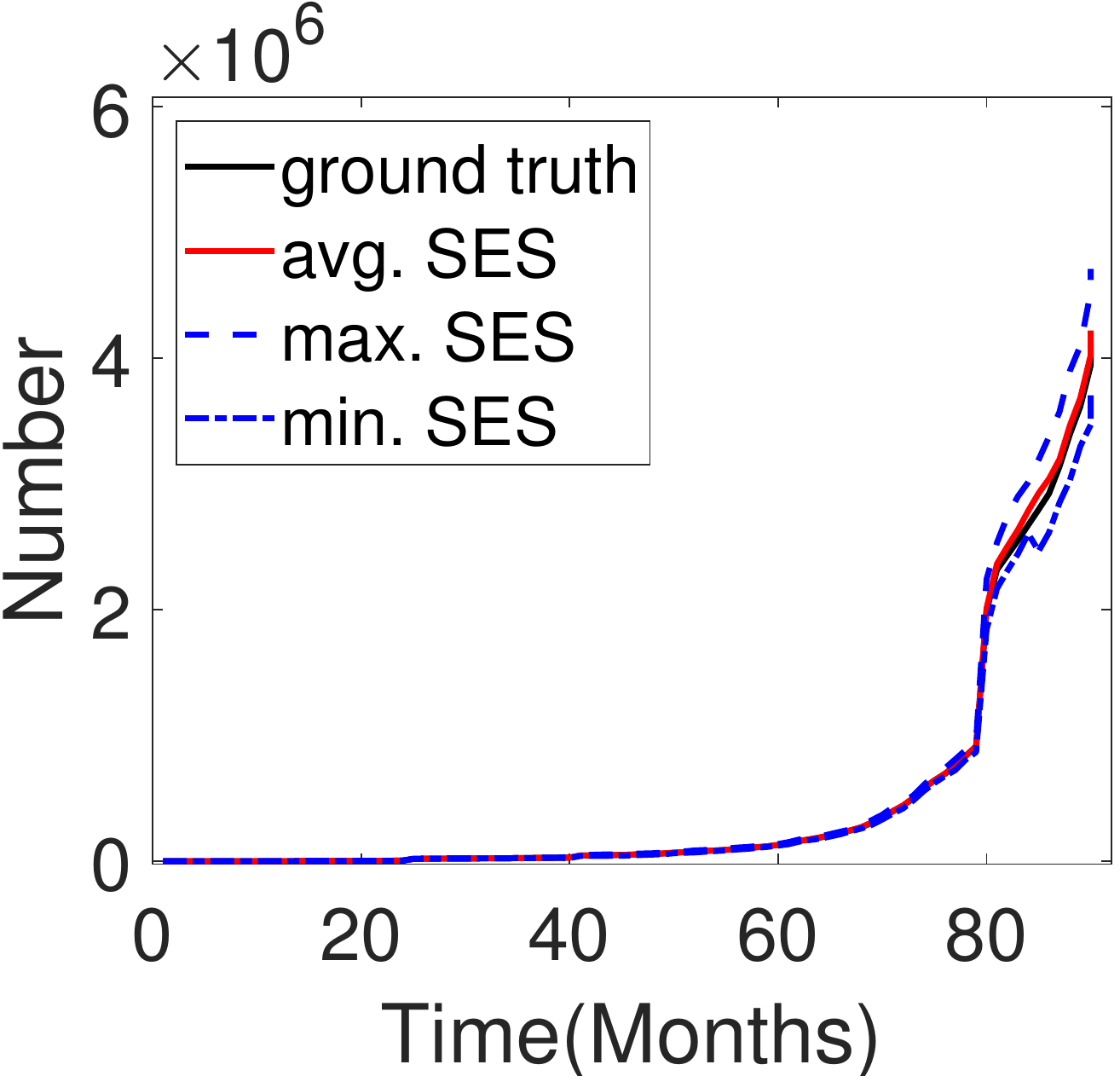}
    \hspace{0.5em}
  	\includegraphics[height=0.9in]{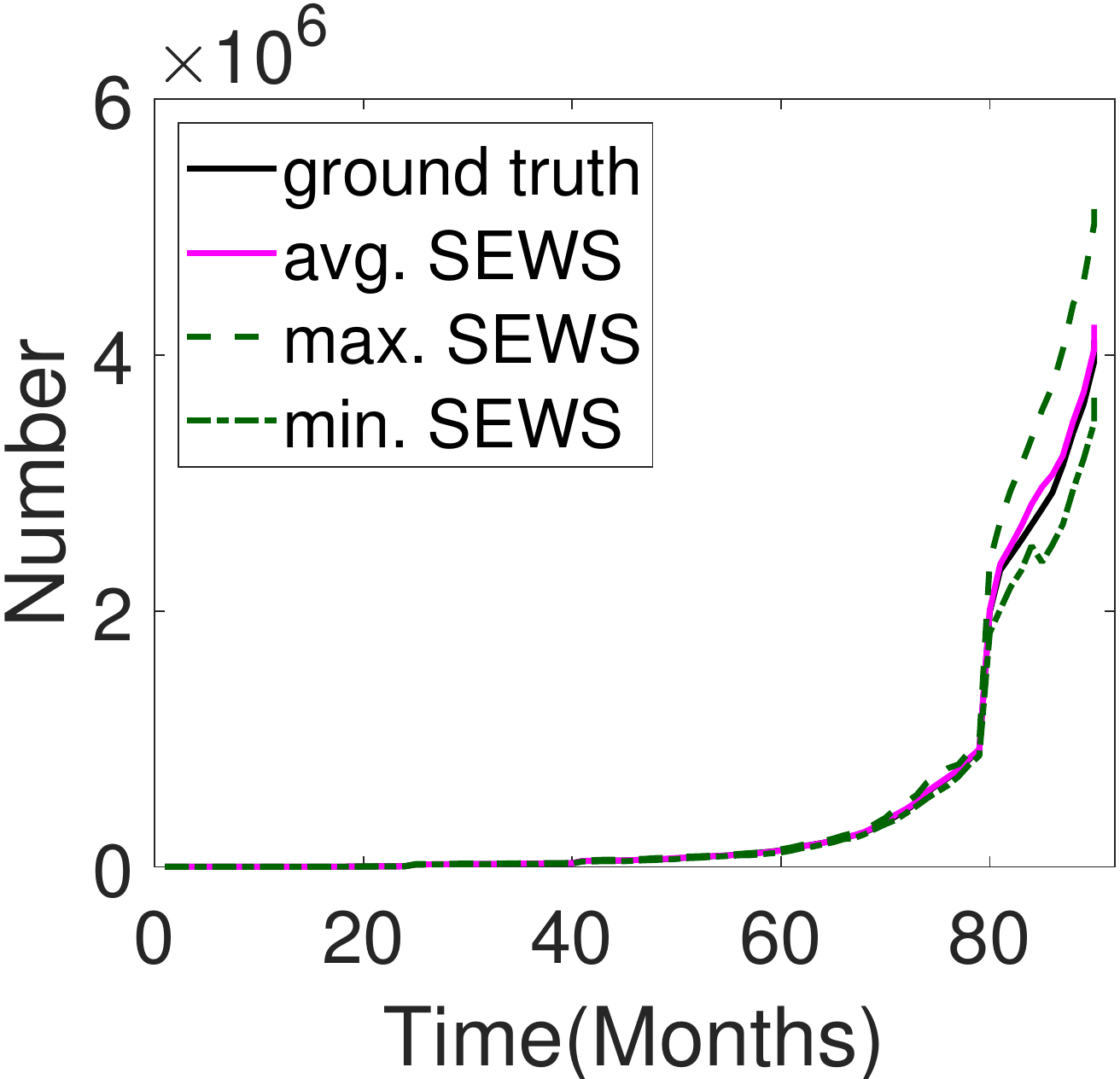}
  }
  \caption{Estimation for the number of instances over time (in months).}
  \label{fig:T:number}
  \Description{Time1}
\end{figure}

\begin{figure}
  \centering
  \includegraphics[height=0.9in]{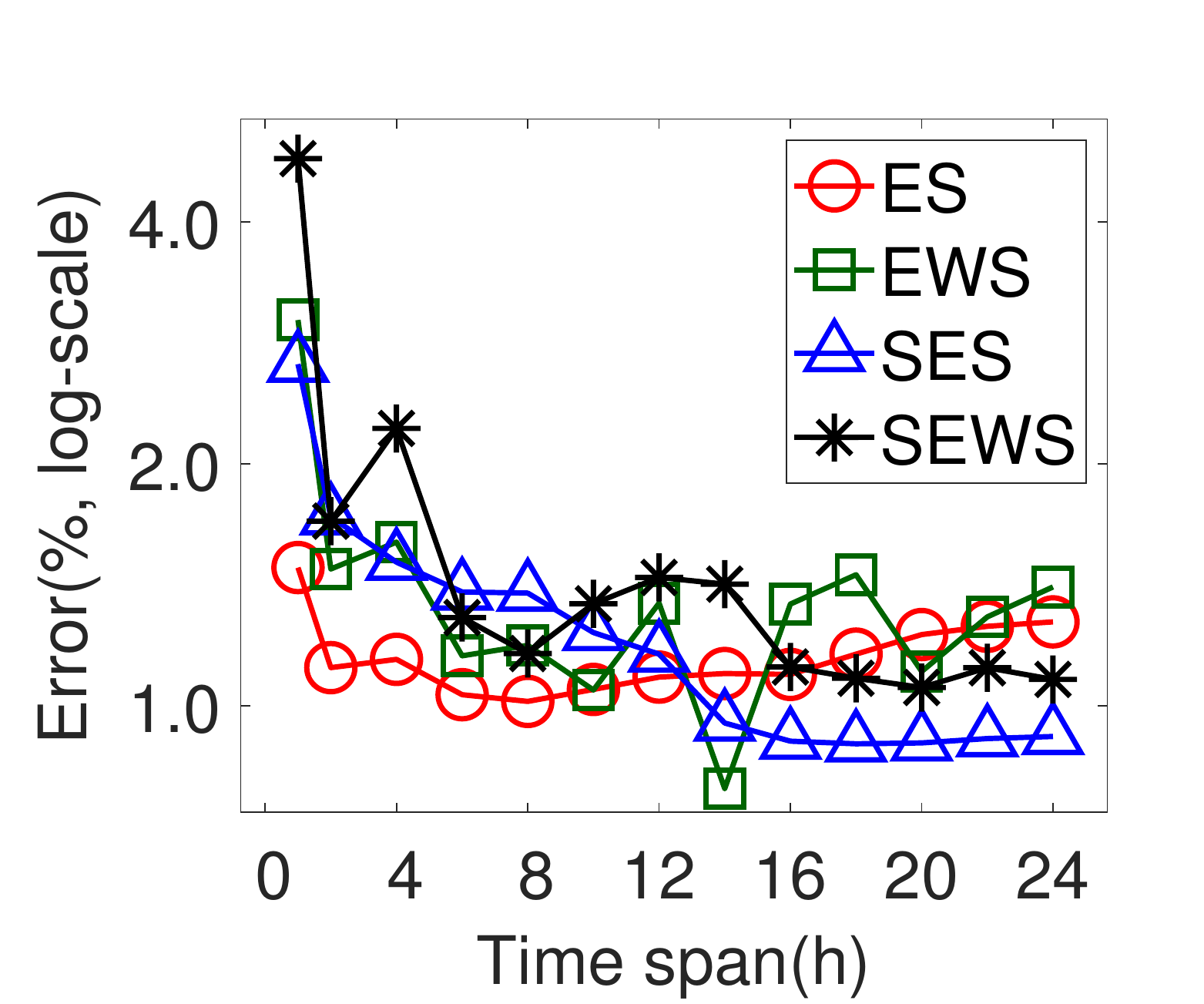}
  \hspace{1em}
  \includegraphics[height=0.9in]{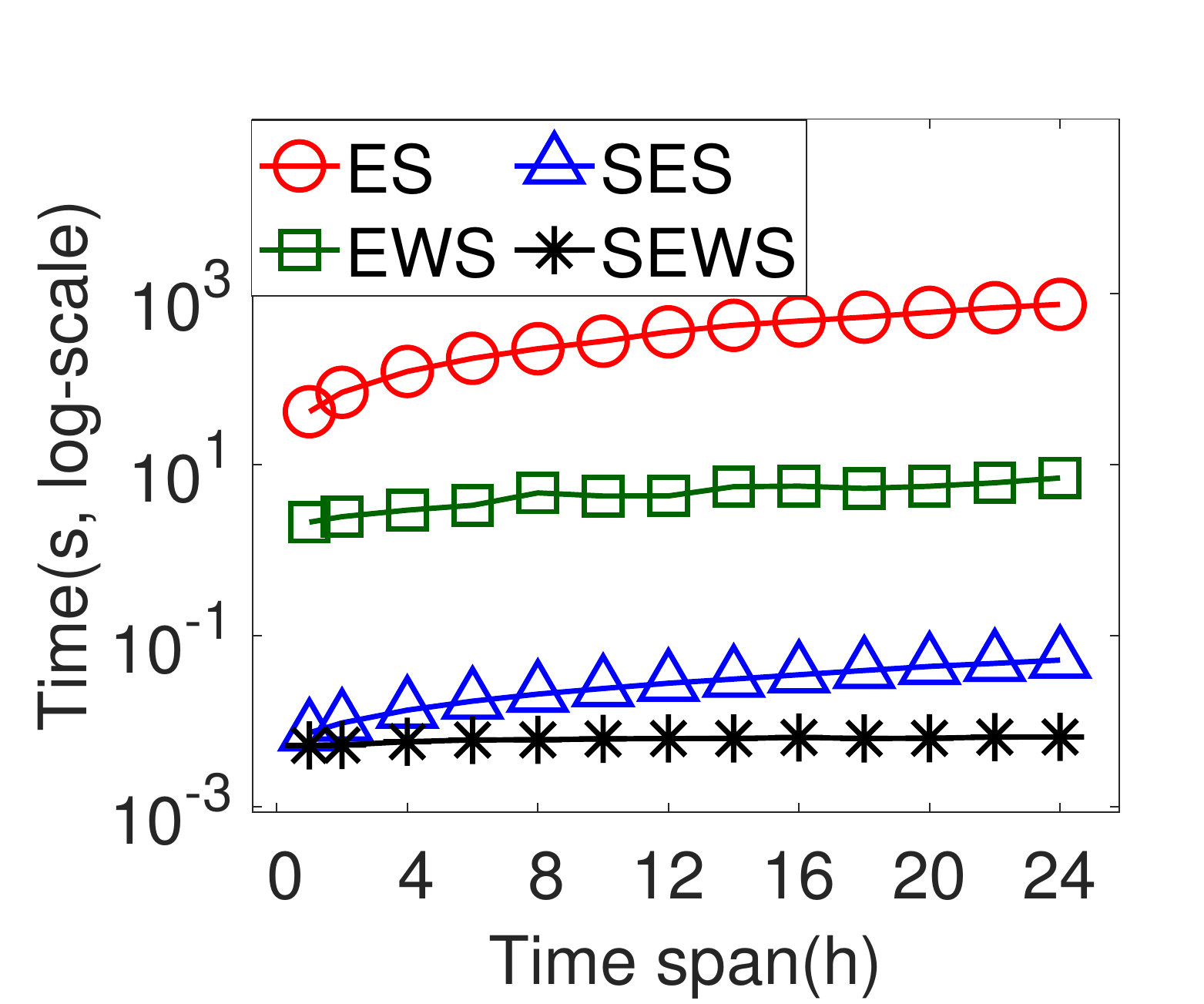}
  \caption{Relative error (\%) and average update time (in seconds) per 1,000 edges with varying time span $\delta$ in the streaming setting for query motif $Q3$ on the BC dataset.}
  \Description{s-scalability}
  \label{fig:s-delta}
\end{figure}

\textbf{Accuracy vs. Efficiency:}
Fig.~\ref{fig:p} demonstrates the trade-offs between \emph{relative error} and \emph{running time} of three sampling algorithms, namely IS-BT, ES, and EWS. For IS-BT, we fix the interval length to $30\delta$ and vary the interval sampling probability from $0.01$ to $1$. For ES and EWS, we vary the edge sampling probability $p$ from $0.0001$ to $0.25$. First of all, ES and EWS consistently achieve better trade-offs between accuracy and efficiency than IS-BT in almost all experiments. Specifically, ES and EWS can run up to $60$x and $330$x faster than IS-BT when the relative errors are at the same level. Meanwhile, in the same elapsed time, ES and EWS are up to $10.4$x and $16.5$x more accurate than IS-BT, respectively. Furthermore, EWS can outperform ES in all datasets except SO because of lower computational overhead. But on the SO dataset, since the distribution of motif instances is highly skewed among edges and thus the \emph{temporal wedge sampling} leads to large errors in estimation, the performance of EWS degrades significantly and is close to or even worse than that of ES. Nevertheless, the effectiveness of \emph{temporal wedge sampling} for EWS can still be confirmed
by the results on the BC and RC datasets.

\textbf{Scalability:}
We evaluate the scalability of different offline algorithms with varying the time span $\delta$ and dataset size $m$. In both experiments, we use the same parameter settings as used for the same motif on the same dataset in Table~\ref{tab:results}. We first test the effect of $\delta$ for $Q3$ on the BC dataset by varying $\delta$ from $1$ hour to $24$ hours. As shown in Fig.~\ref{subfig:delta}, the running time of all the algorithms we compare increases near linearly w.r.t.~$\delta$. BT runs out of memory when $\delta$ is longer than $10$ hours. The relative errors of ES and EWS keep steady for different values of $\delta$ but the accuracy of IS-BT degrades seriously when $\delta$ increases. This is led by the increase in cross-interval instances and the skewness of instances among intervals. Meanwhile, ES and EWS run up to $2.2$x and $180$x faster than IS-BT, respectively, while always having smaller errors. The results for $Q2$ on the RC dataset with varying $m$ are presented in Fig.~\ref{subfig:size}. Here, we vary $m$ from $50$M to near $400$M by extracting the first $m$ temporal edges of the RC dataset. The running time of all the algorithms we compare grows near linearly w.r.t.~$m$. The fluctuations of relative errors of IS-BT explicate that it is sensitive to the skewness of instances among intervals. ES and EWS always significantly outperform IS-BT for different values of $m$: they run much faster, have smaller relative errors, and provide more stable estimates than IS-BT.

\subsection{Experimental Results in Temporal Graph Streams}\label{subsec:exp:results1}

\textbf{Effect of Sample Size $r$:}
In Fig.~\ref{fig:r:error} and~\ref{fig:r:time}, we show the effect of the sample size $r$ on the performance of our algorithms, i.e., ES, EWS, SES, and SEWS, in the streaming setting. Since BT and IS-BT are both offline algorithms and have been shown to be much less efficient that ES and EWS in Section~\ref{subsec:exp:results}, we do not compare with them in the streaming setting anymore. In these experiments, we vary the edge sampling probability $p$ from $0.0001$ to $0.25$ for ES and EWS and the sample size $r$ of reservoir sampling from  $0.0001m$ to $0.25m$ for SES and SEWS, where $m$ is the number of temporal edges in the dataset.

The relative estimation errors of all algorithms with varying $r$ are shown in Fig.~\ref{fig:r:error}. Generally, the estimations of all four algorithms become more accurate when the sample size $r$ (or sampling probability $p$) increases. Moreover, the relative errors of SES and SEWS are slightly larger than those of ES and EWS for the same $r$. This is because SES and SEWS only consider the instances where a sampled edge is mapped to the last edge of the query motif for processing streaming data, whereas ES and EWS take all instances containing each sampled edge for higher accuracy. Nevertheless, SES and SEWS still provide estimates with at most $5\%$ relative errors  for query motifs $Q1$ to $Q4$ on all datasets when $r \geq 0.05m$. For query motif $Q5$, SES also achieves less than $5\%$ relative errors in estimations on all datasets when $r = 0.25m$.

The average time of each algorithm to update $1,000$ edges in the stream with varying $r$ are presented in Fig.~\ref{fig:r:time}. Generally, the update time of all algorithms increases with $r$. SES and SEWS always achieve one to three orders of magnitude speedups over ES and EWS for all query motifs across all datasets. Specially, the advantages of SES and SEWS in terms of efficiency become even greater for larger values of $r$. These results are mainly attributed to the fact that SES and SEWS update the counts incrementally over streams but ES and EWS need to recompute the counts from scratch when a new edge arrives. Finally, SEWS runs faster than SES because of wedge sampling, especially for larger $r$. Meanwhile, the estimation errors of SEWS are comparable to those of SES, as shown in Fig.~\ref{fig:r:error}.

\textbf{Accuracy of SES and SEWS over Time:}
The estimates of the number of instances returned by SES and SEWS over time are shown in Fig.~\ref{fig:T:number}. We illustrate the average, minimum, and maximum of the estimated counts over 10 runs by month on two large datasets BC and RC when $r=0.01m$. First of all, we observe that the number of instances of each motif on both datasets exhibits a sharp rise at some time (e.g., the $49$th and $83$rd months in Fig.~\ref{fig:T:m1:bt}) mostly because the edge distribution is highly skewed. But we see that different query motifs have different temporal distributions over time even on the same dataset (e.g., $Q1$ and $Q2$ on BC). Moreover, the estimation errors of both algorithms increase over time because a fixed sample size $r$ is used. When $r \geq m_t$, all observed edges are sampled and the estimation errors are exactly $0$ for SES and very close to $0$ for SEWS. When $r < m_t$, the estimation errors increase over time because the sampling rates become lower. Then, the average number of instances over $10$ estimations of SES and SEWS is almost indistinguishable from the ground truth, no matter how rapidly the instances are generated. In addition, SES shows higher accuracy and better stability in estimation than SEWS. We can see that the maximum and minimum of its estimates are always closer to the ground truth than those of SEWS. This is because SES counts exactly local motif instances of each sampled edge, whereas SEWS estimates local counts using wedge sampling.

\textbf{Effect of Time Span $\delta$:}
The performance of different algorithms with varying $\delta$ from $1$h to $24$h in the streaming setting for query motif $Q3$ on the BC dataset is shown in Fig.~\ref{fig:s-delta}. With the increase of $\delta$, the estimation errors generally decrease and the average update time per 1,000 edges increases. Nevertheless, SEWS runs the fastest for different $\delta$ among all four algorithms. And its update time is nearly stable when $\delta$ increases. Meanwhile, SEWS still has comparable relative errors with other algorithms. These results confirm that SEWS has the best scalability w.r.t.~the time span $\delta$.

\section{Conclusion}\label{sec:conclusion}

In this paper, we studied the problem of approximately counting a temporal motif in a temporal graph via random sampling. We first proposed a generic Edge Sampling (ES) algorithm to estimate the number of any $k$-vertex $l$-edge temporal motif in a temporal graph. Then, we improved the ES algorithm by combining edge sampling with wedge sampling and devised the EWS algorithm for counting $3$-vertex $3$-edge temporal motifs. Furthermore, we extended the ES and EWS algorithms to the SES and SEWS algorithms, respectively, for processing temporal graph streams using a reservoir sampling-based framework. We provided comprehensive theoretical analyses on the unbiasedness, variances, and complexities of our proposed algorithms. Extensive experiments on several real-world temporal graphs demonstrated the accuracy, efficiency, and scalability of our proposed algorithms. Specifically, ES and EWS ran up to $10.3$x and $48.5$x faster than the state-of-the-art sampling method while having lower estimation errors in the offline setting. In addition, SES and SEWS further achieved up to three orders of magnitude speedups over ES and EWS with comparable estimation errors in the streaming setting.

\bibliographystyle{ACM-Reference-Format}
\bibliography{references}

\end{document}